\newcommand{\dclosp}[1]{\ensuremath{\mathord{\downarrow_{#1}}}\xspace}
\newcommand{\dclosr}{\dclosp{R}}
\newcommand{\canoc}{\ensuremath{\preceq_\Cs}\xspace}
\newcommand{\canec}{\ensuremath{\sim_\Cs}\xspace}
\newcommand{\canep}{\ensuremath{\sim_{\pol{\Cs}}}\xspace}
\newcommand{\canpol}{\ensuremath{\preceq_{\copol{\Cs}}}\xspace}
\newcommand{\canppol}{\ensuremath{\preceq_{\pol{\Cs}}}\xspace}
\newcommand{\qanoc}{\ensuremath{\leq_\Cs}\xspace}
\newcommand{\qanop}{\ensuremath{\leq_{\pol{\Cs}}}\xspace}
\newcommand{\canoac}{\ensuremath{\preceq_{\Cs,\alpha}}\xspace}
\newcommand{\canaeq}{\ensuremath{\sim_{\Cs,\alpha}}\xspace}
\newcommand{\etac}{\ensuremath{\eta_\Cs}\xspace}
\newcommand{\canc}{\ensuremath{N_\Cs}\xspace}
\newcommand{\fo}{\ensuremath{\textup{FO}}\xspace}
\newcommand{\fod}{\ensuremath{\fo^2}\xspace}
\newcommand{\fodw}{\ensuremath{\fod(<)}\xspace}
\newcommand{\fodws}{\ensuremath{\fod(<,+1)}\xspace}
\newcommand{\fodwm}{\ensuremath{\fod(<,\mathit{MOD})}\xspace}
\newcommand{\fodwsm}{\ensuremath{\fod(<,+1, \mathit{MOD})}\xspace}
\newcommand{\fodwam}{\ensuremath{\fod(<, \mathit{AMOD})}\xspace}
\newcommand{\fodwsam}{\ensuremath{\fod(<,+1, \mathit{AMOD})}\xspace}
\newcommand{\fodwgm}{\ensuremath{\fod(<,\prefsig{\grp})}\xspace}
\newcommand{\fodwsgm}{\ensuremath{\fod(<,+1,\prefsig{\grp})}\xspace}
\newcommand{\sic}[1]{\ensuremath{\Sigma_{#1}}\xspace}
\newcommand{\siw}[1]{\ensuremath{\Sigma_{#1}(<)}\xspace}
\newcommand{\siwd}{\siw{2}}
\newcommand{\pic}[1]{\ensuremath{\Pi_{#1}}\xspace}
\newcommand{\piw}[1]{\ensuremath{\Pi_{#1}(<)}\xspace}
\newcommand{\piwd}{\piw{2}}
\newcommand{\dec}[1]{\ensuremath{\Delta_{#1}}\xspace}
\newcommand{\dew}[1]{\ensuremath{\Delta_{#1}(<)}\xspace}
\newcommand{\dewm}[1]{\ensuremath{\Delta_{#1}(<,MOD)}\xspace}
\newcommand{\dews}[1]{\ensuremath{\Delta_{#1}(<,+1)}\xspace}
\newcommand{\dewd}{\dew{2}}
\newcommand{\dewmd}{\dewm{2}}
\newcommand{\dewsd}{\dews{2}}
\newcommand{\bsc}[1]{\ensuremath{\Bs\Sigma_{#1}}\xspace}
\newcommand{\fpfm}{\ensuremath{\textup{F} + \textup{P}}\xspace}
\newcommand{\fpfmx}{\ensuremath{\textup{F} + \textup{X} + \textup{P} + \textup{Y}}\xspace}
\newcommand{\at}{\ensuremath{\textup{AT}}\xspace}
\newcommand{\md}{\ensuremath{\textup{MOD}}\xspace}
\newcommand{\abg}{\ensuremath{\textup{AMT}}\xspace}
\newcommand{\grp}{\ensuremath{\textup{GR}}\xspace}
\newcommand{\ul}{\ensuremath{\textup{UL}}\xspace}
\newcommand{\stzer}{\textup{ST}\xspace}
\newcommand{\etaat}{\ensuremath{\eta_\at}\xspace}
\newcommand{\canat}{\ensuremath{N_\at}\xspace}
\newcommand{\bool}[1]{\ensuremath{\mathit{Bool}(#1)}\xspace}
\newcommand{\pol}[1]{\ensuremath{\mathit{Pol}(#1)}\xspace}
\newcommand{\bpol}[1]{\ensuremath{\mathit{BPol}(#1)}\xspace}
\newcommand{\upol}[1]{\ensuremath{\mathit{UPol}(#1)}\xspace}
\newcommand{\copol}[1]{\ensuremath{\mathit{co\textup{-}}\!\pol{#1}}\xspace}
\newcommand{\capol}[1]{\ensuremath{\pol{#1} \cap \copol{#1}}\xspace}
\newcommand{\adet}[1]{\ensuremath{\mathit{APol}(#1)}\xspace}
\newcommand{\wadet}[1]{\ensuremath{\mathit{WAPol}(#1)}\xspace}
\newcommand{\boolo}{\ensuremath{\mathit{Bool}}\xspace}
\newcommand{\polo}{\ensuremath{\mathit{Pol}}\xspace}
\newcommand{\bpolo}{\ensuremath{\mathit{BPol}}\xspace}
\newcommand{\upolo}{\ensuremath{\mathit{UPol}}\xspace}
\newcommand{\adeto}{\ensuremath{\mathit{APol}}\xspace}
\newcommand{\wadeto}{\ensuremath{\mathit{WAPol}}\xspace}
\newcommand{\davar}{\ensuremath{\mathbf{DA}}\xspace}
\newcommand{\ldavar}{\ensuremath{\mathbf{LDA}}\xspace}
\newcommand{\imprint}{imprint\xspace}
\newcommand{\imprints}{imprints\xspace}
\newcommand{\tame}{multiplicative\xspace}
\newcommand{\ratms}{rating maps\xspace}
\newcommand{\ratm}{rating map\xspace}
\newcommand{\Nice}{Finitary\xspace}
\newcommand{\nice}{finitary\xspace}
\newcommand{\Full}{Full\xspace}
\newcommand{\full}{full\xspace}
\newcommand{\fratm}{full \ratm}
\newcommand{\fratms}{\full \ratms}
\newcommand{\Fratms}{\Full \ratms}
\newcommand{\mratm}{multiplicative rating map\xspace}
\newcommand{\mratms}{multiplicative rating maps\xspace}
\newcommand{\Mratms}{Multiplicative rating maps\xspace}
\newcommand{\prin}[2]{\ensuremath{\Is[#1](#2)}\xspace}
\newcommand{\opti}[2]{\ensuremath{\Is_{#1}[#2]}\xspace}
\newcommand{\popti}[3]{\ensuremath{\Ps_{#1}[#2,#3]}\xspace}
\newcommand{\upolopti}{\opti{\upol{\Cs}}{\rho}}
\newcommand{\pupolopti}{\popti{\upol{\Cs}}{\etac}{\rho}}
\newcommand{\typ}[2]{\ensuremath{[#1]_{#2}}\xspace}
\newcommand{\ctype}[1]{\typ{#1}{\Cs}}
\newcommand{\pctype}[1]{\typ{#1}{\pol{\Cs}}}
\newcommand{\Bs}{\ensuremath{\mathcal{B}}\xspace}
\newcommand{\Cs}{\ensuremath{\mathcal{C}}\xspace}
\newcommand{\Ds}{\ensuremath{\mathcal{D}}\xspace}
\newcommand{\Fs}{\ensuremath{\mathcal{F}}\xspace}
\newcommand{\Gs}{\ensuremath{\mathcal{G}}\xspace}
\newcommand{\Is}{\ensuremath{\mathcal{I}}\xspace}
\newcommand{\Ps}{\ensuremath{\mathcal{P}}\xspace}
\newcommand{\Hb}{\ensuremath{\mathbf{H}}\xspace}
\newcommand{\Kb}{\ensuremath{\mathbf{K}}\xspace}
\newcommand{\Lb}{\ensuremath{\mathbf{L}}\xspace}
\newcommand{\Ub}{\ensuremath{\mathbf{U}}\xspace}
\newcommand{\Vb}{\ensuremath{\mathbf{V}}\xspace}
\newcommand{\frI}{\ensuremath{\mathbbm{I}}\xspace}
\newcommand{\frP}{\ensuremath{\mathbbm{P}}\xspace}
\newcommand{\frS}{\ensuremath{\mathbbm{S}}\xspace}
\newcommand{\wsuit}{well-suited\xspace}
\newcommand{\Wsuit}{Well-suited\xspace}
\newcommand{\vari}{prevariety\xspace}
\newcommand{\varis}{prevarieties\xspace}
\newcommand{\pvari}{positive prevariety\xspace}
\newcommand{\pvaris}{positive prevarieties\xspace}
\def\inv{^{-1}}
\newcommand{\Jrel}{\ensuremath{\mathrel{\mathscr{J}}}\xspace}
\newcommand{\Rrel}{\ensuremath{\mathrel{\mathscr{R}}}\xspace}
\newcommand{\Lrel}{\ensuremath{\mathrel{\mathscr{L}}}\xspace}
\newcommand{\Jord}{\ensuremath{\leqslant_{\mathscr{J}}}\xspace}
\newcommand{\Rord}{\ensuremath{\leqslant_{\mathscr{R}}}\xspace}
\newcommand{\Lord}{\ensuremath{\leqslant_{\mathscr{L}}}\xspace}
\newcommand{\Jords}{\ensuremath{<_{\mathscr{J}}}\xspace}
\newcommand{\Rords}{\ensuremath{<_{\mathscr{R}}}\xspace}
\newcommand{\Lords}{\ensuremath{<_{\mathscr{L}}}\xspace}
\newcommand{\veps}{\ensuremath{\varepsilon}\xspace}
\DeclareMathOperator{\uclos}{\uparrow}
\newcommand{\nat}{\ensuremath{\mathbb{N}}\xspace}
\newcommand{\tls}{\ensuremath{\textup{TL}}\xspace}
\newcommand{\tlxs}{\ensuremath{\textup{TLX}}\xspace}
\newcommand{\tla}[1]{\ensuremath{\tls[#1]}\xspace}
\newcommand{\tlxa}[1]{\ensuremath{\tlxs[#1]}\xspace}
\newcommand{\tlc}[1]{\ensuremath{\tls(#1)}\xspace}
\newcommand{\tlxc}[1]{\ensuremath{\tlxs(#1)}\xspace}
\newcommand{\finally}[1]{\ensuremath{\textup{F}~#1}\xspace}
\newcommand{\nex}[1]{\ensuremath{\textup{X}~#1}\xspace}
\newcommand{\finallym}[1]{\ensuremath{\textup{P}~#1}\xspace}
\newcommand{\nexm}[1]{\ensuremath{\textup{Y}~#1}\xspace}
\newcommand{\finallyp}[2]{\ensuremath{\textup{F}_{#1}~#2}\xspace}
\newcommand{\finallymp}[2]{\ensuremath{\textup{P}_{#1}~#2}\xspace}
\newcommand{\finallyl}[1]{\finallyp{L}{#1}}
\newcommand{\finallyml}[1]{\finallymp{L}{#1}}
\newcommand{\tleqp}[1]{\ensuremath{\cong_{\eta,#1}}\xspace}
\newcommand{\tleqk}{\tleqp{k}}
\newcommand{\tldp}[1]{\ensuremath{\cong_{\delta,#1}}\xspace}
\newcommand{\tldk}{\tldp{k}}
\newcommand{\prefsig}[1]{\ensuremath{\frP_{#1}}\xspace}
\newcommand{\prefsigg}{\prefsig{\Gs}}
\newcommand{\infsig}[1]{\ensuremath{\frI_{#1}}\xspace}
\newcommand{\infsigc}{\infsig{\Cs}}
\newcommand{\infsigg}{\infsig{\Gs}}
\newcommand{\infsiggp}{\infsig{\Gs^+}}
\newcommand{\infix}[3]{\ensuremath{#1(#2,#3)}\xspace}
\newcommand{\suffix}[2]{\infix{#1}{#2}{|#1|+1}}
\newcommand{\prefix}[2]{\infix{#1}{0}{#2}}
\newcommand{\wpos}[2]{\ensuremath{#1[#2]}\xspace}
\newcommand{\poschar}{\textup{\sffamily{Pos}}}
\newcommand{\pos}[1]{\ensuremath{\poschar(#1)\xspace}}
\declaretheorem[style=standard,sibling=theorem,name=Corollary]{cor}
\declaretheorem[style=standard,sibling=theorem,name=Proposition]{prop}
\declaretheorem[style=standard,sibling=theorem,name=Example]{exa}
\declaretheorem[style=standard,sibling=theorem,name=Lemma]{lem}
\declaretheorem[style=standard,sibling=theorem,name=Fact]{fct}
\declaretheorem[style=standard,sibling=theorem,name=Remark]{rem}
\title{All about unambiguous polynomial closure}
\begin{document}
\maketitle

\begin{abstract}
  We study a standard operator on classes of languages: unambiguous polynomial closure. We prove that for every class \Cs of regular languages satisfying mild properties, the membership problem for its unambiguous polynomial closure $\upol{\Cs}$ reduces to the same problem for~\Cs. We also show that unambiguous polynomial closure coincides with alternating left and right deterministic closure. Moreover, we prove that if additionally~\Cs is finite, the separation and covering problems are decidable for~\upol{\Cs}. Finally, we present an overview of the generic logical characterizations of the classes built using unambiguous polynomial closure.
\end{abstract}

\section{Introduction}
\label{sec:intro}
Regularity is arguably one of the most robust notions in computer science: indeed, regular languages can be equivalently defined using deterministic or nondeterministic automata, monoids, regular expressions or monadic second-order logic. The motivation and context of this paper is the investigation of \emph{subclasses} of regular languages. Such classes arise naturally when weakening or restricting these formalisms. This active research track started in the~1960s with an emblematic example: a series of results by Schützenberger~\cite{schutzsf}, McNaughton, Papert~\cite{mnpfosf} and Kamp~\cite{kltl} established several equivalent characterizations of the class of languages defined by a \emph{first-order} sentence (instead of a monadic second-order one). At that time, the main question that emerged was the \emph{membership problem}, which simply asks whether the class under study is recursive: in order to solve it, one has to design an algorithm testing membership of an input language in the class. For instance, the above results provide a membership algorithm for first-order definable~languages.

Of course, the design of a membership algorithm for a class~\Cs heavily depends on~\Cs. This means that for each class, one has to design a different algorithm. However,
most of the interesting classes of regular languages are built using a restricted set of operators: given a class \Cs, one can consider its Boolean closure \bool\Cs, its polynomial closure \pol\Cs, and deterministic variants thereof, which usually yield more elaborate classes than \Cs. It is therefore desirable to investigate the operators~themselves rather than individual classes.

The \emph{polynomial closure} \pol\Cs of a class \Cs is its closure under finite union and marked concatenation (a \emph{marked concatenation} of $K$ and $L$ is a language of the form $KaL$, where $a$ is some letter). Together with the Boolean closure, it is used to define concatenation hierarchies, which consist in a sequence of classes of languages indexed by integers and half integers. Starting from a given class (\emph{level 0} in the hierarchy), level~$n+\frac12$ is the polynomial closure of level~$n$, and level $n+1$ is the Boolean closure of level~$n+\frac12$. The importance of these hierarchies stems from the fact that they are the combinatorial counterpart of quantifier alternation hierarchies in logic, which count the number of $\forall/\exists$ alternations needed to define a language~\cite{ThomEqu,PZ:generic_csr_tocs:18}.

As explained above, the main question when investigating a class of languages is the \emph{membership} problem: can we decide whether an input regular language belongs to the class? Despite decades of research on concatenation hierarchies, we know little about~them. The state of the art is that when level~0 is either \emph{finite} or contains only \emph{group languages} (and satisfies mild properties), membership is decidable for levels 1/2, one, 3/2, and 5/2~\cite{pzbpol,pseps3j,pzqalt,pzconcagroup} (on the other hand, the problem remains open for the second full level). These results encompass those that were obtained previously~\cite{arfi91,arfi87,simonthm,pwdelta2} and even go beyond by investigating the \emph{separation problem}, a generalization of membership. This problem for a class~\Cs takes \emph{two arbitrary regular} languages as input (unlike membership, which takes a single one). It asks whether there exists a third language from~\Cs, containing the first language and disjoint from the second. Membership is the special case of separation when the input consists of a language and its complement (as the only possible separator is the first input language). Although more difficult than membership, separation is also more rewarding. This is witnessed by a transfer theorem~\cite{PZ:generic_csr_tocs:18}: membership for \pol\Cs reduces to separation~for~\Cs. The above results on membership come from this theorem and the fact that separation is decidable for \pol\Cs, \bpol{\Cs} (\emph{i.e.}, \bool{\pol\Cs}) and \pol{\bpol{\Cs}} when \Cs is either \emph{finite} or contains only \emph{group languages}. See~\cite{jep-dd45,PZ:generic_csr_tocs:18} for detailed surveys on concatenation hierarchies.

\subparagraph{Unambiguous closure.} A natural way to weaken the polynomial closure operator is to specify \emph{semantic conditions} restricting the situations in which using marked concatenation or union is allowed. In the paper, we investigate a prominent operator that can be defined in this way: \emph{unambiguous polynomial closure}. A marked concatenation $KaL$ is \emph{unambiguous} if every word $w$ belonging to $KaL$ has a \emph{unique} factorization $w = u a v$ with $u \in K$ and $v \in L$. The \emph{unambiguous closure} of a class \Cs, denoted by \upol{\Cs}, is the least class containing \Cs that is closed under \emph{disjoint} union and \emph{unambiguous} marked concatenation. Observe that it is not immediate from the definition that \upol{\Cs} has robust properties (such as closure under Boolean operations) even when \Cs does. A prominent example of a class built using unambiguous concatenation is that of \emph{unambiguous languages}~\cite{schul} (\ul). It is the unambiguous polynomial closure of the class \bpol{\stzer} (\emph{i.e.}, \upol{\bpol{\stzer}}) where \stzer is the trivial class $\stzer = \{\emptyset,A^*\}$\footnote{The notation \stzer comes from the fact that it is level 0 in the \textbf{S}traubing-\textbf{T}hérien concatenation hierarchy.}. The robustness of \ul makes it one of the most investigated classes: it enjoys a number of equivalent definitions (we refer the reader to~\cite{Tesson02diamondsare,small_fragments08} for an overview). In particular, this class has several logical characterizations. First, \ul corresponds to the intermediary level \dewd in the quantifier alternation hierarchy of first-order logic equipped with the linear order predicate~\cite{pwdelta2}. This level consists of all languages that can be simultaneously defined by a sentence of \siwd (\emph{i.e.}, whose prenex normal form has a quantifier prefix of the form $\exists^* \forall^*$) and a sentence of \piwd (\emph{i.e.}, whose prenex normal form has a quantifier prefix of the form $\forall^* \exists^*$). It is also known that \ul corresponds to the two-variable fragment \fodw of first-order logic~\cite{twfo2}. Finally, \ul is also characterized~\cite{evwutl} by the variant \fpfm of \emph{unary temporal logic}, where $\textup{F}$ stands for ``sometimes in the future'' and $\textup{P}$ stands for ``sometimes in the past''.

Historically, the operator \upolo was first investigated by Pin~\cite{Pin80} who describes it in algebraic terms. Note however that~\cite{Pin80} starts from an alternative definition that assumes closure under Boolean operations already. It was later shown by Pin, Straubing and Thérien~\cite{Pinambigu} that the operator $\Cs \mapsto \upol{\Cs}$ preserves closure under Boolean operations (provided that the input class \Cs satisfies mild hypotheses). Let us point out that both the formulation and the proof of these results rely on elaborate mathematical tools (categories, bilateral kernel, relational morphisms,\ldots) as well as on results by Schützenberger~\cite{schul} and Rhodes~\cite{Rhodes1967AHT}, used as black boxes. Unambiguous polynomial closure also appears in concatenation hierarchies: assuming suitable hypotheses on \Cs, Pin and Weil~\cite{pwdelta2} proved that $\upol\Cs$ is the \emph{intermediate level} $\pol\Cs\cap\copol\Cs$ where \copol\Cs is the class consisting of all complements of languages in~\pol\Cs. Finally, a reduction from \upol\Cs-membership to \Cs-membership was obtained in~\cite{AlmeidaBKK15}. This proof is indirect: it relies on the nontrivial equality $\upol\Cs=\capol\Cs$, which itself depends on the algebraic characterizations of \upol\Cs and \pol{\Cs} obtained in~\cite{Pinambigu,pwdelta2,pin-branco-polc}.

\subparagraph{Contributions.} Unambiguous polynomial closure was not yet investigated with respect to separation, aside from the particular case of unambiguous languages~\cite{pvzptsep}. This is the starting point of this paper: we look for generic separation results applying to \upol\Cs, similar to the ones obtained for \pol\Cs in~\cite{pseps3j}. We prove that when \Cs is a finite Boolean algebra closed under quotients, separation is decidable for \upol{\Cs}. However, as is usual with separation, we also obtain several extra results as a byproduct of our work, improving our understanding of the $\mathit{UPol}$ operator:

\begin{itemize}
  \item We had to rethink the way membership is classically handled for \upolo in order to lift the techniques to separation. This yields self-contained, direct and elementary proofs that membership for \upol\Cs reduces to membership for \Cs provided that \Cs is closed under Boolean operations and quotients. The proof is new and independent from the one of~\cite{AlmeidaBKK15}. We strengthen the result of~\cite{AlmeidaBKK15} as we require fewer hypotheses on the class \Cs. Our proof also precisely pinpoints why this result holds for \upol\Cs but not \pol\Cs: the languages from~\Cs needed to construct a~\upol\Cs expression for a language $L$ are all recognized by any recognizer of~$L$.

  \item We prove that when \Cs is closed under Boolean operations and quotients, then so is \upol{\Cs}. Again, this strengthens earlier results~\cite{Pinambigu}, which involved stronger hypotheses on the input class \Cs.

  \item Using our results on~\pol\Cs~\cite{PZ:generic_csr_tocs:18}, we obtain a new proof of the above mentioned theorem of~\cite{pwdelta2} and generalize it to more classes \Cs. We prove that $\upol\Cs=\pol\Cs\cap \copol\Cs$ when \Cs is closed under Boolean operations and quotients (whereas the result of~\cite{pwdelta2} only applies to classes that are additionally closed under inverse morphic images).

  \item Finally, we obtain a previously unknown characterization of \upol\Cs in terms of alternating left and right deterministic concatenations, which are restricted forms of unambiguous concatenation.  A marked concatenation $KaL$ is \emph{left (resp.~right) deterministic} when $KaA^*\cap K=\emptyset$ (resp.\ $A^*aL\cap L=\emptyset$). We prove that \upol\Cs coincides with \adet\Cs, the closure of \Cs under disjoint union and left and right deterministic concatenation\footnote{The letter ``A'' in \adet\Cs stands for ``alternating'' (left and right) deterministic polynomial closure.}.
\end{itemize}

Moreover, we investigate the logical characterizations of unambiguous polynomial closure. As mentioned above, it is well known that \mbox{$\upol{\bpol{\stzer}} = \dewd = \fodw = \fpfm$}. Such correspondences also hold for variants of \dewd and \fodw equipped with additional predicates such as the successor~\cite{twfo2} ``$+1$'' and the modular predicates~\cite{DartoisP13,KufleitnerW15}. We use our results to generalize these connections. For each class \Cs, we define two objects:
\begin{itemize}
  \item A set of predicates \prefsig\Cs that can be used in the sentences of first-order logic. Each language~$L$ in \Cs gives rise to a predicate $P_L(x)$, which selects all positions $x$ in a word $w$ such that the prefix of $w$ up to position $x$ (excluded) belongs to $L$.
  \item Two variants of unary temporal logic, which we denote by \tlc{\Cs} and \tlxc{\Cs}.
\end{itemize}
We consider classes of group languages. \emph{Group languages} are those  recognized by a finite group, or equivalently by a permutation automaton~\cite{permauto} (\emph{i.e.}, a complete, deterministic \emph{and} co-deterministic automaton). Our results apply to classes of group languages \Gs closed under Boolean operations and quotients, and their \wsuit extensions $\Gs^+$ (roughly, $\Gs^+$ is the smallest Boolean algebra containing~\Gs and the singleton $\{\veps\}$). In such cases, we prove that,
\[
  \begin{array}{ccccccc}
    \upol{\bpol{\Gs}} & = & \dec{2}(<,\prefsigg) & = & \fod(<,\prefsigg) & =&  \tlc{\Gs}. \\
    \upol{\bpol{\Gs^+}} & = & \dec{2}(<,+1,\prefsigg) & = & \fod(<,+1,\prefsigg) & =&  \tlxc{\Gs}.
  \end{array}
\]
This generalizes all aforementioned results (each of them corresponds to a particular class of group languages \Gs). Thus, we obtain a generic proof of these results.

\subparagraph{Organization of the paper.} Section~\ref{sec:prelims} sets up the notation and the terminology. In Section~\ref{sec:frag}, we recall the definition of first-order logic over words and introduce the fragments that we shall consider. Section~\ref{sec:opers} is devoted to presenting the operators that we investigate in the paper. In Section~\ref{sec:cano}, we introduce notions that we shall need to present the algebraic characterizations of \polo and \upolo. Section~\ref{sec:caracpol} presents the algebraic characterization of \pol{\Cs} (which is taken from~\cite{PZ:generic_csr_tocs:18}). We use it to obtain a characterization of the classes \capol{\Cs}. Then, in Section~\ref{sec:caracupol}, we turn to the classes \upol{\Cs} and present their generic characterization. We investigate separation and covering for the classes \upol{\Cs} when~\Cs is finite in Section~\ref{sec:upolcov}. Finally, Sections~\ref{sec:logic} and~\ref{sec:logcar} are devoted to the logical characterizations of \upolo. In the former, we define our generalized notion of ``unary temporal logic'' and prove the correspondence with two-variable first-order logic. Finally, we establish the connection with unambiguous polynomial closure in Section~\ref{sec:logcar}.

\smallskip

This paper is the journal version of~\cite{pzupolicalp}. The presentation and the proof arguments have been reworked. Moreover, the logical part of the paper is new (the conference version only considered the language theoretic point of view).

\section{Preliminaries}
\label{sec:prelims}
We introduce the terminology used in the paper. In particular, we present the membership, separation and covering problems, and key mathematical tools designed to handle them.

\subsection{Classes of regular languages}

For the whole paper, we fix a finite alphabet $A$. As usual, $A^*$ denotes the set of all \emph{finite} words over $A$, including the empty word \veps. Moreover, we let $A^+ = A^* \setminus \{\veps\}$. Given a word $w \in A^*$, we write $|w| \in \nat$ for its length (the number of letters in $w$). For example, $|\veps| = 0$ and $|abcbaa| = 6$. Moreover, for every $a \in A$ and $w \in A^*$, we write $|w|_a \in \nat$ for the number of copies of the letter ``$a$'' in $w$. For example, $|abcbaa|_a = 3$, $|abcbaa|_b = 2$ and $|abcbaa|_c = 1$. Given $u,v \in A^*$, we write $u v$ for their~concatenation.

A \emph{language} is a subset of $A^*$. It is standard to extend concatenation to languages: given $K,L \subseteq A^*$, we write~$KL = \{uv \mid u \in K \text{ and } v \in L\}$.

\subparagraph{Classes.} A class of languages \Cs is a set of languages. Such a class \Cs is a \emph{lattice} when $\emptyset\in\Cs$, $A^* \in \Cs$ and \Cs is closed under both union and intersection: for every $K,L \in \Cs$, we have $K \cup L \in \Cs$ and $K \cap L \in \Cs$. Moreover, a \emph{Boolean algebra} is a lattice \Cs which is additionally closed under complement: for every $L \in \Cs$, we have $A^* \setminus L \in \Cs$. Finally, a class~\Cs is \emph{quotient-closed} if for every $L \in \Cs$ and $u \in A^*$, the following properties~hold:
\[
  u^{-1}L \stackrel{\text{def}}{=}\{w\in A^*\mid uw\in L\} \text{\quad and\quad} Lu^{-1} \stackrel{\text{def}}{=}\{w\in A^*\mid wu\in L\}\text{\quad both belong to \Cs}.
\]
We call a \pvari (resp. \vari) a quotient-closed lattice (resp. Boolean algebra) containing only \emph{regular languages}. Regular languages are those which can be equivalently defined by nondeterministic finite automata, finite monoids or monadic second-order logic. We work with the definition by monoids, which we recall below.

In the paper, we write \at for the class of \emph{alphabet testable languages}: the Boolean combinations of languages $A^*aA^*$ for $a\in A$. It is straightforward to verify that \at is a \vari. It will be important for providing examples.

\subparagraph{Finite monoids and morphisms.} A \emph{semigroup} is a set $S$  endowed with an associative multiplication $(s,t)\mapsto s\cdot t$ (also denoted by~$st$). A \emph{monoid} is a semigroup $M$ whose multiplication has an identity element $1_M$, \emph{i.e.}, such that ${1_M}\cdot s=s\cdot {1_M}=s$ for every~$s \in M$.

An \emph{idempotent} of a semigroup $S$ is an element $e \in S$ such that $ee = e$. We write $E(S) \subseteq S$ for the set of all idempotents in $S$. It is folklore that for every \emph{finite} semigroup~$S$, there exists a natural number $\omega(S)$ (denoted by $\omega$ when $S$ is understood) such that for every $s \in S$, the element $s^\omega$ is an idempotent.

We also consider ordered semigroups. An \emph{ordered semigroup} is a pair $(S,\leq)$ where $S$ is a semigroup and $\leq$ is a partial order defined on $S$, which is compatible with multiplication: for every $s,s',t,t' \in S$ such that $s \leq t$ and $s' \leq t'$, we have $ss'\leq tt'$. In particular, we say that a subset $F \subseteq S$ is an upper set (for $\leq$) to indicate that it is upward closed for the ordering $\leq$: for every $s,t\in M$, if $s \in F$ and $s \leq t$, then $t \in F$. In particular, for every $s \in S$, we write $\uclos s \subseteq S$ for the least upper set containing $s$. That is, $\uclos s = \{t \in S \mid s \leq t\}$. Finally, an \emph{ordered monoid} is an ordered semigroup $(M,\leq)$ such that $M$ is a monoid.

We shall use the following convention throughout the paper: we view every \emph{unordered} semigroup $S$ as the ordered semigroup $(S,=)$ whose ordering is \emph{equality}. Observe that in this case, \emph{every} subset of $S$ is an upper set. Thanks to this convention, every definition involving ordered semigroups that we present will also make sense for unordered ones.

Finally, our proofs make use of the Green relations~\cite{green}, which are defined on monoids. We briefly recall them. Given a monoid $M$ and $s,t \in M$,
\[
  \begin{array}{ll@{\ }l}
    s \Jord t & \text{when} & \text{there exist $x,y\in M$ such that $s=xty$}, \\
    s \Lord t & \text{when} & \text{there exists $x \in M$ such that $s = xt$}, \\
    s \Rord t & \text{when} & \text{there exists $y \in M$ such that $s = ty$}. \\
  \end{array}
\]
Clearly, \Jord, \Lord and \Rord are preorders (\emph{i.e.}, they are reflexive and transitive). We write \Jords, \Lords and \Rords for their strict variants (for example, $s \Jords t$ when $s \Jord t$ but $t \not\Jord s$). Finally, we write \Jrel, \Lrel and \Rrel for the corresponding equivalence relations (for example, $s \Jrel t$ when $s \Jord t$ and $t \Jord s$). There are many technical results about Green relations. We use the following standard lemma (see \emph{e.g.} \cite{pingoodref} for a proof), which applies to \emph{finite} monoids.

\begin{lem} \label{lem:jlr}
  Consider a finite monoid $M$ and $s,t \in M$ such that $s \Jrel t$. Then, $s \Rord t$ implies $s \Rrel t$. Symmetrically, $s \Lord t$ implies $s \Lrel t$.
\end{lem}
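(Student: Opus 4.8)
The plan is to prove the statement about $\mathscr{R}$ directly; the statement about $\mathscr{L}$ then follows by the symmetric (left-right dual) argument. So suppose $s \Jrel t$ and $s \Rord t$. Since $s \Rord t$ already holds, it suffices to establish $t \Rord s$, which together with $s \Rord t$ yields $s \Rrel t$.

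First I would translate the hypotheses into equations in $M$. From $s \Rord t$ there is some $y \in M$ with $s = ty$, and from $s \Jrel t$ --- in particular $t \Jord s$ --- there are $p, q \in M$ with $t = psq$. Substituting the first equation into the second gives $t = pt(yq)$, and an immediate induction on $n$ yields $t = p^{n} t (yq)^{n}$ for every $n \ge 1$.

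Next I would invoke finiteness of $M$ through the idempotent power $\omega = \omega(M)$. Taking $n = \omega$ gives $t = p^{\omega}\, t\, e$ with $e = (yq)^{\omega} \in E(M)$; multiplying on the right by $e$ and using $ee = e$ then gives $te = t$. Since $\omega \ge 1$, I can now peel off one factor: $t = te = t(yq)^{\omega} = (ty)\, q\, (yq)^{\omega-1} = s \cdot \bigl(q\, (yq)^{\omega-1}\bigr)$, using $ty = s$. Thus $t \in sM$, that is $t \Rord s$, and hence $s \Rrel t$.

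I do not expect any genuine obstacle: this is the classical ``stabilization via the idempotent power'' argument, and the only minor points needing care are the induction step and the degenerate case $\omega = 1$, where $(yq)^{\omega-1} = 1_M$. The $\mathscr{L}$-statement is handled by the dual argument: write $s = xt$ and $t = psq$, iterate to $t = (px)^{n}\, t\, q^{n}$, take $n = \omega$ and set $e = (px)^{\omega}$ to get $et = t$, and conclude $t = et = (px)^{\omega-1} p\, s \in Ms$, i.e.\ $t \Lord s$.
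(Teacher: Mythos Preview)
Your proof is correct and is precisely the classical stabilization argument via the idempotent power. Note, however, that the paper does not actually supply its own proof of this lemma: it is stated as a standard fact about Green's relations with a pointer to the literature (\emph{see e.g.}~\cite{pingoodref}), so there is no in-paper argument to compare against. Your write-up would serve perfectly well as a self-contained proof, and the only cosmetic point is that the paper's convention does not pin down $\omega(M)$ as minimal, so one may always assume $\omega \ge 1$, making your remark about the case $\omega = 1$ sufficient.
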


\subparagraph{Regular languages and syntactic morphisms.}
Clearly, $A^*$ is a monoid whose multiplication is concatenation (the identity element is the empty word \veps). We shall consider monoid morphisms $\alpha: A^* \to (M,\leq)$ where $(M,\leq)$ is an arbitrary ordered monoid. That is, $\alpha:A^*\to M$ is a map satisfying $\alpha(\varepsilon)=1_M$ and $\alpha(uv)=\alpha(u)\alpha(v)$ for all $u,v\in A^*$. Given such a morphism and some language $L \subseteq A^*$, we say that $L$ is \emph{recognized} by $\alpha$ when there exists an accepting set $F \subseteq M$ which is an upper set for $\leq$ and such that $L= \alpha\inv(F)$. Let us emphasize that the definition depends on the ordering $\leq$ since accepting sets must be upper sets.

\begin{rem}
  Recall that by convention, we view an unordered monoid $M$ as the ordered monoid $(M,=)$. Hence, the definition also makes sense when $\alpha: A^*\to M$ is a morphism into an unordered monoid. Since every subset of $M$ is an upper set for $=$, a language $L \subseteq A^*$ is recognized by $\alpha$ if and only if there exists an arbitrary set $F \subseteq M$ such that $L = \alpha\inv(F)$.
\end{rem}

It is standard and well known that regular languages are those which can be recognized by a morphism into a \emph{finite} monoid. Moreover, every language $L$ is recognized by a canonical morphism. Let us briefly recall its definition. One may associate to $L$ a preorder relation $\preceq_L$ over $L$: the \emph{syntactic precongruence of~$L$}. Given $u,v \in A^*$, we let,
\[
  \text{$u \preceq_L v$ if and only if $xuy \in L \Rightarrow xvy \in L$ for every $x,y \in A^*$}.
\]
As the name suggests, it is known and simple to verify that ``$\preceq_L$'' is a precongruence on $A^*$: it is reflexive and transitive, and for every $u,u',v,v' \in A^*$ such that $u \preceq_L v$ and $u' \preceq_L v'$, we have $uu' \preceq_L vv'$. Additionally, we write $\equiv_L$ for the equivalence generated by this preorder. For $u,v \in A^*$, $u \equiv_L v$ if and only if $u \preceq_L v$ and $v \preceq_L u$ (\emph{i.e.}, $xuy \in L \Leftrightarrow xvy \in L$ for every $x,y \in A^*$). The equivalence $\equiv_L$ is a congruence called the \emph{syntactic congruence of~$L$}. Thus, the set of equivalence classes $M_L = {A^*}/{\equiv_L}$ is a monoid and the preorder $\preceq_L$ defines an ordering $\leq_L$ on $M_L$ such that $(M_L,\leq_L)$ is an ordered monoid. It is called the \emph{syntactic ordered monoid of $L$}. Finally, the map $\alpha_L: A^* \to (M_L,\leq_L)$ sending every word to its equivalence class is a morphism recognizing~$L$, called the \emph{syntactic morphism of~$L$}. Another characterization of the regular languages is that $L$ is regular if and only if $M_L$ is finite (\emph{i.e.}, $\equiv_L$ has finite index): this is the Myhill-Nerode theorem. In this case, one can compute the syntactic morphism $\alpha_L: A^* \to (M_L,\leq_L)$ from any representation of $L$ (such as a finite automaton or an arbitrary monoid morphism).

\subsection{Decision problems} We consider three decision problems. They all depend on an arbitrary class of languages \Cs: they serve as mathematical tools for analyzing \Cs. Intuitively, obtaining an algorithm for one of these three problems requires a solid understanding of~\Cs.

The \emph{\Cs-membership problem} is the simplest one. It takes as input a single regular language~$L$ and asks whether $L\in \Cs$. The second problem, \emph{\Cs-separation}, is more general. Given three languages $K,L_1,L_2$, we say that $K$ \emph{separates} $L_1$ from $L_2$ if we have $L_1 \subseteq K$ and $L_2 \cap K = \emptyset$. Given a class of languages \Cs, we say that $L_1$ is \emph{\Cs-separable} from $L_2$ if some language in \Cs separates $L_1$ from $L_2$. Observe that when \Cs is not closed under complement, the definition is not symmetrical: it is possible for $L_1$ to be \Cs-separable from $L_2$ while $L_2$ is not \Cs-separable from~$L_1$. The separation problem associated to a given class \Cs takes two regular languages $L_1$ and $L_2$ as input and asks whether $L_1$ is \Cs-separable from $L_2$.

\begin{rem}
  The \Cs-separation problem generalizes the \Cs-membership problem. Indeed, a regular language belongs to $\Cs$ if and only if it is \Cs-separable from its complement, which is also regular.
\end{rem}

In the paper, we do not consider separation directly. Instead, we work with a third, even more general problem: \Cs-covering. It was introduced in~\cite{pzcovering2} and takes as input a single regular language $L_1$ and a \emph{finite set of regular languages} $\Lb_2$. It asks whether there exists a ``\Cs-cover of $L_1$ which is separating for $\Lb_2$''.

Given a language $L$, a \emph{cover of $L$} is a \emph{finite} set of languages \Kb such that $L \subseteq \bigcup_{K \in \Kb} K$. A cover \Kb is a \emph{\Cs-cover} if all languages $K \in \Kb$ belong to \Cs. Moreover, given two finite sets of languages \Kb and \Lb, we say that \Kb is \emph{separating} for \Lb if for every $K\in\Kb$, there exists $L\in\Lb$ such that $K \cap L = \emptyset$. Finally, given a language $L_1$ and a finite set of languages~$\Lb_2$, we say that the pair $(L_1,\Lb_2)$ is \emph{\Cs-coverable} if there exists a \Cs-cover of $L_1$ which is separating~for~$\Lb_2$.

The \Cs-covering problem is now defined as follows. Given as input a regular language $L_1$ and a finite set of regular languages $\Lb_2$, it asks whether the pair $(L_1,\Lb_2)$ is \Cs-coverable. It is straightforward to prove that covering generalizes separation if the class~\Cs is a lattice, as stated in the following lemma (see~\cite[Theorem~3.5]{pzcovering2} for the proof, which is easy).

\begin{lem} \label{lem:covsep}
  Let \Cs be a lattice and $L_1,L_2$ be two languages. Then $L_1$ is \Cs-separable from $L_2$ if and only if $(L_1,\{L_2\})$ is \Cs-coverable.
\end{lem}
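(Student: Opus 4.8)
The plan is to prove the two implications separately, directly from the definitions; the only nontrivial ingredient is closure of a lattice under finite union.

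For the forward direction, assume $L_1$ is \Cs-separable from $L_2$, so there is $K \in \Cs$ with $L_1 \subseteq K$ and $K \cap L_2 = \emptyset$. I would exhibit the singleton $\Kb = \{K\}$ as the desired cover. It is a \Cs-cover of $L_1$ since $L_1 \subseteq K = \bigcup_{K' \in \Kb} K'$ and $K \in \Cs$, and it is separating for $\{L_2\}$ since its only element $K$ satisfies $K \cap L_2 = \emptyset$. Hence $(L_1,\{L_2\})$ is \Cs-coverable. Note that this implication uses nothing about \Cs being a lattice.

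For the converse, assume $(L_1,\{L_2\})$ is \Cs-coverable, and fix a \Cs-cover $\Kb$ of $L_1$ that is separating for $\{L_2\}$. By definition, $\Kb$ is finite, each $K \in \Kb$ lies in \Cs, we have $L_1 \subseteq \bigcup_{K \in \Kb} K$, and for every $K \in \Kb$ the (only) candidate $L_2 \in \{L_2\}$ satisfies $K \cap L_2 = \emptyset$. Set $K_0 = \bigcup_{K \in \Kb} K$. Since \Cs is a lattice, it contains $\emptyset$ and is closed under finite union, so $K_0 \in \Cs$ (with $K_0 = \emptyset$ when $\Kb = \emptyset$). Then $L_1 \subseteq K_0$ and $K_0 \cap L_2 = \bigcup_{K \in \Kb}(K \cap L_2) = \emptyset$, so $K_0$ separates $L_1$ from $L_2$, as required.

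There is no genuine obstacle in this lemma: the converse direction is where the lattice hypothesis is indispensable, since it is exactly what lets one collapse a finite separating cover into a single separator, whereas the forward direction is immediate. The statement then follows.
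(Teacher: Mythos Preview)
Your proof is correct and is exactly the standard argument; the paper itself does not spell out a proof but defers to~\cite{pzcovering2}, calling it easy, and your write-up is precisely what that easy proof looks like. The only minor remark is that the lattice hypothesis (closure under finite union, and $\emptyset\in\Cs$ for the case $\Kb=\emptyset$) is used just where you say it is, so your analysis is complete.
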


\subsection{Group languages}\label{sec:group-languages}

We now present an important special kind of class. As we explained in the introduction, all classes investigated in the paper are built from basic ones using generic operators. Here, we introduce the basic classes used in such constructions: the \emph{classes of group languages} and their \emph{\wsuit extensions}.

\begin{definition}
A \emph{group} is a monoid $G$ such that every element $g \in G$ has an inverse $g\inv \in G$, \emph{i.e.}, $gg\inv = g\inv g = 1_G$. A language $L$ is a \emph{group language} if it is recognized by a morphism $\alpha: A^* \to G$ into a \emph{finite group $G$} (\emph{i.e.}, there exists $F \subseteq G$ such that $L = \alpha\inv(F)$). We write \grp for the class of \emph{all} group languages. One can verify that \grp is a \vari.
\end{definition}
\begin{rem}
  No language theoretic definition of \grp is known. There is however a definition based on automata: the group languages are those which can be recognized by a permutation automaton (\emph{i.e.}, which is simultaneously deterministic, co-deterministic and complete).
\end{rem}

A \emph{class of group languages} is a class consisting of group languages only, \emph{i.e.}, a subclass of \grp. The results of this paper apply to arbitrary \emph{\varis of group languages}.

\begin{rem}\label{rem:orderedgroups}
  Let us explain why we do not explicitly mention \emph{\pvaris} of group languages in the paper. This is because they are all closed under complement (hence, they are actually \emph{\varis}). This follows from the simple  fact that the only ordering compatible with multiplication in a finite group is the equality relation.
\end{rem}

While our results apply in a generic way to all \varis of group languages, there are \emph{four} main classes of this kind that we shall use for providing examples. One of them is \grp itself. Let us present the other three. First, we write $\stzer = \{\emptyset,A^*\}$, which is clearly a \vari of group languages (the name comes from the fact that this class is the base level of the \emph{Straubing-Th\'erien} hierarchy). While trivial, we shall see that this class has important applications. Moreover, we look at the class \md of \emph{modulo languages}. For every $q,r \in \nat$ with $r < q$, we write $L_{q,r} = \{w \in A^* \mid |w| \equiv r \bmod q\}$. The class \md contains all \emph{finite unions} of languages~$L_{q,r}$. One can verify that \md is a \vari of group languages. Finally, we shall consider~the class \abg of \emph{alphabet modulo testable languages}. For all $q,r \in \nat$ with $r < q$ and all $a \in A$, let $L^a_{q,r} = \{w \in A^* \mid |w|_a \equiv r \bmod q\}$. We define \abg as the least class consisting of all languages $L^a_{q,r}$ and closed under union and intersection. It is again straightforward to verify that \abg is a \vari of group languages.

We do not investigate classes of group languages themselves in the paper: we only use them as input classes for our operators.  In particular, we shall use \stzer, \md, \abg and \grp in order to illustrate our results. In this context, it will be important that \emph{separation} is decidable for these four classes. The techniques involved in proving this are independent of what we do in the paper. Actually, this can be difficult. On one hand, the decidability of \stzer-separation is immediate (two languages are \stzer-separable if and only if one of them is empty). On the other hand, the decidability of \grp-separation is equivalent to a difficult algebraic question, which remained open for several years before it was solved by Ash~\cite{Ash91}. Recent automata-based proofs that separation is decidable for \md, \abg and \grp are available in~\cite{groups}.

\subparagraph{\Wsuit extensions.} It can be verified from the definition that $\{\veps\}$ and $A^+$ are \emph{not} group languages. This motivates the next definition: for a class \Cs, the \emph{\wsuit extension of \Cs}, denoted by $\Cs^+$, consists of all languages of the form $L \cap A^+$ and $L \cup \{\veps\}$ where $L \in \Cs$ (in particular, $\Cs\subseteq\Cs^{+}$). The next lemma follows immediately from the definition.

\begin{lem} \label{lem:wsuit}
  Let \Cs be a \vari. Then, $\Cs^+$ is a \vari containing $\{\veps\}$ and $A^+$.
\end{lem}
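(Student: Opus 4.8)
The plan is to verify directly, from the definition of the \wsuit extension, that $\Cs^+$ satisfies the three closure conditions required of a \vari (closure under finite union, finite intersection, complement) together with quotient-closure, and finally that it contains only regular languages and the two distinguished languages $\{\veps\}$ and $A^+$. The last point is immediate: $A^+ = A^* \cap A^+$ and $A^* \in \Cs$ since \Cs is a \vari, so $A^+ \in \Cs^+$; likewise $\{\veps\} = \emptyset \cup \{\veps\}$ with $\emptyset \in \Cs$, so $\{\veps\} \in \Cs^+$. Regularity is also clear, since $\{\veps\}$ and $A^+$ are regular and regular languages are closed under union and intersection. Note also that $A^* = A^+ \cup \{\veps\} \in \Cs^+$ and $\emptyset = A^+ \cap \{\veps\} = \emptyset \cap A^+ \in \Cs^+$, so $\Cs^+$ contains $\emptyset$ and $A^*$.

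First I would record the elementary identities that drive everything. Every language in $\Cs^+$ is of one of two shapes; moreover a language $L \cap A^+$ can be rewritten, using $L' = L \cup \{\veps\} \in \Cs$, and similarly $L \cup \{\veps\}$ relates to the $A^+$-form via complementation. The key observation is that, writing $K_0 = K \cap A^+$ and $K_1 = K \cup \{\veps\}$ for $K \in \Cs$, we have the Boolean laws $K_0 \cap L_0 = (K\cap L) \cap A^+$, $K_1 \cup L_1 = (K \cup L) \cup \{\veps\}$, $K_0 \cup L_0 = (K \cup L)\cap A^+$, $K_1 \cap L_1 = (K\cap L)\cup\{\veps\}$, and for the mixed cases $K_0 \cap L_1 = (K \cap (L\cup\{\veps\}))\cap A^+ = (K \cap L') \cap A^+$ with $L' = L \cup \{\veps\}\in\Cs$, and dually $K_0 \cup L_1 = (K \cup L)\cup\{\veps\}$ since $K_0 \cup L_1 \supseteq \{\veps\}$ and agrees with $(K\cup L)_1$ on $A^+$. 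Because \Cs is closed under union and intersection, all the right-hand sides are again of the form ``a member of \Cs intersected with $A^+$'' or ``a member of \Cs united with $\{\veps\}$'', hence lie in $\Cs^+$. For complement: $A^* \setminus (K\cap A^+) = (A^* \setminus K) \cup \{\veps\}$ and $A^* \setminus (K \cup \{\veps\}) = (A^*\setminus K) \cap A^+$, and $A^*\setminus K \in \Cs$ because \Cs is a Boolean algebra, so $\Cs^+$ is closed under complement.

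For quotient-closure, fix $u \in A^*$ and a language in $\Cs^+$. If $u = \veps$ quotients act trivially, so assume $u \in A^+$. For the left quotient, $u\inv(K \cap A^+) = (u\inv K) \cap u\inv A^+ = (u\inv K) \cap A^*$ since every word $uw$ with $w$ arbitrary lies in $A^+$; but we want the $A^+$-normal form, so write $u\inv(K\cap A^+) = ((u\inv K) \cup \{\veps\}) \cap A^+$ if $\veps \in u\inv K$, i.e.\ if $u \in K$, and $= (u\inv K) \cap A^+$ otherwise; in both cases the \Cs-part is $u\inv K$ or $(u\inv K)\cup\{\veps\}$ (note $\veps \in u\inv K$ forces this to equal $u\inv K$ on $A^+$, so either expression works), and $u\inv K \in \Cs$ by quotient-closure of \Cs. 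Similarly $u\inv(K \cup \{\veps\})$: a word $w$ lies in it iff $uw \in K \cup \{\veps\}$, i.e.\ iff $uw \in K$ (since $uw \neq \veps$ as $u \in A^+$), so $u\inv(K \cup\{\veps\}) = u\inv K \in \Cs \subseteq \Cs^+$. The right quotients are handled symmetrically, replacing $u\inv K$ by $K u\inv$ and using that $wu \neq \veps$. In each case the result is displayed as a member of \Cs, optionally combined with $\{\veps\}$ or intersected with $A^+$, hence belongs to $\Cs^+$.

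I do not expect any genuine obstacle here; the only mild subtlety is the bookkeeping in the mixed Boolean cases (a union or intersection of one language of each shape) and making sure that, after a quotient, one re-expresses the outcome in one of the two admissible normal forms rather than leaving it as an ad hoc Boolean combination. This is purely a matter of the finitely many case distinctions above, each resolved by an elementary set-theoretic identity together with the fact that \Cs is a quotient-closed Boolean algebra. I would therefore present the argument as a short sequence of these identities, grouped as: (i) $\{\veps\}, A^+ \in \Cs^+$ and regularity; (ii) closure under union and intersection via the Boolean laws above; (iii) closure under complement; (iv) quotient-closure.
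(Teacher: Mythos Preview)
Your proposal is correct and is exactly the direct verification the paper has in mind: the paper does not actually write out a proof, stating only that the lemma ``follows immediately from the definition,'' and your case analysis is precisely that immediate verification spelled out. One tiny organizational quibble: you write $A^* = A^+ \cup \{\veps\} \in \Cs^+$ before establishing closure under union, but this is harmless since $A^* = A^* \cup \{\veps\}$ already exhibits $A^*$ directly as a member of $\Cs^+$.
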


While the definition of \wsuit extensions makes sense for arbitrary classes, it will only be useful when applied to a \vari of group languages. Indeed, the \wsuit extensions $\Gs^+$, where \Gs is a \vari of group languages, are also important input classes for our operators. In this case,   all group languages in $\Gs^+$ already belong to~$\Gs$.

\begin{lem} \label{lem:gplus1}
  Let \Gs be a \vari of group languages. For every group language $L \in  \Gs^+$, we have $L \in \Gs$.
\end{lem}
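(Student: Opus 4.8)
The plan is to show that a group language lying in the \wsuit extension $\Gs^+$ must already be contained in $\Gs$, by analysing what the two operations defining $\Gs^+$ do at the level of syntactic morphisms. Recall that by definition every language in $\Gs^+$ has the form $L \cap A^+$ or $L \cup \{\veps\}$ for some $L \in \Gs$. So, fix a group language $K \in \Gs^+$ and write $K = L \cap A^+$ or $K = L \cup \{\veps\}$ with $L \in \Gs$; the goal is to recover $K$ itself (not merely some approximation) as a member of~$\Gs$.

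First I would deal with the case $K = L \cup \{\veps\}$. Here $\veps \in K$, so the syntactic morphism $\alpha_K : A^* \to (M_K,=)$ sends $\veps$ to the identity $1_{M_K}$, and $1_{M_K}$ lies in the accepting set. Since $K$ is a group language, $M_K$ is a finite group, and in a finite group the identity is the only idempotent; in particular, applying $\alpha_K$ to $a^{\omega}$ for any letter $a$, one gets $\alpha_K(a^\omega) = 1_{M_K} \in F$, so $a^\omega \in K$, hence $a^\omega \in L$ (as $a^\omega \neq \veps$). The key point is then to argue, using that $L \in \Gs$ and $\Gs$ is a \vari of group languages, that $\veps$ and $a^{\omega}$ are $\equiv_L$-equivalent, so that $\veps \in L$ already, whence $K = L \cup \{\veps\} = L \in \Gs$. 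Dually, in the case $K = L \cap A^+$, if $\veps \notin L$ then $K = L \in \Gs$ and we are done; if $\veps \in L$, then since $L$ is recognised by a finite group (being a group language itself, or via the fact that every language in $\Gs$ is a group language) the same idempotent argument forces $a^{\omega} \equiv_L \veps$, hence $a^\omega \in L$, so both $\veps$ and $a^\omega$ are in $L$ but only $a^\omega$ is in $K = L \cap A^+$; this would say $K \neq L$, and one must then show $K$ is still in $\Gs$ — but in fact if $\veps \in L$ and $L$ is a group language, $L$ cannot be a proper subset that misses $\veps$ while containing $a^\omega$ unless $A^* \subseteq L$, and then $K = A^+$, which is not a group language, contradicting the hypothesis on $K$. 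So this sub-case does not arise, and $K = L \in \Gs$.

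The cleanest way to carry this out uniformly is via syntactic morphisms: let $\beta : A^* \to G$ be a morphism onto a finite group recognising $K$ (it exists since $K$ is a group language), with $K = \beta^{-1}(F)$. The crucial observation is that $\beta(\veps) = 1_G$ and $\beta(a^{\omega(G)}) = 1_G$ for every $a \in A$, so $\veps \in K \iff a^{\omega(G)} \in K$. Thus $K$ contains $\veps$ if and only if it contains $a^{\omega(G)}$ for all $a$. Now compare with $L$: since $L = K \cup \{\veps\}$ or $L = K \cap A^+$ differs from $K$ only on $\veps$, and $a^{\omega(G)} \in K \iff a^{\omega(G)} \in L$ (as $a^{\omega(G)} \neq \veps$ when $A \neq \emptyset$, the only degenerate case being handled trivially), we get $\veps \in L \iff a^{\omega(G)} \in L \iff a^{\omega(G)} \in K \iff \veps \in K$. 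Hence $\veps \in L \iff \veps \in K$, which forces $L = K$ in both defining cases, and therefore $K = L \in \Gs$.

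The main obstacle I anticipate is the handling of the degenerate alphabet or the empty-word edge cases, and more importantly making precise the step ``$L$ is recognised by a finite group'': this requires knowing that every language in a \vari of group languages is itself a group language, which is true by definition of ``\vari of group languages'' (it is a subclass of \grp), so $M_L$ is a finite group and the idempotent-triviality argument applies to it as well. Once that is in place, the whole argument reduces to the single identity $\alpha(\veps) = \alpha(a^\omega) = 1$ in any finite group, together with the fact that $L$ and $K$ can differ at most on the single word $\veps$.
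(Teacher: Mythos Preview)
Your ``cleanest way'' paragraph follows essentially the paper's approach: the two languages already agree on $A^+$, so it suffices to show they agree on $\veps$, and one does this by transferring membership of $\veps$ to membership of a suitable power $a^n$ via the group structure. There is, however, a genuine gap in your chain
\[
\veps \in L \iff a^{\omega(G)} \in L \iff a^{\omega(G)} \in K \iff \veps \in K.
\]
The last two equivalences are fine, but the first is unjustified: $G$ is a group recognising $K$, not $L$. The language $L$ is recognised by some \emph{other} finite group $H$, and nothing forces $\omega(G)$ to be a multiple of the order of the image of $a$ in $H$. Your final paragraph correctly notes that $L$ is a group language, but knowing this does not make $a^{\omega(G)}$ map to $1_H$. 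The fix is precisely what the paper does: pick $p = \omega(G)\cdot\omega(H)$ (any common multiple works), so that $a^p$ maps to the identity in \emph{both} groups simultaneously, and run the chain with $a^p$ in place of $a^{\omega(G)}$. With that single adjustment your argument is complete and coincides with the paper's.

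Your preliminary case analysis should be discarded. In the sub-case $K = L \cap A^+$ with $\veps \in L$, the assertion that this forces $A^* \subseteq L$ (hence $K = A^+$) is false; for instance $L = \{w : |w| \equiv 0 \bmod 2\}$ contains $\veps$ and is far from $A^*$. The correct reason that this sub-case cannot arise is again the common-multiple trick: $\veps\notin K$ gives $a^{p}\notin K$ hence $a^{p}\notin L$, while $\veps\in L$ gives $a^{p}\in L$, a contradiction. So the uniform argument, once patched, already subsumes the case split.
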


\begin{proof}
  By definition of $\Gs^+$, there exists a group language $K \in \Gs$ such that either $L = \{\veps\} \cup K$ or $L = A^+ \cap K$. In particular, this implies that for every $w \in A^+$, we have $w \in K \Leftrightarrow w \in L$. We prove that the fact that $K$ and $L$ are both group languages implies $K = L$, which yields $L \in \Gs$, as desired. Since we already know that $w\in K\Leftrightarrow w\in L$ for every $w \in A^+$, it suffices to prove that $\veps \in K \Leftrightarrow \veps \in L$. Since $K$ and $L$ are group languages, there exist two morphisms $\alpha_K: A^* \to G_K$ and $\alpha_L: A^*\to G_L$ into finite groups recognizing $K$ and $L$ respectively. Let $p = \omega(G_K) \times \omega(G_L)$ and consider an arbitrary letter $a \in A$. By definition of $p$, we have $\alpha_K(a^p) = 1_{G_K}$ and $\alpha_L(a^p) = 1_{G_L}$. Since $\alpha_K$ and $\alpha_L$ recognize both $K$ and $L$, we have $\veps \in K \Leftrightarrow a^p \in K$ and $a^p \in L \Leftrightarrow \veps \in L$. Finally, since $a^p \in A^+$, we know that $a^p \in K \Leftrightarrow a^p \in L$ by definition. Altogether, it follows that $\veps \in K \Leftrightarrow \veps \in L$, as desired.
\end{proof}

\subsection{\Cs-morphisms}

We now present a central mathematical tool. Consider an arbitrary \pvari \Cs. A \emph{\Cs-morphism} is a \emph{surjective} morphism $\eta: A^*\to (N,\leq)$ into a finite ordered monoid $(N,\leq)$ such that every language recognized by $\eta$ belongs to \Cs. We complete the definition with a key remark when \Cs is a \vari (\emph{i.e.}, \Cs is additionally closed under complement): in this case, it suffices to consider \emph{unordered} monoids (recall that by convention, we view them as monoids ordered by equality).

\begin{lem} \label{lem:cmorphbool}
  Let \Cs be a \vari and let $\eta: A^* \to (N,\leq)$ be a morphism into a finite ordered monoid. The two following conditions are equivalent:
  \begin{enumerate}
    \item $\eta: A^* \to (N,\leq)$ is a \Cs-morphism.
    \item $\eta: A^* \to (N,=)$ is a \Cs-morphism.
  \end{enumerate}
\end{lem}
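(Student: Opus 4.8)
The plan is to notice that surjectivity of $\eta$ does not involve the ordering on $N$ at all, so the whole statement reduces to comparing which languages the two morphisms recognize. Recall that the languages recognized by $\eta\colon A^*\to(N,\leq)$ are precisely the sets $\eta\inv(F)$ with $F\subseteq N$ an upper set for $\leq$, whereas the languages recognized by $\eta\colon A^*\to(N,=)$ are the sets $\eta\inv(F)$ for \emph{arbitrary} $F\subseteq N$. Given this, the implication $(2)\Rightarrow(1)$ is immediate: an upper set is in particular a subset of $N$, so every language recognized by $\eta\colon A^*\to(N,\leq)$ is also recognized by $\eta\colon A^*\to(N,=)$, hence lies in \Cs by hypothesis~$(2)$.

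For $(1)\Rightarrow(2)$, I would fix an arbitrary subset $F\subseteq N$ and prove $\eta\inv(F)\in\Cs$. Since $F=\bigcup_{s\in F}\{s\}$ is a finite union and \Cs is closed under union, it suffices to handle singletons, i.e.\ to show $\eta\inv(\{s\})\in\Cs$ for each $s\in N$. The key step is to use the finiteness of $N$ to express each singleton as a Boolean combination of the principal upper sets $\uclos t=\{t'\in N\mid t\leq t'\}$. Concretely, one verifies the identity
\[
  \{s\}=\uclos s\setminus\bigcup_{t\in\uclos s\setminus\{s\}}\uclos t .
\]
Indeed, any $x\in\uclos s$ with $x\neq s$ lies in $\uclos x$ and in $\uclos s\setminus\{s\}$, hence is removed; conversely $s$ belongs to $\uclos s$ and to none of the removed sets $\uclos t$, since $t\in\uclos s$ together with $t\leq s$ would force $t=s$.

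Applying $\eta\inv$ then turns this finite Boolean combination of subsets of $N$ into a finite Boolean combination of the languages $\eta\inv(\uclos t)$, each of which is recognized by $\eta\colon A^*\to(N,\leq)$ with the upper set $\uclos t$ as accepting set and therefore belongs to \Cs by hypothesis~$(1)$. As \Cs is a Boolean algebra, it is closed under this Boolean combination, so $\eta\inv(\{s\})\in\Cs$, and taking unions over $s\in F$ finishes the proof. I do not expect a genuine obstacle: the argument is elementary once the right identity is isolated. The only delicate point is the expression of a singleton as a Boolean combination of principal upper sets, and this is precisely where both the finiteness of $N$ and the closure of \Cs under complement are used — the latter being the reason the lemma is stated for a \vari rather than a \pvari.
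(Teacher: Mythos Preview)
Your proof is correct and follows essentially the same approach as the paper: both reduce to singletons via closure under union, express $\{s\}$ as $\uclos s\setminus\bigcup_{t\in\uclos s\setminus\{s\}}\uclos t$, and conclude using that \Cs is a Boolean algebra. Your explicit remark that surjectivity is order-independent is a nice touch that the paper leaves implicit.
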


\begin{proof}
  The implication $2) \Rightarrow 1)$ is trivial since all languages recognized by $\eta: A^* \to (N,\leq)$ are also recognized by $\eta: A^* \to (N,=)$. We prove that $1) \Rightarrow 2)$. Assume that $\eta: A^* \to (N,\leq)$ is a \Cs-morphism. We have to prove that  $\eta: A^* \to (N,=)$ is a \Cs-morphism, that is, that for all $F\subseteq N$, we have $\eta\inv(F)\in\Cs$. Since $\eta\inv(F)=\bigcup_{s\in F}\eta\inv(s)$ and \Cs is closed under union, it suffices to prove that $\eta\inv(s)\in\Cs$ for all $s\in N$. Therefore, we fix an arbitrary element $s\in N$. Let $T\subseteq M$ be the set of all elements $t\in M$ such that $s\leq t$ and $s \neq t$. One can verify that we have $\{s\} = (\uclos s) \setminus \left(\bigcup_{t \in T} \uclos t\right)$. Consequently, we obtain $\eta\inv(s) = \eta\inv(\uclos s) \setminus  \left(\bigcup_{t \in T} \eta\inv(\uclos t)\right)$. By hypothesis, this implies that $\eta\inv(s)$ is a Boolean combination of languages in \Cs. We conclude that $\eta\inv(s) \in \Cs$, since~\Cs is a Boolean~algebra.
\end{proof}

\begin{rem} \label{rem:cmbool}
  Another formulation of Lemma~\ref{lem:cmorphbool} is that when \Cs is a \vari, whether a morphism $\eta: A^* \to (N,\leq)$ is a \Cs-morphism does not depend on the ordering $\leq$ on $N$. This is important, as most classes that we consider in the paper are indeed \varis.
\end{rem}

While simple, the notion of \Cs-morphism is a central tool in the paper. First, it is connected to the membership problem via the following simple, yet crucial proposition.

\begin{prop} \label{prop:synmemb}
  Let \Cs be a \pvari. A regular language~$L$ belongs to~\Cs if and only if its syntactic morphism $\alpha_L: A^* \to (M_L,\leq_L)$ is a \Cs-morphism.
\end{prop}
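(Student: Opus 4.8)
The plan is to establish the two implications separately, exploiting the fact that the syntactic morphism $\alpha_L : A^* \to (M_L,\leq_L)$ recognizes $L$ itself (via an upper set $F$, namely the image of $L$) and, crucially, is the \emph{canonical} recognizer in the sense that every other morphism recognizing $L$ factors through $\alpha_L$. For the easy direction, suppose $\alpha_L$ is a \Cs-morphism. By definition of a \Cs-morphism, every language recognized by $\alpha_L$ lies in \Cs; since $L$ is recognized by $\alpha_L$ (indeed $L = \alpha_L\inv(\alpha_L(L))$ and $\alpha_L(L)$ is an upper set for $\leq_L$ by definition of the syntactic ordering), we immediately get $L \in \Cs$. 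The first thing I would do here is just recall why $\alpha_L(L)$ is an upper set: if $s \in \alpha_L(L)$ and $s \leq_L t$, pick $u,v$ with $\alpha_L(u)=s$, $\alpha_L(v)=t$; then $u \in L$ and $u \preceq_L v$, so $v \in L$ (taking $x=y=\veps$ in the definition of $\preceq_L$), whence $t \in \alpha_L(L)$.

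For the converse, assume $L \in \Cs$. I need to show that \emph{every} language $K$ recognized by $\alpha_L$ belongs to \Cs. The key step is the classical fact that the languages recognized by the syntactic morphism of $L$ are exactly the Boolean combinations of quotients of $L$ — or, being careful about the ordered setting, that each $\alpha_L\inv(\uclos s)$ is a finite intersection of quotients $u\inv L v\inv$ and of complements of such quotients, hence lies in the quotient-closed lattice (resp.\ Boolean algebra) \Cs. Concretely: for $s = \alpha_L(w)$, one has $\alpha_L\inv(\uclos s) = \{w' \mid w \preceq_L w'\} = \bigcap_{x,y \,:\, xwy \in L} x\inv L y\inv$ — a possibly infinite intersection, but by finiteness of $M_L$ it reduces to a finite one. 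Since \Cs is a \pvari it is closed under quotients and finite intersections, so each $\alpha_L\inv(\uclos s) \in \Cs$. When \Cs is merely a \pvari (not a Boolean algebra), one must check that the accepting sets $F$ one cares about are upper sets, so that $\alpha_L\inv(F) = \bigcup_{s \in F}\alpha_L\inv(\uclos s)$ expresses them using only unions of the $\alpha_L\inv(\uclos s)$'s; closure under finite union then finishes it. (If \Cs is a \vari, one may alternatively invoke Lemma~\ref{lem:cmorphbool} to work with the unordered monoid and then each $\alpha_L\inv(s)$ is a Boolean combination of quotients of $L$.)

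The main obstacle is getting the \emph{ordered} bookkeeping exactly right: one must verify that the relevant accepting sets are genuinely upper sets for $\leq_L$ (so the quotients $u\inv L v\inv$ themselves need to be recognized by $\alpha_L$ with upper-set accepting sets, which follows because quotienting commutes with $\alpha_L$ and preserves the syntactic ordering), and that the infinite intersection collapses to a finite one by finiteness of $M_L$. Once that is in place, both implications are short. I would present the argument in the order: (1) $\alpha_L(L)$ is an upper set, giving the easy direction; (2) description of $\alpha_L\inv(\uclos s)$ as a finite intersection of quotients of $L$; (3) closure properties of the \pvari \Cs finish the hard direction.
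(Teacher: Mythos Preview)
Your proposal is correct and takes essentially the same approach as the paper: the paper also proves the hard direction by showing that for each $s \in F$ (with $F$ an upper set), $\alpha_L\inv(\uclos s)$ is a finite intersection of two-sided quotients of $L$, writing this as $\bigcap_{(q,r) \in P_s} u_q\inv L u_r\inv$ where $P_s = \{(q,r) \in M_L^2 \mid qsr \in \alpha_L(L)\}$ --- which is exactly your intersection $\bigcap_{x,y:\,xwy\in L} x\inv L y\inv$ reindexed over $M_L^2$ to make finiteness explicit. One small slip: you mention ``complements of such quotients'' in passing, but your concrete formula (correctly) uses only intersections of quotients, which is what the \pvari hypothesis supports.
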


\begin{proof}
  The ``if'' implication is immediate since $L$ is recognized by its syntactic morphism. We prove the converse direction: assuming that $L \in \Cs$, we prove that every language recognized by $\alpha_L: A^* \to (M_L,\leq_L)$ belongs to \Cs (recall that syntactic morphisms are surjective by definition). Hence, we fix an upper set $F \subseteq M_L$ and prove that $\alpha_L\inv(F) \in \Cs$. For every $s \in M_L$, we fix a word $u_s \in A^*$ such that $\alpha_L(u_s) = s$. Moreover, we let $P_s \subseteq M_L^2$ be the~set,
  \[
    P_s = \big\{(q,r) \in M_L^2 \mid qsr \in \alpha_L(L)\big\}.
  \]
  We now prove that,
  \[
    \alpha_L\inv(F) = \bigcup_{s \in F} \bigcap_{(q,r) \in P_s}\left(u_q\inv L u_r\inv\right).
  \]
  Since $L \in \Cs$ and \Cs is a \pvari, this yields $\alpha_L\inv(F) \in \Cs$, as desired. We start with the left to right inclusion. Consider $w \in \alpha_L\inv(F)$ and let $s = \alpha_L(w) \in F$. We prove that $w \in u_q\inv L u_r\inv$ for every $(q,r) \in P_s$. Clearly, $\alpha_L(u_qwu_r) = qsr$ and by definition of $P_s$, we have $qsr \in \alpha_L(L)$. Since $L$ is recognized by $\alpha_L$, we get $u_qwu_r\in L$, which exactly says that $w \in u_q\inv L u_r\inv$, as desired.

  We turn to the converse inclusion. Fix $s \in F$ and consider $w \in \bigcap_{(q,r) \in P_s}\left(u_q\inv L u_r\inv\right)$. We show that $w \in \alpha_L\inv(F)$, \emph{i.e.}, $\alpha_L(w) \in F$. Since $s \in F$ and $F$ is an upper set, it suffices to prove that $s \leq_L \alpha_L(w)$. Recall that $u_s \in A^*$ is such that $\alpha_L(u_s)=s$. Therefore, by definition of the syntactic morphism $\alpha_L$, it suffices to show that $u_s \preceq_L w$, where $\preceq_L$ is the syntactic precongruence of $L$. Consider $x,y \in A^*$ such that $xu_sy \in L$. We have to prove that $xwy \in L$. Let $q = \alpha_L(x)$ and $r = \alpha_L(y)$. By hypothesis, we have $qsr = \alpha_L(xu_sy) \in \alpha_L(L)$ which yields $(q,r) \in P_s$. Hence, we get $w \in u_q\inv L u_r\inv$ by hypothesis, which yields $u_qwu_r \in L$. Since $\alpha_L(u_qwu_r) = q\alpha_L(w)r = \alpha_L(xwy)$ and $L$ is recognized by $\alpha_L$, it then follows that $xwy \in L$, as desired. This completes the proof.
\end{proof}

In view of Proposition~\ref{prop:synmemb}, getting an algorithm for \Cs-membership boils down to finding a procedure to decide whether an input morphism $\alpha: A^* \to (M,\leq)$ is a \Cs-morphism. This is how we approach the question in the paper.

\smallskip

Additionally, we shall use \Cs-morphisms as mathematical tools in proof arguments. They are convenient when manipulating arbitrary classes. We present a few properties that we shall need in this context. First, we have the following simple corollary of Proposition~\ref{prop:synmemb}.

\begin{prop}\label{prop:genocm}
  Let \Cs be a \pvari and consider finitely many languages $L_1,\dots,L_k \in \Cs$. There exists a \Cs-morphism $\eta: A^* \to (N,\leq)$ such that $L_1,\dots,L_k$ are recognized by $\eta$.
\end{prop}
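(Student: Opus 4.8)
The plan is to obtain $\eta$ by combining the syntactic morphisms of $L_1,\dots,L_k$. Since each $L_i$ lies in the \pvari \Cs, Proposition~\ref{prop:synmemb} gives that its syntactic morphism $\alpha_{L_i}: A^* \to (M_{L_i},\leq_{L_i})$ is a \Cs-morphism. I would then form the product morphism $\alpha = (\alpha_{L_1},\dots,\alpha_{L_k}): A^* \to \prod_{i=1}^k M_{L_i}$, where the product monoid carries the componentwise ordering, and set $N = \alpha(A^*)$ with the ordering inherited from the product; this makes $(N,\leq)$ a finite ordered monoid and the corestriction $\eta: A^* \to (N,\leq)$ of $\alpha$ a surjective morphism.

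Two verifications then remain. First, each $L_i$ is recognized by $\eta$: writing $L_i = \alpha_{L_i}^{-1}(F_i)$ for an upper set $F_i \subseteq M_{L_i}$ and letting $\pi_i$ denote the $i$-th projection, the set $\pi_i^{-1}(F_i) \cap N$ is an upper set of $N$ (projections are order-preserving), and its preimage under $\eta$ is exactly $L_i$ because $\pi_i \circ \eta = \alpha_{L_i}$. Second, $\eta$ is a \Cs-morphism, \emph{i.e.}, $\eta^{-1}(F) \in \Cs$ for every upper set $F \subseteq N$. Here I would use that $F$ is finite (as $N$ is) and that, $F$ being an upper set, an element $t = (t_1,\dots,t_k) \in N$ belongs to $F$ if and only if $s_i \leq_{L_i} t_i$ for all $i$ for some $s = (s_1,\dots,s_k) \in F$. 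Pulling this back through $\eta$ yields
\[
  \eta^{-1}(F) = \bigcup_{s \in F}\ \bigcap_{i=1}^k \alpha_{L_i}^{-1}(\uclos s_i),
\]
where $\uclos s_i$ is the least upper set of $M_{L_i}$ containing $s_i$. Each $\alpha_{L_i}^{-1}(\uclos s_i)$ is recognized by the \Cs-morphism $\alpha_{L_i}$, hence lies in \Cs, and since \Cs is a lattice (so closed under finite unions and intersections, with $\emptyset, A^* \in \Cs$ taking care of the degenerate cases), the right-hand side belongs to \Cs.

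I expect the displayed identity to be the only delicate step. Because \Cs is only assumed to be a lattice and not necessarily closed under complement, one cannot decompose $\eta^{-1}(F)$ through preimages of singletons $\{s\}$; one must instead exploit that an upper set of the product order is the union, over its elements $s$, of the principal upper sets $\uclos s$, and that $\uclos s$ for $s = (s_1,\dots,s_k)$ is the intersection of the cylinders $\pi_i^{-1}(\uclos s_i)$. The remaining points — that $N$ is a submonoid carrying a compatible order, that $\eta$ is surjective, that projections are order-preserving — are routine. If one preferred to avoid orderings, one could first handle the case where \Cs is a \vari via Lemma~\ref{lem:cmorphbool} and preimages of singletons, but the argument above treats the general \pvari statement uniformly.
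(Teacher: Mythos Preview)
Your proposal is correct and follows the same approach as the paper: take the product of the syntactic morphisms (which are \Cs-morphisms by Proposition~\ref{prop:synmemb}), equip it with the componentwise order, and corestrict to the image. You are in fact more explicit than the paper on the key verification, giving the precise decomposition $\eta^{-1}(F) = \bigcup_{s \in F} \bigcap_i \alpha_{L_i}^{-1}(\uclos s_i)$ where the paper simply asserts that languages recognized by the product are built from those recognized by the factors; your formula makes transparent why closure under \emph{both} union and intersection (rather than intersection alone) is used.
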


\begin{proof}
  For every $i \leq k$, we let $\alpha_i: A^* \to (M_i,\leq_i)$ be the syntactic morphism of $L_i$. We know from Proposition~\ref{prop:synmemb} that $\alpha_i$ is a \Cs-morphism. Consider the monoid $M = M_1 \times \cdots \times M_k$ equipped with the componentwise multiplication. Moreover, consider the ordering $\leq$  on $M$ defined by $(s_1,\dots,s_k) \leq (t_1,\dots,t_k)$ if and only if  $s_i \leq_i t_i$ for every $i \leq k$. Clearly, $(M,\leq)$ is an ordered monoid. Moreover, let $\alpha: A^* \to (M,\leq)$ be the morphism defined by $\alpha(w) = (\alpha_1(w_1),\dots,\alpha_k(w))$ for every $w \in A^*$. One can verify from the definition that all languages recognized by $\alpha$ are finite intersections of languages recognized by $\alpha_1,\dots,\alpha_k$ (in particular, $\alpha$ recognizes each $L_i$). Hence, all languages recognized by $\alpha$ belong to \Cs. It now suffices to let $\eta: A^* \to (N,\leq)$ be the surjective morphism obtained by restricting the codomain of  $\alpha$ to its image.
\end{proof}

We now consider the particular case of \Gs-morphisms when \Gs is a \vari of group languages. In this case, the codomain of any \Gs-morphism is a group.

\begin{lem} \label{lem:gmorph}
  Let \Gs be a \vari of group languages and let $\eta: A^* \to G$ be a \Gs-morphism. Then, $G$ is a group.
\end{lem}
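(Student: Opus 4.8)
The plan is to reduce the statement to a purely algebraic fact and then exploit the definition of a \Gs-morphism. Recall that a \emph{finite} monoid $G$ is a group if and only if $1_G$ is its only idempotent: indeed, if $E(G) = \{1_G\}$, then for every $s \in G$ the element $s^{\omega(G)}$ is an idempotent, hence $s^{\omega(G)} = 1_G$ (assuming, as we may, that $\omega(G) \geq 1$), so that $s \cdot s^{\omega(G)-1} = s^{\omega(G)-1} \cdot s = 1_G$ and $s$ is invertible. Thus it suffices to prove that $1_G$ is the only idempotent of $G$. We fix an arbitrary $e \in E(G)$ and aim to show $e = 1_G$.

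Consider the language $L = \eta\inv(e)$. Since $\eta$ is surjective and $e \in G$, this set is nonempty, and since it is of the form $\eta\inv(\{e\})$, it is recognized by $\eta$ (here we use that $G$ is unordered, so that every subset is an upper set; see Lemma~\ref{lem:cmorphbool} and Remark~\ref{rem:cmbool}). As $\eta$ is a \Gs-morphism, we get $L \in \Gs$, so $L$ is a group language. Hence there are a finite group $H$, a morphism $\beta: A^* \to H$ and a subset $F \subseteq H$ with $L = \beta\inv(F)$. Finally, by surjectivity of $\eta$, fix a word $w \in A^*$ with $\eta(w) = e$; note that $w \in L$.

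Now pick an integer $p \geq 1$ such that $h^p = 1_H$ for every $h \in H$ (a finite group has such an exponent). On the one hand, since $e$ is idempotent, $\eta(w^p) = e^p = e$, so $w^p \in L$. On the other hand, $\beta(w^p) = \beta(w)^p = 1_H = \beta(\veps)$. Since $w^p \in L$ and $L = \beta\inv(F)$ is recognized by $\beta$, we deduce $\veps \in L = \eta\inv(e)$, and therefore $e = \eta(\veps) = 1_G$. This is exactly what we wanted, so $G$ is a group.

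The argument is short and I do not expect a real obstacle; the only idea worth isolating is that a recognizer into a finite group cannot distinguish $\veps$ from a power $w^p$ when $p$ is a multiple of the group's exponent, and this collapses the idempotent $e$ back onto $1_G$. The mild care needed is just the bookkeeping around working with the unordered monoid $G$, which is handled by the remark that \Gs-morphisms do not depend on the ordering.
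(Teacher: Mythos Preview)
Your proof is correct and follows essentially the same approach as the paper: both exploit that a language in \Gs is recognized by a morphism into a finite group, and use a suitable power of a word to collapse things back to $1_G$. The only cosmetic difference is that the paper works with $\eta\inv(1_G)$ and shows directly that $g^{\omega(G)} = 1_G$ for every $g$, whereas you work with $\eta\inv(e)$ for an arbitrary idempotent $e$ and show $e = 1_G$; the underlying trick is identical.
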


\begin{proof}
  Let $p = \omega(G)$. We prove that for every $g \in G$, the element $g^{p-1}$ is an inverse of $g$, \emph{i.e.}, that $g^p = 1_G$. Since $\eta$ is a \Gs-morphism and \Gs is a \vari of group languages,  $\eta\inv(1_G)\in\Gs$ is recognized by a finite group. Let $\beta: A^*\to H$ be a morphism into a finite group $H$ recognizing $\eta\inv(1_G)$. Since $\veps \in \eta\inv(1_G)$, it follows that $\beta\inv(1_H) \subseteq \eta\inv(1_G)$. Let $w \in \eta\inv(g)$ (recall that $\eta$ is surjective, since it is a $\Gs$-morphism), and let $k = \omega(H)$. Since $H$ is a group, we have $(\beta(w))^{kp}=1_H$ (the unique idempotent in $H$). Therefore, $\beta(w^{kp}) = 1_H$, which implies that $w^{kp} \in \eta\inv(1_G)$. Hence, $\eta(w^{kp}) = 1_G$ which exactly says that $g^p = 1_G$ by definition of $p$.
\end{proof}

We now look at $\Gs^+$-morphisms when \Gs is a \vari of group languages.

\begin{lem} \label{lem:gpmorph}
  Let \Gs be a \vari of group languages and let $\eta: A^* \to N$ be a $\Gs^+$-morphism. Then, the set $G = \eta(A^+)$ is a group and the map $\beta: A^* \to G$ defined by $\beta(\veps) = 1_G$ and $\beta(w) = \eta(w)$ for $w \in A^+$ is a \Gs-morphism.
\end{lem}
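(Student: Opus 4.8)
The plan is to prove the statement in two stages: first that $G=\eta(A^+)$ is a group, by realising it as a homomorphic image of a group obtained from~\Gs; and then that $\beta$ is a surjective morphism all of whose recognised languages belong to~\Gs.

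For the first stage, I would exploit the finiteness of $N$: the morphism $\eta$ recognises only finitely many languages, all of which belong to $\Gs^+$, so for every $s\in N$ we may fix a language $L_s\in\Gs$ with $\eta\inv(s)=L_s\cap A^+$ or $\eta\inv(s)=L_s\cup\{\veps\}$. In either case $\eta\inv(s)\cap A^+=L_s\cap A^+$. Applying Proposition~\ref{prop:genocm} (together with Lemma~\ref{lem:cmorphbool}) to the finite family $(L_s)_{s\in N}$ produces a \Gs-morphism $\gamma\colon A^*\to H$ into a finite monoid recognising all of the $L_s$, and Lemma~\ref{lem:gmorph} guarantees that $H$ is a group; in particular $\gamma(A^+)=\gamma(A)H=H$, since $A$ is nonempty and $H$ is a group. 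The crucial point is that $\eta$ factors through $\gamma$ on $A^+$: if $w,w'\in A^+$ satisfy $\gamma(w)=\gamma(w')$, then, since $\gamma$ recognises each $L_s$ and $\eta\inv(s)\cap A^+=L_s\cap A^+$, we get $w\in\eta\inv(s)\Leftrightarrow w'\in\eta\inv(s)$ for every $s$, whence $\eta(w)=\eta(w')$. This yields a well-defined map $\phi\colon H\to N$ with $\phi(\gamma(w))=\eta(w)$ for all $w\in A^+$, which is clearly a semigroup morphism with image exactly $G=\eta(A^+)$. A homomorphic image of a finite group is a finite group, so $G$ is a group, with identity element $1_G=\phi(1_H)$ (note that $1_G$ need not equal $1_N$).

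For the second stage, I would first check that $\beta$ is well defined as a monoid morphism $A^*\to G$: it maps $\veps$ to $1_G$ and every word of $A^+$ into $\eta(A^+)=G$, it is multiplicative (the only nontrivial case, both arguments in $A^+$, is immediate from multiplicativity of $\eta$), and it is surjective since $\beta(A^*)=\{1_G\}\cup\eta(A^+)=G$. It then remains to prove that every language recognised by $\beta$ lies in~\Gs. For $F\subseteq G$, we have $\beta\inv(F)=\bigl(\eta\inv(F)\cap A^+\bigr)\cup X$, where $X=\{\veps\}$ if $1_G\in F$ and $X=\emptyset$ otherwise. Since $\eta\inv(F)\in\Gs^+$ and, by Lemma~\ref{lem:wsuit}, $\Gs^+$ is a \vari containing $\{\veps\}$ and $A^+$, it follows that $\beta\inv(F)\in\Gs^+$. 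On the other hand, $\beta\inv(F)$ is a group language, being recognised by the finite group $G$; hence Lemma~\ref{lem:gplus1} gives $\beta\inv(F)\in\Gs$. Therefore $\beta$ is a \Gs-morphism.

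I expect the main obstacle to be the first stage, and more precisely the construction of the factorisation $\phi$ that identifies $G$ as a quotient of a group. The subtlety to keep in mind throughout is the treatment of the empty word: $1_G$ is an idempotent of $N$ that in general differs from $1_N$ (because $\veps\notin A^+$), so $\beta$ is genuinely not just the corestriction of $\eta$, and one must handle $\veps$ with care both in the identity $\eta\inv(s)\cap A^+=L_s\cap A^+$ and in the description of $\beta\inv(F)$. The rest is routine bookkeeping, and the degenerate case $A=\emptyset$ (where $A^+=\emptyset$) does not arise under the standing assumption that the alphabet is nonempty.
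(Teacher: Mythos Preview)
Your proof is correct. The second stage is essentially the paper's argument: any language recognised by $\beta$ is a group language and lies in $\Gs^+$, so Lemma~\ref{lem:gplus1} places it in~\Gs. (The paper phrases this slightly differently, observing directly that such a language is recognised by $\eta$ rather than passing through the closure properties of $\Gs^+$, but the two are equivalent.)

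The first stage, however, takes a genuinely different route. The paper argues elementwise: for $g,h\in G$ it picks preimages $u,v\in A^+$, uses that $\eta\inv(h)$ agrees on $A^+$ with some $L_h\in\Gs$, and exploits the group recognising $L_h$ to show $u^{np}v,vu^{np}\in L_h$ for suitable $n$, whence $g^{\omega(N)}h=hg^{\omega(N)}=h$. Your approach is more structural: you build a single \Gs-morphism $\gamma\colon A^*\to H$ (with $H$ a group by Lemma~\ref{lem:gmorph}) recognising all the $L_s$ at once, and then observe that $\eta$ factors through $\gamma$ on $A^+$, so $G$ is a semigroup quotient of the group~$H$. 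This buys you a clean conceptual picture and reuses existing lemmas (\ref{lem:gmorph}, Proposition~\ref{prop:genocm}) instead of repeating their content; the paper's argument is more self-contained and avoids the auxiliary morphism, at the cost of a small ad hoc computation. Both are perfectly fine.
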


\begin{proof}
  We first prove that $G = \eta(A^+)$ is a group. Let $p = \omega(N)$. It suffices to show that for every $g \in G$, we have $g^p=1_G$ \emph{i.e.}, that for every $g,h \in G$, we have $g^p h = hg^p = h$. By definition of $G$, there exist $u,v \in A^+$ such that $\eta(u) = g$ and $\eta(v) = h$. By definition of $\Gs^+$, since $\eta: A^* \to N$ is a $\Gs^+$-morphism, we know that there exists a language $L_h \in \Gs$ such that either $\eta\inv(h) = \{\veps\} \cup L_h$ or $\eta\inv(h) = A^+ \cap L_h$. Since  $v \in A^+$ and $\eta(v) = h$, we have $v \in L_h$. Since $L_h \in \Gs$ is a group language, there exists a number $n \geq 1$ such that $u^{np}v \in L_h$ and $vu^{np} \in L_h$: it suffices to choose $n$ as the idempotent power of a finite group recognizing $L_h$. Since $u^{np}v,vu^{np} \in A^+$, it follows that $\eta(u^{np}v) =\eta(vu^{np}) = h$. Since $p = \omega(N)$, this exactly says that $g^p h = hg^p = h$, concluding the proof that $G$ is a group. Finally, consider the morphism $\beta: A^* \to G$ defined by $\beta(\veps) = 1_G$ and $\beta(w) = \eta(w)$ for every $w \in A^+$. Let $L$ be a language recognized by $\beta$. By definition $L$ is a group language and it is also recognized by $\eta$, which implies that $L \in \Gs^+$ by hypothesis on $\eta$. Hence, it follows from Lemma~\ref{lem:gplus1} that $L \in \Gs$, and we conclude that $\beta$ is a \Gs-morphism.
\end{proof}

Finally, we consider the special case where \Cs is a \emph{finite} \vari (\emph{i.e.}, \Cs contains finitely many languages). First, recall from Lemma~\ref{lem:cmorphbool} that being a \Cs-morphism does not depend on the ordering of the finite monoid, which we can therefore assume to be unordered. Moreover, since \Cs is finite, Proposition~\ref{prop:genocm} implies that there exists a \Cs-morphism recognizing \emph{all} languages in \Cs. The following lemma implies that it is unique (up to renaming).

\begin{lem} \label{lem:cmdiv}
  Let \Cs be a finite \vari and let $\alpha: A^* \to M$ and $\eta: A^* \to N$ be two \Cs-morphisms. If $\alpha$ recognizes all languages in \Cs, then there exists a morphism $\gamma: M \to N$ such that $\eta = \gamma \circ \alpha$.
\end{lem}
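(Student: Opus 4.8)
The plan is to construct $\gamma$ directly on the level sets of $\alpha$, reducing everything to a single claim: for all $u,v \in A^*$,
\[
  \alpha(u) = \alpha(v) \ \Longrightarrow\ \eta(u) = \eta(v).
\]
Since \Cs is a \vari, Lemma~\ref{lem:cmorphbool} allows me to assume that $N$ is unordered, so that $\eta$ recognizing a language means recognizing it set-theoretically; in particular, for every $t \in N$ the singleton-preimage $\eta\inv(t)$ is a language recognized by $\eta$, hence $\eta\inv(t) \in \Cs$.

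First I would prove the claim. Fix $u,v$ with $\alpha(u) = \alpha(v)$ and set $t = \eta(u)$. By the observation above, $\eta\inv(t) \in \Cs$, and since $\alpha$ recognizes \emph{all} languages in \Cs, there is a set $F_t \subseteq M$ with $\eta\inv(t) = \alpha\inv(F_t)$. As $u \in \eta\inv(t)$, we get $\alpha(u) \in F_t$, hence $\alpha(v) \in F_t$, hence $v \in \alpha\inv(F_t) = \eta\inv(t)$, i.e.\ $\eta(v) = t = \eta(u)$.

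Then the conclusion is routine. Since $\alpha$ is surjective (it is a \Cs-morphism), every element of $M$ has the form $\alpha(u)$, and the claim shows that $\alpha(u) \mapsto \eta(u)$ is a well-defined map $\gamma: M \to N$. It satisfies $\eta = \gamma \circ \alpha$ by construction, and it is a morphism: $\gamma(1_M) = \gamma(\alpha(\veps)) = \eta(\veps) = 1_N$, and for $s = \alpha(u)$, $s' = \alpha(v)$ one has $\gamma(ss') = \gamma(\alpha(uv)) = \eta(uv) = \eta(u)\eta(v) = \gamma(s)\gamma(s')$.

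I do not expect a genuine obstacle: the only point requiring care is \emph{why} $\eta\inv(t)$ belongs to \Cs. This uses that \Cs is closed under complement — so that a singleton preimage, and not merely a union of preimages over an upper set, lies in \Cs — together with the reduction to unordered codomains from Lemma~\ref{lem:cmorphbool}. For a mere \pvari the statement would fail, which is consistent with the hypothesis that \Cs is a finite \vari.
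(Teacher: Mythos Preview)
Your proof is correct and follows essentially the same approach as the paper: both arguments hinge on the observation that $\eta\inv(\eta(u)) \in \Cs$ is recognized by $\alpha$, whence $\alpha(u)=\alpha(v)$ forces $\eta(u)=\eta(v)$, and then $\gamma$ is defined on representatives. Your explicit invocation of Lemma~\ref{lem:cmorphbool} is slightly redundant since the statement already presents $N$ unordered (so singleton preimages are automatically recognized), but this does no harm.
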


\begin{proof}
  We assume that  $\alpha$ recognizes all languages in \Cs and define $\gamma: M \to N$. For every $s \in M$, we fix a word $w_s \in \alpha\inv(s)$ (recall again that \Cs-morphisms are surjective by definition) and define $\gamma(s) = \eta(w_s)$. It remains to prove that $\gamma$ is a morphism and that $\eta = \gamma \circ \alpha$. It suffices to prove the latter: since $\alpha$ is surjective, the former is an immediate consequence. Let $v \in A^*$. We show that $\eta(v) = \gamma(\alpha(v))$. Let $s = \alpha(v)$. By definition, $\gamma(s) = \eta(w_s)$. Hence, we need to prove that $\eta(v) = \eta(w_s)$. Since $\eta$ is a \Cs-morphism, we have $\eta\inv(\eta(w_{s})) \in \Cs$. Hence, our hypothesis implies that $\eta\inv(\eta(w_{s}))$ is recognized by $\alpha$. Since it is clear that $w_s \in \eta\inv(\eta(w_{s}))$ and $\alpha(v) = \alpha(w_s) = s$, it follows that $v \in \eta\inv(\eta(w_{s}))$ which exactly says that $\eta(v) = \eta(w_s)$, completing the proof.
\end{proof}

By Lemma~\ref{lem:cmdiv}, if \Cs is a finite \vari and $\alpha: A^* \to M$ and $\eta: A^* \to N$ are two \Cs-morphisms which both recognize \emph{all} languages in \Cs, there exist two morphisms $\gamma: M \to N$ and $\beta: N \to M$ such that $\eta=\gamma \circ \alpha$ and $\alpha = \beta \circ \eta$. Since $\alpha$ and $\eta$ are surjective, it follows that $\beta \circ \gamma: M \to M$ is the identity morphism. Hence, $\beta$ and $\gamma$ are both isomorphisms which means that $\alpha$ and $\eta$ are the same object up to renaming. We call it the \emph{canonical \Cs-morphism} and denote it by $\etac: A^* \to \canc$. Let us emphasize that this object is only defined when \Cs is a \emph{finite \vari}.

\begin{exa}
  An important example of finite \vari is the class \at of alphabet testable languages defined above: the Boolean combinations of languages $A^*aA^*$ for $a \in A$. It can be verified that \canat is the monoid $2^A$ of subalphabets whose multiplication is union (up to isomorphism). Moreover, $\etaat: A^* \to 2^A$ is the morphism such that for every $w \in A^*$, $\etaat(w) \in 2^A$ is the set of letters occurring in $w$ (\emph{i.e.}, the least $B \subseteq A$ such that~$w \in B^*$).
\end{exa}

\section{Fragments of first-order logic}
\label{sec:frag}
We recall the definition of first-order logic overs words and its fragments. In the paper, we are interested in two particular fragments which we define here: two-variable first-order logic \fod and level \dec{2} in the quantifier alternation hierarchy of first-order logic.

\subparagraph{Positions.} We view a word $w \in A^*$ as a logical structure whose domain is a set of positions. More precisely, a word $w =a_1 \cdots a_{|w|} \in A^*$ for the letters $a_1,\dots,a_{|w|} \in A$, is viewed as an \emph{ordered set of $|w|+2$ positions} $\pos{w} = \{0,1,\dots,|w|,|w|+1\}$. Each position $i$ such that $1 \leq i \leq |w|$ carries the label $a_i \in A$. On the other hand, positions $0$ and $|w|+1$ are \emph{artificial} leftmost and rightmost positions which carry \emph{no label}.

\begin{rem}
  With this convention, every word has at least two positions. In particular, the empty word \veps contains only the two unlabeled positions: we have $\pos{\veps}= \{0,1\}$.
\end{rem}

Given a word $w= a_1\cdots a_{|w|} \in A^*$ and a position $i \in \pos{w}$, we associate an element $\wpos{w}{i}\in A\cup \{min,max\}$. If $i=0$, then $\wpos{w}{i}=min$ and if $i = |w|+1$, then $\wpos{w}{i} = max$. Otherwise, $1\leq i \leq |w|$ and we let $\wpos{w}{i} = a_i$. Additionally, if  $i,j \in \pos{w}$ are \emph{two} positions such that $i < j$, we write $\infix{w}{i}{j} = a_{i+1} \cdots a_{j-1} \in A^*$ (\emph{i.e.}, the infix obtained by keeping the letters carried by positions that are \emph{strictly} between $i$ and $j$). Note that $\infix{w}{0}{|w|+1} = w$.

\subparagraph{First-order logic.} We use first-order logic (\fo) to express properties of words $w$ in~$A^*$. A particular formula can quantify over the positions in $w$ and use a predetermined set of predicates to test properties of these positions. We also allow two constants ``$min$'' and ``$max$''  interpreted as the artificial unlabeled positions $0$ and $|w|+1$. For a formula $\varphi(x_1,\dots,x_n)$ with free variables $x_1,\dots,x_n$, a word $w \in A^*$ and positions $i_1,\dots,i_n \in \pos{w}$, we write $w \models \varphi(i_1,\dots,i_n)$ to indicate that $w$ satisfies $\varphi$ when $x_1,\dots,x_n$ are interpreted as the positions $i_1,\dots,i_n$. A sentence $\varphi$ is a formula without free variables. It defines the language $L(\varphi) = \{w \in A^* \mid w \models \varphi\}$.

We use two kinds of predicates. The first kind is standard. For each $a \in A$, we have a unary predicate (also denoted by~$a$)  selecting all positions labeled by ``$a$''. We also use three binary predicates: equality ``$=$'', the (strict) linear order ``$<$'' and the successor ``$+1$''.

\begin{exa}
  The language $A^*abA^*cA^*d$ is defined by the following \fo sentence:
  \[
    \exists x_1\exists x_2\exists x_3\exists x_4\ (x_1+1 = x_2) \wedge (x_2 < x_3) \wedge  (x_4+1 = max) \wedge a(x_1) \wedge b(x_2) \wedge c(x_3) \wedge d(x_4).
  \]
\end{exa}

A \emph{fragment} of first-order logic consists in the specification of a (possibly finite) set $V$ of variables and a set \Fs of first-order formulas using only the variables of $V$. This set \Fs must contain all quantifier-free formulas and must be closed under disjunction, conjunction and  quantifier-free substitution (if $\varphi \in \Fs$, replacing a quantifier-free subformula in $\varphi$ by another quantifier-free formula produces a new formula in \Fs). If \frS is a set of predicates and~\Fs is a fragment of first-order logic, we write $\Fs(\frS)$ for the class containing all languages $L(\varphi)$ where $\varphi$ is a sentence of \Fs using only predicates that belong to $\frS \cup A \cup \{=\}$ (the label predicates and the equality predicate are \emph{always} implicitly allowed). We provide examples at the end of the section (\emph{``Particular fragments''}).

We consider generic \emph{signatures}, \emph{i.e.}, possibly infinite sets of predicates built from an arbitrary input class \Cs. There are of two kinds:
\begin{itemize}
  \item  First, we consider the set $\frI_{\Cs}$ containing a binary ``infix'' predicate $I_L(x,y)$ for every language $L \in \Cs$. Given a word $w \in A^*$ and two positions $i,j \in \pos{w}$, we have $w \models I_L(i,j)$ if and only if $i < j$ \emph{and} $\infix{w}{i}{j} \in L$.
  \item Second, we consider the set $\frP_{\Cs}$ containing a unary ``prefix'' predicate $P_L(x)$ for every language $L \in \Cs$. Given a word $w \in A^*$ and a position $i \in \pos{w}$, we have $w \models P_L(i)$ if and only if $0 < i$ \emph{and} $\infix{w}{0}{i} \in L$.
\end{itemize}
Observe that the predicates available in $\frP_{\Cs}$ are easily expressed from those in $\frI_{\Cs}$: clearly, the formula $P_L(x)$ is equivalent to $I_L(min,x)$.

\smallskip

The sets of predicates $\frP_{\Cs}$ and $\frI_{\Cs}$ are defined for every class \Cs. Yet, we shall mostly be interested in the cases when \Cs is either a \vari of group languages \Gs or its \wsuit extension $\Gs^+$. This is motivated by the following lemma.

\begin{lem}\label{lem:gensig}
  Let \Gs be a \vari of group languages and let \Fs be a fragment of first-order logic. Then, $\Fs(\frI_{\Gs}) = \Fs(<,\frP_{\Gs})$ and $\Fs(\frI_{\Gs^+}) = \Fs(<,+1,\frP_{\Gs})$.
\end{lem}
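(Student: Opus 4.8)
The plan is to prove each of the two equalities by establishing its two inclusions separately. Every step will be a \emph{quantifier-free substitution}: an atomic subformula is replaced by a quantifier-free formula using exactly the same variables (together with the constant $min$, which first-order logic always allows). By closure of $\Fs$ under quantifier-free substitution, such a rewriting keeps us inside $\Fs$, and it does not enlarge the set of variables used (this matters for fragments such as $\fod$). For the two ``easy'' inclusions, observe first that, since $\Gs$ is a \vari, $A^*$ and $\emptyset$ belong to $\Gs$; hence the atom $x < y$ is equivalent to $I_{A^*}(x,y)$, and, as already noted, $P_L(x)$ is equivalent to $I_L(min,x)$ for $L \in \Gs$. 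Substituting these identities shows $\Fs(<,\frP_{\Gs}) \subseteq \Fs(\frI_{\Gs})$. For the second equality, the successor detects an \emph{empty} infix: for positions $0 \le x < y$ of a word $w$, one has $w(x,y) = \veps$ if and only if $y = x+1$, so the atom $x+1=y$ is equivalent to $I_{\{\veps\}}(x,y)$, where $\{\veps\} = \emptyset \cup \{\veps\} \in \Gs^+$; together with $x<y \equiv I_{A^*}(x,y)$ and $P_K(x) \equiv I_K(min,x)$ (for $K \in \Gs \subseteq \Gs^+$), this gives $\Fs(<,+1,\frP_{\Gs}) \subseteq \Fs(\frI_{\Gs^+})$.

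The heart of the proof is the reverse inclusion of the first equality, $\Fs(\frI_{\Gs}) \subseteq \Fs(<,\frP_{\Gs})$: one must express each atom $I_L(x,y)$ with $L \in \Gs$ over the signature $\{<\} \cup \frP_{\Gs}$. Fix a $\Gs$-morphism $\eta : A^* \to G$ recognizing $L$, which exists by Proposition~\ref{prop:genocm}; by Lemma~\ref{lem:gmorph}, $G$ is a group, and every fiber $\eta\inv(g)$ ($g \in G$), being recognized by $\eta$, belongs to $\Gs$, so $P_{\eta\inv(g)}$ is a predicate of $\frP_{\Gs}$. Write $L = \eta\inv(F)$. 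For positions $0 < x < y$ of a word $w$, the factorization $w(0,y) = w(0,x)\,\ell(x)\,w(x,y)$, where $\ell(x)$ is the letter at position $x$, gives after applying $\eta$ and using invertibility in $G$ that $\eta(w(x,y)) = \eta(\ell(x))\inv\,\eta(w(0,x))\inv\,\eta(w(0,y))$. Since $\eta(w(0,x))$ is the unique $g$ with $w \models P_{\eta\inv(g)}(x)$ (which holds exactly when $x \ge 1$, automatic here), $\eta(w(0,y))$ is read off likewise at $y$, and $\eta(\ell(x))$ is determined by the label of $x$, while for $x = 0$ the infix $w(x,y)$ equals the prefix $w(0,y)$ so that $I_L(min,y)$ is just $P_L(y)$, we obtain
\[
  I_L(x,y)\ \equiv\ \bigl(x = min \wedge P_L(y)\bigr)\ \vee\ \bigvee_{\substack{g,h \in G,\ a \in A\\ \eta(a)\inv g\inv h \in F}} \bigl(min < x \wedge x < y \wedge P_{\eta\inv(g)}(x) \wedge a(x) \wedge P_{\eta\inv(h)}(y)\bigr),
\]
a finite, quantifier-free formula over $\{<\} \cup \frP_{\Gs}$ in the variables $x,y$ (in the disjunction, $P_{\eta\inv(g)}(x)$ forces $1 \le x$ and $x < y \le |w|+1$ forces $x \le |w|$, so the label atom $a(x)$ is meaningful). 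Substituting this for every occurrence of $I_L$ finishes this direction.

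For the reverse inclusion of the second equality, $\Fs(\frI_{\Gs^+}) \subseteq \Fs(<,+1,\frP_{\Gs})$, I would reduce to the case just treated. Every $L \in \Gs^+$ is of the form $K \cap A^+$ or $K \cup \{\veps\}$ with $K \in \Gs$, and then $I_{K \cap A^+}(x,y)$ is equivalent to $I_K(x,y) \wedge \neg(x+1=y)$ and $I_{K \cup \{\veps\}}(x,y)$ to $I_K(x,y) \vee (x+1=y)$. Since the atoms $I_K$ with $K \in \Gs$ were already rewritten over $\{<\} \cup \frP_{\Gs}$ in the previous paragraph, substituting these expressions turns any sentence of $\Fs(\frI_{\Gs^+})$ into a sentence of $\Fs(<,+1,\frP_{\Gs})$.

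The main obstacle is the displayed reconstruction of $\eta(w(x,y))$ from the prefix values at $x$ and $y$ and the letter at $x$: this is precisely where the hypothesis that $\Gs$ consists of \emph{group} languages is used — through Lemma~\ref{lem:gmorph}, which makes $G$ a group, so that its elements are invertible — and it has no counterpart for general classes \Cs. The remaining ingredients (the $x = min$ corner case, and the passage from $\Gs^+$ to $\Gs$ via the successor) are routine bookkeeping.
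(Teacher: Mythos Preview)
Your proof is correct and follows essentially the same approach as the paper's: the same quantifier-free substitutions for the easy inclusions, the same group-inverse trick via a \Gs-morphism into a finite group for the key inclusion $\Fs(\infsigg)\subseteq\Fs(<,\prefsigg)$ (with the same case split on $x=min$), and the same reduction of $\Gs^+$ to $\Gs$ via the successor for the last inclusion. The only cosmetic difference is that you include an explicit $min<x$ conjunct, which is redundant since $P_{\eta\inv(g)}(x)$ already forces $x\geq 1$, but this is harmless.
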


\begin{proof}
  We first prove that $\Fs(<,\prefsigg) \subseteq \Fs(\infsigg)$ and $\Fs(<,+1,\prefsigg) \subseteq \Fs(\infsiggp)$. It suffices to prove that we may express all atomic formulas of $\Fs(<,\prefsigg)$ and $\Fs(<,+1,\prefsigg)$ using atomic formulas of $\Fs(\infsigg)$ and $ \Fs(\infsiggp)$ respectively. The linear order $x < y$ is expressed by $I_{A^*}(x,y)$. For every $L \in \Gs$,  $P_L(x)$ is expressed by $I_L(min,x)$. Finally, the successor relation $x+1 = y$ is expressed by $I_{\{\veps\}}(x,y)$ (by definition of $\Gs^+$, we know that $I_{\{\veps\}}$ is a predicate of \infsiggp).

  We now prove that $\Fs(\infsigg) \subseteq \Fs(<,\prefsigg)$. By definition of fragments, it suffices to prove that for every $L \in \Gs$, the atomic formula $I_L(x,y)$ is equivalent to a quantifier-free formula of $\Fs(<,\prefsigg)$. Proposition~\ref{prop:genocm} yields a \Gs-morphism $\alpha: A^* \to G$ recognizing~$L$. Since \Gs is a \vari of group languages, $G$ is a group by Lemma~\ref{lem:gmorph}. Since $\alpha $ is a \Gs-morphism, the language $\alpha\inv(g)$ belongs to \Gs for every $g\in G$, whence $P_{\alpha\inv(g)}$ is a predicate available in $\prefsigg$. Let $F \subseteq G$ be the set such that $\alpha\inv(F) = L$. Since $G$ is a group, we have $\alpha(v) = (\alpha(ua))\inv \alpha(uav)$ for all $u,v \in A^*$ and $a \in A$. We define $T = \{(g,a,h) \in G \times A \times G \mid (g\alpha(a))\inv h \in F\}$. Consider the following quantifier-free formula of~$\Fs(<,\prefsigg)$:
  \[
    \varphi(x,y) = (x < y) \wedge \Big(\bigvee_{(g,a,h) \in T} \big(P_{\alpha\inv(g)}(x) \wedge a(x) \wedge P_{\alpha\inv(h)}(y)\big)\Big).
  \]
  One now may verify that the atomic formula $I_L(x,y)$ is equivalent to the following quantifier-free formula of~$\Fs(<,\prefsigg)$:
  \[
    (x = min \wedge P_L(y)) \vee  \varphi(x,y).
  \]
  This concludes the proof of the inclusion $\Fs(\infsigg) \subseteq \Fs(<,\prefsigg)$.

  Finally, we prove that $\Fs(\infsiggp) \subseteq \Fs(<,+1,\prefsigg)$. By definition, it suffices to show that for every language $K \in \Gs^+$, the atomic formula $I_K(x,y)$ is equivalent to a quantifier-free formula of $\Fs(<,+1,\prefsigg)$. By definition of $\Gs^+$, there exists $L \in \Gs$ such that either $K = \{\veps\} \cup L$ or $K = A^+ \cap L$. Consequently, $I_K(x,y)$ is equivalent to either $I_{\{\veps\}}(x,y) \vee I_L(x,y)$ or $I_{A^+}(x,y) \wedge I_L(x,y)$. Since, $L \in \Gs$, we already proved above that $I_L(x,y)$ is equivalent to a quantifier-free formula of $\Fs(<,\prefsigg) \subseteq \Fs(<,+1,\prefsigg)$. Moreover, $I_{\{\veps\}}(x,y)$ is equivalent to $x+1 = y$ and $I_{A^+}$ is equivalent to $x < y \wedge \neg (x+1 = y)$. This concludes the proof.
\end{proof}

Lemma~\ref{lem:gensig} covers many natural sets of predicates. We present three important cases. If \Gs is the trivial \vari $\stzer = \{\emptyset,A^*\}$, all predicates in $\frP_{\stzer}$ are trivial. Hence, we get the classes $\Fs(<)$ and $\Fs(<,+1)$. For the class \md of \emph{modulo languages} (see Section~\ref{sec:group-languages} for its definition), one can check that in this case, we obtain the classes $\Fs(<, \mathit{MOD})$ and $\Fs(<,+1,\mathit{MOD})$ where ``$\mathit{MOD}$'' is the set of \emph{modular predicates}, defined as follows. For all $k,m \in \nat$ such that $k < m$, the set $\mathit{MOD}$ contains a unary predicate $M_{k,m}$ selecting all positions $i$ such that $i \equiv k \bmod m$. Finally, we may consider the class \abg of \emph{alphabet modulo testable languages} (see Section~\ref{sec:group-languages} for its definition). In this case, we get the classes \mbox{$\Fs(<,\mathit{AMOD})$} and $\Fs(<,+1, \mathit{AMOD})$ where ``$\mathit{AMOD}$'' is the set of \emph{alphabetic modular predicates}: for all $a \in A$ and $k,m \in \nat$ such that $k < m$, the set $\mathit{AMOD}$ contains a unary predicate $M^a_{k,m}$ selecting all positions $i$ such $|\prefix{w}{i}|_a \equiv k \bmod m$.

\subparagraph{Particular fragments.} In the paper, we consider \emph{two} special fragments. The first one is \emph{two-variable first-order logic} (\fod). It consists of all first-order formulas that use at most \emph{two} distinct variables (which can be reused). In the formal definition, introducing the fragment \fod boils down to using a set $V$ of variables which has size two. For example, the sentence
\[
  \exists x \exists y\  x < y \wedge M_{1,2}(x) \wedge a(x) \wedge b(y)  \wedge (\exists
  x\ y < x \wedge c(x))
\]
\noindent
is an \fod sentence using the label predicates, the linear order predicate and the modular predicate ``$M_{1,2}$''. It defines the language $(AA)^*aA^*bA^*cA^*$, which therefore belongs to the class \fodwm.

\smallskip

We also consider individual levels in the quantifier alternation hierarchy of full first-order logic. One may classify first-order sentences by counting the alternations between $\exists$ and $\forall$ quantifiers in their prenex normal form. For $n \geq 1$, an \fo sentence is \sic{n} (resp.\ \pic{n}) when its prenex normal form has $(n -1)$ quantifier alternations (that is, $n$ blocks of quantifiers) and starts with an $\exists$ (resp.~a $\forall$) quantifier. For example, a sentence whose prenex normal form is,
\[
  \exists x_1 \exists x_2  \forall x_3 \exists x_4
  \ \varphi(x_1,x_2,x_3,x_4) \quad \text{(with $\varphi$ quantifier-free)}
\]
\noindent
is {\sic 3}. Observe that the sets of \sic{n} and \pic{n} sentences are not closed under negation: negating a \sic{n} sentence yields a \pic{n} sentence and vice versa. Thus, one also considers \bsc{n} sentences: Boolean combinations of \sic{n} sentences.

We are interested in intermediary levels that are not defined directly by a set of formulas.  If \frS is a set of predicates and $n \geq 1$, we may consider the classes $\sic{n}(\frS)$ and $\pic{n}(\frS)$. It is also standard to consider a third class, denoted by $\dec{n}(\frS)$: it consists exactly of the languages that can be defined by both a \sic{n} formula \emph{and} a \pic{n} formula (using the predicates in $\frS \cup A \cup \{=\}$). In other words, $\dec{n}(\frS) = \sic{n}(\frS) \cap \pic{n}(\frS)$. In the paper, we look at the classes $\dec{2}(\infsigc)$.

\section{Operators}
\label{sec:opers}
In this section, we introduce the operators investigated in the paper. They are all variants of the \emph{polynomial closure operator} \polo, which we first define. Let us point out that the paper is not about the operator \polo itself: all results about \polo that we use are taken from previous work. Then, we introduce \emph{unambiguous polynomial closure} (\upolo), which is a semantic restriction of \polo. This is the main operator of the paper. Finally, we present another variant of \polo: \emph{alternated polynomial closure} (\adeto). We use it as a tool for investigating \upolo. In particular, we shall prove that $\upol{\Cs}=\adet{\Cs}$ when \Cs is a \vari.

\subsection{Polynomial closure}

The definition is based on \emph{marked} concatenation. Abusing terminology, given a word $u \in A^*$, we denote by $u$ the singleton language~$\{u\}$. Given two languages $K,L \subseteq A^*$, \emph{a marked concatenation} of $K$ with $L$ is a language of the form $KaL$, for some letter $a \in A$.

\begin{definition}
Consider a class \Cs. The \emph{polynomial closure of \Cs}, denoted by \pol{\Cs} is the least class containing \Cs closed under union and marked concatenation. That is, for every $K,L \in \pol{\Cs}$ and $a \in A$, we have $K \cup L \in \pol{\Cs}$ and $KaL \in \pol{\Cs}$.
\end{definition}

\begin{exa}\label{ex:polst}
  Consider the class $\stzer = \{\emptyset, A^*\}$. Then, using the fact that language concatenation distributes over union, it is easy to check that \pol{\stzer} consists of all finite unions of languages $A^*a_1A^* \cdots a_n A^*$ with $a_1,\dots,a_n \in A$.
\end{exa}

It is not clear from the definition whether the class \pol{\Cs} has robust properties, even when \Cs does. It was shown by Arfi~\cite{arfi87,arfi91} that if \Cs is a \vari, then \pol{\Cs} is a \pvari (the key issue is closure under intersection, which is not required by the definition). This result was later strengthened by Pin~\cite{jep-intersectPOL}: if \Cs is a \pvari, then so is \pol{\Cs} (see also~\cite{PZ:generic_csr_tocs:18} for a more recent proof).

\begin{theorem}[Arfi~\cite{arfi87,arfi91}, Pin~\cite{jep-intersectPOL}]\label{thm:polc}
  Let \Cs be a \pvari. Then, \pol{\Cs} is a \pvari as well.
\end{theorem}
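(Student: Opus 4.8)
The plan is to verify the three defining properties of a \pvari for $\pol{\Cs}$: that it contains $\emptyset$ and $A^*$, that it is closed under finite union and finite intersection, and that it is quotient-closed; the regularity of all its members being automatic since regular languages are closed under union and marked concatenation and \Cs contains only regular languages. The first property is immediate: $\emptyset, A^* \in \Cs \subseteq \pol{\Cs}$. Closure under union holds by the very definition of $\pol{\Cs}$. So the real content lies in closure under intersection and under quotients.

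For \textbf{quotient-closure}, I would argue by induction on the construction of a language $H \in \pol{\Cs}$ as a combination of languages of \Cs under union and marked concatenation. Fix $u \in A^*$; it suffices to treat the case $u = a$ a single letter, since general quotients are obtained by iterating letter-quotients, and then handle left and right quotients symmetrically. If $H = L \in \Cs$, then $a^{-1}L \in \Cs \subseteq \pol{\Cs}$ because \Cs is quotient-closed. If $H = H_1 \cup H_2$, then $a^{-1}H = a^{-1}H_1 \cup a^{-1}H_2$, and we are done by induction and closure under union. If $H = KbL$ with $K, L \in \pol{\Cs}$ and $b \in A$, then
\[
  a^{-1}(KbL) =
  \begin{cases}
    (a^{-1}K)bL \cup L & \text{if } a = b \text{ and } \veps \in K,\\
    (a^{-1}K)bL & \text{otherwise.}
  \end{cases}
\]
In all cases the result lies in $\pol{\Cs}$ by induction (note $(a^{-1}K) b L$ is a marked concatenation of a language in $\pol{\Cs}$ with $L \in \pol{\Cs}$, and the extra $\cup L$ is a union). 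The case of right quotients $H u^{-1}$ is handled the same way, peeling off the last letter of $u$ and using the symmetric identity $(KbL)a^{-1} = Kb(La^{-1})$ together with $Ka^{-1} = K a^{-1}$ when $L$'s quotient swallows the marked letter. So quotient-closure of $\pol{\Cs}$ reduces to quotient-closure of \Cs by a routine structural induction.

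The \textbf{main obstacle is closure under intersection}, which is exactly the point flagged in the excerpt. This does \emph{not} follow by a naive induction, because a marked concatenation $KaL$ does not interact well with intersection on the nose. The standard route (and the one I would adopt) is to pass through a description of $\pol{\Cs}$ in terms of recognizing morphisms and then argue algebraically: by Proposition~\ref{prop:genocm}, given finitely many languages of \Cs one obtains a single \Cs-morphism $\eta \colon A^* \to (N,\leq)$ recognizing all of them, and one shows that every language in $\pol{\Cs}$ that uses only pieces recognized by $\eta$ can be written as a finite union of languages of a normalized form $\eta^{-1}(\uparrow s_0)\, a_1\, \eta^{-1}(\uparrow s_1)\, a_2 \cdots a_k\, \eta^{-1}(\uparrow s_k)$. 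The key combinatorial lemma is then that the \emph{intersection} of two such normalized products is again a finite union of normalized products: one interleaves the two position-sequences, refining the marked positions so that both factorizations are respected simultaneously, and replaces each segment constraint by the conjunction (i.e.\ by $\uparrow s \cap \uparrow t$, which in an ordered monoid is an upper set, hence a union of $\uparrow$'s, each recognized by $\eta$ since $\eta$ is a \Cs-morphism and \Cs is a lattice). Since this is precisely the argument given in the cited sources, I would either reproduce that normal-form/interleaving argument or simply invoke Theorem~\ref{thm:polc} as stated; the intersection step is where all the work is, and it is the reason the theorem is nontrivial even though the other three \pvari axioms are essentially bookkeeping.
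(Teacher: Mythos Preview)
The paper does not give its own proof of this theorem: it is stated with attribution to Arfi and Pin, together with a pointer to~\cite{PZ:generic_csr_tocs:18} for a more recent proof. So there is no in-paper argument to compare against; your plan is being measured against the cited literature, and it follows the standard route faithfully. Your treatment of $\emptyset$, $A^*$, union, and quotient-closure is correct (the formula for $a^{-1}(KbL)$ is right, and iterating letter-quotients suffices).

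One imprecision in the intersection sketch is worth flagging. After interleaving two normalized products, the constraint on a refined segment is \emph{not} simply ``$\eta$-image lies in $\uparrow s \cap \uparrow t$'': a single block $\eta^{-1}(\uparrow s_i)$ of the first product typically spans several refined segments, so the constraint is that the \emph{product} of the corresponding segment-images lies in $\uparrow s_i$. What makes the argument go through is that, since $\leq$ is compatible with multiplication in the ordered monoid $(N,\leq)$, the set of admissible tuples of segment-images is an upper set for the componentwise order on $N^p$; it therefore decomposes as a finite union of principal upper sets $\uparrow(y_0,\dots,y_p)$, each of which yields a product $\eta^{-1}(\uparrow y_0)\,c_1\cdots c_p\,\eta^{-1}(\uparrow y_p)$ of languages recognized by the \Cs-morphism $\eta$. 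With that adjustment your outline matches the argument in the cited sources.
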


In general, the classes \pol{\Cs} built with polynomial closure are \emph{not} closed under complement, even when the class \Cs is (see \pol{\stzer} in Example~\ref{ex:polst}). Consequently, it is natural to combine polynomial closure with two other independent operators. The first one is \emph{Boolean closure}. Given an input class \Ds, we write \bool{\Ds} for the \emph{smallest} Boolean algebra containing \Ds. Moreover, given an input class \Cs, we write \bpol{\Cs} for \bool{\pol{\Cs}}.

\begin{exa}
  In view of Example~\ref{ex:polst}, \bpol{\stzer} consists of all Boolean combinations of languages $A^*a_1A^* \cdots a_n A^*$ with $a_1,\dots,a_n \in A$. This is the class of~piecewise testable languages~\cite{simonthm}, which is prominent in the literature.
\end{exa}

\begin{exa}\label{ex:polat}
  It can be verified that \bpol{\at} consists exactly of all Boolean combinations of marked products $B_0^*a_1B_1^* \cdots a_{n}B_{n}^*$ with $a_1,\dots,a_n \in A$ and $B_0, \dots,B_n \subseteq A$. It is known~\cite{pin-straubing:upper}, that this is exactly the class \bpol{\bpol{\stzer}}, as we shall see in Example~\ref{rem:upolat} below.
\end{exa}

Since one quotients commute with Boolean operations, the following statement is an immediate corollary of Theorem~\ref{thm:polc}.

\begin{cor}\label{cor:bpolc}
  Let \Cs be a \pvari. Then, \bpol{\Cs} is a \vari.
\end{cor}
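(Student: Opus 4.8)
The plan is to derive everything from Theorem~\ref{thm:polc}, which already gives that \pol{\Cs} is a \pvari; the rest is purely formal. First I would observe that two of the three requirements are for free. Since \pol{\Cs} consists only of regular languages and the class of regular languages is itself a Boolean algebra, the smallest Boolean algebra \bpol{\Cs}${}=\bool{\pol{\Cs}}$ containing \pol{\Cs} also consists only of regular languages. Moreover, \bpol{\Cs} is a Boolean algebra by definition, hence in particular a lattice. Thus the only point that requires an argument is that \bpol{\Cs} is quotient-closed.

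To establish quotient-closure I would use the standard ``closure of the stabilizer'' trick. Consider the class
\[
  \Ds = \{L \subseteq A^* \mid u\inv L \in \bpol{\Cs} \text{ and } Lu\inv \in \bpol{\Cs} \text{ for every } u \in A^*\}.
\]
Taking $u = \veps$ yields $\Ds \subseteq \bpol{\Cs}$. I claim that \Ds is a Boolean algebra. Indeed, $\emptyset$ and $A^*$ clearly belong to \Ds, and for every $u \in A^*$ the maps $L \mapsto u\inv L$ and $L \mapsto Lu\inv$ commute with union, intersection and complement (for instance $u\inv(A^* \setminus L) = A^* \setminus (u\inv L)$, and symmetrically on the right). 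Hence, if $K, L \in \Ds$, then each one-sided quotient of $K \cup L$, of $K \cap L$ and of $A^* \setminus L$ is obtained by applying a Boolean operation to one-sided quotients of $K$ and $L$, which all lie in \bpol{\Cs}; since \bpol{\Cs} is a Boolean algebra, these quotients again lie in \bpol{\Cs}, so $K \cup L$, $K \cap L$ and $A^* \setminus L$ belong to \Ds.

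It then remains to check that $\pol{\Cs} \subseteq \Ds$, and this is the only place Theorem~\ref{thm:polc} is used: that theorem tells us \pol{\Cs} is a \pvari, hence quotient-closed, so for every $L \in \pol{\Cs}$ and $u \in A^*$ we have $u\inv L, Lu\inv \in \pol{\Cs} \subseteq \bpol{\Cs}$, i.e.\ $L \in \Ds$. Since \Ds is a Boolean algebra containing \pol{\Cs}, it contains the smallest such Boolean algebra, namely $\bpol{\Cs} = \bool{\pol{\Cs}}$. Together with $\Ds \subseteq \bpol{\Cs}$ this gives $\Ds = \bpol{\Cs}$, which is precisely the statement that \bpol{\Cs} is quotient-closed. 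Combining the three points, \bpol{\Cs} is a quotient-closed Boolean algebra of regular languages, i.e.\ a \vari. I do not expect any genuine obstacle: the entire difficulty is already absorbed into Theorem~\ref{thm:polc}, and what is left is this short bookkeeping argument showing that Boolean closure preserves quotient-closure.
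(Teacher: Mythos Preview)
Your argument is correct and is exactly the route the paper takes: the paper simply says that the corollary is immediate from Theorem~\ref{thm:polc} because ``quotients commute with Boolean operations,'' and your stabilizer-class \Ds is just a careful unpacking of that one-line remark. There is no divergence in approach, only in level of detail.
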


The second operator is defined as follows. Given a class \Cs, we write \copol{\Cs} for the class containing all complements of languages in \pol{\Cs}. That is,
\[
  \copol{\Cs} = \big\{A^* \setminus L \mid L \in \pol{\Cs}\big\}.
\]
We shall consider the class \capol{\Cs} consisting of all languages that belong to both \pol{\Cs} and \copol{\Cs}. Consequently,
\[
  \capol{\Cs} = \big\{L \mid L \in \pol{\Cs} \text{ and } A^* \setminus L \in \pol{\Cs}\big\}.
\]
In other words, since \pol{\Cs} is closed under union and intersection, \capol{\Cs} is the \emph{greatest} Boolean algebra contained in \pol{\Cs}. We have the following immediate corollary of Theorem~\ref{thm:polc}.

\begin{cor}\label{cor:polc}
  Let \Cs be a \pvari. Then, \copol{\Cs} is a \pvari and \capol{\Cs} is a \vari.
\end{cor}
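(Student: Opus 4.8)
The plan is to derive everything from Theorem~\ref{thm:polc}, which tells us that \pol{\Cs} is a \pvari, hence a lattice of regular languages closed under \emph{both} union and intersection that is moreover quotient-closed. Besides this, I would only need two elementary facts: (i) quotients commute with complementation, that is, $u\inv(A^*\setminus L) = A^*\setminus(u\inv L)$ and $(A^*\setminus L)u\inv = A^*\setminus(Lu\inv)$ for every $u \in A^*$ (a one-line check on words); and (ii) De Morgan's laws together with $A^*\setminus\emptyset = A^*$ and $A^*\setminus A^* = \emptyset$.

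First I would treat \copol{\Cs}. Since complementation is an involution on $2^{A^*}$, De Morgan's laws transport the lattice operations of \pol{\Cs} to \copol{\Cs}: we get $\emptyset, A^* \in \copol{\Cs}$, and for $A^*\setminus K, A^*\setminus L \in \copol{\Cs}$ we have $(A^*\setminus K)\cup(A^*\setminus L) = A^*\setminus(K\cap L) \in \copol{\Cs}$ and $(A^*\setminus K)\cap(A^*\setminus L) = A^*\setminus(K\cup L) \in \copol{\Cs}$, using closure of \pol{\Cs} under intersection and union. Quotient-closure of \copol{\Cs} follows from (i) and quotient-closure of \pol{\Cs}. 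Finally, the complement of a regular language is regular, so \copol{\Cs} contains only regular languages; hence \copol{\Cs} is a \pvari.

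Then I would treat $\capol{\Cs} = \pol{\Cs} \cap \copol{\Cs}$. An intersection of two \pvaris is again a \pvari (membership of $\emptyset$ and $A^*$, closure under union, intersection and quotients, and regularity are all inherited componentwise), so it only remains to check closure under complement. But the defining condition ``$L \in \pol{\Cs}$ and $A^*\setminus L \in \pol{\Cs}$'' is manifestly symmetric in $L$ and $A^*\setminus L$, so $L \in \capol{\Cs}$ implies $A^*\setminus L \in \capol{\Cs}$. Therefore \capol{\Cs} is a quotient-closed Boolean algebra of regular languages, i.e.\ a \vari.

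There is essentially no obstacle here once Theorem~\ref{thm:polc} is invoked; the only point that deserves a moment's care is that one genuinely uses \pol{\Cs} being closed under \emph{both} union and intersection — closure under union alone, which is built into the definition of \polo, would not give the lattice structure of \copol{\Cs} nor the Boolean-algebra structure of \capol{\Cs}.
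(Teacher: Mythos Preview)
Your argument is correct and is exactly the approach the paper has in mind: the paper states the corollary as ``immediate'' from Theorem~\ref{thm:polc} without further proof, and you have simply spelled out the routine verification using De Morgan's laws and commutation of quotients with complement.
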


Finally, we have the following useful result about classes of the form \pol{\bpol{\Cs}} which is taken from~\cite{PZ:generic_csr_tocs:18}. We recall the proof, which is straightforward.

\begin{lem}\label{lem:elimbool}
  Let \Cs be a \vari. Then, $\pol{\bpol{\Cs}} = \pol{\copol{\Cs}}$.
\end{lem}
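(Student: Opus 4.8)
The plan is to establish the two inclusions separately, using only that \polo is monotone (if $\Ds_1 \subseteq \Ds_2$, then $\pol{\Ds_1} \subseteq \pol{\Ds_2}$, since $\pol{\Ds_2}$ is a class closed under union and marked concatenation that contains $\Ds_1$) together with Theorem~\ref{thm:polc}.

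For the inclusion $\pol{\copol{\Cs}} \subseteq \pol{\bpol{\Cs}}$, I would first note that $\copol{\Cs} \subseteq \bpol{\Cs}$, because $\bpol{\Cs} = \bool{\pol{\Cs}}$ is a Boolean algebra containing $\pol{\Cs}$, hence it contains $A^* \setminus L$ for every $L \in \pol{\Cs}$. Applying monotonicity of \polo to this inclusion gives the claim at once.

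For the reverse inclusion, the key step is to show that $\pol{\copol{\Cs}}$ already contains $\bpol{\Cs}$. Since \Cs is a Boolean algebra, for every $L \in \Cs$ we have $A^* \setminus L \in \Cs \subseteq \pol{\Cs}$, hence $L = A^* \setminus (A^* \setminus L) \in \copol{\Cs}$; thus $\Cs \subseteq \copol{\Cs}$ and, by monotonicity, $\pol{\Cs} \subseteq \pol{\copol{\Cs}}$. Trivially $\copol{\Cs} \subseteq \pol{\copol{\Cs}}$ as well, so $\pol{\Cs} \cup \copol{\Cs} \subseteq \pol{\copol{\Cs}}$. Now \Cs is a \vari, hence in particular a \pvari, so $\copol{\Cs}$ is a \pvari by Corollary~\ref{cor:polc}, and therefore $\pol{\copol{\Cs}}$ is a \pvari by Theorem~\ref{thm:polc}; in particular it is a lattice, closed under finite union and intersection. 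Putting an arbitrary language of $\bpol{\Cs} = \bool{\pol{\Cs}}$ into disjunctive normal form exhibits it as a finite union of finite intersections of languages each of which lies in $\pol{\Cs}$ or in $\copol{\Cs}$, hence in $\pol{\copol{\Cs}}$; since $\pol{\copol{\Cs}}$ is closed under finite intersections and unions, the whole language lies in $\pol{\copol{\Cs}}$. This proves $\bpol{\Cs} \subseteq \pol{\copol{\Cs}}$. Finally, $\pol{\bpol{\Cs}}$ is by definition the least class closed under union and marked concatenation that contains $\bpol{\Cs}$, and $\pol{\copol{\Cs}}$ is such a class (it is closed under these operations by definition of \polo and contains $\bpol{\Cs}$ by what we just showed), so $\pol{\bpol{\Cs}} \subseteq \pol{\copol{\Cs}}$. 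Combining the two inclusions yields the equality.

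There is no genuine obstacle here; everything reduces to bookkeeping with the defining properties of \polo. The only place where the full strength of the hypothesis on \Cs (closure under complement, not merely under intersection) is used is the inclusion $\Cs \subseteq \copol{\Cs}$, and the only external input is Theorem~\ref{thm:polc}, invoked through Corollary~\ref{cor:polc} to guarantee that $\pol{\copol{\Cs}}$ is closed under intersection — precisely the closure property that is not part of the definition of \polo and that makes the disjunctive-normal-form argument go through.
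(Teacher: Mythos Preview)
Your proof is correct and follows essentially the same approach as the paper: both directions are handled identically, with the key step being $\bpol{\Cs} \subseteq \pol{\copol{\Cs}}$ via the observation that $\pol{\copol{\Cs}}$ contains both $\pol{\Cs}$ and $\copol{\Cs}$ and is a lattice by Theorem~\ref{thm:polc} applied through Corollary~\ref{cor:polc}. The only cosmetic difference is that the paper phrases the final step as $\pol{\bpol{\Cs}} \subseteq \pol{\pol{\copol{\Cs}}} = \pol{\copol{\Cs}}$ rather than invoking the ``least class'' formulation, but these are the same argument.
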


\begin{proof}
  It is clear that $\copol{\Cs} \subseteq \bpol{\Cs}$, whence $\pol{\copol{\Cs}} \subseteq \pol{\bpol{\Cs}}$. We show that $\bpol{\Cs} \subseteq \pol{\copol{\Cs}}$. It will follow that,
  \[
    \pol{\bpol{\Cs}} \subseteq \pol{\pol{\copol{\Cs}}} = \pol{\copol{\Cs}}.
  \]
  Let $L\in \bpol{\Cs}$. By definition, $L$ is a Boolean combination of languages in \pol{\Cs}. Hence, De Morgan's laws show that $L$ is built by applying unions and intersections to languages that are elements of either \pol{\Cs} or \copol{\Cs}. Clearly, \pol{\copol{\Cs}} contains both \pol{\Cs} and \copol{\Cs}. Moreover, it follows from Theorem~\ref{thm:polc} and Corollary~\ref{cor:polc} that \pol{\copol{\Cs}} is a \pvari. Altogether, we obtain $L \in \pol{\copol{\Cs}}$ as~desired.
\end{proof}

\subparagraph{Logical characterizations.} It is well known that the quantifier alternation hierarchies of first-order logic can be characterized using the operators \polo and \boolo. Historically, this was first proved by Thomas~\cite{ThomEqu}.

Given a single input class \Cs, the \emph{concatenation hierarchy of basis \Cs} is built from by applying \polo and \boolo to \Cs in alternation. More precisely, the hierarchy is made of two kinds of levels. The \emph{full levels} are denoted by natural numbers $0,1, 2,3,\dots$ The half levels are denoted by ${1}/{2},{3}/{2},{5}/{2},\dots$ Level $0$ is the basis \Cs. Then, for every number $n \in \nat$:
\begin{itemize}
  \item the half level $n + {1}/{2}$ is the polynomial closure (\polo) of level $n$.
  \item the full level $n+1$ is the Boolean closure (\boolo) of level $n+{1}/{2}$.
\end{itemize}
When \Cs is a \vari, this hierarchy exactly characterizes the quantifier alternation hierarchy of first-order logic associated to the set of predicates \infsigc (see~\cite{PZ:generic_csr_tocs:18}). More precisely, for all $n \geq 1$, level $n - {1}/{2}$ is exactly the class $\sic{n}(\infsigc)$ and level $n$ is exactly the class $\bsc{n}(\infsigc)$.

We are mainly interested in the intermediary levels $\dec{n}(\infsigc) = \sic{n}(\infsigc) \cap \pic{n}(\infsigc)$ and more precisely in the special case where $n= 2$. As explained above, $\sic{2}(\infsigc) = \pol{\bpol{\Cs}}$ and one can check that this implies $\pic{2}(\infsigc) = \copol{\bpol{\Cs}}$. We get the following theorem.

\begin{theorem}\label{thm:capoldel2}
  For every \vari \Cs, we have $\dec{2}(\infsigc) = \capol{\bpol{\Cs}}$.
\end{theorem}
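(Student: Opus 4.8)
The plan is to reduce the statement to the known logical characterization of the concatenation hierarchy of basis \Cs recalled just above, combined with the elementary duality between \sic{2} and \pic{2} sentences. No new argument is needed beyond carefully tracking definitions and complementation.

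First I would recall that by definition $\dec{2}(\infsigc) = \sic{2}(\infsigc) \cap \pic{2}(\infsigc)$, so it suffices to identify each of the two classes on the right with a polynomial-closure class. For the $\Sigma$-side, I invoke the characterization from~\cite{PZ:generic_csr_tocs:18} stated above: since \Cs is a \vari, level $3/2$ of the concatenation hierarchy of basis \Cs is exactly $\sic{2}(\infsigc)$. As level $1$ of this hierarchy is $\bool{\pol{\Cs}} = \bpol{\Cs}$ and level $3/2$ is by construction the polynomial closure of level $1$, this yields $\sic{2}(\infsigc) = \pol{\bpol{\Cs}}$.

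Next I would treat the $\Pi$-side. A language $L \subseteq A^*$ lies in $\pic{2}(\infsigc)$ precisely when it is defined by some \pic{2} sentence over the predicates in $\infsigc \cup A \cup \{=\}$. Since negating a \sic{2} sentence produces a \pic{2} sentence and conversely, $L$ is defined by a \pic{2} sentence if and only if its complement $A^* \setminus L$ is defined by a \sic{2} sentence, i.e.\ if and only if $A^* \setminus L \in \sic{2}(\infsigc) = \pol{\bpol{\Cs}}$. By the very definition $\copol{\bpol{\Cs}} = \{A^* \setminus K \mid K \in \pol{\bpol{\Cs}}\}$, this is equivalent to $L \in \copol{\bpol{\Cs}}$. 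Hence $\pic{2}(\infsigc) = \copol{\bpol{\Cs}}$.

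Combining the two identities gives $\dec{2}(\infsigc) = \pol{\bpol{\Cs}} \cap \copol{\bpol{\Cs}}$, which is exactly $\capol{\bpol{\Cs}}$ by definition of the $\cap$-notation, finishing the proof. There is no genuine obstacle: the only nontrivial ingredient is the cited equality $\sic{2}(\infsigc) = \pol{\bpol{\Cs}}$, used as a black box, and the rest is bookkeeping with definitions and complementation. One may additionally observe, via Corollaries~\ref{cor:bpolc} and~\ref{cor:polc} and Theorem~\ref{thm:polc}, that $\pol{\bpol{\Cs}}$ and $\copol{\bpol{\Cs}}$ are \pvaris while $\capol{\bpol{\Cs}}$ is a \vari, which is reassuring but not needed for the equality itself.
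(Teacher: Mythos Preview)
Your proposal is correct and follows essentially the same approach as the paper: the paper also derives the theorem directly from the cited equality $\sic{2}(\infsigc) = \pol{\bpol{\Cs}}$, observes by complementation that $\pic{2}(\infsigc) = \copol{\bpol{\Cs}}$, and intersects. You have simply spelled out the duality step in slightly more detail than the paper does.
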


\subsection{Unambiguous polynomial closure}

We turn to the main operator of the paper. It is defined as a variant of polynomial closure obtained by restricting marked concatenations to \emph{unambiguous} ones and unions to \emph{disjoint ones}.  We first define what it means for some marked concatenation to be unambiguous. In fact, we introduce three particular kinds of marked concatenations.

\subparagraph{Deterministic marked concatenations.} We define three properties. Let $K,L \subseteq A^*$ and $a \in A$. Consider the marked concatenation $KaL$. We say that,
\begin{itemize}
  \item $KaL$ is \emph{left deterministic} when $K \cap KaA^* = \emptyset$.
  \item $KaL$ is \emph{right deterministic} when $L \cap A^*aL = \emptyset$.
  \item $KaL$ is \emph{unambiguous} when for every $w \in KaL$, there exists a \emph{unique} decomposition of $w$ witnessing this membership. In other words, $KaL$ is unambiguous if for every $u,u' \in K$ and $v,v' \in L$ such that $uav = u'av'$, we have $u= u'$ and $v= v'$.
\end{itemize}

\begin{exa}\label{exa:unamb-vs-det}
  Let $A = \{a,b,c\}$. Then $b^*aA^*$ is left deterministic and $A^*ab^*$ is right deterministic. Moreover, they are both unambiguous. On the other hand, $(ab)^+a(ca)^+$ is unambiguous but is neither left, nor right deterministic. Finally, $A ^*aA^*$ is ambiguous: $aa \in A^*aA^*$ and there are two decompositions of $aa$ witnessing this membership.
\end{exa}

\begin{rem}\label{rem:uconcagroup}
  Marked concatenations of nonempty group languages are ambiguous. Indeed, let $K,L\subseteq A^*$ be two nonempty group languages and let $a\in A$. Let $\alpha:A^*\to G$ be a morphism into a finite group $G$ recognizing both $K$ and $L$, and let $p=\omega(G)>0$. Moreover, let $u\in K$ and $v\in L$. Then $\alpha(ua^{p})=\alpha(u)$, whence $ua^p\in K$. Similarly, $a^{p}v\in L$. Therefore, the word $(ua^p)av=ua(a^pv)$ has two decompositions witnessing its membership in $KaL$.
\end{rem}

\begin{rem}
  Being deterministic or unambiguous is a semantic property: whether $KaL$ satisfies it may not be apparent on the definitions of $K$ and $L$. Moreover, this depends on $K,L$ and $a$, not just on the concatenated language $KaL$. There may exist two marked concatenations representing the same language and such that one is unambiguous while the other is not. For example, if $A = \{a,b\}$, it is clear that $b^*aA^* = A^*aA^*$. However, $b^*aA^*$ is left deterministic and unambiguous while $A^*aA^*$ is neither.
\end{rem}

\begin{rem}\label{rem:sublang-unamb}
  It is immediate from the definitions that if the marked concatenation $KaL$ is unambiguous (resp.\ left deterministic, right deterministic), then for every $K' \subseteq K$ and $L' \subseteq L$, the marked concatenation $K'aL'$ is also unambiguous (resp.\ left deterministic, right deterministic). We shall often use this simple fact implicitly.
\end{rem}

An alternative way to introduce left/right deterministic marked concatenations is to present them as special kinds of unambiguous marked concatenations. We detail this point in the following lemma.

\begin{lem}\label{lem:detuna}
  Let $K,L \subseteq A^*$ and $a \in A$. The two following properties hold:
  \begin{itemize}
    \item $KaL$ is left deterministic if and only if $KaA^*$ is unambiguous.
    \item $KaL$ is right deterministic if and only if $A^*aL$ is unambiguous
  \end{itemize}
\end{lem}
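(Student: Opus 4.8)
The statement to prove is Lemma~\ref{lem:detuna}: that $KaL$ is left deterministic iff $KaA^*$ is unambiguous, and symmetrically on the right. I will write a plan.

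\medskip

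The plan is to prove the two bullets separately; they are mirror images, so it suffices to give the argument for the left-deterministic case and remark that the right case is symmetric (swapping the roles of prefixes and suffixes, i.e.\ reading words from right to left). For the left case, I would argue by contraposition in both directions, translating the combinatorial content of ``two distinct factorizations'' into ``$K \cap KaA^* \neq \emptyset$''.

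\medskip

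First I would prove: if $KaA^*$ is \emph{not} unambiguous, then $KaL$ is \emph{not} left deterministic, i.e.\ $K \cap KaA^* \neq \emptyset$. Assume some $w \in KaA^*$ has two distinct factorizations $w = uav = u'av'$ with $u,u' \in K$ and $v,v' \in A^*$, and $(u,v) \neq (u',v')$. Since $|uav| = |u'av'|$, distinctness forces $|u| \neq |u'|$; by symmetry say $|u| < |u'|$. Then $u$ is a proper prefix of $u'$, and in fact $u'$ starts with $ua$ (the letter of $u'$ at position $|u|+1$ is exactly the marked letter $a$ of the first factorization). Hence $u' \in ua \cdot A^* \subseteq KaA^*$, and since also $u' \in K$, we get $u' \in K \cap KaA^*$, so $KaL$ is not left deterministic. (Note this direction does not even use $L$.)

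\medskip

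Conversely I would prove: if $KaL$ is \emph{not} left deterministic, then $KaA^*$ is \emph{not} unambiguous. Suppose there is some $u' \in K \cap KaA^*$, so $u' \in K$ and $u' = uax$ for some $u \in K$ and $x \in A^*$. Then the word $w = u'a\,x' $ for any $x' \in A^*$ lies in $KaA^*$ via the factorization $(u', x')$; but also $w = u a (x a x')$, which is a factorization of $w$ with first component $u \in K$ and second component $xax' \in A^*$. These two factorizations are distinct because $|u| < |u'|$ (indeed $|u| < |u| + 1 + |x| = |u'|$). Hence $KaA^*$ is ambiguous. The right-deterministic bullet follows by the symmetric argument, using suffixes in place of prefixes (formally, apply the left-case statement to the reversed languages $\widetilde{K}, \widetilde{L}$ and note that reversal turns $A^*aL$ into $\widetilde{L}a A^*$ and preserves unambiguity).

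\medskip

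The only mildly delicate point — and the ``main obstacle'', though it is minor — is the bookkeeping with lengths and the identification of the marked letter: one must check carefully that when $|u| < |u'|$ in a common word $uav = u'av'$, the position $|u|+1$ of $u'$ really carries the letter $a$, so that $u' \in uaA^*$; this is just reading off the letter of the word $w$ at that position from the first factorization. Everything else is a direct unwinding of the definitions, and no result beyond the definitions of Section~\ref{sec:opers} is needed.
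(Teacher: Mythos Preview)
Your proof is correct and follows essentially the same approach as the paper: both argue by contraposition in each direction, showing that two distinct factorizations of a word in $KaA^*$ yield an element of $K\cap KaA^*$ via the shorter/longer prefix analysis, and conversely that an element $u'\in K\cap KaA^*$ produces two distinct factorizations of a word of the form $u'a\cdots$. The only differences are cosmetic (order of the two directions, choice of the witness word in the second direction).
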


\begin{proof}
  By symmetry, we only prove the first assertion. We prove its contrapositive: $KaL$ is \emph{not} left deterministic if and only if $KaA^*$ is \emph{not} unambiguous. Assume first that $KaL$ is not left deterministic. This yields $u \in K \cap KaA^*$. Hence, we get $u' \in K$ and $v \in A^*$ such that $u = u'av$. Hence, $ua = u'ava \in KaA^*$ and there are two decompositions witnessing this fact: $(u)a(\veps)$ and $(u') a (va)$. We conclude that $KaA^*$ is ambiguous. Conversely, assume that $KaA^*$ is ambiguous. We get $u,u' \in K$ and $v,v' \in A^*$ such that $u \neq u'$ and $uav = u'av'$. By hypothesis, either $ua$ is a prefix of $u'$ or $u'a$ is a prefix of $u$. By symmetry, we consider the case where $ua$ is a prefix of $u'$. This yields $x \in A^*$ such that $u'=uax$. Hence,  $u' \in K \cap KaA^*$ and we conclude that $KaL$ is not left deterministic, as desired.
\end{proof}

Note that Lemma~\ref{lem:detuna} and Remark~\ref{rem:sublang-unamb} imply that left (resp.\ right) deterministic marked concatenations are unambiguous. The converse is not true, as shown by Example~\ref{exa:unamb-vs-det}. We complete the definition with a lemma connecting left (resp. right) deterministic marked concatenations to the Green relation \Rrel (resp.\ \Lrel) defined on monoids. We shall use it in order to prove that marked concatenations are left or right deterministic.

\begin{lem}\label{lem:greendet}
  Let $\alpha: A^* \to N$ be a morphism into a finite monoid. Let $s \in N$, $a \in A$ and $L \subseteq \alpha\inv(s)$. The two following properties hold:
  \begin{itemize}
    \item if $s\alpha(a) \Rords s$, then $LaA^*$ is left deterministic.
    \item if $\alpha(a)s \Lords s$, then $A^*aL$ is right deterministic.
  \end{itemize}
\end{lem}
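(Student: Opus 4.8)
The statement is symmetric, so I would only prove the first assertion: if $s\alpha(a) \Rords s$ then $LaA^*$ is left deterministic, i.e.\ $L \cap LaA^* = \emptyset$. The idea is to argue by contradiction using the $\Rrel$-structure of the finite monoid $N$. Suppose $LaA^* \cap L \neq \emptyset$, and pick $w$ in this intersection. Since $w \in LaA^*$, we can write $w = uav$ with $u \in L$ and $v \in A^*$; and since $w \in L$ we have $\alpha(w) = s$ (because $L \subseteq \alpha\inv(s)$), and also $\alpha(u) = s$ for the same reason.

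The key computation is then to compare $s = \alpha(u)$ with $\alpha(w) = \alpha(u)\alpha(a)\alpha(v) = s\alpha(a)\alpha(v)$. On one hand $s\alpha(a)\alpha(v) = s$, so $s \Rord s\alpha(a)$ is witnessed directly: $s = (s\alpha(a))\alpha(v)$, hence $s \Rord s\alpha(a)$. On the other hand trivially $s\alpha(a) \Rord s$ (multiply $s$ on the right by $\alpha(a)$). Therefore $s \Rrel s\alpha(a)$. But this contradicts the hypothesis $s\alpha(a) \Rords s$, which by definition asserts $s\alpha(a) \Rord s$ and $s \not\Rord s\alpha(a)$ — in particular it rules out $s \Rrel s\alpha(a)$. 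This contradiction establishes $L \cap LaA^* = \emptyset$, so $LaA^*$ is left deterministic.

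For the second assertion I would do the mirror-image argument: assuming $A^*aL \cap L \neq \emptyset$, pick $w = vau$ with $u \in L$, so $\alpha(w) = \alpha(v)\alpha(a)s = s$, which gives $s = \alpha(v)(\alpha(a)s)$, hence $s \Lord \alpha(a)s$; combined with the trivial $\alpha(a)s \Lord s$ this yields $s \Lrel \alpha(a)s$, contradicting $\alpha(a)s \Lords s$.

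\textbf{Main obstacle.} There is essentially no obstacle here — the argument is a direct unwinding of definitions plus the elementary fact that in any monoid $\Rord$ (resp.\ $\Lord$) is a preorder whose associated equivalence is $\Rrel$ (resp.\ $\Lrel$). I would not even need Lemma~\ref{lem:jlr} or finiteness of $N$ for this particular statement, since the contradiction is obtained purely from the preorder relations; finiteness only matters for the ambient theory where these notions are used. The only point requiring a little care is keeping straight which side of the concatenation corresponds to which Green relation: a marked concatenation $LaA^*$ being left deterministic is about prefixes in $L$, which forces a comparison of $s$ with $s\alpha(a)\cdot(\text{something})$ on the \emph{right}, hence the $\Rrel$/$\Rord$ relations; dually for $A^*aL$.
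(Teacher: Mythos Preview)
Your proposal is correct and essentially identical to the paper's own proof: both argue by contradiction, decompose $w \in L \cap LaA^*$ as $w = uav$ with $u \in L$, observe $\alpha(u) = \alpha(w) = s$, and derive $s = s\alpha(a)\alpha(v)$, hence $s \Rord s\alpha(a)$, contradicting $s\alpha(a) \Rords s$. Your remark that finiteness of $N$ is not actually used in this lemma is accurate and a nice aside, but the core argument matches the paper exactly.
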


\begin{proof}
  By symmetry, we only prove the first assertion. We assume that $s\alpha(a)\Rords s$ and show that $LaA^*$ is left deterministic. By contradiction, assume that there exists a word $w \in L \cap LaA^*$. Since $w \in LaA^*$, we get $u\in L$ and $v \in A^*$ such that $w = uav$. Since $u,w \in L$ and $L \subseteq \alpha\inv(s)$, we have $\alpha(u)=\alpha(w) = s$. Therefore, $s = s \alpha(a)\alpha(v)$ which yields $s \Rord s \alpha(a)$, contradicting the hypothesis that $s\alpha(a) \Rords s$.
\end{proof}

\begin{definition}[Unambiguous polynomial closure]
Consider a class~\Cs. Its \emph{unambiguous polynomial closure}, denoted by \upol{\Cs} is the least class containing~\Cs and closed under both \emph{disjoint} union and \emph{unambiguous} marked concatenation. That is given $K,L \in \upol{\Cs}$ and $a \in A$, if $K \cap L = \emptyset$, then $K \uplus L \in \upol{\Cs}$ and if $KaL$ is unambiguous, then $KaL \in \upol{\Cs}$. Note that here, we choose to use the symbol ``$\uplus$'' for union in order to emphasize disjointedness.
\end{definition}

\begin{rem}\label{rem:upolinc}
  It is clear from the definition that $\upol{\Cs}\subseteq\pol{\Cs}$ for every class \Cs. In general, this inclusion is strict. This will be the case for all examples considered in the paper. For example, consider the class $\stzer = \{\emptyset,A^*\}$. We have $\upol{\stzer}=\stzer$  (this is because marked concatenations $A^*aA^*$ are ambiguous). Hence, $\upol{\stzer} \subsetneq \pol{\stzer}$.
\end{rem}

\begin{exa}\label{rem:upolgroup}
  As seen in Remark~\ref{rem:upolinc}, we have $\upol{\stzer} = \stzer$. This is actually a generic property of \varis of group languages: for every such class \Gs, we have $\upol{\Gs} = \Gs$. This is because the marked concatenation of two nonempty group languages can never be unambiguous, see Remark~\ref{rem:uconcagroup}. This illustrates a particularity of \upolo: it is meant to be applied to ``complex enough'' classes.
\end{exa}

\begin{exa}\label{rem:upolat}
  An important example is \upol{\at}. It is simple to verify that this class contains exactly the finite disjoint unions of unambiguous marked products $B_0^*a_1 B_1^* \cdots a_nB_n^*$ where $a_1,\dots,a_n \in A$ and $B_0,\dots,B_n \subseteq A$ (unambiguous marked products are defined in the natural way by generalizing the definition to products of arbitrarily many languages). It is known as the class of \emph{unambiguous languages}~\cite{schul}. We prove below that $\upol{\at} = \upol{\bpol{\stzer}}$. Note that it follows from this equality that  $\pol{\at} = \pol{\bpol{\stzer}}$ and that $\bpol{\at} = \bpol{\bpol{\stzer}}$, as stated in Example~\ref{ex:polat}.
\end{exa}

\begin{rem}
  In the literature, a less restrictive definition, which allows arbitrary finite unions, is sometimes used (see~\cite{jep-intersectPOL} for example). It turns out that this is equivalent when the input class \Cs is a \vari. This is the only case that we consider in the paper. Yet, allowing arbitrary finite unions is important when the input class \Cs is not closed under complement.
\end{rem}

We now present properties of unambiguous polynomial closure that we shall prove in the paper. We start with an important theorem, which generalizes a result by Pin and Weil~\cite{pwdelta2}. It turns out that when the input class \Cs is a \vari\footnote{The original result of Pin and Weil only applies to input classes satisfying stronger properties involving closure under inverse images of morphisms.}, the classes \upol{\Cs} and \capol{\Cs} coincide.

\begin{theorem}\label{thm:polcopol}
  For every \vari \Cs, we have $\upol{\Cs} = \capol{\Cs}$.
\end{theorem}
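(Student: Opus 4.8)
I would prove the identity by matching algebraic characterizations. The easy half of the work concerns $\capol{\Cs}$: by the algebraic characterization of $\pol{\Cs}$ of~\cite{PZ:generic_csr_tocs:18} (recalled in Section~\ref{sec:caracpol}), a regular language $L$ lies in $\pol{\Cs}$ exactly when its syntactic ordered monoid $(M_L,\le_L)$, together with a suitable $\Cs$-morphism, satisfies a certain inequation on ``$\Cs$-pairs''. Since the syntactic monoid of $A^*\setminus L$ is $M_L$ with the order \emph{reversed}, a language lies in $\copol{\Cs}$ exactly when $(M_L,\ge_L)$ satisfies the same inequation; hence $L\in\capol{\Cs}$ iff the underlying unordered monoid $M_L$, with its $\Cs$-morphism, satisfies the corresponding \emph{equation} (the inequation in both directions). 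So the target reduces to showing that $\upol{\Cs}$ is characterized by exactly this equation, and that characterization of $\upol{\Cs}$ is where essentially all the content lies; it is what Section~\ref{sec:caracupol} would be devoted to.

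\textbf{Forward inclusion $\upol{\Cs}\subseteq\capol{\Cs}$.} Given the above, one can either (a) check by induction on the construction that every language built from $\Cs$ by disjoint unions and unambiguous marked concatenations has a syntactic monoid satisfying the $\capol{\Cs}$-equation; or (b) argue directly at the language level: $\upol{\Cs}\subseteq\pol{\Cs}$ is free (Remark~\ref{rem:upolinc}), and for $\upol{\Cs}\subseteq\copol{\Cs}$ it suffices to show that the variety $\capol{\Cs}$ (Corollary~\ref{cor:polc}) is closed under unambiguous marked concatenation. For (b) the crux is a complementation identity: given an unambiguous $KaL$ with $K,L\in\capol{\Cs}$, one first refines $KaL$ along the classes of a $\Cs$-morphism recognizing $K$ and $L$ (Proposition~\ref{prop:genocm}), which writes it as a \emph{disjoint} union of unambiguous marked concatenations of $\Cs$-languages (disjointness and piecewise unambiguity both following from Remark~\ref{rem:sublang-unamb} together with the unambiguity of $KaL$), and one is left to show that the complement of an unambiguous marked product of $\Cs$-languages can be written using unions, intersections and marked concatenations applied to the factors and their complements, all of which stay inside the positive variety $\pol{\Cs}$ (Theorem~\ref{thm:polc}). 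The only subtlety is that $\pol{\Cs}$ is not complement-closed, so uniqueness of the witnessing factorization must be exploited to turn the naive ``no good factorization exists'' into a positive expression.

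\textbf{Reverse inclusion and the main obstacle.} The hard inclusion $\capol{\Cs}\subseteq\upol{\Cs}$ is the difficult half of the characterization of $\upol{\Cs}$: from a regular language $L$ whose syntactic monoid $M$ (which we may take unordered, since $\upol{\Cs}$ will be seen to be a variety), together with a $\Cs$-morphism, satisfies the equation — a ``$\davar$ relative to $\Cs$'' condition, of the shape $s^\omega t s^\omega = s^\omega$ whenever $s$ and $t$ are suitably $\Cs$-related — one must construct an explicit $\upol{\Cs}$-expression for $L$, equivalently (by Lemma~\ref{lem:detuna}) one using only left and right deterministic marked concatenations. I would build it by induction on the $\Jord$-order of $M$: split a word at the first position, scanning left to right, where its prefix enters a prescribed $\mathscr{R}$-class, which produces a \emph{left} deterministic marked concatenation via Lemma~\ref{lem:greendet}, and symmetrically on the right using $\mathscr{L}$-classes and the dual part of that lemma; Lemma~\ref{lem:jlr} controls the interaction of the two decompositions within a single $\mathscr{J}$-class, and the equation is precisely what makes the two one-sided decompositions compatible, so that the recursion is well defined and the marked concatenations it produces are genuinely deterministic. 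Getting this induction right — the invariant, the treatment of the $\Cs$-languages used as building blocks, and the proof that the constructed expression denotes exactly $L$ — is the main obstacle; the final comparison of the two equational conditions is then routine.
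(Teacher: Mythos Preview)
Your overall strategy matches the paper exactly: derive the equational characterization of $\capol{\Cs}$ from Theorem~\ref{thm:polcarac} (this is Theorem~\ref{thm:pcopolcarac}), prove the same equation characterizes $\upol{\Cs}$ (this is Theorem~\ref{thm:caracupol}), and conclude. Your description of the hard direction---an induction along Green's relations producing left/right deterministic concatenations via Lemma~\ref{lem:greendet}, with Lemma~\ref{lem:jlr} controlling the interaction and the equation providing the base case---is precisely the skeleton of the paper's Lemma~\ref{lem:upol:adetsuf} (the actual induction carries three parameters, on the \Jrel-rank in the quotient by~$\canec$ and on \Rrel/\Lrel-indices, but you have the right picture).

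Two points to flag. First, a slip: the equation is $s^{\omega+1}=s^\omega t s^\omega$, not $s^\omega t s^\omega=s^\omega$. You stated it correctly as ``the inequation in both directions'' but then miswrote it; the difference matters, since your version would force aperiodicity, which fails already for $\Cs=\grp$ where $\upol{\grp}=\grp$.

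Second, your route~(b) for the forward inclusion has a gap as written. You want to show $\capol{\Cs}$ is closed under unambiguous marked concatenation; given $K,L\in\capol{\Cs}$ with $KaL$ unambiguous, you invoke Proposition~\ref{prop:genocm} to get a \Cs-morphism recognizing $K$ and $L$---but that proposition only applies to languages in~$\Cs$, and $K,L$ need not lie in~$\Cs$. The paper does not take this route; it uses your route~(a), formalized as Proposition~\ref{prop:carprop}: an inductive argument on the \upol{\Cs}-construction showing that every $L\in\upol{\Cs}$ satisfies a swapping property (for some \Cs-morphism~$\eta$ and some $k$, the implication $xu^kvu^ky\in L\Leftrightarrow xu^kv'u^ky\in L$ whenever $\eta(u)=\eta(v)=\eta(v')$), from which~\eqref{eq:capolc} for the syntactic morphism follows directly. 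Route~(b) can be repaired by first flattening any \upol{\Cs}-expression into a disjoint union of unambiguous marked products of \Cs-languages (which does work, but is an extra step you did not mention), and only then running the complementation argument on products of \Cs-languages.
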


We postpone the proof of Theorem~\ref{thm:polcopol} to Section~\ref{sec:caracupol}. It is based on algebraic characterizations: we prove independently that for every \vari \Cs, the classes \upol{\Cs} and \capol{\Cs} have the same algebraic characterization. Hence, they are equal.

\begin{rem}\label{rem:eqfails}
  Theorem~\ref{thm:polcopol} fails if \Cs is only a \pvari (\emph{i.e.}, when \Cs is not closed under complement). For example, let $\Cs = \pol{\stzer}$. Clearly, we have $\upol{\Cs} = \Cs$. On the other hand, one can verify that $\capol{\Cs} = \capol{\stzer} = \stzer$. Hence, we have the strict inclusion $\capol{\Cs} \subsetneq \upol{\Cs}$ in this case.
\end{rem}

\begin{rem}
  In view of Theorem~\ref{thm:capoldel2}, a consequence of Theorem~\ref{thm:polcopol} is that for every \vari \Cs, we have $\upol{\bpol{\Cs}}= \dec{2}(\infsigc)$. Hence, we get a first logical characterization of \upolo. We come back to this point in Section~\ref{sec:logcar} where we present a second characterization of \upolo in terms of two-variable first-order logic (note however that in this case, we need stronger hypotheses on the input class \Cs).
\end{rem}

With Theorem~\ref{thm:polcopol} in hand, it is simple to describe the closure properties of \upol{\Cs}. By Corollary~\ref{cor:polc}, we know that when \Cs is a \vari, then so is \capol{\Cs}. In view of Theorem~\ref{thm:polcopol}, we obtain the same result for \upol{\Cs}. This is surprising since closure under Boolean operations is not apparent on the definition of \upol{\Cs}. This generalizes a result\footnote{As for Theorem~\ref{thm:polcopol}, the original result of Pin, Straubing and Thérien only applies to input classes satisfying stronger properties involving closure under inverse images by morphisms.} by Pin, Straubing and Thérien~\cite{Pinambigu}.

\begin{theorem}\label{thm:comp}
  For every \vari \Cs, the class \upol{\Cs} is a \vari as well.
\end{theorem}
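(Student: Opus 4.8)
The plan is to derive Theorem~\ref{thm:comp} directly from Theorem~\ref{thm:polcopol} together with the closure properties of \polo established earlier. By Theorem~\ref{thm:polcopol}, for every \vari~\Cs we have $\upol{\Cs} = \capol{\Cs}$, so it suffices to check that \capol{\Cs} is a \vari. But this is exactly Corollary~\ref{cor:polc}: if \Cs is a \vari (hence in particular a \pvari), then \pol{\Cs} is a \pvari by Theorem~\ref{thm:polc}, \copol{\Cs} is a \pvari (being the class of complements of languages in a \pvari, hence closed under union, intersection and quotients, and containing $\emptyset$ and $A^*$), and \capol{\Cs} $= \pol{\Cs} \cap \copol{\Cs}$ is then a \pvari closed under complement — that is, a \vari. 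Chaining these, $\upol{\Cs} = \capol{\Cs}$ is a \vari, which is the claim.

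Concretely, I would write: let \Cs be a \vari. Since a \vari is in particular a \pvari, Corollary~\ref{cor:polc} applies and gives that \capol{\Cs} is a \vari. By Theorem~\ref{thm:polcopol}, $\upol{\Cs} = \capol{\Cs}$. Therefore \upol{\Cs} is a \vari, i.e.\ a quotient-closed Boolean algebra of regular languages. This completes the proof.

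The only subtlety worth flagging is logical ordering in the paper: Theorem~\ref{thm:polcopol} is stated here but its proof is explicitly postponed to Section~\ref{sec:caracupol}, and Theorem~\ref{thm:comp} is presented as a consequence. So the "real work" does not reside in the proof of Theorem~\ref{thm:comp} itself — which is a one-line corollary — but in Theorem~\ref{thm:polcopol}, whose proof goes through matching algebraic characterizations of \upol{\Cs} and \capol{\Cs}. Given that we are allowed to assume everything stated earlier in the excerpt (including Theorem~\ref{thm:polcopol}), the proof of Theorem~\ref{thm:comp} is genuinely immediate; there is no real obstacle at this point. If one wanted a proof of Theorem~\ref{thm:comp} that did not route through Theorem~\ref{thm:polcopol}, one would have to argue directly that the least class containing \Cs closed under disjoint union and unambiguous marked concatenation is closed under complement and intersection — and that closure under complement is the hard part, since it is completely invisible in the defining generators. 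That is precisely the difficulty that Theorem~\ref{thm:polcopol} packages away, so the clean route is to invoke it.
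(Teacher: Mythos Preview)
Your proof is correct and matches the paper's approach exactly: the paper also derives Theorem~\ref{thm:comp} as an immediate consequence of Theorem~\ref{thm:polcopol} together with Corollary~\ref{cor:polc}, noting that the real work is deferred to the algebraic characterization in Section~\ref{sec:caracupol}. Your remark about the logical ordering is apt and mirrors the paper's own presentation.
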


We may use Theorem~\ref{thm:comp} to prove a property for a particular input class \Cs that will be important for the applications of our results. As mentioned in Example~\ref{rem:upolat}, we have the equality $\upol{\bpol{\stzer}} = \upol{\at}$.

\begin{lem}\label{lem:upolat}
  We have $\upol{\bpol{\stzer}} = \upol{\at}$.
\end{lem}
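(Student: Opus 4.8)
The inclusion $\upol{\at}\subseteq\upol{\bpol{\stzer}}$ is the easy one: since $A^*aA^*\in\bpol{\stzer}$ for every $a\in A$, we have $\at\subseteq\bpol{\stzer}$, and the operator $\upolo$ is monotone. For the reverse inclusion, the plan is to establish the stronger statement $\bpol{\stzer}\subseteq\upol{\at}$. This suffices: once we know $\bpol{\stzer}\subseteq\upol{\at}$, then since $\upol{\at}$ is (by definition) closed under disjoint union and unambiguous marked concatenation, it is a class containing $\bpol{\stzer}$ and stable under the two operations that define $\upolo$, hence it contains the least such class, namely $\upol{\bpol{\stzer}}$.

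To get $\bpol{\stzer}\subseteq\upol{\at}$, I would invoke Theorem~\ref{thm:comp}, applied to the \vari \at: it tells us that $\upol{\at}$ is itself a \vari, in particular a Boolean algebra. Since $\bpol{\stzer}$ is exactly the Boolean algebra generated by the languages $A^*a_1A^*\cdots a_nA^*$ (with $n\geq 0$ and $a_1,\dots,a_n\in A$), it is then enough to prove that each such language lies in $\upol{\at}$.

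Fix $a_1,\dots,a_n\in A$ and set $L=A^*a_1A^*\cdots a_nA^*$. The core observation is the ``leftmost embedding'' description of $L$: a word $w$ lies in $L$ if and only if $w=u_0a_1u_1a_2\cdots a_nu_n$ where, for each $k$ with $1\leq k\leq n$, the factor $u_{k-1}$ contains no occurrence of $a_k$ (given $w\in L$, pick each marked $a_k$ as the earliest occurrence of $a_k$ compatible with the marked $a_1,\dots,a_{k-1}$; the converse is clear). In other words, $L=(A\setminus\{a_1\})^*a_1(A\setminus\{a_2\})^*a_2\cdots(A\setminus\{a_n\})^*a_nA^*$, an identity proved by a routine induction on $n$. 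Now I would build this product from the inside out: put $M_n=A^*$ and $M_{k-1}=(A\setminus\{a_k\})^*\,a_k\,M_k$ for $k=n,n-1,\dots,1$, so that $M_0=L$. Each marked concatenation $(A\setminus\{a_k\})^*a_kM_k$ is \emph{left deterministic}, since no word of $(A\setminus\{a_k\})^*$ contains $a_k$ whereas every word of $(A\setminus\{a_k\})^*a_kA^*$ does, so the two sets are disjoint; by Lemma~\ref{lem:detuna} together with Remark~\ref{rem:sublang-unamb}, it is therefore unambiguous. Since $(A\setminus\{a_k\})^*=A^*\setminus A^*a_kA^*\in\at\subseteq\upol{\at}$ and, inductively, $M_k\in\upol{\at}$, we get $M_{k-1}\in\upol{\at}$; hence $L=M_0\in\upol{\at}$, as required.

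The only step that is not pure bookkeeping is the leftmost-embedding identity for $L$, and even that is an easy induction; the genuine content is borrowed, namely that $\upol{\at}$ is a \vari (Theorem~\ref{thm:comp}) and that left deterministic marked concatenations are unambiguous (Lemma~\ref{lem:detuna}). I therefore do not anticipate any serious obstacle, the main point being simply to have Theorem~\ref{thm:comp} available for the reduction to the generators of $\bpol{\stzer}$.
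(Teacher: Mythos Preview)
Your proposal is correct and follows essentially the same approach as the paper: both invoke Theorem~\ref{thm:comp} to reduce $\bpol{\stzer}\subseteq\upol{\at}$ to showing that each generator $A^*a_1A^*\cdots a_nA^*$ lies in $\upol{\at}$, and both establish this via the leftmost-embedding identity $A^*a_1\cdots A^*a_nA^*=(A\setminus\{a_1\})^*a_1\cdots(A\setminus\{a_n\})^*a_nA^*$ and the fact that the successive marked concatenations are left deterministic (hence unambiguous by Lemma~\ref{lem:detuna}). The only cosmetic difference is that the paper phrases the intermediate step as $\pol{\stzer}\subseteq\upol{\at}$ before passing to the Boolean closure, whereas you go straight for the generators of $\bpol{\stzer}$; the content is identical.
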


\begin{proof}
  It is immediate from the definitions that $\at \subseteq \bpol{\stzer}$. Consequently, we have $\upol{\at} \subseteq \upol{\bpol{\stzer}}$. For the converse inclusion, we show that $\pol{\stzer} \subseteq \upol{\at}$. Since \upol{\at} is a Boolean algebra by Theorem~\ref{thm:comp}, this yields $\bpol{\stzer}\subseteq\upol{\at}$ and therefore $\upol{\bpol{\stzer}} \subseteq \upol{\upol{\at}} = \upol{\at}$.   We now concentrate on proving that $\pol{\stzer} \subseteq \upol{\at}$. One may verify from the definition that $\pol{\stzer}$ contains exactly the finite unions of languages $A^*a_1 \cdots A^* a_nA^*$ with $a_1,\dots,a_n \in A$. Hence, since \upol{\at} is a \vari by Theorem~\ref{thm:comp}, it suffices to prove that $A^*a_1 \cdots A^*a_nA^* \in \upol{\at}$ for every $a_1,\dots,a_n \in A$. For $0 \leq i<  n$, we let $B_i = A \setminus \{a_{i+1}\}$. Clearly, $B_i^* \in \at$ for every $i < n$. Moreover, one can check that,
  \[
    A^*a_1 \cdots A^{*}a_{n}A^* = B_0^* a_1 \cdots B_{n-1}^{*}a_{n}A^{*}.
  \]
  It can then be verified that the marked concatenations $B_i^*a_{i+1}(B^*_{i+1}a_{i+2} \cdots B_{n-1}^*a_nA^*)$ are all unambiguous for $0 \leq i < n$ (they are actually left deterministic). Hence, an immediate induction yields $B_0^* a_1 \cdots B_{n-1}^*a_nA^* \in \upol{\at}$ and we get $A^*a_1 \cdots A^*a_nA^*  \in \upol{\at}$, completing the proof.
\end{proof}

We conclude the presentation with a characteristic property of the classes \upol{\Cs} when~\Cs is an arbitrary \vari. We use it in Section~\ref{sec:caracupol} to prove the algebraic characterization of unambiguous polynomial closure.

\begin{prop}\label{prop:carprop}
  Let \Cs be a \vari and $L \in \upol{\Cs}$. There exists a \Cs-morphism $\eta: A^* \to N$ and $k \in \nat$ satisfying the following property:
  \begin{equation}\label{eq:upolp}
    \text{for all $u,v,v',x,y \in A^*$, if $\eta(u) = \eta(v) = \eta(v')$, then
      $xu^{k}vu^{k}y \in L \Leftrightarrow xu^{k} v' u^{k} y \in L$.}
  \end{equation}
\end{prop}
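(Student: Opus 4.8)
The natural approach is structural induction on the way $L$ is obtained inside $\upol{\Cs}$, and I would start by reducing to a normal form. One checks that $\upol{\Cs}$ consists exactly of the \emph{disjoint} unions of \emph{unambiguous $\Cs$-products} $L_0 a_1 L_1 \cdots a_n L_n$ with $L_0,\dots,L_n\in\Cs$: this class is trivially closed under disjoint union, it contains $\Cs$ (take $n=0$), it is contained in $\upol{\Cs}$ by induction on $n$ (peel off $L_0 a_1$, noting that sub-products and the outer binary concatenation stay unambiguous when the whole product does), and it is closed under unambiguous marked concatenation because $(\bigsqcup_i P_i)\,a\,(\bigsqcup_j Q_j)=\bigsqcup_{i,j}P_i a Q_j$, where unambiguity of the left-hand side forces each $P_i a Q_j$ to be an unambiguous $\Cs$-product (concatenate the factor lists) and forces the union on the right to be disjoint. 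Property~\eqref{eq:upolp} passes to disjoint unions: given witnesses $(\eta_t,k_t)$ for the members of the union, Proposition~\ref{prop:genocm} applied to the finitely many languages recognized by the finitely many $\eta_t$ yields a single \Cs-morphism $\eta$ through which each $\eta_t$ factors, and one takes $k=\max_t k_t$, using the elementary fact that~\eqref{eq:upolp} for a pair $(\eta,k_0)$ automatically holds for every $k\ge k_0$ (slide the surrounding powers of $u$). Hence it suffices to treat a single unambiguous $\Cs$-product $L=L_0 a_1 L_1\cdots a_n L_n$.

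For such an $L$, Proposition~\ref{prop:genocm} provides one \Cs-morphism $\eta:A^*\to N$ recognizing all the $L_i$. I would set $\omega=\omega(N)$ and take $k$ to be a multiple of $\omega$ that is large compared with $n$ (say $k=(n+2)\omega$). Fix $u,v,v',x,y$ with $\eta(u)=\eta(v)=\eta(v')$; then $\eta(u^kvu^k)=\eta(u)^k\eta(v)\eta(u)^k=\eta(u)^k\eta(v')\eta(u)^k=\eta(u^kv'u^k)$, so $u^kvu^k$ and $u^kv'u^k$ are indistinguishable in every context by every language recognized by $\eta$, in particular by each $L_i$. Assuming $w=xu^kvu^ky\in L$ (the reverse implication swaps $v$ and $v'$), take its factorization $w=w_0a_1w_1\cdots a_nw_n$ with $w_i\in L_i$, which is \emph{unique} by unambiguity. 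Since there are only $n$ marked positions while each $u^k$ block has $k=(n+2)\omega$ copies of $u$, pigeonhole gives inside each block a run of at least $\omega$ consecutive copies of $u$ containing no marked position. Using such a clean run and the idempotency of $\eta(u)^\omega$, the $\eta$-image of the factor that overlaps $v$ can be rewritten so that it depends only on $\eta(v)$ and not on how the factorization cuts through $v$; replacing $v$ by $v'$ while keeping the marked positions that lie in $x$, in $y$, or in the clean parts of the $u^k$ blocks then gives a valid factorization of $w'=xu^kv'u^ky$, so $w'\in L$.

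The step I expect to be the crux is the case in which some marked position $a_j$ of the unique factorization of $w$ lies \emph{inside} $v$ (or within a bounded number of copies of $u$ from $v$). Then the factor $w_{j-1}$ genuinely ends inside $v$ and $w_j$ begins inside it, and since $v'$ may have a completely different letter structure from $v$ one cannot just substitute $v'$ into the same factorization skeleton. I would get past this by invoking unambiguity a second time, together with the buffer of at least $\omega$ clean copies of $u$ on each side of $v$: the claim to prove is that the unique factorization of $w$ can be transformed into an equivalent one in which \emph{no} cut lies inside $v$, which then reduces to the case already handled. Establishing that re-routing lemma for unambiguous products --- and checking that the single integer $k$ fixed above (which must dominate $n$ and $\omega$) is large enough for all the rewritings involved --- is the main technical work; the remaining assembly is routine bookkeeping.
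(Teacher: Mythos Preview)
Your normal form (disjoint unions of unambiguous \Cs-products) is correct, and the reduction to a single product is fine. The genuine gap is your ``re-routing lemma''. As you phrase it --- ``the unique factorization of $w$ can be transformed into an equivalent one in which no cut lies inside $v$'' --- it does not type-check: unambiguity means there is exactly one factorization, so there is nothing to transform it into. What you would actually have to prove is that the unique factorization \emph{cannot} place a cut near $v$; that is the whole content of the proposition. Proving it requires a concrete construction: assuming some cut lies in the middle, use the clean $u^{\omega}$-runs on both sides to manufacture a longer word that admits two distinct factorizations in the product, contradicting unambiguity. You gesture at ``invoking unambiguity a second time'' but give no such construction; without it the argument is incomplete, and your chosen $k=(n+2)\omega(N)$ is not obviously sufficient once one tries to write the pumping down.

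The paper avoids your normal form entirely and argues by structural induction on the recursive definition of $\upol{\Cs}$, fixing once and for all a \Cs-morphism $\eta$ recognizing the basic languages and then producing $k$ inductively. For a binary step $L=L_1aL_2$ with $L_1,L_2\in\upol{\Cs}$, it takes an \emph{auxiliary} recognizer $\alpha:A^*\to M$ of $L_1$ and $L_2$ (not a \Cs-morphism, just any finite one), sets $h=\max(k_1,k_2)$ and $k=\omega(M)+h$, and proves a dichotomy (Lemma~\ref{lem:carprop}): the single cut must lie to the right of $xu^kvu^{h}$ or to the left of $u^{h}vu^ky$. The contradiction is obtained by exhibiting an explicit word $w_1az_2u^{\ell}z_1aw_2$ with two witnessed factorizations, using $\alpha$-idempotence together with the inductive hypothesis~\eqref{eq:upolp} for $L_1$ and $L_2$. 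Once the dichotomy holds, replacing $v$ by $v'$ only touches the interior of one factor, and the inductive hypothesis on that factor finishes. Handling one cut at a time through the binary recursion is precisely what makes the pumping argument manageable; flattening to an $n$-fold product as you do would force a multi-cut analogue of the dichotomy, which is no easier.
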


\begin{proof}
  By definition, there exists a \emph{finite} set of languages $\Hb \subseteq \Cs$ such that $L$ is built from the languages in \Hb using only unambiguous marked concatenations and disjoint unions. Proposition~\ref{prop:genocm} yields a \Cs-morphism $\eta: A^*\to N$ recognizing every $H \in \Hb$. We now use induction on the construction of $L$ from the languages in \Hb to prove that there exists $k \in \nat$ such that~\eqref{eq:upolp} holds. Assume first that $L \in \Hb$ (which means that $L$ is recognized by $\eta$). We show that~\eqref{eq:upolp} holds for $k = 0$. Indeed, let $u,v,v',x,y \in A^*$ such that $\eta(u) = \eta(v) = \eta(v')$. It follows that $\eta(xu^{0}vu^{0}y) = \eta(xu^{0}v'u^{0}y)$. Since $L$ is recognized by $\eta$, this yields $xu^{0}vu^{0}y \in L \Leftrightarrow xu^{0} v'u^{0} y \in L$ as desired.

  We now assume that $L \not\in \Hb$. In this case, there exist $L_1,L_2$, built from the languages in \Hb using unambiguous marked concatenations and disjoint unions such that either $L = L_1 \uplus L_2$, or $L = L_1aL_2$ for some $a \in A$ and $L_1aL_2$ is unambiguous. Using induction, for $i = 1,2$, we get $k_i \in \nat$ such that $L_i$ satisfies~\eqref{eq:upolp} for $k = k_i$. If $L = L_1 \uplus L_2$, it is immediate that $L$ satisfies~\eqref{eq:upolp} for $k = max(k_1,k_2)$. We focus on the case $L = L_1aL_2$. Since $L_1$ and $L_2$ are regular (as they belong to \upol{\Cs}), there exists a morphism $\alpha: A^* \to M$ into a finite monoid $M$ recognizing them both. Let $p = \omega(M) \geq 1$ be the idempotent power of this monoid. Moreover, let $h = max(k_1,k_2)$ and $k = p+h$. We prove that $L = L_1aL_2$ satisfies~\eqref{eq:upolp} for this number $k$. Let $u,v,v',x,y\in A^*$ such that $\eta(u) = \eta(v) = \eta(v')$. Assume that $xu^{k}vu^{k}y \in L$, we prove that $xu^{k}v'u^{k}y \in L$ (the converse implication is symmetrical).

  Since $L = L_1aL_2$, we have $w_1 \in L_1$ and $w_2 \in L_2$ such that $xu^{k}vu^{k}y = w_1aw_2$. Using the hypothesis that $L_1aL_2$ is unambiguous, we prove the following fact.

  \begin{lem}\label{lem:carprop}
    Either $xu^{k} vu^{h}$ is a prefix of $w_1$ or $u^{h} vu^{k} y$ is a suffix of $w_2$.
  \end{lem}

  \begin{proof}
    By contradiction, we assume that $xu^{k} vu^{h}$ is not a prefix of $w_1$ and $u^{h} vu^{k} y$ is not suffix of $w_2$. Since $k = p+h$, we have $w_1aw_2 = xu^{k} vu^{k} y= xu^{p}u^{h} vu^{h}u^{p}y$. Therefore, our hypothesis yields that $u^{p}y$ is a suffix of $w_2$ and $xu^{p}$ is a prefix of $w_1$. We get $z_1,z_2 \in A^*$ such that $w_1 = xu^{p}z_1$ and $w_2 = z_2u^{p} y$. Since $w_1aw_2 = xu^{p}u^{h} vu^{h}u^{p}y$, this also implies $u^{h} vu^{h} = z_1az_2$. Let $\ell = (p-1)(2h+1)$. We prove that $w_1az_2u^{\ell}z_1 \in L_1$ and $z_2u^{\ell}z_1aw_2 \in L_2$. Since we have $w_1 \in L_1$ and $w_2 \in L_2$, this yields that $w_1az_2u^{\ell}z_1aw_2 \in L_1aL_2$ and that this membership is witnessed by \emph{two} decompositions: $w_1az_2u^\ell z_1\cdot a\cdot w_2$ and $w_1\cdot a\cdot z_2u^\ell z_1aw_{2}$. Hence, this contradicts the hypothesis that $L_1aL_2$ is unambiguous, finishing the proof.

    We only prove that $w_1az_2u^{\ell}z_1 \in L_1$ (that $z_2u^{\ell}z_1aw_2 \in L_2$ is symmetrical and left to the reader). Since $w_1 = xu^{p}z_1$ and $u^{h} vu^{h} = z_1az_2$, we have,
    \[
      w_1az_2u^{\ell}z_1 = xu^{p}u^hvu^hu^{\ell}z_1.
    \]
    Thus, it remains to show that $xu^{p}u^hvu^hu^{\ell}z_1 \in L_1$. By definition, $xu^{p}z_1 = w_1 \in L_1$. Since $p$ is the idempotent power of a finite monoid recognizing $L_1$, this yields $xu^{p}u^{p(2h+1)}z_1 \in L_1$. This can be reformulated as $xu^{p}u^huu^hu^{\ell}z_1\in L_1$. By definition, $\eta(v) = \eta(u)$ and $L_1$ satisfies~\eqref{eq:upolp} for $k_1 \leq h$. Hence, we obtain $xu^{p}u^hvu^hu^{\ell}z_1\in L_1$, concluding the proof of the lemma.
  \end{proof}

  We may now prove that $xu^{k}v'u^{k}y \in L$. In view of Lemma~\ref{lem:carprop}, there are two cases. First, assume that $xu^{k} vu^{h}$ is a prefix of $w_1$. This yields $z \in A^*$ such that $w_1 = xu^{k} vu^{h}z$ and $u^{k-h}y = zaw_2$. Since $w_1 \in L_1$, $\eta(u) = \eta(v) = \eta(v')$ and $L_1$ satisfies\eqref{eq:upolp} for $k_1 \leq h$, it follows that, $xu^{k} v'u^{h}z \in L_1$. Thus, $xu^{k} v'u^{k} y = xu^{k} v'u^{h}zaw_2 \in L_1aL_2$ as desired. In the second case, $u^{h} vu^{k} y$ is a suffix of $w_2$. This yields $z \in A^*$ such that $w_2 = zu^{h} vu^{k} y$ and $xu^{k-h} = w_1az$. Since $w_2 \in L_2$,  $\eta(u) = \eta(v) = \eta(v')$ and $L_2$ satisfies~\eqref{eq:upolp} for $k_2 \leq h$, it follows that, $zu^{h} v'u^{k} y \in L_2$. We obtain $xu^{k} v'u^{k} y = w_1azu^{h} v'u^{k} y \in L_1aL_2$, as desired.
\end{proof}

\subsection{Alternated polynomial closure}

We present a third and final operator. For every input class \Cs, we define \adet{\Cs} as the least class containing \Cs closed under \emph{disjoint union}, \emph{left deterministic marked concatenation} and \emph{right deterministic marked concatenation}.

We now consider a technical restriction of \adeto, defined by further restricting the situations in which using marked concatenation is allowed. While less natural, it will be useful in proof arguments. Let \Cs be a class of languages. Given two languages $K,L \subseteq A^*$ and a letter $a \in A$, we say that the marked concatenation $KaL$ is left (resp.~right) \mbox{\emph{\Cs-deterministic}} when there exist $K',L' \in \Cs$ such that $K \subseteq K'$, $L \subseteq L'$ and $K'aL'$ is left (resp.~right)~deterministic.

\begin{rem}
  Clearly, $KaL$ being left (resp. right) \Cs-deterministic implies that it is also left (resp. right) deterministic. The converse need not be true. For example, when $\Cs = \stzer = \{\emptyset,A^*\}$, no nonempty marked concatenation can be left or right \stzer-deterministic.
\end{rem}

The \emph{weak alternated polynomial closure} of a class \Cs, which we denote by \wadet{\Cs}, is the least class containing \Cs and closed under \emph{disjoint union} and \emph{left and right \Cs-deterministic marked concatenation}. Clearly, we have $\wadet{\Cs} \subseteq \adet{\Cs}$. Moreover, we shall prove in Section~\ref{sec:caracupol} that when \Cs is a \vari, the converse inclusion holds as well, so that $\wadet{\Cs} = \adet{\Cs}$. This is useful to prove that \adet{\Cs} is included in another class: it suffices to prove that \wadet{\Cs} is included in this class, which is often simpler.

While \adeto and \wadeto are less prominent than \upolo, they serve as key tools in the paper. This is because of the following theorem that we shall prove in Section~\ref{sec:caracupol}.

\begin{theorem}\label{thm:apol}
  Let \Cs be a \vari. Then, $\upol{\Cs} = \adet{\Cs} = \wadet{\Cs}$.
\end{theorem}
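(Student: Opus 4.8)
\textbf{Proof plan for Theorem~\ref{thm:apol}.}
The plan is to prove the chain of inclusions $\wadet{\Cs} \subseteq \adet{\Cs} \subseteq \upol{\Cs} \subseteq \wadet{\Cs}$, which together with the trivial $\wadet{\Cs}\subseteq\adet{\Cs}$ yields all three equalities. The first inclusion is immediate from the definitions, since a left/right \Cs-deterministic marked concatenation is in particular left/right deterministic (see the remark following the definition of \wadeto). The second inclusion, $\adet{\Cs}\subseteq\upol{\Cs}$, is also essentially free: by Lemma~\ref{lem:detuna} combined with Remark~\ref{rem:sublang-unamb}, every left deterministic or right deterministic marked concatenation is unambiguous, and disjoint union is allowed in both operators; since \adet{\Cs} is generated from \Cs by operations that \upol{\Cs} is also closed under, an easy induction on the construction of a language in \adet{\Cs} gives $\adet{\Cs}\subseteq\upol{\Cs}$. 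So the real content is the third inclusion $\upol{\Cs}\subseteq\wadet{\Cs}$, i.e.\ showing that every unambiguous marked concatenation of languages in \wadet{\Cs} can be re-expressed using only \emph{disjoint unions} of \emph{\Cs-deterministic} marked concatenations.

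For the hard inclusion I would argue by induction on the construction of $L\in\upol{\Cs}$. The base case ($L\in\Cs$) and the disjoint union case are trivial. The crux is the case $L = L_1aL_2$ with $L_1,L_2\in\wadet{\Cs}$ (by induction) and $L_1aL_2$ unambiguous. The idea is to use a \Cs-morphism that simultaneously recognizes enough structure: by Proposition~\ref{prop:carprop} (or directly via Proposition~\ref{prop:genocm} applied to the relevant languages of \Cs appearing in the \upol-expressions for $L_1,L_2$) fix a \Cs-morphism $\eta: A^*\to N$, and also a morphism $\alpha: A^*\to M$ recognizing $L_1$ and $L_2$. One refines $L_1aL_2$ according to the $\eta$-image (equivalently the $\mathscr{R}$-class and $\mathscr{L}$-class in the relevant product monoid) of the prefix up to the marked position $a$ and the suffix after it. The point of unambiguity is that for each word $w\in L_1aL_2$ the position of the distinguished $a$ is determined, so we can partition $L_1aL_2$ into finitely many pieces $K_sa\,L_s$ indexed by $s\in N$ (with $K_s=L_1\cap\eta\inv(s)$-type conditions, $L_s$ likewise), and these pieces are \emph{disjoint}. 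It then remains to show each such refined piece is a \Cs-deterministic marked concatenation, possibly after further subdivision: here one invokes Lemma~\ref{lem:greendet} together with Lemma~\ref{lem:jlr}, choosing the exponent $k$ so that $\eta(u^k)$ is idempotent and exploiting that in a finite monoid $s\alpha(a)\Jrel s$ with $s\alpha(a)\Rord s$ forces $s\alpha(a)\Rrel s$; the unambiguity of the original concatenation is precisely what rules out the ``bad'' $\mathscr{R}$/$\mathscr{L}$ situation, so the surviving pieces must be left or right deterministic in the required \Cs-parameterized sense.

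The main obstacle I anticipate is bookkeeping in this refinement step: one must choose the recognizing morphism carefully so that the witnessing languages $K',L'\in\Cs$ (for \Cs-determinism) actually lie in \Cs and not merely in \pol{\Cs}, and one must verify that the finitely many pieces obtained are genuinely \emph{disjoint} (so that only disjoint union is used) and that each is, after iterating the subdivision, reachable from \Cs by \emph{\Cs-deterministic} concatenations alone. A clean way to control this is to phrase the induction hypothesis slightly more strongly — e.g.\ tracking not just membership in \wadet{\Cs} but also a \Cs-morphism witnessing the deterministic structure — and to first prove the statement for concatenations $KaA^*$ and $A^*aL$ (the ``one-sided'' case, where Lemma~\ref{lem:detuna} makes the deterministic structure transparent) and then bootstrap to the general unambiguous case by writing $L_1aL_2 = (L_1aA^*)\cap(A^*aL_2)\cap(\text{something forcing the same }a)$ and distributing. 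Since \wadet{\Cs} need not obviously be closed under intersection a priori, the safest route is probably the direct induction sketched above rather than this intersection trick, but the intersection heuristic is useful for guessing the right decomposition.
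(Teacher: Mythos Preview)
Your identification of the easy inclusions $\wadet{\Cs}\subseteq\adet{\Cs}\subseteq\upol{\Cs}$ is correct and matches the paper. The gap is in the hard inclusion $\upol{\Cs}\subseteq\wadet{\Cs}$.

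Your plan is a direct structural induction on the $\upol{\Cs}$-construction, reducing to the claim that $\wadet{\Cs}$ is closed under unambiguous marked concatenation. The critical step---that after refining $L_1aL_2$ by a \Cs-morphism $\eta$ the resulting pieces $K_s a H_t$ are left or right \Cs-deterministic---is not justified. Unambiguity of $L_1aL_2$ (hence of each $K_s a H_t$, by Remark~\ref{rem:sublang-unamb}) does \emph{not} force $s\eta(a)\Rords s$ or $\eta(a)t\Lords t$ in $N$: Example~\ref{exa:unamb-vs-det} already exhibits an unambiguous concatenation that is neither left nor right deterministic, and nothing in your refinement changes this. Your sentence ``unambiguity \ldots\ is precisely what rules out the bad $\mathscr{R}/\mathscr{L}$ situation'' is the missing lemma, not an argument for it. The ``further subdivision'' you allude to would have to be a genuine recursive procedure, and you give no induction parameter that makes it terminate or any mechanism by which the pieces eventually become \Cs-deterministic.

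The paper does \emph{not} attempt a direct induction on the $\upol{\Cs}$-construction; it explicitly says the argument is intertwined with the algebraic characterization. Concretely (see Theorem~\ref{thm:caracupol} and Corollary~\ref{cor:cequiv}): from $L\in\upol{\Cs}$ one first extracts, via Proposition~\ref{prop:carprop}, that the syntactic morphism $\alpha$ satisfies $s^{\omega+1}=s^\omega t s^\omega$ whenever $s\canec t$ (Equation~\eqref{eq:cupol}). Then Lemma~\ref{lem:upol:adetsuf} shows, by a \emph{three-parameter induction on Green's relations} (rank of a \Jrel-class in $M/{\canec}$, plus \Rrel- and \Lrel-indices in $M$), that every $\eta\inv(x)$ admits an $(s,t)$-safe $\wadet{\Cs}$-partition. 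Equation~\eqref{eq:cupol} is precisely what makes the base case go through (it is the content of Lemma~\ref{lem:sub1}); without it the Green-relation induction collapses. So the algebraic detour is not optional bookkeeping: the global combinatorial fact you are missing---why unambiguity forces enough deterministic structure---is encoded in~\eqref{eq:cupol}, and the induction that exploits it runs over Green's relations of the \emph{recognizing} morphism, not over the syntax tree of~$L$.
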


Recall that by Lemma~\ref{lem:detuna}, a left/right deterministic marked concatenation is necessarily unambiguous. Hence, the inclusions $\wadet{\Cs} \subseteq \adet{\Cs}\subseteq\upol{\Cs}$ in Theorem~\ref{thm:apol} are immediate from the definitions. On the other hand, the proof argument for the inclusion $\upol{\Cs} \subseteq \wadet{\Cs}$ is intertwined with that of the generic algebraic characterization of \upol{\Cs}. More precisely, we prove that every language satisfying the characterization belongs to \wadet{\Cs} (which also implies membership in \upol{\Cs}). This is why \adet{\Cs} and \wadet{\Cs} are important. In practice, left/right deterministic marked concatenations are simpler to handle than unambiguous ones. By Theorem~\ref{thm:apol}, they suffice to build \emph{all} languages in~\upol{\Cs}.

\section{Canonical relations associated to morphisms}
\label{sec:cano}
Given a class \Cs and a morphism $\alpha: A^* \to M$ into a finite monoid, we associate three distinct objects: a submonoid of $M$ and two relations.  They are central ingredients for formulating the algebraic characterizations of \pol{\Cs} and \upol{\Cs} that we present in Sections~\ref{sec:caracpol},~\ref{sec:caracupol} and~\ref{sec:logcar}. Our characterization algorithms rely on these notions for~\Cs: the \Cs-kernel, the \Cs-pair and the \Cs-preorder relations.

\subsection{\Gs-kernels}

Consider a class \Gs and a morphism $\alpha: A^* \to M$ into a finite monoid $M$. We associate to $\alpha$ a subset $N \subseteq M$ that we call the \emph{\Gs-kernel} of $\alpha$. It consists of all elements $s \in M$ such that $\{\veps\}$ is \emph{not} \Gs-separable from $\alpha\inv(s)$. Additionally, we shall consider a slightly more restrictive notion: the \emph{strict \Gs-kernel of $\alpha$} is the set $S = N \cap \alpha(A^+)$.

\begin{rem}
  While the definition makes sense for an arbitrary class \Gs, it is meant to be used in the special case where \Gs is a \vari of group languages (hence the notation~\Gs).
\end{rem}

\begin{rem}
  When \Gs is the class \md of modulo languages, it can be shown that the \md-kernel of a morphism corresponds to a standard notion: the stable monoid, defined in~\cite{stablemono}. Given a morphism $\alpha: A^*\to M$ into a finite monoid, it can be verified that there exists a number $d \geq 1$ such that $\alpha(A^{2d}) = \alpha(A^d)$. The least such number $d \geq 1$ is called the stability index of $\alpha$. The stable semigroup of $\alpha$ is $S = \alpha(A^d)$ and the stable monoid of $\alpha$ is $N = \{1_M\} \cup S$. One may verify that $S$ and $N$ are respectively the strict \md-kernel and the \md-kernel of $\alpha$ (this follows from a simple analysis of \md-separation).
\end{rem}

Clearly, having a \Gs-separation algorithm in hand suffices to compute the \Gs-kernel of an input morphism $\alpha$. This yields the following lemma.

\begin{lem}\label{lem:kercomp}
  Let \Gs be a class with decidable separation. Given a morphism $\alpha: A^*\to M$ into a finite monoid as input, one can compute both the \Gs-kernel and the strict \Gs-kernel of $\alpha$.
\end{lem}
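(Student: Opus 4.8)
The plan is to reduce the computation directly to the $\Gs$-separation algorithm, which exists by hypothesis. Since $\alpha\colon A^* \to M$ is given and $M$ is finite, I would first enumerate the elements of $M$, and for each $s \in M$ build an explicit finite automaton recognizing the regular language $\alpha\inv(s)$: take the deterministic complete automaton with state set $M$, initial state $1_M$, transition $q \xrightarrow{a} q\alpha(a)$ for each $a \in A$, and single final state $s$. The singleton $\{\veps\}$ is, of course, regular as well. By definition, the $\Gs$-kernel of $\alpha$ is
\[
  N = \{s \in M \mid \{\veps\} \text{ is not } \Gs\text{-separable from } \alpha\inv(s)\}.
\]
Hence it suffices to run the $\Gs$-separation procedure with inputs $\{\veps\}$ and $\alpha\inv(s)$ for each of the finitely many $s \in M$, and to collect exactly those $s$ for which the answer is negative. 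This produces $N$.

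For the strict $\Gs$-kernel $S = N \cap \alpha(A^+)$, it then only remains to compute the set $\alpha(A^+) \subseteq M$. This set is the subsemigroup of $M$ generated by $\{\alpha(a) \mid a \in A\}$, which can be obtained by a standard saturation: start from $\alpha(A)$ and repeatedly close under the multiplication of $M$ until no new element appears; termination is guaranteed since $M$ is finite. Intersecting the result with $N$ yields $S$.

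I do not expect a real obstacle here: every language handed to the separation oracle is regular and given effectively, and all remaining operations (enumerating $M$, building the automata, the saturation computing $\alpha(A^+)$, and the final intersection) are elementary. In particular, edge cases — such as an element $s \notin \alpha(A^*)$, for which $\alpha\inv(s) = \emptyset$ — need no special treatment, since the separation algorithm simply answers on the pair $(\{\veps\},\alpha\inv(s))$ as provided.
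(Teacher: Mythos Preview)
Your proposal is correct and follows exactly the approach the paper has in mind: the paper does not even spell out a proof, simply remarking that ``having a \Gs-separation algorithm in hand suffices to compute the \Gs-kernel of an input morphism $\alpha$'' before stating the lemma. Your write-up just makes this explicit, including the routine computation of $\alpha(A^+)$ needed for the strict \Gs-kernel.
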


We now characterize \Gs-kernels in terms of \Gs-morphisms when \Gs is a \vari.

\begin{lem}\label{lem:morker}
  Let \Gs be a \vari, let $\alpha: A^* \to M$ be a morphism into a finite monoid and let $N$ be the \Gs-kernel of $\alpha$. The following properties hold:
  \begin{itemize}
	\item Let $\eta: A^* \to G$ be a \Gs-morphism. For every $s \in N$, there exists $w \in A^*$ such that $\alpha(w) = s$ and $\eta(w) =1_G$.
	\item There exists a \Gs-morphism $\eta: A^* \to G$ such that for every $w \in A^*$, if $\eta(w) = 1_G$, then $\alpha(w) \in N$.
  \end{itemize}
\end{lem}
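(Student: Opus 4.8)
The statement has two parts, and I would prove them separately, though both hinge on the same fundamental fact: $s \in N$ (the \Gs-kernel of $\alpha$) means exactly that $\{\veps\}$ is \emph{not} \Gs-separable from $\alpha\inv(s)$, and I want to reformulate this in terms of \Gs-morphisms. The key observation is that for a \Gs-morphism $\eta: A^* \to G$, the language $\eta\inv(1_G)$ belongs to \Gs and contains $\veps$; conversely, any language in \Gs containing $\veps$ is recognized by some \Gs-morphism (via Proposition~\ref{prop:synmemb} and Proposition~\ref{prop:genocm}), and one can arrange for $\veps$ to be mapped to $1_G$. So "\,$\{\veps\}$ is \Gs-separable from $K$\," is equivalent to "\,there is a \Gs-morphism $\eta: A^* \to G$ with $K \cap \eta\inv(1_G) = \emptyset$\,", i.e.\ $1_G \notin \eta(K)$.

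For the first part, let $\eta: A^* \to G$ be an arbitrary \Gs-morphism and fix $s \in N$. I need $w \in \alpha\inv(s)$ with $\eta(w) = 1_G$, i.e.\ $\alpha\inv(s) \cap \eta\inv(1_G) \neq \emptyset$. Suppose not: then $\eta\inv(1_G)$ is a language of \Gs containing $\veps$ and disjoint from $\alpha\inv(s)$, so it witnesses that $\{\veps\}$ is \Gs-separable from $\alpha\inv(s)$, contradicting $s \in N$. This direction is essentially immediate once the reformulation is in place. For the second part, I want a \emph{single} \Gs-morphism $\eta$ that works simultaneously for all $s \notin N$: for each such $s$, $\{\veps\}$ \emph{is} \Gs-separable from $\alpha\inv(s)$, so there is a language $K_s \in \Gs$ with $\veps \in K_s$ and $K_s \cap \alpha\inv(s) = \emptyset$. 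Since $M$ is finite, there are finitely many such $s$; apply Proposition~\ref{prop:genocm} to the finite family $\{K_s \mid s \notin N\}$ to obtain a \Gs-morphism $\eta: A^* \to G$ recognizing all of them. I then need to check that $\eta(w) = 1_G \Rightarrow \alpha(w) \in N$. The point is that $\veps \in K_s$ and each $K_s$ is recognized by $\eta$, so $\eta\inv(1_G) \subseteq K_s$ for every $s \notin N$ (because $\eta(\veps) = 1_G$ and $K_s$ is a union of $\eta$-classes containing the class of $\veps$). Hence if $\eta(w) = 1_G$ then $w \in K_s$ for all $s \notin N$, so $w \notin \alpha\inv(s)$ for all $s \notin N$, i.e.\ $\alpha(w) \in N$.

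The main obstacle is the bookkeeping in the second part: ensuring that the reformulation "$\veps \in K_s$ and $K_s$ recognized by $\eta$" genuinely forces $\eta\inv(1_G) \subseteq K_s$, which uses that $\eta(\veps) = 1_G$ together with the fact that a language recognized by $\eta$ is a union of fibers $\eta\inv(g)$ — so it either contains the whole fiber $\eta\inv(1_G)$ or is disjoint from it, and since it contains $\veps$ it must contain all of $\eta\inv(1_G)$. One should also double-check that \Gs being a \vari (closed under Boolean operations and quotients) is what makes Proposition~\ref{prop:genocm} applicable and guarantees the \Gs-morphisms exist; Lemma~\ref{lem:gmorph} additionally tells us $G$ is a group, which is consistent with the notation but not strictly needed for the argument. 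A minor subtlety worth a sentence: in the first part $\eta$ is given and arbitrary, whereas in the second part $\eta$ is constructed — these are genuinely different statements (one is "for all $\eta$", the other "there exists $\eta$"), and the asymmetry reflects that membership in $N$ is a "for all separators fail" condition while non-membership is a "some separator works" condition.
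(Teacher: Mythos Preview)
Your proof is correct and follows essentially the same approach as the paper: both parts use the definition of the \Gs-kernel via non-separability, and the second part applies Proposition~\ref{prop:genocm} to the finitely many separators $K_s$ for $s \notin N$, then uses that $\eta\inv(1_G) \subseteq K_s$ because $K_s$ is a union of $\eta$-fibers containing $\veps$. One small correction to your aside: Lemma~\ref{lem:gmorph} does not apply here since \Gs is only assumed to be a \vari, not a \vari of group languages, so $G$ need not be a group --- but as you note, this is irrelevant to the argument.
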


\begin{proof}
  For the first assertion, let $\eta: A^* \to G$ be a \Gs-morphism and consider $s \in N$. By hypothesis, $\eta\inv(1_G) \in \Gs$ and $\{\veps\} \subseteq \eta\inv(1_G)$. Since $s \in N$ and $N$ is the \Gs-kernel of $\alpha$, the language $\eta\inv(1_G)\in\Gs$ cannot separate $\alpha\inv(s)$ from $\{\veps\}$. Since it includes $\{\veps\}$, we have $\alpha\inv(s) \cap \eta\inv(1_G) \neq \emptyset$. This yields $w \in A^*$ such that $\alpha(w) = s$ and $\eta(w) =1_G$.

  We turn to the second assertion. By definition of the \Gs-kernel $N$ of $\alpha$, we know that for every $t \in M \setminus N$, there exists $K_t \in \Gs$ such that $\veps \in K_t$ and $\alpha\inv(t) \cap K_t = \emptyset$. Proposition~\ref{prop:genocm} yields a \Gs-morphism $\eta: A^* \to G$ recognizing all  languages $K_t$ for $t \in M \setminus N$. It remains to prove that given $w \in A^*$ such that $\eta(w) = 1_G$, we have $\alpha(w) \in N$. Assume that $\eta(w)=1_G=\eta(\veps)$. Since $\eta$ recognizes $K_t$ for every $t \in M \setminus N$ and $\veps \in K_t$ by definition, we get $w \in K_t$ for every $t \in M \setminus N$. Since  $\alpha\inv(t) \cap K_t = \emptyset$, we deduce that $\alpha(w) \neq t$ for every $t \in M \setminus N$. This means that $\alpha(w) \in N$,  completing the~proof.
\end{proof}

An important property is that the \Gs-kernel $N$ of a morphism $\alpha: A^* \to M$ is a submonoid of~$M$. Observe that since the \emph{strict} \Gs-kernel of $\alpha$ is $S = N \cap \alpha(A^+)$ by definition, this also implies that $S$ is a subsemigroup of $M$. We state these properties in the following fact, which we shall use implicitly from now on.

\begin{fct}\label{fct:kernelmono}
  Let \Gs be a \vari and $\alpha: A^*\to M$ a morphism into a finite monoid. The \Gs-kernel of $\alpha$ is a submonoid of $M$. Its strict \Gs-kernel is a subsemigroup~of~$M$.
\end{fct}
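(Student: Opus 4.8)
The plan is to deduce both assertions from Lemma~\ref{lem:morker}, which translates membership in the \Gs-kernel into a condition on \Gs-morphisms. Throughout, write $N$ for the \Gs-kernel of $\alpha$ and $S = N \cap \alpha(A^+)$ for its strict \Gs-kernel. First I would dispatch $1_M \in N$ directly from the definition: any $K \in \Gs$ with $\veps \in K$ meets $\alpha\inv(1_M)$ (since $\veps$ lies in both), so $\{\veps\}$ is not \Gs-separable from $\alpha\inv(1_M)$.

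Next I would show that $N$ is closed under the multiplication of $M$. The key move is to fix, once and for all, the \Gs-morphism $\eta: A^* \to G$ supplied by the second item of Lemma~\ref{lem:morker}, so that $\eta(w) = 1_G$ forces $\alpha(w) \in N$ for every $w \in A^*$. Then, given $s,t \in N$, I would apply the first item of Lemma~\ref{lem:morker} \emph{to this same $\eta$}: it yields words $u,v \in A^*$ with $\alpha(u) = s$, $\alpha(v) = t$ and $\eta(u) = \eta(v) = 1_G$. Since $\eta(uv) = 1_G$, the defining property of $\eta$ gives $\alpha(uv) = st \in N$. Together with $1_M \in N$, this proves that $N$ is a submonoid of $M$.

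Finally, for the strict \Gs-kernel I would observe that $\alpha(A^+)$ is a subsemigroup of $M$, being the homomorphic image of the subsemigroup $A^+$ of $A^*$. Hence $S = N \cap \alpha(A^+)$ is the intersection of a submonoid with a subsemigroup and is therefore closed under multiplication: for $s,t \in S$ we obtain $st \in N$ from the previous paragraph and $st \in \alpha(A^+)$ because $\alpha(A^+)$ is a subsemigroup, so $st \in S$.

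I do not expect a serious obstacle: the whole content sits in Lemma~\ref{lem:morker}. The one point to be careful about is that closure under products cannot be argued one separating group language at a time; one must commit to a single \Gs-morphism $\eta$ (the one from the second item of Lemma~\ref{lem:morker}) before invoking the first item for both $s$ and $t$, so that the two witness words collapse to the same identity and their concatenation does as well.
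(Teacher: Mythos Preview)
Your proposal is correct and follows essentially the same approach as the paper: fix the \Gs-morphism $\eta$ from the second item of Lemma~\ref{lem:morker}, use the first item to pull back $s,t\in N$ to words with $\eta$-image $1_G$, and concatenate. The only cosmetic difference is that the paper obtains $1_M\in N$ from $\eta(\veps)=1_G$ rather than directly from the definition, and it treats the strict-kernel part as an immediate remark (your explicit argument via $\alpha(A^+)$ being a subsemigroup is the intended one).
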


\begin{proof}
  We write $N$ for the \Gs-kernel of $\alpha$. Lemma~\ref{lem:morker} yields a \Gs-morphism $\eta: A^* \to G$ such that for every $w \in A^*$, if $\eta(w) = 1_G$, then $\alpha(w) \in N$. Since $\eta(\veps) = 1_G$, it follows that $1_M = \alpha(\veps) \in N$. Moreover, if $s,t \in N$, then the first assertion in Lemma~\ref{lem:morker} yields $u,v \in A^*$ such that $\eta(u) = \eta(v) = 1_G$, $\alpha(u) = s$ and $\alpha(v) = t$. Hence, $\eta(uv) = 1_G$ and we get $st = \alpha(uv) \in N$ by definition of $\eta$.
\end{proof}

We conclude the presentation with a lemma, specific to the case where \Gs is a \vari of group languages (which is the only case that we shall consider in practice).

\begin{fct}\label{fct:idemker}
  Let \Gs be a \vari of group languages and let $\alpha: A^* \to M$ be a surjective morphism into a finite monoid. Then, all idempotents in $E(M)$ belongs to the \Gs-kernel of~$\alpha$.
\end{fct}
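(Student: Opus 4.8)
The plan is to use the characterization of \Gs-kernels in terms of \Gs-morphisms provided by Lemma~\ref{lem:morker}, together with the fact that the codomain of a \Gs-morphism is a group (Lemma~\ref{lem:gmorph}). Write $N$ for the \Gs-kernel of $\alpha$ and fix an idempotent $e \in E(M)$. First I would invoke the second assertion of Lemma~\ref{lem:morker} to obtain a \Gs-morphism $\eta: A^* \to G$ such that $\eta(w) = 1_G$ implies $\alpha(w) \in N$ for every $w \in A^*$; by Lemma~\ref{lem:gmorph}, $G$ is a group. Since $\alpha$ is surjective, pick any $w \in A^*$ with $\alpha(w) = e$.

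The key step is then to replace $w$ by the power $w^{\omega(G)}$. On the one hand, $\eta(w^{\omega(G)}) = (\eta(w))^{\omega(G)}$ is an idempotent of $G$, hence equal to $1_G$ since the only idempotent of a group is its identity. On the other hand, $\alpha(w^{\omega(G)}) = e^{\omega(G)} = e$ because $e$ is idempotent in $M$. Feeding $w^{\omega(G)}$ into the defining property of $\eta$ then yields $e = \alpha(w^{\omega(G)}) \in N$, which is exactly what we want.

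Alternatively, one can argue directly from the definition of the \Gs-kernel, which only requires showing that no language $K \in \Gs$ with $\veps \in K$ is disjoint from $\alpha\inv(e)$. Given such a $K$, take a morphism $\beta: A^* \to H$ into a finite group recognizing $K$; since $\veps \in K$, we have $\beta\inv(1_H) \subseteq K$. Choosing $w$ with $\alpha(w) = e$ by surjectivity and passing to $w^{\omega(H)}$ produces a word lying in $\beta\inv(1_H) \subseteq K$ (as $(\beta(w))^{\omega(H)} = 1_H$) and also in $\alpha\inv(e)$ (as $e^{\omega(H)} = e$), contradicting disjointness.

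There is no real obstacle here; the only point that needs care is the interplay between the two idempotent powers, namely killing the relevant element inside the \emph{group} while leaving $e$ untouched in $M$. This is precisely where the hypothesis that \Gs is a \vari \emph{of group languages} — rather than an arbitrary \vari — is used.
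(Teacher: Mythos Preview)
Your first argument is correct and matches the paper's proof essentially line for line: invoke Lemma~\ref{lem:morker} to get a \Gs-morphism $\eta$ into a group $G$, pick $w$ with $\alpha(w)=e$ by surjectivity, and pass to $w^{\omega(G)}$ to land in $\eta\inv(1_G)$ while keeping $\alpha$-image $e$. Your alternative direct argument is also fine and is just the same idea unpacked without the intermediate lemma.
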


\begin{proof}
  Let $N$ be the \Gs-kernel of $\alpha$ and let $e\in E(M)$. We prove that $e\in N$. By Lemma~\ref{lem:morker}, there exists a \Gs-morphism $\eta: A^* \to G$ such that for every $w \in A^*$, if $\eta(w) = 1_G$, then $\alpha(w) \in N$. By Lemma~\ref{lem:gmorph}, we know that $G$ is a group. Since $\alpha$ is surjective, there exists $w \in A^*$ such that $\alpha(w) = e$. Let $k = \omega(G)$. Since $G$ is a group, we get $\eta(w^k) = 1_G$ by definition of $k$. Moreover, since $e$ is idempotent, we have $\alpha(w^k)=e$. Hence, $e \in N$ by definition of $\eta$.
\end{proof}

\subsection{\Cs-pairs}

Consider a class \Cs and a morphism $\alpha: A^* \to M$ into a finite monoid.  We define a relation on $M$: the \Cs-pairs (for $\alpha$). Let $(s,t) \in M^2$. We say that,
\begin{equation}\label{eq:cpairs}
  \text{$(s,t)$ is a \emph{\Cs-pair} (for $\alpha$) if and only if $\alpha\inv(s)$ is \emph{not} \Cs-separable from $\alpha\inv(t)$}.
\end{equation}

\begin{rem}
  While we often make this implicit, being a \Cs-pair depends on $\alpha$.
\end{rem}

By definition, the set of \Cs-pairs for $\alpha$ is finite: it is a subset of $M^2$. Moreover, having a \Cs-separation algorithm in hand is clearly enough to compute all \Cs-pairs associated to an input morphism $\alpha$. We state this simple, yet crucial property in the following~lemma.

\begin{lem}\label{lem:septopairs}
  Let \Cs be a class of languages with decidable separation. Then, given a morphism $\alpha: A^*\to M$ into a finite monoid as input, one can compute all \Cs-pairs for $\alpha$.
\end{lem}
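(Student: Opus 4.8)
The plan is essentially bookkeeping, since this lemma just reduces computing \Cs-pairs to finitely many separation queries. First I would note that $M$ is finite, hence so is $M^2$. So it suffices to describe, for a single pair $(s,t) \in M^2$, an effective procedure deciding whether $(s,t)$ is a \Cs-pair for $\alpha$; applying it to each of the finitely many pairs then yields the complete (finite) set of \Cs-pairs.

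Next I would use the defining equivalence~\eqref{eq:cpairs}: $(s,t)$ is a \Cs-pair if and only if $\alpha\inv(s)$ is \emph{not} \Cs-separable from $\alpha\inv(t)$. The key observation is that $\alpha\inv(s)$ and $\alpha\inv(t)$ are regular languages, and that representations of them are computable from the input $\alpha$. Indeed, each is recognized by the morphism $\alpha: A^* \to M$ with accepting set the singleton $\{s\}$ (resp.~$\{t\}$), so one can extract a finite automaton (or simply keep $\alpha$ together with the relevant accepting set) for each. Thus the pair $(\alpha\inv(s), \alpha\inv(t))$ is a legitimate instance of the \Cs-separation problem.

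Since \Cs-separation is decidable by hypothesis, running the \Cs-separation algorithm on this instance tells us whether $\alpha\inv(s)$ is \Cs-separable from $\alpha\inv(t)$, and hence, by~\eqref{eq:cpairs}, whether $(s,t)$ is a \Cs-pair. Iterating over all $(s,t) \in M^2$ therefore computes the whole set of \Cs-pairs for $\alpha$, which proves the lemma. There is no genuine difficulty; the only point worth spelling out is that the preimage languages $\alpha\inv(s)$ are effectively regular, which is immediate from the definition of recognition by a morphism into a finite monoid. This is the same pattern already used for \Gs-kernels in Lemma~\ref{lem:kercomp}: enumerate over the finite monoid and reduce each elementary query to a single separation instance.
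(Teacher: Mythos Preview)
Your proposal is correct and matches the paper's approach exactly: the paper does not even spell out a proof, treating the lemma as immediate from the definition of \Cs-pairs (it just says ``having a \Cs-separation algorithm in hand is clearly enough to compute all \Cs-pairs''). Your write-up simply makes explicit the finite enumeration over $M^2$ and the effectiveness of the preimages $\alpha\inv(s)$, which is precisely the intended argument.
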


We complete the definition with some properties of \Cs-pairs. A simple and useful one is that the \Cs-pair relation is reflexive provided that the morphism $\alpha$ is surjective (which is always the case in practice). Moreover, it is symmetric when \Cs is closed under complement. On the other hand, the \Cs-pair relation is \emph{not} transitive in general (see Example~\ref{exa:cpairnottrans} below).

\begin{lem}\label{lem:pairsreflex}
  Consider a class \Cs and a morphism $\alpha: A^* \to M$ into a finite monoid. The following properties hold:
  \begin{itemize}
	\item If $\alpha$ is surjective, the \Cs-pair relation is reflexive: for every $s \in M$, $(s,s)$ is a \Cs-pair.
	\item If \Cs is closed under complement, the \Cs-pair relation is symmetric: for every \Cs-pair $(s,t) \in M^2$, $(t,s)$ is a \Cs-pair as well.
  \end{itemize}
\end{lem}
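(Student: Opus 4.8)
The plan is to simply unwind the definition of \Cs-pair; both items reduce to a one-line contrapositive argument, so no machinery beyond the definitions of separation and \Cs-pair is needed.

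For reflexivity, I would first use surjectivity of $\alpha$ to note that $\alpha\inv(s) \neq \emptyset$ for every $s \in M$. Then I would argue by contradiction: if $(s,s)$ were not a \Cs-pair, some $K \in \Cs$ would separate $\alpha\inv(s)$ from itself, i.e.\ satisfy $\alpha\inv(s) \subseteq K$ and $\alpha\inv(s) \cap K = \emptyset$; picking any $w \in \alpha\inv(s)$, the first condition gives $w \in K$ and the second gives $w \notin K$, which is absurd. Hence $(s,s)$ is a \Cs-pair. It is worth flagging explicitly that surjectivity is genuinely needed here: without it, the empty language would separate $\alpha\inv(s) = \emptyset$ from itself.

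For symmetry, assume \Cs is closed under complement and let $(s,t)$ be a \Cs-pair. I would again argue by contradiction: if $(t,s)$ were not a \Cs-pair, there would be $K \in \Cs$ with $\alpha\inv(t) \subseteq K$ and $\alpha\inv(s) \cap K = \emptyset$. Setting $K' = A^* \setminus K$, closure under complement gives $K' \in \Cs$, and one checks that $\alpha\inv(s) \subseteq K'$ (since $\alpha\inv(s)$ is disjoint from $K$) and $\alpha\inv(t) \cap K' = \emptyset$ (since $\alpha\inv(t) \subseteq K$). Thus $K'$ separates $\alpha\inv(s)$ from $\alpha\inv(t)$, contradicting the assumption that $(s,t)$ is a \Cs-pair. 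Therefore $(t,s)$ is a \Cs-pair.

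There is no real obstacle: both statements are immediate once the definitions are in place. The only points deserving care are the precise use of the hypotheses — surjectivity for reflexivity, closure under complement for symmetry — and, by contrast, the observation that the same manipulation does \emph{not} yield transitivity (composing separators of $(s,t)$ and $(t,r)$ does not produce a separator of $\alpha\inv(s)$ from $\alpha\inv(r)$), which is why a counterexample is deferred to Example~\ref{exa:cpairnottrans}.
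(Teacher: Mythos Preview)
Your proposal is correct and follows essentially the same approach as the paper: both arguments unwind the definition of \Cs-pair directly, using nonemptiness of $\alpha\inv(s)$ for reflexivity and passing to the complement of a would-be separator for symmetry. The paper phrases both parts as direct arguments rather than contradictions, but the content is identical.
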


\begin{proof}
  For first assertion, assume that $\alpha$ is surjective. Given $s \in M$, we have $\alpha\inv(s) \neq \emptyset$ since $\alpha$ is surjective. Hence, $\alpha\inv(s) \cap \alpha\inv(s) \neq \emptyset$, which implies that $\alpha\inv(s)$ is not \Cs-separable from $\alpha\inv(s)$. This exactly says that $(s,s)$ is a \Cs-pair. For the second assertion, assume that~\Cs is closed under complement and consider a \Cs-pair $(s,t) \in M^2$. It follows that $\alpha\inv(s)$ is not \Cs-separable from $\alpha\inv(t)$. Since \Cs is closed under complement, this implies that $\alpha\inv(t)$ is not \Cs-separable from $\alpha\inv(s)$. Therefore, $(t,s)$ is a \Cs-pair.
\end{proof}

\begin{exa}\label{exa:cpairnottrans}
  The \Cs-pair relation is not transitive in general. Indeed, let $\Cs=\at$ and let~$M$ be the monoid $\{1,a,b,0\}$ where $1$ acts as an identity element, $0$ as an absorbing element, and the rest of the multiplication is given by $aa=ab=ba=bb=0$. Let $A=\{a,b\}$ and $\alpha:A^*\to M$ be the morphism defined by $\alpha(a)=a$ and $\alpha(b)=b$. We have $\alpha\inv(a)=\{a\}$. Therefore, any language of \at containing $\alpha\inv(a)$ also contains $a^{+}$, and intersects $\alpha\inv(0)$ (which is the set of words of length at least~2). Hence, $(a,0)$ is an \at-pair. Likewise, $(0,b)$ is an \at-pair. However, $(a,b)$ is not an \at-pair, since the language $a^+\in\at$ separates $\alpha\inv(a)=\{a\}$ from $\alpha\inv(b)=\{b\}$.
\end{exa}

We now provide a useful characterization of \Cs-pairs via \Cs-morphisms in the special case where \Cs is a \pvari, which is the only case that we consider in practice. It is similar to Lemma~\ref{lem:morker}.

\begin{lem}\label{lem:cmorph}
  Let \Cs be a \pvari and let $\alpha: A^* \to M$ be a morphism into a finite monoid. The two following properties hold:
  \begin{itemize}
	\item For every \Cs-morphism $\eta: A^* \to (N,\leq)$ and every \Cs-pair $(s,t) \in M^2$ for $\alpha$, there exist $u,v \in A^*$ such that $\eta(u) \leq \eta(v)$, $\alpha(u) = s$ and $\alpha(v) = t$.
	\item There exists a \Cs-morphism $\eta: A^* \to (N,\leq)$ such that for all $u,v \in A^*$, if $\eta(u) \leq \eta(v)$, then $(\alpha(u),\alpha(v))$ is a \Cs-pair for $\alpha$.
  \end{itemize}
\end{lem}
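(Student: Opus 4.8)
The plan is to mirror the structure of the proof of Lemma~\ref{lem:morker} (the \Gs-kernel characterization), adapting it from the one-sided ``$\{\veps\}$ versus $\alpha\inv(s)$'' separation question to the two-sided ``$\alpha\inv(s)$ versus $\alpha\inv(t)$'' separation question. For the first assertion, I would fix a \Cs-morphism $\eta: A^* \to (N,\leq)$ and a \Cs-pair $(s,t)$. The key observation is that the upper set $F = \upclos{}{\eta(t')}$ for a suitable word $t'$ is not quite what we want; instead, consider for each element $n \in N$ the language $\eta\inv(\upset{}{n})$, which is recognized by $\eta$ and hence lies in \Cs. Pick any $v \in \alpha\inv(t)$ and set $n_0 = \eta(v)$. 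Then $\eta\inv(\upset{}{n_0})$ is a language of \Cs containing $\alpha\inv(t)$ (in fact containing $v$; but I need it to contain all of $\alpha\inv(t)$, which fails in general, so I must be more careful — see below). The robust route is: since $(s,t)$ is a \Cs-pair, $\alpha\inv(s)$ is not \Cs-separable from $\alpha\inv(t)$, so \emph{every} language $K \in \Cs$ with $\alpha\inv(s) \subseteq K$ must meet $\alpha\inv(t)$. Apply this to $K = \eta\inv(F)$ where $F$ is the smallest upper set of $N$ containing $\eta(\alpha\inv(s))$; this $K$ lies in \Cs, contains $\alpha\inv(s)$, hence meets $\alpha\inv(t)$, giving $v \in \alpha\inv(t)$ with $\eta(v) \in F$. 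By minimality of $F$, $\eta(v) \geq \eta(u)$ for some $u \in \alpha\inv(s)$, which is exactly the desired conclusion.

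For the second assertion, I would proceed as in Lemma~\ref{lem:morker}: for every pair $(s,t) \in M^2$ that is \emph{not} a \Cs-pair, fix a witnessing separator $K_{s,t} \in \Cs$ with $\alpha\inv(s) \subseteq K_{s,t}$ and $\alpha\inv(t) \cap K_{s,t} = \emptyset$. There are finitely many such pairs, so Proposition~\ref{prop:genocm} (applied to the finitely many languages $K_{s,t}$) yields a single \Cs-morphism $\eta: A^* \to (N,\leq)$ recognizing all of them. The point now is to verify: if $\eta(u) \leq \eta(v)$, then $(\alpha(u),\alpha(v))$ is a \Cs-pair. Suppose not; write $s = \alpha(u)$, $t = \alpha(v)$, so $(s,t)$ is not a \Cs-pair and $K_{s,t}$ is defined and recognized by $\eta$ via an upper set $F_{s,t} \subseteq N$, i.e.\ $K_{s,t} = \eta\inv(F_{s,t})$. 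Since $u \in \alpha\inv(s) \subseteq K_{s,t}$, we get $\eta(u) \in F_{s,t}$; since $F_{s,t}$ is an upper set and $\eta(u) \leq \eta(v)$, we get $\eta(v) \in F_{s,t}$, so $v \in K_{s,t}$; but $v \in \alpha\inv(t)$ and $\alpha\inv(t) \cap K_{s,t} = \emptyset$, a contradiction.

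The main subtlety — and the place to be careful — is the handling of orderings in the first assertion: $\eta\inv(\upset{}{n_0})$ for a single element $n_0$ need not contain $\alpha\inv(t)$, so one genuinely needs to take $F$ to be the upward closure of the full set $\eta(\alpha\inv(s))$ (or equivalently, argue pointwise with each $u \in \alpha\inv(s)$ and use compactness of the finite monoid). Once that is pinned down, both directions are short. I expect no serious obstacle: the lemma is a routine two-sided adaptation of the \Gs-kernel argument, and all the needed ingredients (Proposition~\ref{prop:genocm}, the definition of \Cs-morphism, closure of \Cs under the Boolean operations used to pass between upper sets and their complements) are already available in the excerpt.
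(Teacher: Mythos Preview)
Your proposal is correct and follows essentially the same approach as the paper: for the first assertion you take $F$ to be the upward closure of $\eta(\alpha\inv(s))$ in $N$ (exactly the paper's choice), and for the second assertion you collect separators $K_{s,t}$ for all non-\Cs-pairs and apply Proposition~\ref{prop:genocm}, then argue by contrapositive using that each $K_{s,t}$ is recognized by $\eta$ via an upper set. The only cosmetic difference is that the paper phrases the conclusion of the first part as ``by definition of $F$'' rather than ``by minimality of $F$,'' but the content is identical.
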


\begin{proof}
  Let us start with the first assertion. Let $\eta: A^* \to (N,\leq)$ be a \Cs-morphism and let $(s,t) \in M^2$ be a \Cs-pair for $\alpha$. Let $F \subseteq N$ be the set of all elements $r \in N$ such that $\eta(u) \leq r$ for some $u \in \alpha\inv(s)$. By definition, $F$ is an upper set for the ordering $\leq$ on $N$. Hence, we have $\eta\inv(F) \in \Cs$ since $\eta$ is a \Cs-morphism. Moreover, it is immediate from the definition of $F$ that $\alpha\inv(s) \subseteq \eta\inv(F)$. Therefore, since $(s,t)$ is a \Cs-pair (which means that $\alpha\inv(s)$ cannot be separated from $\alpha\inv(t)$ using a language in \Cs), it follows that $\eta\inv(F) \cap \alpha\inv(t) \neq \emptyset$. This yields $v \in A^*$ such that $\eta(v) \in F$ and $\alpha(v) = t$. Finally, since $v\in F$, the definition of $F$ yields $u \in A^*$ such that $\eta(u) \leq \eta(v)$ and $\alpha(u) = s$, concluding the proof of the first~assertion.

  We turn to the second assertion. Let $P \subseteq M^2$ be the set of all pairs $(s,t) \in M^2$ which are \emph{not} \Cs-pairs. For every $(s,t) \in P$, there exists $K_{s,t} \in \Cs$ separating $\alpha\inv(s)$ from $\alpha\inv(t)$. Proposition~\ref{prop:genocm} yields a \Cs-morphism $\eta: A^* \to (N,\leq)$ such that every language $K_{s,t}$ for $(s,t) \in P$ is recognized by $\eta$. It remains to prove that for every $u,v \in A^*$, if $\eta(u) \leq \eta(v)$, then $(\alpha(u),\alpha(v))$ is a \Cs-pair. We prove the contrapositive. Assuming that $(\alpha(u),\alpha(v))$ is \emph{not} a \Cs-pair, we show that $\eta(u)\not\leq \eta(v)$. By hypothesis, $(\alpha(u),\alpha(v)) \in P$, which means that $K_{\alpha(u),\alpha(v)} \in \Cs$ is defined and separates $\alpha\inv(\alpha(u))$ from $\alpha\inv(\alpha(v))$. Thus, $u \in K_{\alpha(u),\alpha(v)}$ and $v \not\in K_{\alpha(u),\alpha(v)}$. Since $K_{\alpha(u),\alpha(v)}$ is recognized by $\eta$, this implies $\eta(u) \not\leq \eta(v)$.
\end{proof}

We now prove that when \Cs is a \pvari of regular languages (which is the only case that we shall consider), the \Cs-pair relation is compatible with multiplication. This result is similar to Fact~\ref{fct:kernelmono} for \Gs-kernels.

\begin{lem}\label{lem:mult}
  Let \Cs be a \pvari and let $\alpha: A^* \to M$ be a morphism into a finite monoid. If $(s_1,t_1),(s_2,t_2) \in M^2$ are \Cs-pairs, then $(s_1s_2,t_1t_2)$ is a \Cs-pair as well.
\end{lem}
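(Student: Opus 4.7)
The plan is to deduce the multiplicativity of the \Cs-pair relation from the two characterizations provided by Lemma~\ref{lem:cmorph}. The key point is that both assertions there can be realized by a \emph{single} \Cs-morphism: the second part yields an $\eta$ whose ``$\leq$-order preimages'' are always \Cs-pairs, and the first part can then be applied to this specific $\eta$ in order to produce word witnesses for any given pair.

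Concretely, I would first invoke the second assertion of Lemma~\ref{lem:cmorph} to fix a \Cs-morphism $\eta: A^* \to (N,\leq)$ with the property that whenever $\eta(u) \leq \eta(v)$, the pair $(\alpha(u),\alpha(v))$ is a \Cs-pair. Then, applying the first assertion of Lemma~\ref{lem:cmorph} to the \Cs-pair $(s_1,t_1)$ with respect to this particular $\eta$, I obtain $u_1,v_1 \in A^*$ with $\alpha(u_1)=s_1$, $\alpha(v_1)=t_1$, and $\eta(u_1) \leq \eta(v_1)$. Similarly, $(s_2,t_2)$ yields $u_2,v_2\in A^*$ with $\alpha(u_2)=s_2$, $\alpha(v_2)=t_2$, and $\eta(u_2)\leq \eta(v_2)$.

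Now set $u = u_1 u_2$ and $v = v_1 v_2$. Since $\alpha$ is a morphism, $\alpha(u) = s_1 s_2$ and $\alpha(v) = t_1 t_2$. Crucially, $(N,\leq)$ is an \emph{ordered} monoid, so the order $\leq$ is compatible with multiplication; combining $\eta(u_1)\leq \eta(v_1)$ with $\eta(u_2)\leq \eta(v_2)$ therefore gives
\[
  \eta(u) = \eta(u_1)\eta(u_2) \;\leq\; \eta(v_1)\eta(v_2) = \eta(v).
\]
By the defining property of $\eta$, this forces $(\alpha(u),\alpha(v)) = (s_1 s_2, t_1 t_2)$ to be a \Cs-pair, as desired.

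There is no serious obstacle here: the only subtlety worth flagging is that one must use a \emph{common} \Cs-morphism $\eta$ for both input pairs, so the order of quotations matters (first fix $\eta$ via the second assertion, then apply the first assertion twice). Everything else reduces to the fact that $\eta$ is a morphism and that $\leq$ is compatible with the monoid operation on $N$, which is built into the definition of an ordered monoid.
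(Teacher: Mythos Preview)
Your proposal is correct and follows essentially the same approach as the paper's own proof: fix the \Cs-morphism $\eta$ from the second assertion of Lemma~\ref{lem:cmorph}, apply the first assertion twice to obtain word witnesses, multiply, and use the compatibility of $\leq$ with the monoid operation to conclude. Your explicit remark that a \emph{common} $\eta$ must be used for both pairs is precisely the point the paper relies on implicitly.
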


\begin{proof}
  Lemma~\ref{lem:cmorph} yields a \Cs-morphism $\eta: A^* \to (N,\leq)$ such that for all $u,v \in A^*$, if $\eta(u) \leq \eta(v)$, then $(\alpha(u),\alpha(v))$ is a \Cs-pair. Let $(s_1,t_1),(s_2,t_2) \in M^2$ be \Cs-pairs. Since $\eta$ is a \Cs-morphism, it follows from Lemma~\ref{lem:cmorph} that there exist $u_i,v_i \in A^*$ for $i = 1,2$ such that $\eta(u_i) \leq \eta(v_i)$, $\alpha(u_i) = s_i$ and $\alpha(v_i) = t_i$. Clearly, it follows that $\eta(u_1u_2) \leq \eta(v_1v_2)$, $\alpha(u_1u_2) = s_1s_2$ and $\alpha(v_1v_2) = t_1t_2$. Hence, $(s_1s_2,t_1t_2)$ is a \Cs-pair by definition of $\eta$.
\end{proof}

We conclude the presentation with a result connecting \Gs-pairs with the \Gs-kernel, when~\Gs is a \vari of group languages.

\begin{lem}\label{lem:ptoker}
  Let \Gs be a \vari of group languages and let $\alpha: A^* \to M$ be a surjective morphism into a finite monoid. Let $e \in E(M)$ and $s,t \in M$ such that $(e,s)$ is a \Gs-pair. Then, $es$ and $t(est)^{2\omega-1}$ both belong to the \Gs-kernel of $\alpha$.
\end{lem}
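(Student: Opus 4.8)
The plan is to reduce both claims to the characterizations of the $\Gs$-kernel and of $\Gs$-pairs in terms of $\Gs$-morphisms (Lemmas~\ref{lem:morker} and~\ref{lem:cmorph}). Write $N$ for the $\Gs$-kernel of $\alpha$ and $\omega = \omega(M)$. First I would use the second assertion of Lemma~\ref{lem:morker} to fix a $\Gs$-morphism $\eta\colon A^*\to G$ such that $\eta(w)=1_G$ implies $\alpha(w)\in N$ for every $w\in A^*$; by Lemma~\ref{lem:gmorph} the monoid $G$ is a group, and I set $p=\omega(G)$. Since $(e,s)$ is a $\Gs$-pair, the first assertion of Lemma~\ref{lem:cmorph} applied to $\eta$ yields $u,v\in A^*$ with $\alpha(u)=e$, $\alpha(v)=s$ and $\eta(u)\leq\eta(v)$; as equality is the only ordering compatible with multiplication on a group, $\eta(u)=\eta(v)$, say this common value is $g$. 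Finally, using surjectivity of $\alpha$, I fix $x\in A^*$ with $\alpha(x)=t$ and set $h=\eta(x)$.

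For $es\in N$, it then suffices to exhibit a word mapped to $es$ by $\alpha$ and to $1_G$ by $\eta$: the word $u^{2p-1}v$ does the job, since $\alpha(u^{2p-1}v)=e^{2p-1}s=es$ ($e$ is idempotent and $2p-1\geq 1$) and $\eta(u^{2p-1}v)=g^{2p}=(g^{p})^2=1_G$.

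For $t(est)^{2\omega-1}\in N$, I would first dispose of the degenerate cases. If $\omega=1$ then every element of $M$ is idempotent, hence $N=M$ by Fact~\ref{fct:idemker}; if $p=1$ then $G$ is trivial, so $\eta$ is the trivial morphism and its defining property together with surjectivity of $\alpha$ forces $N=M$. In both cases there is nothing to prove, so assume $p\geq 2$. Set $m=p\omega-1$ and consider $W=x\,(u^{p-1}vx)^{m}$. A direct computation gives $\alpha(u^{p-1}vx)=e^{p-1}st=est$ (using $p-1\geq 1$) and $\eta(u^{p-1}vx)=g^{p-1}gh=g^{p}h=h$, whence $\alpha(W)=t(est)^{m}$ and $\eta(W)=h\cdot h^{m}=h^{p\omega}=(h^{p})^{\omega}=1_G$, so $t(est)^{m}\in N$. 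It remains to check that $(est)^{m}=(est)^{2\omega-1}$: both exponents $m=p\omega-1$ and $2\omega-1$ are at least $\omega$ (here $p\geq 2$ is used) and are congruent to $-1$ modulo $\omega$, and in a finite monoid one has $r^{k}=r^{k'}$ whenever $k,k'\geq\omega(M)$ and $k\equiv k'\pmod{\omega(M)}$ — an immediate consequence of $r^{\omega(M)}$ being idempotent. Thus $\alpha(W)=t(est)^{2\omega-1}$, and $t(est)^{2\omega-1}\in N$ follows.

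The genuinely delicate point is the last one: the kernel $N$ is controlled through $\eta$, whose idempotent power $p=\omega(G)$ has a priori nothing to do with $\omega(M)$, so to land on the exponent $2\omega-1$ prescribed in the statement I have to first produce $t(est)^{m}$ for a well-chosen large $m$ lying in the periodic range of the powers of $est$, and then invoke the arithmetic of exponents in a finite monoid to rewrite $(est)^{m}$ as $(est)^{2\omega-1}$. Everything else is routine bookkeeping (ensuring exponents are $\geq 1$ so that $e^{p-1}=e$, and isolating the trivial-group and idempotent-monoid corner cases).
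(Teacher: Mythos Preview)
Your proof is correct and follows the same approach as the paper: pick a \Gs-morphism $\eta$ witnessing the kernel via Lemma~\ref{lem:morker}, get $u,v$ with $\eta(u)=\eta(v)$ from Lemma~\ref{lem:cmorph}, and build explicit words sent to $1_G$ by $\eta$ and to the target elements by $\alpha$. The only real difference is bookkeeping: the paper sets $p=\omega(G)\cdot\omega(M)$ from the start, so that $p$ is simultaneously a multiple of $\omega(G)$ (forcing $\eta$-images to be $1_G$) and of $\omega(M)$ (landing directly on the exponent $2\omega-1$ without a separate index--period argument), whereas you take $p=\omega(G)$ and then compensate with the case split on $p=1$, $\omega=1$ and the explicit reduction $(est)^{p\omega-1}=(est)^{2\omega-1}$. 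Both routes work; the paper's choice of $p$ simply absorbs your case analysis into a single parameter.
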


\begin{proof}
  Let $N$ be the \Gs-kernel of $\alpha$. Lemma~\ref{lem:morker} yields a \Gs-morphism $\eta: A^* \to G$ such that for every $w \in A^*$, if $\eta(w) = 1_G$, then $\alpha(w) \in N$.  By Lemma~\ref{lem:gmorph}, $G$ is a group. We first show that $es \in N$. Since $(e,s)$ is a \Gs-pair, Lemma~\ref{lem:cmorph} yields $u,v \in A^*$ such that $\eta(u) = \eta(v)$, $\alpha(u) = e$ and $\alpha(v) = s$. Let $p = \omega(G) \times \omega(M)$. Since $\eta(u) = \eta(v)$, we have $\eta(u^{p-1}v) = \eta(u^p) = 1_G$ (recall that $G$ is a group). Hence, by definition of $\eta$, we have $\alpha(u^{p-1}v) \in N$. Since $e$ is an idempotent, this yields $es \in N$, as desired.

  We now prove that $t(est)^{2\omega-1} \in N$. Let $w = u^{p-1}v$. We know that $\eta(w) = 1_G$ and $\alpha(w) =es$. Since $\alpha$ is surjective, we get $x \in A^*$ such that $\alpha(x) = t$. Since $\eta(w) = 1_G$ and $p$ is a multiple of $\omega(G)$, we have $\eta(x(wx)^{2p-1}) = \eta(x^{2p}) = 1_G$. Moreover, since $p$ is a multiple of $\omega(M)$, we have $\alpha(x(wx)^{2p-1}) =  t(est)^{2\omega-1}$. By definition of $\eta$, we get $t(est)^{2\omega-1} \in N$.
\end{proof}

\subsection{Canonical \Cs-preorder and \Cs-equivalence}

Consider a \emph{lattice} \Cs. Moreover, let $\alpha: A^*\to M$ be a morphism into a finite monoid. We define two relations on $M$: a preorder ``\canoac'' which we call the \emph{canonical \Cs-preorder} of $\alpha$ and an equivalence ``\canaeq'' which we call the \emph{canonical \Cs-equivalence} of $\alpha$. Consider a pair $(s,t) \in M^2$. We define,
\begin{alignat}{3}
  &s \canoac t &\ \ \ & \text{if and only if}&\ \ \ &\text{$s \in F \Rightarrow t \in F$ for all $F \subseteq M$ such that $\alpha\inv(F) \in \Cs$.} \label{eq:trans:canoc}\\
  &s \canaeq t &\ \ \ & \text{if and only if}&\ \ \ & \text{$s \in F \Leftrightarrow t \in F$ for all $F \subseteq M$ such that $\alpha\inv(F) \in \Cs$.}\label{eq:trans:canec}
\end{alignat}
It is immediate by definition that \canoac is indeed a preorder on $M$. Moreover, \canaeq is exactly the equivalence induced by \canoac: for every $s,t \in M$, we have $s \canaeq t$ if and only if $s\canoac t$ and $t\canoac s$. From now on, for the sake of avoiding clutter, we shall abuse terminology when the morphism $\alpha$ is understood and we write \canoc for \canoac and \canec for \canaeq. Additionally, for every element $s\in M$, we write $\ctype{s} \in {M}/{\canec}$ for the $\canec$-class of $s$. Finally, note that since \canec is the equivalence induced by \canoc, the preorder \canoc also induces an order on the quotient set ${M}/{\canec}$, which we write \qanoc: for every $s,t \in M$ we have $\ctype{s} \qanoc \ctype{t}$ if and only if $s \canoc t$.

Observe that an immediate key property of these relations is that having an algorithm for \Cs-membership suffices to compute both \canoc and \canec. Indeed, with such a procedure in hand, it is possible to compute all subsets $F \subseteq M$ such that $\alpha\inv(F) \in \Cs$. One may then decide whether $s \canoc t$ for some $s,t \in M$ by checking whether the implication $s \in F \Rightarrow t \in F$ holds for every such subset $F$. We state this in the following lemma.

\begin{lem}\label{lem:membtopairs}
  Let \Cs be a class of languages with decidable membership. Given a morphism $\alpha: A^*\to M$ into a finite monoid as input, one can compute the relations \canoc and \canec on~$M$.
\end{lem}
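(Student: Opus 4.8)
The plan is to read off both relations directly from their definitions~\eqref{eq:trans:canoc} and~\eqref{eq:trans:canec}, each of which quantifies only over the subsets $F \subseteq M$ with $\alpha\inv(F) \in \Cs$. Since $M$ is finite, there are merely $2^{|M|}$ subsets of $M$, and I would enumerate all of them. For a fixed $F \subseteq M$, the language $\alpha\inv(F)$ is regular and effectively computable from the input: the morphism $\alpha$ together with the accepting set $F$ already constitutes a concrete finite recognizer of $\alpha\inv(F)$, from which one produces an equivalent finite automaton (or its syntactic morphism) by standard constructions. The decidability hypothesis on \Cs then lets us test whether $\alpha\inv(F) \in \Cs$. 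Carrying out this test for every $F$ produces the finite family $\mathcal{F} = \{F \subseteq M \mid \alpha\inv(F) \in \Cs\}$.

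Once $\mathcal{F}$ is in hand, what remains is a finite bookkeeping step. For each of the $|M|^2$ pairs $(s,t) \in M^2$, I would declare $s \canoc t$ exactly when, for every $F \in \mathcal{F}$, $s \in F$ implies $t \in F$; and $s \canec t$ exactly when $s \in F \Leftrightarrow t \in F$ for every $F \in \mathcal{F}$. By~\eqref{eq:trans:canoc} and~\eqref{eq:trans:canec} these tests compute precisely the relations \canoc and \canec, and since $s \canec t$ holds if and only if $s \canoc t$ and $t \canoc s$, the equivalence \canec comes for free from the preorder \canoc.

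I do not expect any genuine obstacle. The only point deserving an explicit word is that $\alpha\inv(F)$ is \emph{effectively} regular, which is immediate because $\alpha$ is supplied as part of the input; apart from that, the algorithm is a brute-force enumeration over the finite monoid $M$, invoking the \Cs-membership procedure $2^{|M|}$ times.
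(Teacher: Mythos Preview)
Your proposal is correct and matches the paper's own argument essentially verbatim: the paper also observes that one enumerates all subsets $F \subseteq M$, uses the \Cs-membership procedure to single out those with $\alpha\inv(F) \in \Cs$, and then checks the defining implication for each pair $(s,t)$. The extra remark you make about $\alpha\inv(F)$ being effectively regular is a reasonable clarification but not something the paper dwells on.
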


\subparagraph{Properties.} First, we show that given a morphism $\alpha: A^* \to M$, we are able to characterize the languages simultaneously recognized by $\alpha$ and belonging to \Cs as those which are the inverse image of an \emph{upper set} for the preorder \canoc. Recall that by definition, $F \subseteq M$ is an upper set for \canoc if and only if for every $s \in F$ and $t \in M$ such that $s\canoc t$, we have $t\in F$.

\begin{lem}\label{lem:lcanoc}
  Consider a lattice \Cs and a morphism $\alpha: A^* \to M$ into a finite monoid. For every $F \subseteq M$, we have $\alpha\inv(F) \in \Cs$ if and only if $F$ is an upper set for \canoc.
\end{lem}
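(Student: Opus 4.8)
The statement is an "iff" about when $\alpha\inv(F)$ lies in \Cs, characterized via the canonical preorder \canoc. I would prove both directions directly from the definitions.

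\textbf{Easy direction ($\Rightarrow$).} Suppose $\alpha\inv(F) \in \Cs$. I want to show $F$ is an upper set for \canoc. Take $s \in F$ and $t \in M$ with $s \canoc t$. By the very definition of \canoc in~\eqref{eq:trans:canoc}, applied to the specific subset $F$ itself (which qualifies since $\alpha\inv(F) \in \Cs$ by assumption), $s \in F$ implies $t \in F$. Hence $t \in F$, so $F$ is an upper set. This is essentially immediate.

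\textbf{Harder direction ($\Leftarrow$).} Suppose $F$ is an upper set for \canoc; I must produce a proof that $\alpha\inv(F) \in \Cs$. The idea is to write $F$ as a union, over its elements $s$, of the upward closures (for \canoc) of the singletons $\{s\}$, and then to show that each such upward closure is the inverse image under $\alpha$ of a language in \Cs; closure of \Cs under finite union (\Cs is a lattice, and $M$ is finite) then finishes the argument. Concretely, for $s \in M$ let $U_s = \{t \in M \mid s \canoc t\}$ be the \canoc-upward closure of $s$. Since $F$ is an upper set, $F = \bigcup_{s \in F} U_s$, hence $\alpha\inv(F) = \bigcup_{s \in F} \alpha\inv(U_s)$, a finite union. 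So it suffices to show $\alpha\inv(U_s) \in \Cs$ for each $s \in M$. For this, observe that $U_s = \bigcap \{F' \subseteq M \mid s \in F',\ \alpha\inv(F') \in \Cs\}$: indeed, if $t \in U_s$ then $s \canoc t$, so by~\eqref{eq:trans:canoc} every such $F'$ containing $s$ also contains $t$; conversely, if $t \notin U_s$, i.e.\ $s \not\canoc t$, then by definition of \canoc there is some $F'$ with $\alpha\inv(F') \in \Cs$, $s \in F'$, $t \notin F'$, so $t$ is excluded from the intersection. Since this is an intersection of finitely many subsets of the finite monoid $M$, and \Cs is closed under finite intersection, $\alpha\inv(U_s) = \bigcap \alpha\inv(F') \in \Cs$. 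Combining, $\alpha\inv(F) \in \Cs$.

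\textbf{Main obstacle.} There is no deep obstacle here — the proof is a routine unwinding of the definition of the canonical preorder, exploiting only that \Cs is a lattice (closed under finite unions and intersections, and containing $\emptyset$ and $A^*$, which handles the edge cases $F = \emptyset$ and $F = M$) and that $M$ is finite (so all the unions and intersections involved are finite). The only point requiring a little care is the double inclusion expressing $U_s$ as the intersection of all \Cs-recognizable "boxes" through $s$, which is exactly where~\eqref{eq:trans:canoc} is used in both directions; once that identity is in place, the lattice closure properties do the rest.
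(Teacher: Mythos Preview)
Your proof is correct and follows essentially the same approach as the paper. Both directions match: the easy direction is identical, and for the harder direction the paper also expresses $F$ as a finite union (over $s \in F$) of finite intersections of sets $F'$ with $\alpha\inv(F') \in \Cs$, the only cosmetic difference being that the paper picks one witness $F_{s,r}$ per excluded element $r \notin F$ while you intersect over all admissible $F'$ through $s$ to obtain $U_s$.
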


\begin{proof}
  We fix $F \subseteq M$ for the proof. Assume first that $\alpha\inv(F) \in \Cs$ we prove that $F$ is an upper set for \canoc. Let $s \in F$ and $t \in M$ such that $s \canoc t$. By definition of \canoc and since $\alpha\inv(F) \in \Cs$, this implies that $t \in F$ as desired. Conversely assume that $F$ is an upper set for \canoc. We prove that $\alpha\inv(F) \in \Cs$. Consider $s\in F$. By definition, we know that for every element $r\not\in F$, we have $s \not\canoc r$. Hence by definition of \canoc, there exists a set $F_{s,r} \subseteq M$ such that $\alpha\inv(F_{s,r}) \in \Cs$, $s \in F_{s,r}$ and $r \not\in F_{s,r}$. It is now immediate that,
  \[
    F = \bigcup_{s \in F} \Bigl(\bigcap_{r \not\in F} F_{s,r}\Bigr).
  \]
  Since inverse image commutes with Boolean operations, we obtain,
  \[
    \alpha\inv(F) = \bigcup_{s \in F} \Bigl(\bigcap_{r \not\in F} \alpha\inv(F_{s,r})\Bigr).
  \]
  We conclude that $\alpha\inv(F) \in \Cs$, since \Cs is a lattice.
\end{proof}

We turn to a key property of the preorder \canoc. We connect it to the \Cs-pair relation associated to every morphism $\alpha: A^* \to M$. Specifically, we show that \canoc is the reflexive transitive closure of the \Cs-pair relation.

\begin{lem}\label{lem:transclos}
  Consider a lattice \Cs and a morphism $\alpha: A^* \to M$ into a finite monoid. Then, the relation \canoc on $M$ is reflexive transitive closure of the \Cs-pair relation.
\end{lem}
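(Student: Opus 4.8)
The plan is to prove the two inclusions separately, writing $\to$ for the \Cs-pair relation (for $\alpha$) and $\to^*$ for its reflexive transitive closure. For $\to^* \subseteq \canoc$, since \canoc is reflexive and transitive directly by~\eqref{eq:trans:canoc}, it suffices to show $\to \subseteq \canoc$. So suppose $(s,t)$ is a \Cs-pair and fix $F \subseteq M$ with $\alpha\inv(F) \in \Cs$ and $s \in F$; I want $t \in F$. If not, then $\alpha\inv(F)$ contains $\alpha\inv(s)$ (because $s \in F$) and is disjoint from $\alpha\inv(t)$ (because $t \notin F$), so $\alpha\inv(F)$ would \Cs-separate $\alpha\inv(s)$ from $\alpha\inv(t)$, contradicting the assumption that $(s,t)$ is a \Cs-pair. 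Hence $t \in F$, which gives $s \canoc t$.

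For the converse inclusion $\canoc \subseteq \to^*$, fix $s \in M$ and let $F = \{r \in M \mid s \to^* r\}$; note $s \in F$. The key step is to show $\alpha\inv(F) \in \Cs$. For every $r \in F$ and every $r' \in M \setminus F$, the pair $(r,r')$ is \emph{not} a \Cs-pair, since otherwise $s \to^* r \to r'$ would put $r'$ in $F$; hence there is a language $K_{r,r'} \in \Cs$ separating $\alpha\inv(r)$ from $\alpha\inv(r')$, i.e.\ $\alpha\inv(r) \subseteq K_{r,r'}$ and $\alpha\inv(r') \cap K_{r,r'} = \emptyset$. Then one checks, exactly as in the proof of Lemma~\ref{lem:lcanoc}, that $\alpha\inv(F) = \bigcup_{r \in F} \bigcap_{r' \in M \setminus F} K_{r,r'}$ (with the empty intersection read as $A^*$ in the degenerate case $F = M$): the left-to-right inclusion is clear since $\alpha(w) = r \in F$ forces $w \in K_{r,r'}$ for all $r' \notin F$, and conversely a word in the right-hand side cannot have its $\alpha$-image outside $F$, lest it lie in some $\alpha\inv(r') \cap K_{r,r'} = \emptyset$. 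Since $M$ is finite and \Cs is a lattice, this displays $\alpha\inv(F)$ as a member of \Cs. Finally, if $s \canoc t$, then from $\alpha\inv(F) \in \Cs$ and $s \in F$ we get $t \in F$ by definition of \canoc, i.e.\ $s \to^* t$, as required.

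The argument is essentially routine; the only point that needs care is the choice of $F$ in the second half. One must take $F$ to be the forward-reachability set of $s$ under the \Cs-pair relation and establish $\alpha\inv(F) \in \Cs$ directly from the existence of the separators $K_{r,r'}$ — one cannot shortcut this by invoking Lemma~\ref{lem:lcanoc}, because $F$ is not visibly an upper set for \canoc until the very statement we are proving is available.
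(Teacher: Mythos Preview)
Your proof is correct and follows essentially the same approach as the paper's: both directions are argued identically, with the key step being to define $F$ as the forward-reachability set of $s$ under the \Cs-pair relation and to show $\alpha\inv(F)\in\Cs$ by assembling it from separators $K_{r,r'}$ using the lattice closure. Your added remarks on the degenerate case $F=M$ and on why one cannot shortcut via Lemma~\ref{lem:lcanoc} are nice touches, but the core argument matches the paper's line for line.
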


\begin{proof}
  We first prove that \canoc contains the reflexive transitive closure of the \Cs-pair relation. Since \canoc is a preorder, it suffices to show that for every \Cs-pair $(q,r) \in M^2$, we have $q \canoc r$. Hence, we fix $F \subseteq M$ such that $\alpha\inv(F) \in \Cs$ and $q \in F$. We have to show that $r \in F$. Clearly, $q \in F$ implies that $\alpha\inv(q) \subseteq \alpha\inv(F)$. Since $(q,r)$ is a \Cs-pair and $\alpha\inv(F) \in \Cs$, we know that $\alpha\inv(r) \cap \alpha\inv(F) \neq \emptyset$, whence $r \in F$, as desired.

  We turn to the converse implication. Let $s,t \in M$ be such that $s \canoc t$. Moreover, let $F \subseteq M$ be the least subset of $M$ containing $s$ and such that for every \Cs-pair $(q,r) \in M^2$, if $q \in F$, then $r \in F$ as well. We have to show that $t \in F$. Since $s \canoc t$, it suffices to show that $\alpha\inv(F) \in \Cs$ by definition of \canoc. For every $q \in F$, we may build a language $H_q \in \Cs$ such that $\alpha\inv(q) \subseteq H_q \subseteq \alpha\inv(F)$. Indeed, for any $r \not\in F$, we know that $(q,r)$ is not a \Cs-pair by definition of $F$. Therefore, there exists $H_{q,r} \in \Cs$  separating $\alpha\inv(q)$ from $\alpha\inv(r)$. We let,
  \[
    H_q = \bigcap_{r \not \in F} H_{q,r}.
  \]
  Clearly $H_q \in \Cs$ since \Cs is a lattice. It now suffices to observe that,
  \[
    \alpha\inv(F) = \bigcup_{q \in F} \alpha\inv(q) \subseteq \bigcup_{q \in F} H_q \subseteq \alpha\inv(F).
  \]
  Therefore, $\alpha\inv(F) =\bigcup_{q \in F} H_q$ belongs to \Cs since \Cs is lattice. This concludes the proof.
\end{proof}

We now consider the case where \Cs is a \emph{Boolean algebra}. We show that in this case the preorder \canoc coincides with the equivalence \canec for every morphism $\alpha: A^*\to M$.

\begin{lem}\label{lem:bcanoc}
  Consider a Boolean algebra \Cs and a morphism $\alpha: A^* \to M$ into a finite monoid. For every $s,t \in M$, we have $s \canoc t$ if and only if $s \canec t$.
\end{lem}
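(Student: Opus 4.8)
The implication $s \canec t \Rightarrow s \canoc t$ is immediate: by definition, $s \canec t$ asserts $s \in F \Leftrightarrow t \in F$ for every $F \subseteq M$ with $\alpha\inv(F) \in \Cs$, which in particular gives the one-directional implication $s \in F \Rightarrow t \in F$, i.e.\ $s \canoc t$. So the content is the converse, and since $\canec$ is exactly the equivalence induced by $\canoc$, it suffices to show that $\canoc$ is symmetric when \Cs is a Boolean algebra: assuming $s \canoc t$, I would prove $t \canoc s$.

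The one idea needed is that closure under complement lets us pass from an accepting set to its complement. Fix $F \subseteq M$ with $\alpha\inv(F) \in \Cs$ and suppose $t \in F$; we must show $s \in F$. Consider $F' = M \setminus F$. Then $\alpha\inv(F') = A^* \setminus \alpha\inv(F)$, and since \Cs is a Boolean algebra this language lies in \Cs, so $F'$ is an admissible set in the definition of \canoc. Now argue by contradiction: if $s \notin F$, then $s \in F'$, and applying $s \canoc t$ to the set $F'$ yields $t \in F'$, i.e.\ $t \notin F$, contradicting $t \in F$. Hence $s \in F$. As $F$ was arbitrary, $t \canoc s$, and therefore $s \canec t$.

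There is no real obstacle here; this is a direct consequence of the definitions together with closure under complement. (Alternatively, one could invoke Lemma~\ref{lem:transclos}, which identifies \canoc with the reflexive transitive closure of the \Cs-pair relation, together with the symmetry of the \Cs-pair relation from Lemma~\ref{lem:pairsreflex} when \Cs is closed under complement: the reflexive transitive closure of a symmetric relation is symmetric, hence an equivalence, and that equivalence is \canec. I would keep the short self-contained argument above, as it avoids relying on the surjectivity hypothesis present in parts of Lemma~\ref{lem:pairsreflex}.)
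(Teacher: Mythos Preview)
Your proof is correct and essentially identical to the paper's: both pass to the complement $M \setminus F$, use closure under complement to ensure $\alpha\inv(M \setminus F) \in \Cs$, and then deduce $t \in F \Rightarrow s \in F$ (the paper phrases this as a contrapositive, you as a contradiction, which is the same thing).
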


\begin{proof}
  We fix $s,t \in M$ for the proof. It is immediate by definition that if $s \canec t$, then we have $s \canoc t$. We prove the converse implication. Assume that $s \canoc t$. For $F \subseteq M$ such that $\alpha\inv(F) \in \Cs$, we need to show that $s \in F \Leftrightarrow t \in F$. The implication $s \in F \Rightarrow t \in F$ is immediate from $s \canoc t$. For the converse direction, since $\alpha\inv(F) \in \Cs$ and \Cs is a Boolean algebra, we have $\alpha\inv(M \setminus F) = A^* \setminus \alpha\inv(F) \in \Cs$. Hence, $s \canoc t$ yields $s \in M \setminus F \Rightarrow t\in M \setminus F$. The contrapositive exactly says that $t \in F \Rightarrow s \in F$, which concludes the proof.
\end{proof}

Finally, we consider the particular case where \Cs is additionally closed under quotients (\emph{i.e.}, when \Cs is a \pvari). We show that in this case \canoc and \canec are compatible with the multiplication of $M$, provided that the morphism $\alpha: A^* \to M$ is~surjective.

\begin{lem}\label{lem:quot}
  Let \Cs be a \pvari and let $\alpha: A^* \to M$ be a surjective morphism into a finite monoid. The two following properties hold for every $s_1,t_1,s_2,t_2 \in M$:
  \begin{itemize}
	\item if $s_1 \canoc t_1$ and $s_2 \canoc t_2$, then $s_1s_2 \canoc t_1t_2$.
	\item if $s_1 \canec t_1$ and $s_2 \canec t_2$, then $s_1s_2 \canec t_1t_2$.
  \end{itemize}
\end{lem}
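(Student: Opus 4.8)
The plan is to first prove the statement for the preorder \canoc and then derive the one for \canec as a formal consequence. For the preorder part, I would rely on the two structural facts already established: that \canoc is the reflexive transitive closure of the \Cs-pair relation (Lemma~\ref{lem:transclos}) and that the \Cs-pair relation is compatible with multiplication (Lemma~\ref{lem:mult}). Since $\alpha$ is surjective, Lemma~\ref{lem:pairsreflex} moreover gives that the \Cs-pair relation is reflexive, so in particular $(s_2,s_2)$ and $(t_1,t_1)$ are \Cs-pairs; this reflexivity is the one place where surjectivity of $\alpha$ is genuinely used.

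Concretely, assume $s_1 \canoc t_1$ and $s_2 \canoc t_2$. By Lemma~\ref{lem:transclos} there are chains $s_1 = r_0, r_1, \dots, r_n = t_1$ and $s_2 = q_0, q_1, \dots, q_m = t_2$ in $M$ whose consecutive elements form \Cs-pairs. Multiplying the first chain on the right by $s_2$, each $(r_i s_2, r_{i+1} s_2)$ is a \Cs-pair by Lemma~\ref{lem:mult} applied to $(r_i,r_{i+1})$ and $(s_2,s_2)$; hence, using Lemma~\ref{lem:transclos} in the other direction, $s_1 s_2 \canoc t_1 s_2$. Symmetrically, multiplying the second chain on the left by $t_1$ yields $t_1 s_2 \canoc t_1 t_2$. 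Transitivity of \canoc then gives $s_1 s_2 \canoc t_1 t_2$, which is the first bullet.

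For the second bullet, recall that \canec is precisely the equivalence induced by \canoc, so $s \canec t$ iff $s \canoc t$ and $t \canoc s$. Thus, if $s_1 \canec t_1$ and $s_2 \canec t_2$, applying the first bullet to $(s_1,t_1),(s_2,t_2)$ gives $s_1 s_2 \canoc t_1 t_2$, and applying it to $(t_1,s_1),(t_2,s_2)$ gives $t_1 t_2 \canoc s_1 s_2$; together these yield $s_1 s_2 \canec t_1 t_2$, as required.

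There is no real obstacle here; the argument is routine, and the only points demanding care are the use of surjectivity noted above and multiplying on the correct side at each stage. A slightly more self-contained alternative bypasses the chain bookkeeping by arguing straight from the definition of \canoc together with quotient-closure of \Cs: given $F \subseteq M$ with $\alpha\inv(F) \in \Cs$ and $s_1 s_2 \in F$, pick (using surjectivity) $u \in \alpha\inv(s_2)$; then $\alpha\inv(F) u\inv = \alpha\inv(\{r \mid r s_2 \in F\}) \in \Cs$, so $s_1 \canoc t_1$ forces $t_1 s_2 \in F$; next pick $v \in \alpha\inv(t_1)$ and use $v\inv \alpha\inv(F) = \alpha\inv(\{r \mid t_1 r \in F\}) \in \Cs$ together with $s_2 \canoc t_2$ to conclude $t_1 t_2 \in F$, whence $s_1 s_2 \canoc t_1 t_2$. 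The \canec statement then follows exactly as before.
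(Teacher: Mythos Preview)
Your proposal is correct. The alternative you sketch at the end is exactly the paper's proof (the paper routes through $s_1t_2$ rather than $t_1s_2$, but that is immaterial): one picks preimages via surjectivity, uses quotient-closure to see that $\alpha\inv(\{r\mid rs_2\in F\})$ and $\alpha\inv(\{r\mid t_1r\in F\})$ lie in \Cs, and applies the definition of \canoc twice.

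Your primary argument, by contrast, takes a genuinely different route: instead of working directly with the definition of \canoc and quotient-closure, you factor through the \Cs-pair relation, invoking Lemma~\ref{lem:transclos} (that \canoc is the reflexive transitive closure of the \Cs-pair relation) and Lemma~\ref{lem:mult} (compatibility of \Cs-pairs with multiplication), together with reflexivity of \Cs-pairs from surjectivity. This is perfectly valid and arguably more modular, since it reuses properties already packaged in those lemmas; the paper's direct approach is shorter and avoids the chain bookkeeping, at the cost of re-doing the quotient argument inline. Both need surjectivity in the same essential way. The reduction of the \canec statement to the \canoc one is identical to the paper's.
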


\begin{proof}
  Recall that by definition, $s \canec t$ if and only if $s \canoc t$ and $t \canoc s$ for every $s,t \in M$. Hence, it suffices to prove the first assertion as the second is an immediate consequence. We fix $s_1,t_1,s_2,t_2 \in M$ such that $s_1 \canoc t_1$ and $s_2\canoc t_2$. We prove that $s_1s_2 \canoc t_1t_2$. Let $F \subseteq M$ such that $\alpha\inv(F) \in \Cs$ and $s_1s_2 \in F$. We show that $t_1t_2 \in F$. Let $u,v \in A^*$ be two words such that $\alpha(u) = s_1$ and $\alpha(v) = t_2$ (this is where we use the hypothesis that $\alpha$ is surjective). Since $s_1s_2 \in F$, we have $s_2 \in s_1\inv F$. Moreover, $\alpha\inv(s_1\inv F) = u\inv \alpha\inv(F)$ belongs to \Cs by closure under quotients. Since $s_2 \canoc t_2$, we get $t_2 \in s_1\inv F$ . It follows that $s_1t_2 \in F$, which means that $s_1 \in Ft_2\inv$. Moreover, $\alpha\inv(Ft_2\inv) =  \alpha\inv(F)v\inv$ belongs to \Cs by closure under quotients. Since $s_1 \canoc t_1$, we get $t_1 \in Ft_2\inv $. Altogether, we obtain $t_1t_2 \in F$, which concludes the proof.
\end{proof}

\subparagraph{Connection with \Cs-morphisms.} When \Cs is a \pvari, it follows from Lemma~\ref{lem:quot} that the canonical \Cs-preorder \canoc of a surjective morphism $\alpha: A^* \to M$ is a \emph{precongruence} on the monoid $M$ and the induced equivalence \canec is a \emph{congruence}. Consequently, the pair $({M}/{\canec},\qanoc)$ is an ordered monoid (recall that \qanoc denotes the order induced by \canoc on the quotient set $M/{\canec}$). Moreover, the map $\ctype{\cdot}: M \to M/{\canec}$ (which associates its \canec-class to every element in $M$) is a morphism. In the following lemma, we prove that the morphism $\ctype{\cdot} \circ \alpha: A^*\to ({M}/{\canec},\qanoc)$ is a \Cs-morphism.

\begin{lem}\label{lem:smult}
  Let \Cs be a \pvari and let $\alpha: A^* \to M$ be a surjective morphism into a finite monoid. The languages recognized by $\ctype{\cdot}  \circ \alpha$ are exactly those in \Cs which are recognized by $\alpha$. In particular, $\ctype{\cdot} \circ \alpha: A^*\to ({M}/{\canec},\qanoc)$ is a \Cs-morphism.
\end{lem}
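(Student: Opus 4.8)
The plan is to establish the two inclusions characterizing the languages recognized by $\ctype{\cdot} \circ \alpha$, and then deduce the ``\Cs-morphism'' statement as an immediate consequence. First I would observe that $\ctype{\cdot} \circ \alpha$ is surjective: it is the composition of two surjective maps (the morphism $\alpha$ is surjective by hypothesis, and $\ctype{\cdot}$ is surjective by construction since every $\canec$-class is hit). A subset of ${M}/{\canec}$ is of the form $\ctype{F} = \{\ctype{s} \mid s \in F\}$ for some $F \subseteq M$ that is a union of $\canec$-classes (equivalently, $F$ is saturated for $\canec$), and $(\ctype{\cdot} \circ \alpha)\inv(\ctype{F}) = \alpha\inv(F)$. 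So the languages recognized by $\ctype{\cdot} \circ \alpha$ are exactly the languages $\alpha\inv(F)$ where $F \subseteq M$ is saturated for $\canec$.

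Next I would prove the two directions. For the ``only if'' direction: if $K$ is recognized by $\ctype{\cdot} \circ \alpha$, then $K = \alpha\inv(F)$ with $F$ saturated for $\canec$; in particular $K$ is recognized by $\alpha$. To see $K \in \Cs$, I claim such an $F$ is automatically an upper set for \canoc. Indeed, by Lemma~\ref{lem:quot}, \canec is a congruence and \canoc induces the order \qanoc on ${M}/{\canec}$; but the key point is simpler: if $s \in F$ and $s \canoc t$, I want $t \in F$. Hmm — this needs $F$ to be upward closed, which is not automatic from mere saturation. Let me instead argue directly: a union of $\canec$-classes that is recognized by $\ctype{\cdot}\circ\alpha$ need not be in \Cs unless it is an \emph{upper set} for \canoc. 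So I would restrict attention to the ordered monoid structure: the sets $F$ with $\alpha\inv(F)$ recognized by the \emph{ordered} morphism $\ctype{\cdot} \circ \alpha \colon A^* \to ({M}/{\canec},\qanoc)$ are precisely those $F \subseteq M$ that are \emph{upper sets for \canoc} (an upper set for \qanoc in the quotient pulls back to an upper set for \canoc in $M$, and conversely an upper set for \canoc is automatically saturated for \canec and descends to an upper set for \qanoc). Then Lemma~\ref{lem:lcanoc} gives exactly: $\alpha\inv(F) \in \Cs$ iff $F$ is an upper set for \canoc. This yields both inclusions at once: the languages recognized by $\ctype{\cdot} \circ \alpha$ are the $\alpha\inv(F)$ with $F$ an upper set for \canoc, which by Lemma~\ref{lem:lcanoc} are exactly the languages in \Cs recognized by $\alpha$.

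Finally, the ``in particular'' clause: every language recognized by $\ctype{\cdot} \circ \alpha$ belongs to \Cs by what we just showed, and $\ctype{\cdot} \circ \alpha$ is surjective into a finite ordered monoid (it is finite since $M$ is, and it is an ordered monoid by Lemma~\ref{lem:quot} together with the remark preceding the statement), so by definition it is a \Cs-morphism. The main obstacle I anticipate is the bookkeeping around orderedness: one must be careful that ``recognized by'' for the ordered morphism $\ctype{\cdot}\circ\alpha$ means the accepting set is an upper set for \qanoc, and match this up cleanly with the ``upper set for \canoc'' condition in Lemma~\ref{lem:lcanoc}, rather than conflating it with the unordered notion (which would only give ``saturated for \canec'' and would not suffice to land inside \Cs). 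Everything else is a routine unravelling of definitions already set up in the section.
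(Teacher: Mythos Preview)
Your proposal is correct and, after the self-correction in the middle, follows essentially the same route as the paper: identify the languages recognized by the ordered morphism $\ctype{\cdot}\circ\alpha$ with the sets $\alpha\inv(F)$ for $F$ an upper set for \canoc, then invoke Lemma~\ref{lem:lcanoc}. The paper's proof is just a two-sentence version of your corrected argument; your explicit discussion of why ``upper set for \qanoc'' in the quotient corresponds to ``upper set for \canoc'' in $M$, and your warning not to confuse this with mere \canec-saturation, are exactly the right details to spell out.
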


\begin{proof}
  It is immediate from the definitions that the languages recognized by $\ctype{\cdot}  \circ \alpha$ are exactly those of the form $\alpha\inv(F)$ where $F \subseteq M$ is an upper set for \canoc. Hence, Lemma~\ref{lem:lcanoc} implies that these are exactly the languages in \Cs which are recognized by $\alpha$, concluding the proof.
\end{proof}

\section{Algebraic characterizations for polynomial closure}
\label{sec:caracpol}
We present an algebraic characterization for the classes built with polynomial closure.  The statements are based on Proposition~\ref{prop:synmemb}: given a class \Ds, rather than directly characterizing the languages in \Ds, we characterize the \Ds-morphisms. Presenting the results in this form is more convenient for reusing them later. We first consider the classes \pol{\Cs} when \Cs is a \pvari. We recall their algebraic characterization proved in~\cite{PZ:generic_csr_tocs:18}. Then, we use it to prove generic characterizations of the classes \capol{\Cs} and \capol{\bpol{\Cs}} when \Cs is a \pvari.

\subsection{Polynomial closure}

We first present the generic characterization of the \pol{\Cs}-morphisms proved in~\cite{PZ:generic_csr_tocs:18}.  The statement is based on the \Cs-pair relation defined in Section~\ref{sec:cano}. Since classes of the form \pol{\Cs} are \emph{not} closed under complement in general, it is important to consider morphisms into finite \emph{ordered} monoids here.

\begin{theorem}[\cite{PZ:generic_csr_tocs:18}]\label{thm:polcarac}
  Let \Cs be a \pvari and let $\alpha: A^* \to (M,\leq)$ be a surjective morphism into a finite ordered monoid. The following properties are equivalent:
  \begin{enumerate}
    \item $\alpha$ is a \pol{\Cs}-morphism.
    \item $\alpha$ satisfies the following property:
          \begin{equation}\label{eq:polc}
            s^{\omega+1} \leq s^\omega ts^\omega \quad \text{for every \Cs-pair $(s,t) \in M^2$.}
          \end{equation}
  \end{enumerate}
\end{theorem}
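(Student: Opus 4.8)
My plan is to establish the two implications separately. The implication $(1)\Rightarrow(2)$, that the identity \eqref{eq:polc} is necessary, is comparatively soft; the converse, that it is sufficient, is where the real work lies.

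\emph{Necessity of \eqref{eq:polc}.} Assume $\alpha$ is a \pol{\Cs}-morphism and fix a \Cs-pair $(s,t)$. The set $\uclos{(s^{\omega+1})}$ is an upper set recognized by $\alpha$, so $L:=\alpha\inv(\uclos{(s^{\omega+1})})$ lies in \pol{\Cs} and is therefore a finite union of marked products $L_0a_1L_1\cdots a_nL_n$ with all $L_i\in\Cs$. By Proposition~\ref{prop:genocm} fix a \Cs-morphism $\eta: A^*\to(N,\le)$ recognizing every $L_i$ appearing in the union, say $L_i=\eta\inv(G_i)$ for an upper set $G_i$; by the first item of Lemma~\ref{lem:cmorph} applied to $(s,t)$ and $\eta$, choose $u\in\alpha\inv(s)$ and $v\in\alpha\inv(t)$ with $\eta(u)\le\eta(v)$. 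For each $k$ the word $u^{k\omega+1}$ has $\alpha$-image $s^{\omega+1}$, hence belongs to $L$, hence to one of the marked products via some factorization $u^{k\omega+1}=y_0a_1y_1\cdots a_ny_n$. Since $u^{k\omega+1}$ is a power of $u$ while the number of markers is bounded, a pigeonhole argument gives, for $k$ large, a factor $y_i$ of the form $z_1u^cz_2$ with $z_1$ a suffix and $z_2$ a prefix of $u$ and $c$ large. Replacing $y_i$ by $y_i':=z_1u^{c_1}vu^{c_2}z_2$ (with $c_1+c_2$ close to $c$) and using $\eta(u)\le\eta(v)$, compatibility of $\le$ with the product of $N$, and idempotency of $\eta(u)^{\omega(N)}$, one gets $\eta(y_i')\ge\eta(y_i)$, hence $y_i'\in L_i$; the word obtained by this substitution stays in the same marked product. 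That word has the form $u^{A}vu^{B}$, and a standard pumping bookkeeping on the exponents — using the freedom to vary $k$ and to shift the insertion point inside the central $u$-block — lets us arrange $A$ and $B$ so that $\alpha(u^{A}vu^{B})=s^{\omega}ts^{\omega}$. Hence $s^{\omega}ts^{\omega}\in\uclos{(s^{\omega+1})}$, which is \eqref{eq:polc}.

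\emph{Sufficiency of \eqref{eq:polc}.} Assume \eqref{eq:polc}; I must show that every language recognized by $\alpha$ lies in \pol{\Cs}, i.e.\ that $\alpha\inv(F)\in\pol{\Cs}$ for every upper set $F\subseteq M$. Writing $F=\bigcup_{r\in F}\uclos r$ and using closure of \pol{\Cs} under union reduces this to proving $\alpha\inv(\uclos r)\in\pol{\Cs}$ for a single $r\in M$, which I would prove by induction on the number of $\Jrel$-classes of $M$ lying strictly $\Jord$-below $r$. The core of the argument is a canonical factorization of a word $w$ with $\alpha(w)\in\uclos r$: reading $w$ from the left and following the $\Rord$-descending chain formed by the $\alpha$-images of the prefixes of $w$, one marks the positions where this chain drops to a strictly smaller $\Jrel$-class. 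Each such marker produces a marked concatenation; the outer factors concern images sitting strictly lower in the $\Jord$-preorder and are handled by the induction hypothesis, while the remaining ``central'' factors are those whose $\alpha$-image stays inside a single $\Jrel$-class. The identity \eqref{eq:polc} is exactly what is needed to handle a central factor: using Lemma~\ref{lem:jlr} to control $\Rrel$ and $\Lrel$ inside a fixed $\Jrel$-class, the idempotent power, and the fact that the canonical \Cs-preorder \canoc is the reflexive–transitive closure of the \Cs-pair relation (Lemma~\ref{lem:transclos}), \eqref{eq:polc} shows that such a factor may be replaced by the inverse image of a \Cs-recognizable upper set — that is, by a language of \Cs — without affecting membership in $\alpha\inv(\uclos r)$. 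This keeps all factors in \Cs, so the whole construction yields a finite union of marked products over \Cs. I expect the sufficiency direction — and within it, making the canonical cut precise and verifying that the central factors genuinely belong to \Cs — to be the main obstacle; necessity and the reduction to principal upper sets are routine by comparison.
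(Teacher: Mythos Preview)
The paper does not prove this theorem: it is imported from~\cite{PZ:generic_csr_tocs:18} and used as a black box (see the opening of Section~\ref{sec:caracpol}: ``We recall their algebraic characterization proved in~\cite{PZ:generic_csr_tocs:18}''). There is therefore no in-paper argument to compare your proposal against.

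For what it is worth, your outline is the standard route taken in the cited reference. In the necessity direction the only place where real content is hidden is the sentence ``a standard pumping bookkeeping on the exponents \ldots\ lets us arrange $A$ and $B$'': to land exactly on $s^{\omega}ts^{\omega}$ you need $c_1,c_2$ to be multiples of $\omega(M)$ while simultaneously large enough that $\eta(u)^{c_i}$ reaches the idempotent power in $N$, and you must keep track of the partial-$u$ fragments $z_1,z_2$ and the two adjacent marker letters; this is routine but not automatic. In the sufficiency direction the $\Jrel$-induction with an $\Rord$-descending staircase factorization is indeed the argument, and your identification of~\eqref{eq:polc} as precisely what collapses a block that stays inside a single $\Jrel$-class is the right insight. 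What your sketch leaves unsaid is \emph{which} \Cs-language the central block becomes (it is $(\ctype{\cdot}\circ\alpha)^{-1}$ of a suitable upper set, via Lemma~\ref{lem:smult} and Lemma~\ref{lem:lcanoc}) and how exactly the induction hypothesis is invoked on the outer factors, whose $\alpha$-images are not a priori principal upper sets. These are details to be filled in rather than genuine gaps.
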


Given a \pvari \Cs, it follows from Lemma~\ref{lem:septopairs} that the \Cs-pairs associated to a morphism can be computed provided that \Cs-separation is decidable. Hence, Theorem~\ref{thm:polcarac} implies that in this case, one can decide whether an input morphism $\alpha: A^* \to (M,\leq)$ is a \pol{\Cs}-morphism. In view of Proposition~\ref{prop:synmemb}, this implies that \pol{\Cs}-membership is decidable. Indeed, given as input a regular language $L$, one can compute its syntactic morphism and since \pol{\Cs} is a \pvari by Theorem~\ref{thm:polc}, the proposition yields that $L \in \pol{\Cs}$ if and only if this syntactic morphism is a \pol{\Cs}-morphism. Altogether, we obtain the following corollary.

\begin{cor}\label{cor:polcarac}
  Let \Cs be a \pvari with decidable separation. Then \pol{\Cs}-membership is decidable.
\end{cor}

\begin{rem}
  It is also known that \pol{\Cs}-covering and \pol{\Cs}-separation are decidable when \Cs is a finite \vari~\cite{pseps3j}. We do not detail this result as we shall not need it.
\end{rem}

\subsection{Intersected polynomial closure} With the generic characterization of \pol{\Cs} in hand, it is straightforward to deduce another generic characterization for \capol{\Cs}. In particular, it shows that membership for \capol{\Cs} boils down to separation for~\Cs (when \Cs is a \pvari). However, it turns out that this result can be strengthened: for any \pvari \Cs, membership for \capol{\Cs} reduces to \emph{membership} for \Cs. This is a simple consequence of Theorem~\ref{thm:polcarac}. It was first observed by Almeida, Bartonov{\'{a}}, Kl{\'{\i}}ma and Kunc~\cite{AlmeidaBKK15}. Indeed, we have the following theorem. It is based on the canonical preorder \canoc associated to a morphism $\alpha$ (note that since classes of the form \capol{\Cs} are \varis, it suffices to consider morphisms into unordered monoids here, see Remark~\ref{rem:cmbool}).

\begin{theorem}\label{thm:pcopolcarac}
  Let \Cs be a \pvari and let $\alpha: A^* \to M$ be a surjective morphism into a finite monoid. The following properties are equivalent:
  \begin{enumerate}
    \item $\alpha$ is a $(\capol{\Cs})$-morphism.
    \item $\alpha$ satisfies the following property:
          \begin{equation}\label{eq:capolc}
            s^{\omega+1} = s^\omega ts^\omega \quad \text{for every \Cs-pair $(s,t) \in M^2$.}
          \end{equation}
    \item $\alpha$ satisfies the following property:
          \begin{equation}\label{eq:capolc2}
            s^{\omega+1} = s^\omega ts^\omega \quad \text{for every $s,t \in M$ such that $s \canoc t$.}
          \end{equation}
  \end{enumerate}
\end{theorem}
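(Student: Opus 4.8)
The plan is to prove the equivalences in the cycle $1) \Rightarrow 2) \Rightarrow 3) \Rightarrow 1)$, relying on Theorem~\ref{thm:polcarac} (the characterization of $\pol{\Cs}$-morphisms), Lemma~\ref{lem:transclos} (the preorder \canoc is the reflexive transitive closure of the \Cs-pair relation), and Lemma~\ref{lem:cmorphbool}/Remark~\ref{rem:cmbool} (since $\capol{\Cs}$ is a \vari by Corollary~\ref{cor:polc}, the ordering on $M$ is irrelevant, so we may treat $\alpha: A^* \to M$ as a morphism into an ordered monoid $(M,=)$ when convenient).

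For $1) \Rightarrow 2)$: if $\alpha$ is a $(\capol{\Cs})$-morphism, then in particular every language recognized by $\alpha$ lies in $\capol{\Cs} \subseteq \pol{\Cs}$, so $\alpha$ is a $\pol{\Cs}$-morphism, and also a $\copol{\Cs}$-morphism. Viewing $\alpha$ as a morphism into $(M,=)$ and applying Theorem~\ref{thm:polcarac} to the \pvari \Cs gives $s^{\omega+1} = s^\omega t s^\omega$ for every \Cs-pair $(s,t)$ (with equality, since the order is equality). So~\eqref{eq:capolc} holds. One should double-check here that applying Theorem~\ref{thm:polcarac} with the equality order is legitimate: a morphism into $(M,=)$ is a $\pol{\Cs}$-morphism iff every language it recognizes (that is, every $\alpha\inv(F)$, $F \subseteq M$) belongs to $\pol{\Cs}$; since $\alpha\inv(F) = A^* \setminus \alpha\inv(M \setminus F)$ and both $\alpha\inv(F), \alpha\inv(M\setminus F) \in \pol{\Cs}$ by hypothesis (using that a $(\capol{\Cs})$-morphism recognizes languages whose complements are also recognized), this is exactly the statement that $\alpha$ is a $(\capol{\Cs})$-morphism. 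So $1) \Leftrightarrow$ ``$\alpha: A^* \to (M,=)$ is a $\pol{\Cs}$-morphism'', and Theorem~\ref{thm:polcarac} closes the loop.

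For $2) \Rightarrow 3)$: assume~\eqref{eq:capolc}. I want to show $s^{\omega+1} = s^\omega t s^\omega$ whenever $s \canoc t$. By Lemma~\ref{lem:transclos}, $s \canoc t$ means there is a chain $s = r_0, r_1, \dots, r_n = t$ where each $(r_{i},r_{i+1})$ is a \Cs-pair. The natural approach is induction on $n$. The base case $n = 0$ is $s = t$, trivial. For the inductive step the key observation is that $(s,t)$ being a \Cs-pair implies, via Lemma~\ref{lem:mult}, that $(s^\omega t s^\omega, s^\omega t' s^\omega)$ is a \Cs-pair whenever $(t,t')$ is — but what I actually need is a statement propagating the identity $s^{\omega+1} = s^\omega r_i s^\omega$ along the chain while keeping the conjugating idempotent $s^\omega$ fixed. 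Concretely: suppose $s^{\omega+1} = s^\omega r_i s^\omega$; applying~\eqref{eq:capolc} to $s$ itself (the pair $(s,s)$ is a \Cs-pair by Lemma~\ref{lem:pairsreflex}, or more simply just use the idempotent $e = s^\omega$ directly) one gets that $e := s^\omega$ satisfies $e = e \cdot e$, so $s^{\omega+1} = e s e = e r_i e$. Now from the \Cs-pair $(r_i, r_{i+1})$ and Lemma~\ref{lem:mult} applied to the pairs $(e,e), (r_i,r_{i+1}), (e,e)$ (using reflexivity of the \Cs-pair relation from Lemma~\ref{lem:pairsreflex}, valid since $\alpha$ is surjective), $(e r_i e, e r_{i+1} e)$ is a \Cs-pair. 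Set $g = e r_i e = s^{\omega+1}$; then $g$ is \Jrel-related to... hmm — the cleanest route is: $g = e r_i e$ and $e r_{i+1} e$ form a \Cs-pair, and $g = g^\omega$? Not obviously. The honest main obstacle is exactly this: promoting the single-pair identity~\eqref{eq:capolc} to chains. I expect the intended argument is not a raw induction but rather the observation that~\eqref{eq:capolc} for all \Cs-pairs is equivalent to~\eqref{eq:capolc} for all pairs in the transitive closure, because $\pol{\Cs}$-morphisms are closed under the relevant operations — i.e., reprove it through Theorem~\ref{thm:polcarac}: a morphism into $(M,=)$ satisfying~\eqref{eq:polc}=\eqref{eq:capolc} for \Cs-pairs is a $\pol{\Cs}$-morphism, hence (being into $(M,=)$, so also trivially a $\copol{\Cs}$-morphism) a $(\capol{\Cs})$-morphism, hence by Lemma~\ref{lem:smult}-type reasoning or directly by Lemma~\ref{lem:lcanoc} its recognized languages are precisely inverse images of \canoc-upper-sets; then one re-derives~\eqref{eq:capolc2} by noting $s \canoc t$ forces $\alpha\inv(t)$ to meet every $\alpha\inv(F) \in \capol{\Cs}$ containing $\alpha\inv(s)$, and combine with a syntactic argument. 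Given this, the most efficient global strategy is: prove $1) \Leftrightarrow 2)$ as above, prove $3) \Rightarrow 2)$ trivially (every \Cs-pair $(s,t)$ satisfies $s \canoc t$ by Lemma~\ref{lem:transclos}), and prove $1) \Rightarrow 3)$ directly — if $\alpha$ is a $(\capol{\Cs})$-morphism and $s \canoc t$, pick words $u \in \alpha\inv(s)$, $v \in \alpha\inv(t)$; since $\alpha$ is into $(M,=)$ and $\capol{\Cs}$ is closed under quotients and Boolean operations, the language $\alpha\inv(\{s^{\omega+1}\})$ lies in $\capol{\Cs}$, and using $s \canoc t$ together with Lemma~\ref{lem:lcanoc} (every \capol{\Cs}-recognized language is a \canoc-upper-set) one shows $x (u)^{\omega} v (u)^{\omega} y$ and $x u^{\omega+1} y$ land in the same accepting sets, forcing $s^\omega t s^\omega \canec s^{\omega+1}$, hence (as $\alpha$ is into $(M,=)$, so \canec must be equality on $M$ — wait, it need only be that $\alpha$ factors through $M/{\canec}$, not that \canec is trivial) ... the equality $s^\omega t s^\omega = s^{\omega+1}$ then follows because $\alpha$ being a $(\capol{\Cs})$-morphism means, via Lemma~\ref{lem:smult} read backwards, that \canec is the equality relation on $M$. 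That last step is clean and is the one I'd build the argument around.

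In summary: the skeleton is $1) \Leftrightarrow 2)$ via Theorem~\ref{thm:polcarac} applied with the equality order (plus the remark that a $(\capol{\Cs})$-morphism is exactly a $\pol{\Cs}$-morphism into an unordered monoid); $3) \Rightarrow 2)$ is immediate from Lemma~\ref{lem:transclos}; and $2) \Rightarrow 3)$ — equivalently $1) \Rightarrow 3)$ — is the substantive direction, handled by showing that for a $(\capol{\Cs})$-morphism the canonical equivalence \canec is equality (Lemma~\ref{lem:smult} combined with the definition of \Cs-morphism), and then that $s \canoc t$ propagates the defining identity through Lemma~\ref{lem:lcanoc} and closure under quotients. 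The main obstacle is cleanly transferring the pointwise identity~\eqref{eq:capolc} from \Cs-pairs to the transitive-closure relation \canoc without re-deriving Theorem~\ref{thm:polcarac}; I expect the slick fix is precisely to route everything through the observation that $\alpha$ being a $(\capol{\Cs})$-morphism makes \canec trivial, so~\eqref{eq:capolc2} becomes a statement purely about which pairs satisfy $s \canoc t$, and Lemma~\ref{lem:transclos} bridges that to \Cs-pairs.
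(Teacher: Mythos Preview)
Your treatment of $(1) \Leftrightarrow (2)$ and $(3) \Rightarrow (2)$ is correct and matches the paper exactly: view $\alpha$ as a morphism into $(M,=)$, observe that being a $(\capol{\Cs})$-morphism into $(M,=)$ is the same as being a $\pol{\Cs}$-morphism into $(M,=)$, and invoke Theorem~\ref{thm:polcarac}; then $(3) \Rightarrow (2)$ is immediate from Lemma~\ref{lem:transclos}.

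The problem is your route for $(2) \Rightarrow (3)$. Your final proposal hinges on the claim that if $\alpha$ is a $(\capol{\Cs})$-morphism then the \Cs-equivalence $\canec$ is equality on $M$. This is false. Being a $(\capol{\Cs})$-morphism only says that every $\alpha\inv(F)$ lies in $\capol{\Cs}$, not in $\Cs$; so there is no reason the $\Cs$-classes should be singletons. A concrete counterexample with $\Cs = \at$: take $A = \{a,b\}$ and let $\alpha$ be the syntactic morphism of $b^*ab^*$, so $M = \{1,a,0\}$ with $a \cdot a = 0$. One checks that $\alpha$ is a $(\capol{\at})$-morphism, yet the only subsets $F \subseteq M$ with $\alpha\inv(F) \in \at$ are $\emptyset$, $M$, $\{1\}$, $\{a,0\}$, so $a \canec 0$ while $a \neq 0$. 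Hence the ``$\canec$ is trivial'' shortcut does not work, and your argument for $(1) \Rightarrow (3)$ collapses.

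Ironically, the inductive approach you started and then abandoned is exactly what the paper does. The trick you were missing at the step ``$g = s^\omega r_{i-1} s^\omega$ and $g = g^\omega$? Not obviously'' is this: from the inductive hypothesis $s^{\omega+1} = s^\omega r_{i-1} s^\omega$ one gets $(s^\omega r_{i-1} s^\omega)^\omega = (s^{\omega+1})^\omega = s^\omega$. Now apply~\eqref{eq:capolc} to the \Cs-pair $(s^\omega r_{i-1} s^\omega,\, s^\omega r_i s^\omega)$ (obtained from $(r_{i-1},r_i)$ via Lemma~\ref{lem:mult} and reflexivity) to get
\[
(s^\omega r_{i-1} s^\omega)^{\omega+1} = (s^\omega r_{i-1} s^\omega)^{\omega}\, s^\omega r_i s^\omega\, (s^\omega r_{i-1} s^\omega)^{\omega},
\]
which, after substituting $s^\omega r_{i-1} s^\omega = s^{\omega+1}$ and $(s^\omega r_{i-1} s^\omega)^\omega = s^\omega$, simplifies to $s^{\omega+1} = s^\omega r_i s^\omega$. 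That closes the induction.
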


\begin{proof}
  We first prove $(1) \Leftrightarrow (2)$. Recall that we view $\alpha$ as the morphism $\alpha: A^* \to (M,=)$ into the ordered monoid $(M,=)$. Therefore, we get from Theorem~\ref{thm:polcarac} that $\alpha$ is a \pol{\Cs}-morphism if and only if~\eqref{eq:capolc} is satisfied. It remains to prove that $\alpha$ is a $(\capol{\Cs})$-morphism if and only if it is a \pol{\Cs}-morphism. The left to right implication is immediate. For the converse one, assume that $\alpha$ is a \pol{\Cs}-morphism and let $F \subseteq M$. We show that $\alpha\inv(F) \in \capol{\Cs}$. By hypothesis, we know that $\alpha\inv(F) \in \pol{\Cs}$ and $A^* \setminus \alpha\inv(F) = \alpha\inv(M \setminus F) \in \pol{\Cs}$. Hence, $\alpha\inv(F) \in \capol{\Cs}$, as desired.

  The implication $(3) \Rightarrow (2)$ is immediate from Lemma~\ref{lem:transclos}, which entails that for every \Cs-pair $(s,t) \in M^2$, we have $s \canoc t$. It remains to show the implication $(2) \Rightarrow (3)$. Assuming that $\alpha$ satisfies~\eqref{eq:capolc}, we show it satisfies~\eqref{eq:capolc2} as well. Let $s,t \in M^2$ such that $s \canoc t$. We show that $s^{\omega+1} = s^{\omega}ts^{\omega}$. Lemma~\ref{lem:transclos} yields $k \in \nat$ and $r_0,\dots,r_k \in M$ such that $r_0 =s$, $r_k = t$ and $(r_i,r_{i+1})$ is a \Cs-pair for all $i < k$. Using induction, we show that for all $i \geq 1$, $s^{\omega+1} = s^{\omega}r_is^\omega$. The case $i = k$ then yields the desired result, as $t = r_k$. When $i = 0$, it is immediate that $s^{\omega+1} = s^{\omega}r_0s^{\omega}$ since $r_0 = s$. We now assume that $i \geq 1$. Using induction, we get that $s^{\omega+1} = s^{\omega}r_{i-1}s^{\omega}$. Therefore, we obtain $s^{\omega} = (s^{\omega+1})^\omega  = (s^{\omega}r_{i-1}s^{\omega})^\omega$. Since $(r_{i-1},r_i)$ is a \Cs-pair, it follows from Lemma~\ref{lem:mult} that $(s^{\omega}r_{i-1}s^{\omega},s^{\omega}r_{i}s^{\omega})$ is a \Cs-pair as well. Thus, it follows from~\eqref{eq:capolc} that,
  \[
    (s^{\omega}r_{i-1}s^{\omega})^{\omega+1} = (s^{\omega}r_{i-1}s^{\omega})^{\omega}s^{\omega}r_{i}s^{\omega}(s^{\omega}r_{i-1}s^{\omega})^{\omega}.
  \]
  Since $s^{\omega+1} = s^{\omega}r_{i-1}s^{\omega}$ and $s^{\omega} = (s^{\omega}r_{i-1}s^{\omega})^\omega$, this yields,
  \[
    s^{\omega+1} = (s^{\omega+1})^{\omega+1} = s^{\omega}s^{\omega}r_{i}s^{\omega}s^{\omega} = s^{\omega}r_{i}s^{\omega}.
  \]
  This concludes the proof.
\end{proof}

Theorem~\ref{thm:pcopolcarac} has the announced consequence: if \Cs is a \pvari with decidable \emph{membership}, then $(\capol{\Cs})$-membership is decidable as well. Indeed, given as input a regular language $L$, we can compute its syntactic morphism $\alpha: A^* \to M$ together with the relation \canoc on $M$ (this is possible by Lemma~\ref{lem:membtopairs} since \Cs-membership is decidable). Moreover, we get from Proposition~\ref{prop:synmemb} that $L \in \capol{\Cs}$ if and only if $\alpha$ is a $(\capol{\Cs})$-morphism since \capol{\Cs} is a \vari by Corollary~\ref{cor:polc}. This property can be decided using~\eqref{eq:capolc2} in Theorem~\ref{thm:pcopolcarac} since we have \canoc in~hand.

\begin{cor}\label{cor:pcopolcarac}
  Let \Cs be a \pvari with decidable membership. Then, membership is decidable for \capol{\Cs} as well.
\end{cor}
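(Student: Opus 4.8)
The plan is to assemble the decision procedure directly from the algebraic characterization in Theorem~\ref{thm:pcopolcarac} together with the syntactic characterization of membership. Given a regular language $L$ as input, the first step is to compute its syntactic morphism $\alpha_L: A^* \to M_L$; by the Myhill-Nerode theorem this is effective from any representation of $L$, and since \capol{\Cs} is a \vari by Corollary~\ref{cor:polc}, we may freely work with the unordered syntactic monoid (see Remark~\ref{rem:cmbool}). By Proposition~\ref{prop:synmemb}, $L \in \capol{\Cs}$ if and only if $\alpha_L$ is a $(\capol{\Cs})$-morphism, so it remains to decide the latter.

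The second step is to turn this test into a finite computation. Since \Cs has decidable membership, Lemma~\ref{lem:membtopairs} lets us compute the canonical \Cs-preorder \canoc on the finite monoid $M_L$. By the equivalence $(1) \Leftrightarrow (3)$ in Theorem~\ref{thm:pcopolcarac}, $\alpha_L$ is a $(\capol{\Cs})$-morphism if and only if $s^{\omega+1} = s^\omega t s^\omega$ holds for every pair $s,t \in M_L$ with $s \canoc t$. As $M_L$ is finite and the idempotent power $\omega = \omega(M_L)$ is computable, this is a finite family of equalities to verify, which completes the procedure.

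There is essentially no obstacle here: the corollary is a mechanical consequence of results already established. The one subtlety worth flagging is that one must invoke clause~(3) of Theorem~\ref{thm:pcopolcarac} rather than clause~(2). Clause~(2) quantifies over \Cs-pairs, whose computation by Lemma~\ref{lem:septopairs} would require a \Cs-\emph{separation} algorithm, whereas clause~(3) quantifies over \canoc, which needs only \Cs-\emph{membership} via Lemma~\ref{lem:membtopairs}. This is exactly why clause~(3) was included in the theorem, and it is the reason the reduction lands on membership rather than separation.
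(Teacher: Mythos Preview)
Your argument is correct and follows essentially the same route as the paper: compute the syntactic morphism, invoke Proposition~\ref{prop:synmemb} together with Corollary~\ref{cor:polc}, compute \canoc via Lemma~\ref{lem:membtopairs}, and check condition~\eqref{eq:capolc2} from Theorem~\ref{thm:pcopolcarac}. Your closing remark on why clause~(3) rather than clause~(2) is the right one to use is exactly the point the paper makes as well.
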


This is surprising: \Cs-membership suffices to handle $(\capol{\Cs})$-membership while we are only able to handle \pol{\Cs}-membership when \Cs-separation is decidable. Intuitively, this means that when a regular language $L$ belongs to \capol{\Cs}, the basic languages in \Cs which are required to construct $L$ are all recognized by its syntactic morphism. On the other hand, this is not the case for \pol{\Cs}.

\medskip

We turn to the classes \capol{\bpol{\Cs}}. As seen in Theorem~\ref{thm:capoldel2}, they are interesting because they correspond to the levels $\dec{2}(\infsigc)$ in quantifier alternation hierarchies. We present a specialized characterization of them, which we prove as a corollary of Theorem~\ref{thm:polcarac} and Theorem~\ref{thm:pcopolcarac}. We start with a preliminary result. A key point is that we do not consider \bpolo in the proof: we bypass it using Lemma~\ref{lem:elimbool}, which implies that $\capol{\bpol{\Cs}}$ and $\capol{\copol{\Cs}}$ are the same class. Consequently, in order to apply Theorem~\ref{thm:pcopolcarac}, we need a description of the canonical preorder \canpol associated to a morphism. We use Theorem~\ref{thm:polcarac} to obtain one.

\begin{lem}\label{lem:ubp}
  Let \Cs be a \pvari and let $\alpha: A^* \to M$ be a surjective morphism into a finite monoid. The relation \canpol on $M$ is the least preorder such that for every $x,y,s \in M$ and $e \in E(M)$, if $(e,s) \in M^2$ is a \Cs-pair, then $xesey \canpol xey$.
\end{lem}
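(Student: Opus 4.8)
### Proof proposal for Lemma~\ref{lem:ubp}

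The plan is to prove the two inclusions separately: that \canpol{} contains the least preorder $\mathord\preceq$ generated by the rules $xesey \preceq xey$ (for $(e,s)$ a \Cs-pair and $e \in E(M)$), and conversely that $\mathord\preceq$ contains \canpol{}. Throughout I would freely use the key facts established earlier: Lemma~\ref{lem:elimbool} gives $\copol{\bpol{\Cs}} = $ (complement of) $\pol{\copol{\Cs}}$, so that a morphism $\alpha$ is a $\bpol{\Cs}$-morphism (equivalently a $\pol{\bpol{\Cs}}$-morphism) precisely when it is a $\pol{\copol{\Cs}}$-morphism; Lemma~\ref{lem:transclos} says \canpol{} is the reflexive-transitive closure of the \pol{\bpol{\Cs}}-pair relation; and Theorem~\ref{thm:polcarac} characterizes \pol{\Ds}-morphisms via the inequality $s^{\omega+1} \le s^\omega t s^\omega$ on \Ds-pairs. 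The main technical bridge is to understand \pol{\bpol{\Cs}}-pairs, i.e.\ $\copol{\Cs}$-pairs and equivalently \bpol{\Cs}-pairs, in terms of \Cs-pairs.

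\textbf{First inclusion ($\mathord\preceq \subseteq \canpol{}$).} Since \canpol{} is a preorder, it suffices to verify that each generating rule holds: given a \Cs-pair $(e,s)$ with $e \in E(M)$ and arbitrary $x,y \in M$, I must show $xesey \canpol xey$. The idea is to exhibit a morphism onto the relevant quotient and use Lemma~\ref{lem:transclos}, or more directly to show that $(xesey, xey)$ is a \bpol{\Cs}-pair, which by Lemma~\ref{lem:transclos} implies $xesey \canpol xey$. For the latter, take an arbitrary \bpol{\Cs}-morphism $\eta \colon A^* \to (N, \le)$; I want $\alpha\inv(xey)$ not \bpol{\Cs}-separable from $\alpha\inv(xesey)$. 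Here I would use that $\eta$ restricted to any single language is a \pol{\copol{\Cs}}-morphism, apply Theorem~\ref{thm:polcarac} to transfer the \Cs-pair $(e,s)$ into an inequality $\eta(w_e)^{\omega+1} \le \eta(w_e)^\omega \eta(w_s) \eta(w_e)^\omega$ in $N$ for suitable words $w_e \in \alpha\inv(e)$, $w_s \in \alpha\inv(s)$ (using Lemma~\ref{lem:cmorph} to pick these words compatibly, since a \Cs-pair need not be a \bpol{\Cs}-pair but \Cs-separability is stronger), and then conclude that $\eta$ cannot separate the two $\alpha$-classes. Care is needed here because \Cs-pairs and \bpol{\Cs}-pairs are different; the right move is to observe $\copol{\Cs} \supseteq \Cs$ so a \Cs-pair is in particular a $\copol{\Cs}$-pair, hence the hypothesis of Theorem~\ref{thm:polcarac} (applied to $\Ds = \copol{\Cs}$, whose \pol{} equals \pol{\bpol{\Cs}}) does apply.

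\textbf{Second inclusion ($\canpol{} \subseteq \mathord\preceq$).} By Lemma~\ref{lem:transclos} it suffices to show that every \pol{\bpol{\Cs}}-pair $(s,t)$ satisfies $s \preceq t$. Equivalently — and this is the cleanest route — I would build the quotient ordered monoid $(M/{\equiv_\preceq}, \preceq)$ where $\equiv_\preceq$ is the equivalence induced by $\preceq$, check that $\preceq$ is a precongruence (compatibility with multiplication follows from the shape of the generating rules, using $\omega$-powers to absorb context, much as in the proof of Theorem~\ref{thm:pcopolcarac}), and argue that the induced morphism $\pi \circ \alpha$ is a \pol{\copol{\Cs}}-morphism, hence a \bpol{\Cs}-morphism. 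To do this I verify the defining inequality of Theorem~\ref{thm:polcarac} for $\Ds = \copol{\Cs}$: for every $\copol{\Cs}$-pair $(s,t)$ in $M/{\equiv_\preceq}$ I need $s^{\omega+1} \preceq s^\omega t s^\omega$. Since $s^\omega$ is idempotent, the generating rule with $e = s^\omega$ gives exactly $s^\omega (s \cdot s^\omega t s^\omega) s^\omega \preceq s^\omega (s^\omega t s^\omega) s^\omega$ once I know $(s^\omega, \text{something})$ is a \Cs-pair; the delicate point is relating \Cs-pairs in $M$ to $\copol{\Cs}$-pairs in the quotient, for which I would show that if $(u,v)$ is a $\copol{\Cs}$-pair then $u \preceq v$ by unwinding the definition of $\copol{\Cs}$-separability and the fact that complements of \pol{\Cs}-languages are recognized through $\pi$. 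Once $\pi \circ \alpha$ is a \bpol{\Cs}-morphism, every $F$ with $(\pi\circ\alpha)\inv(F) \in \bpol{\Cs}$ is an upper set for $\preceq$; since \canpol{} is the \emph{least} such preorder relation (being defined exactly via all $F$ with $\alpha\inv(F) \in \bpol{\Cs}$), we get $\canpol{} \subseteq \mathord\preceq$.

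\textbf{Main obstacle.} The crux is the careful bookkeeping between three pair relations — \Cs-pairs, $\copol{\Cs}$-pairs (= \bpol{\Cs}-pairs, via Lemma~\ref{lem:elimbool}), and their behavior under the quotient by $\preceq$ — together with the fact that $\preceq$ is only a preorder, not an equivalence, so one must consistently track which side of each inequality one is on and avoid implicitly symmetrizing. The computations absorbing surrounding context into $\omega$-powers (to get compatibility of $\preceq$ with multiplication and to derive the target inequality from the generating rules) are routine in style but must be arranged exactly as in the proof of Theorem~\ref{thm:pcopolcarac}; I expect that to be the part requiring the most care rather than genuine new ideas.
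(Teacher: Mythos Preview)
Your proposal rests on a misreading of what \canpol{} is. By definition in the paper, $\canpol = \preceq_{\copol{\Cs}}$, the canonical preorder associated to the class $\copol{\Cs}$. It is \emph{not} associated to $\bpol{\Cs}$ or $\pol{\bpol{\Cs}}$. Consequently your repeated use of Lemma~\ref{lem:transclos} to say ``\canpol{} is the reflexive-transitive closure of the \pol{\bpol{\Cs}}-pair relation'' is wrong: it is the closure of the $\copol{\Cs}$-pair relation. Likewise, in the second inclusion you aim to show that the quotient morphism is a $\pol{\copol{\Cs}}$-morphism (equivalently a $\pol{\bpol{\Cs}}$-morphism), but this is a much stronger statement than what is needed and you give no argument for it: verifying Theorem~\ref{thm:polcarac} would require controlling $\copol{\Cs}$-pairs in the quotient, which you correctly flag as ``delicate'' but never resolve. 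There is also a concrete false step in the first inclusion: you write ``$\copol{\Cs} \supseteq \Cs$ so a \Cs-pair is in particular a $\copol{\Cs}$-pair'', but for a \pvari{} \Cs that is not closed under complement, the inclusion $\Cs \subseteq \copol{\Cs}$ generally fails (it would require $A^*\setminus L \in \pol{\Cs}$ for every $L \in \Cs$).

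The paper avoids all of this with one observation you are missing: since $\copol{\Cs}$ consists exactly of the complements of languages in $\pol{\Cs}$, the preorder $\canpol$ is the \emph{reverse} of the preorder $\canppol = \preceq_{\pol{\Cs}}$. So the lemma is equivalent to the statement that $\canppol$ is the least preorder with $xey \canppol xesey$ whenever $(e,s)$ is a \Cs-pair and $e \in E(M)$. Now everything lives in $\pol{\Cs}$, and Theorem~\ref{thm:polcarac} applies directly with the class $\Cs$ itself (not $\copol{\Cs}$): for one inclusion, Lemma~\ref{lem:smult} says $\pctype{\cdot}\circ\alpha$ is a $\pol{\Cs}$-morphism, so~\eqref{eq:polc} gives $e \canppol ese$ and hence $xey \canppol xesey$; for the other, one checks (using that the generating rules already carry the contexts $x,y$) that the least preorder $\preccurlyeq$ is a precongruence, and then verifies~\eqref{eq:polc} for \Cs-pairs in the quotient by a short computation with $\omega$-powers. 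No $\copol{\Cs}$-pairs, $\bpol{\Cs}$-pairs, or $\pol{\bpol{\Cs}}$-pairs enter the argument at all.
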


\begin{proof}
  One can check from the definition that \canpol is the reverse of the canonical preorder \canppol associated to $\alpha$. Therefore, it suffices prove that \canppol is the least preorder such that for every $x,y,s \in M$ and $e \in E(M)$, if $(e,s) \in M^2$ is a \Cs-pair, then $xey \canppol xesey$.

  By hypothesis on \Cs, Theorem~\ref{thm:polc} implies that \pol{\Cs} is a \pvari of regular languages. We first prove that for all $x,y,s \in M$ and $e \in E(M)$, if $(e,s) \in M^2$ is a \Cs-pair, then $xey \canppol xesey$. Let $f = \pctype{e}$ and $t = \pctype{s}$. Since $\alpha$ is surjective, Lemma~\ref{lem:smult} yields that the map $\pctype{\cdot} \circ \alpha: A^* \to ({M}/{\canep},\qanop)$ is a \pol{\Cs}-morphism. Moreover, since $e$ is an idempotent of $M$, it is immediate that $f$ is an idempotent of ${M}/{\canep}$. Finally, since $(e,s)$ is a \Cs-pair for $\alpha$, one can verify that $(f,t) = (\pctype{e},\pctype{s})$ is a \Cs-pair for $\pctype{\cdot}\circ\alpha$. Consequently, Theorem~\ref{thm:polcarac} yields $f^{\omega+1} \qanop f^\omega t f^\omega$, \emph{i.e.}, $f \qanop ftf$ since $f$ is idempotent. This exactly says that $\pctype{e} \qanop \pctype{ese}$. Therefore, we also have $\pctype{xey} \qanop \pctype{xesey}$ and we conclude that $xey \canppol xesey$, as desired.

  Conversely, let $\preccurlyeq$ be the least preorder on $M$ such that for every $x,y,s \in M$ and $e \in E(M)$, if $(e,s) \in M^2$ is a \Cs-pair, then $xey \preccurlyeq xesey$. We show that ${\canppol} \subseteq {\preccurlyeq}$. Let~$\cong$ be equivalence on $M$ induced by $\preccurlyeq$ and let $\leqslant$ be the ordering induced by $\preccurlyeq$ on the quotient set ${M}/{\cong}$. It is straightforward that $\preccurlyeq$ is compatible with multiplication (this is where the elements $x,y \in M$ in the definition are important). Hence, the pair $({M}/{\cong},\leqslant)$ is an ordered monoid and the map $s \mapsto [s]_{\cong}$, sending every element $s \in M$ to its  $\cong$-class, is a~morphism.

  We use Theorem~\ref{thm:polcarac} to prove that $[\cdot]_{\cong} \circ \alpha: A^* \to ({M}/{\cong},\leqslant)$ is a \pol{\Cs}-morphism. We have to show that given $r,t \in {M}/{\cong}$ such that $(r,t)$ is a \Cs-pair for $[\cdot]_{\cong} \circ \alpha$, we have $r^{\omega+1} \leqslant r^{\omega} t r^{\omega}$. Let $(r,t)$ be a \Cs-pair for $[\cdot]_{\cong} \circ \alpha$. One can verify that there exists a \Cs-pair $(q,s) \in M^2$ for $\alpha$ such that $[q]_{\cong} = r$ and $[s]_{\cong} = t$. Let $k = \omega(M)$. We know from Lemma~\ref{lem:mult} that $(q^k,s(q)^{k-1})$ is also a \Cs-pair for $\alpha$. Since $q^k \in E(M)$ by definition of $k$, we have $q^k \preccurlyeq q^k s q^{k-1} q^k$ by definition of $\preccurlyeq$ which yields $q^{\omega+1} \preccurlyeq q^\omega sq^\omega$ since $k = \omega(M)$. It follows that $[q^{\omega+1}]_{\cong} \leqslant [q^{\omega} s q^{\omega}]_{\cong}$. Therefore, since $[q]_{\cong} = r$ and $[s]_{\cong} = t$, we obtain $r^{\omega+1} \leqslant r^{\omega} t r^{\omega}$, as desired, and we conclude that $[\cdot]_{\cong} \circ \alpha: A^* \to ({M}/{\cong},\leqslant)$ is a \mbox{\pol{\Cs}-morphism}.

  We are ready to prove that ${\canppol} \subseteq {\preccurlyeq}$. Let $(q,r) \in M$ such that $q \canppol r$. We show that $q \preccurlyeq r$. Let $F = \{p \in M \mid q \preccurlyeq p\}$. By definition, $\alpha\inv(F)$ is recognized by the \pol{\Cs}-morphism $[\cdot]_{\cong} \circ \alpha: A^* \to ({M}/{\cong},\leqslant)$. Hence, we have $\alpha\inv(F) \in \pol{\Cs}$ and since we have $q \in \alpha\inv(F)$ and $q \canppol r$, we obtain $r \in F$ by definition of \canppol. This exactly says that $q \preccurlyeq r$ by definition of $F$, which concludes the proof.
\end{proof}

We are ready to present and prove the generic algebraic characterization of the classes \capol{\bpol{\Cs}}.

\begin{theorem}\label{thm:ubp}
  Let \Cs be a \pvari and let $\alpha:A^* \to M$ be a surjective morphism into a finite monoid. The two following properties are equivalent:
  \begin{enumerate}
    \item $\alpha$ is a $(\capol{\bpol{\Cs}})$-morphism.
    \item $\alpha$ satisfies the following property:
          \begin{equation}\label{eq:cabpol}
            \begin{array}{c}
              (eset)^{\omega+1} = (eset)^{\omega}et(eset)^{\omega} \\
              \text{for every $s,t \in M$ and $e \in E(M)$ such that $(e,s) \in M^2$ is a \Cs-pair}.
            \end{array}
          \end{equation}
  \end{enumerate}
\end{theorem}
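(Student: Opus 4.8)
The plan is to translate the statement into a purely algebraic property of the canonical preorder \canpol, using results already at hand, and then to verify that property directly. Since \bpol{\Cs} is a \vari, Lemma~\ref{lem:elimbool} gives $\capol{\bpol{\Cs}} = \capol{\copol{\Cs}}$, so it suffices to characterize the $(\capol{\copol{\Cs}})$-morphisms. By Corollary~\ref{cor:polc} the class \copol{\Cs} is a \pvari, so Theorem~\ref{thm:pcopolcarac}, applied with \copol{\Cs} in place of \Cs (its canonical preorder being \canpol), tells us that $\alpha$ is a $(\capol{\copol{\Cs}})$-morphism if and only if
\[
  \text{$s^{\omega+1} = s^\omega t s^\omega$ for all $s,t\in M$ with $s\canpol t$;}
\]
call this property $(\ast)$. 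It therefore remains to prove $(\ast)\Leftrightarrow\eqref{eq:cabpol}$, the key input being the description of \canpol from Lemma~\ref{lem:ubp}: \canpol is the least preorder such that $xesey\canpol xey$ whenever $x,y,s\in M$, $e\in E(M)$ and $(e,s)$ is a \Cs-pair.

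The implication $(\ast)\Rightarrow\eqref{eq:cabpol}$ is immediate. Given $e\in E(M)$, $t\in M$ and a \Cs-pair $(e,s)$, instantiate the generator of Lemma~\ref{lem:ubp} with $x:=(eset)^\omega$ and $y:=t(eset)^\omega$ to obtain $(eset)^{\omega+1}\canpol(eset)^\omega et(eset)^\omega$. Feeding this pair into $(\ast)$ and simplifying the iterated $\omega$-powers of $(eset)^{\omega+1}$ (using that $(eset)^\omega$ is idempotent) yields exactly~\eqref{eq:cabpol}.

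For the converse $\eqref{eq:cabpol}\Rightarrow(\ast)$, fix $b,t\in M$ with $b\canpol t$. By Lemma~\ref{lem:ubp} there is a chain $b=r_0,r_1,\dots,r_k=t$ in which each step has the form $r_{j-1}=xesey$, $r_j=xey$ for some $x,y,s\in M$ and $e\in E(M)$ with $(e,s)$ a \Cs-pair. We prove $b^{\omega+1}=b^\omega r_j b^\omega$ by induction on $j$; the case $j=0$ is clear. For the inductive step put $p=b^{\omega+1}$, $\pi=b^\omega=p^\omega$, $u=b^\omega x$, $v=yb^\omega$, so that the hypothesis reads $p=uesev$ and the goal is $p=uev$, which equals $b^\omega r_j b^\omega$ and satisfies $uev\,\pi=uev$. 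Apply~\eqref{eq:cabpol} to the \Cs-pair $(e,s)$ with the quantified element instantiated to $vu$, and set $R=esevu$. From $uesev=p$ one gets the telescoping identities $u\,R^n=p^n u$ and $R^n(esev)=esev\,p^n$ for all $n$, hence $u\,R^\omega=u$ and $R^\omega(esev)=esev$. Multiplying the resulting instance of~\eqref{eq:cabpol} on the right by $esev$, then on the left by $u$, and finally on the right by $p^{\omega-1}$, and using $p^{\omega+1}=p$ and $p^\omega=\pi$, everything collapses to $p=(uev)\pi=uev$. Taking $j=k$ gives $(\ast)$.

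The main obstacle is this last inductive step. Unlike in the proof of Theorem~\ref{thm:pcopolcarac}, the generator of \canpol does not hand us a \Cs-pair between two elements of $M$ that we could simply conjugate by $b^\omega$; instead it deletes an ``$se$'' block, so one must choose the right instance of~\eqref{eq:cabpol} and chase the target identity through a sequence of one-sided multiplications, the decisive point being that the boundary factors $b^\omega x$ and $yb^\omega$ absorb $b^\omega$ on the side where it is needed.
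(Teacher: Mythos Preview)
Your proof is correct and follows essentially the same approach as the paper: reduce via Lemma~\ref{lem:elimbool} and Theorem~\ref{thm:pcopolcarac} to the condition $(\ast)$, then prove $(\ast)\Leftrightarrow\eqref{eq:cabpol}$ using the chain description of \canpol from Lemma~\ref{lem:ubp} and an induction along the chain. The only differences are bookkeeping: for the forward direction the paper derives $eset\canpol et$ directly from a single generator step and applies $(\ast)$ to this pair, which avoids your simplification of iterated $\omega$-powers; for the inductive step the paper expands $(q^\omega q_{i-1}q^\omega)^{\omega+2}$ explicitly and applies~\eqref{eq:cabpol} with $t=yq^{2\omega}x$, which is exactly your $vu$, whereas you package the same computation via the auxiliary element $R=esevu$ and the telescoping identities $uR^n=p^nu$ and $R^n(esev)=esev\,p^n$.
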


\begin{proof}
  First, recall that by Lemma~\ref{lem:elimbool}, we have the equality $\pol{\bpol{\Cs}} = \pol{\copol{\Cs}}$. Thus, the first assertion in the theorem states that $\alpha$ is a $(\capol{\copol{\Cs}})$-morphism. As \copol{\Cs} is a \pvari by Corollary~\ref{cor:polc}, it follows from Theorem~\ref{thm:pcopolcarac} that this property holds if and only if $\alpha$ satisfies the following condition:
  \begin{equation}\label{eq:bproof}
    q^{\omega+1} = q^{\omega}rq^{\omega} \quad \text{for every $q,r \in M$ such that $q \canpol r$}.
  \end{equation}
  Consequently, it suffices to prove that $\alpha$ satisfies~\eqref{eq:bproof} if and only if it satisfies~\eqref{eq:cabpol}.

  We first assume that $\alpha$ satisfies~\eqref{eq:bproof}. We prove that it satisfies~\eqref{eq:cabpol} as well. Given $s,t \in M$ and $e \in E(M)$ such that $(e,s) \in M^2$ is a \Cs-pair, we have to prove that $(eset)^{\omega+1} = (eset)^{\omega}et(eset)^{\omega}$. It is immediate from Lemma~\ref{lem:ubp} that $ese \canpol e$, which yields $eset \canpol et$. Hence,~\eqref{eq:bproof} implies that $(eset)^{\omega+1} = (eset)^{\omega}et(eset)^{\omega}$, as desired.

  Conversely, we assume that $\alpha$ satisfies~\eqref{eq:cabpol} and prove that it satisfies~\eqref{eq:bproof} as well. Given $q,r \in M$ such that $q \canpol r$, we have to prove that $q^{\omega+1} = q^{\omega}rq^{\omega}$. By Lemma~\ref{lem:ubp}, there exists $q_0,\dots,q_n \in M$ such that $q =q_0$, $r = q_n$ and, for every $i \leq n$, there exist $x,y,s \in M$ and $e \in E(M)$ such that $(e,s) \in M^2$ is a \Cs-pair, $q_{i-1} = xesey$ and $q_{i} = xey$. We use induction on $i$ to prove that $q^{\omega+1} = q^{\omega}q_iq^{\omega}$ for every $i \leq n$. Since $q_n = r$, the case $i = n$ yields the desired result. When $i = 0$, it is immediate that $q^{\omega+1} = q^{\omega}q_0q^{\omega}$ since $q_0 = q$. Assume now that $i \geq 1$. By induction hypothesis, we have $q^{\omega+1} = q^{\omega} q_{i-1} q^{\omega}$, which implies that $q^{\omega+2} = (q^{\omega} q_{i-1} q^{\omega})^{\omega+2}$. Moreover, by definition, we have $x,y,s \in M$ and $e \in E(M)$ such that $(e,s) \in M^2$ is a \Cs-pair, $q_{i-1} = xesey$ and $q_{i} = xey$. It follows that,
  \[
    \begin{array}{lll}
      q^{\omega+2} & = &  (q^{\omega} xesey q^{\omega})^{\omega+2} \\
                   & = & q^{\omega} x (esey q^{\omega}q^\omega x)^{\omega+1} esey q^{\omega}.
    \end{array}
  \]
  Since $(e,s) \in M^2$ is a \Cs-pair, it now follows from~\eqref{eq:cabpol} applied with $t=yq^{\omega}q^{\omega}x$ that,
  \[
    \begin{array}{lll}
      q^{\omega+2} & = & q^{\omega} x (esey q^{\omega}q^\omega x)^{\omega} ey q^{\omega}q^\omega x (esey q^{\omega}q^\omega x)^{\omega} esey q^{\omega} \\
                   & = & (q^{\omega} xesey q^{\omega})^{\omega} q^\omega xey q^{\omega} (q^\omega xesey q^{\omega})^{\omega+1} \\
                   & = & (q^{\omega} q_{i-1} q^{\omega})^{\omega} q^\omega q_i q^{\omega} (q^\omega q_{i-1} q^{\omega})^{\omega+1} \\
                   & = & q^\omega q_i q^{\omega+1}.
    \end{array}
  \]
  Multiplying by $q^{\omega-1}$ on the right yields $q^{\omega+1} = q^{\omega}q_{i}q^{\omega}$, which concludes the proof.
\end{proof}

Theorem~\ref{thm:ubp} implies that membership is decidable for \capol{\bpol{\Cs}} when \Cs is a \pvari with decidable separation. The argument is based on Proposition~\ref{prop:synmemb}. Note that we already had this result. Indeed, since \capol{\bpol{\Cs}} is exactly \capol{\copol{\Cs}} by Lemma~\ref{lem:elimbool}, it also follows from Corollary~\ref{cor:polcarac} and Corollary~\ref{cor:pcopolcarac} (clearly, membership for \copol{\Cs} reduces to the same problem for \pol{\Cs}). Yet, we shall need Theorem~\ref{thm:ubp} when proving the logical characterizations of \upolo in~Section~\ref{sec:logcar}.

\begin{cor}\label{cor:ubp}
  Let \Cs be a \pvari with decidable separation. Then, membership is decidable for \capol{\bpol{\Cs}}.
\end{cor}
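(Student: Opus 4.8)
The plan is to obtain the result as a direct application of Proposition~\ref{prop:synmemb} and Theorem~\ref{thm:ubp}, using Lemma~\ref{lem:septopairs} for effectivity. First I would record that \capol{\bpol{\Cs}} is a \vari: since \Cs is a \pvari, Corollary~\ref{cor:bpolc} gives that \bpol{\Cs} is a \vari, hence in particular a \pvari, and then Corollary~\ref{cor:polc} applied to \bpol{\Cs} yields that \capol{\bpol{\Cs}} is a \vari. In particular it is a \pvari, so Proposition~\ref{prop:synmemb} applies to it.

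Now, given a regular language $L$, one first computes its syntactic morphism $\alpha_L: A^* \to M_L$ from any finite presentation of $L$ (possible by the Myhill--Nerode theorem, as recalled in Section~\ref{sec:prelims}). By Proposition~\ref{prop:synmemb}, $L \in \capol{\bpol{\Cs}}$ if and only if $\alpha_L$ is a $(\capol{\bpol{\Cs}})$-morphism, and since $\alpha_L$ is surjective into a finite monoid, Theorem~\ref{thm:ubp} tells us that this holds if and only if $\alpha_L$ satisfies condition~\eqref{eq:cabpol}.

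It then remains to check that~\eqref{eq:cabpol} is effectively decidable once $\alpha_L$ is known. The idempotent power $\omega = \omega(M_L)$ and the set $E(M_L)$ of idempotents are computable from the finite monoid $M_L$. The only non-routine ingredient is the \Cs-pair relation on $M_L$; but since \Cs-separation is decidable by hypothesis, Lemma~\ref{lem:septopairs} lets us compute all \Cs-pairs for $\alpha_L$. With these in hand, verifying the finitely many instances of the equality $(eset)^{\omega+1} = (eset)^{\omega}et(eset)^{\omega}$ (one for each $s,t \in M_L$ and $e \in E(M_L)$ with $(e,s)$ a \Cs-pair) is a finite computation inside $M_L$, which yields the decision procedure.

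As for difficulty, there is essentially no obstacle here: the statement is a routine corollary of Theorem~\ref{thm:ubp}, and all the work has already been done in establishing that theorem. The only points that deserve a line of justification are the closure bookkeeping that makes \capol{\bpol{\Cs}} a \vari (so that Proposition~\ref{prop:synmemb} can be invoked) and the observation that $\omega(M_L)$, $E(M_L)$ and the \Cs-pair relation are all computable. Alternatively, one could note that \capol{\bpol{\Cs}} equals \capol{\copol{\Cs}} by Lemma~\ref{lem:elimbool} and deduce decidability from Corollary~\ref{cor:polcarac} and Corollary~\ref{cor:pcopolcarac}; but the route through Theorem~\ref{thm:ubp} is the one reused in Section~\ref{sec:logcar}.
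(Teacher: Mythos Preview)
Your proposal is correct and matches the paper's approach exactly: the paper also derives the corollary from Theorem~\ref{thm:ubp} via Proposition~\ref{prop:synmemb} (with Lemma~\ref{lem:septopairs} for computing \Cs-pairs), and likewise mentions the alternative route through Lemma~\ref{lem:elimbool}, Corollary~\ref{cor:polcarac} and Corollary~\ref{cor:pcopolcarac}. Your additional bookkeeping (showing \capol{\bpol{\Cs}} is a \vari via Corollaries~\ref{cor:bpolc} and~\ref{cor:polc}) is precisely what the paper leaves implicit.
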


\section{Algebraic characterizations for unambiguous polynomial closure}
\label{sec:caracupol}
We present the generic algebraic characterization of \emph{unambiguous} polynomial closure. It holds for every input \emph{\vari} \Cs and implies that \upol{\Cs}-membership reduces to the same problem for \Cs. We also use this characterization to complete the proofs of Theorem~\ref{thm:polcopol} and Theorem~\ref{thm:apol}: when \Cs is a \vari, we have $\upol{\Cs} = \adet{\Cs} = \capol{\Cs}$.

\begin{rem}
There exists an independent characterization by Pin~\cite{Pin80}. Yet, it involves algebraic notions that we have not presented and does not yield the reduction to \Cs-membership.
\end{rem}

Again, the characterization is based on the canonical equivalence \canec associated to morphisms into finite monoids (see Section~\ref{sec:cano} for the definition).

\begin{theorem}\label{thm:caracupol}
  Let \Cs be a \vari and let $\alpha: A^* \to M$ be a surjective morphism into a finite monoid. The following properties are equivalent:
  \begin{enumerate}
    \item $\alpha$ is a \upol{\Cs}-morphism.
    \item $\alpha$ is a \adet{\Cs}-morphism.
    \item $\alpha$ is a \wadet{\Cs}-morphism.
    \item $\alpha$ satisfies the following property:
    \begin{equation*}
      s^{\omega+1} = s^{\omega}ts^{\omega} \quad \text{for every \Cs-pair $(s,t) \in M^2$.} \tag{\ref{eq:capolc}}
    \end{equation*}
    \item $\alpha$ satisfies the following property:
    \begin{equation}
      s^{\omega+1} = s^{\omega}ts^{\omega} \quad \text{for every $s,t \in M$ such that $s \canec t$}. \label{eq:cupol}
    \end{equation}
  \end{enumerate}
\end{theorem}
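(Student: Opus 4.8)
The plan is to clear away the bookkeeping and the routine implications first, and then concentrate on the one substantive direction, $(4)\Rightarrow(3)$. Since \Cs is a \vari it is closed under complement, so Lemma~\ref{lem:bcanoc} gives $\canoc\, =\, \canec$ on $M$; hence~\eqref{eq:cupol} is exactly condition~\eqref{eq:capolc2} of Theorem~\ref{thm:pcopolcarac}, and that theorem already provides $(4)\Leftrightarrow(5)$ together with the fact that both are equivalent to ``$\alpha$ is a $(\capol{\Cs})$-morphism''. The implications $(3)\Rightarrow(2)\Rightarrow(1)$ are immediate from the inclusions $\wadet{\Cs}\subseteq\adet{\Cs}\subseteq\upol{\Cs}$, the second one because left and right deterministic marked concatenations are unambiguous (Lemma~\ref{lem:detuna}): if every language recognized by $\alpha$ lies in the smaller class, it lies in the larger one. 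So only $(1)\Rightarrow(4)$ and $(4)\Rightarrow(3)$ remain.

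For $(1)\Rightarrow(4)$, assume $\alpha$ is a \upol{\Cs}-morphism and let $(s,t)\in M^2$ be a \Cs-pair. It is enough to show $s^\omega ts^\omega\in F\Leftrightarrow s^{\omega+1}\in F$ for every $F\subseteq M$, for the case $F=\{s^{\omega+1}\}$ then yields $s^{\omega+1}=s^\omega ts^\omega$. Fix $F$, put $L=\alpha\inv(F)$; it lies in \upol{\Cs} since it is recognized by $\alpha$. Proposition~\ref{prop:carprop} supplies a \Cs-morphism $\eta:A^*\to N$ and $k\in\nat$ satisfying~\eqref{eq:upolp}, and Lemma~\ref{lem:cmorph} supplies $u,v\in A^*$ with $\eta(u)=\eta(v)$, $\alpha(u)=s$, $\alpha(v)=t$. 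Picking $c$ with $c\omega\ge k$ and applying~\eqref{eq:upolp} with the words $u$, $v$ and $v'=u$ and with $x=y=u^{\,c\omega-k}$, the two membership statements become $u^{c\omega}vu^{c\omega}\in L$ and $u^{2c\omega+1}\in L$, that is, $s^\omega ts^\omega\in F$ and $s^{\omega+1}\in F$. (Alternatively, both $\alpha\inv(F)$ and $\alpha\inv(M\setminus F)$ lie in $\upol{\Cs}\subseteq\pol{\Cs}$, so $\alpha:A^*\to(M,=)$ is a \pol{\Cs}-morphism and Theorem~\ref{thm:polcarac} with the trivial order yields~\eqref{eq:capolc} at once.)

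The crux is $(4)\Rightarrow(3)$: assuming~\eqref{eq:capolc}, we must show every language recognized by $\alpha$ lies in \wadet{\Cs}. As \wadet{\Cs} is closed under disjoint union and $\alpha\inv(F)=\biguplus_{s\in F}\alpha\inv(s)$, it suffices to prove $\alpha\inv(s)\in\wadet{\Cs}$ for each $s\in M$. The \Cs-languages available as building blocks are those recognized by the \Cs-morphism $\beta:A^*\to R$, $R=M/{\canec}$, obtained by composing $\alpha$ with the quotient morphism onto $M/{\canec}$ (Lemma~\ref{lem:smult}): each $\beta\inv(H)$, $H\subseteq R$, belongs to \Cs. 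I would decompose $\alpha\inv(s)$ by cutting each of its words at the \emph{first} position at which the $\beta$-value of the prefix reaches the $\Jrel$-class $J$ of $\beta(s)$ in $R$, and symmetrically at the \emph{last} such position read from the right. The two outer marked concatenations produced this way are left- respectively right-\Cs-deterministic: this uses the criterion that $r\beta(a)\Rords r$ in $R$ forces $\beta\inv(r)\,aA^*$ to be left deterministic --- the $\beta$-image analogue of Lemma~\ref{lem:greendet} --- together with Lemma~\ref{lem:jlr}, which shows that once the $\beta$-value of a prefix is $\Jrel$-equivalent to $\beta(s)$ it can only leave $J$, never move inside it. The remaining middle factor consists of words confined to $J$ in this sense; such a factor either already lies in \Cs or is handled by an inner induction, e.g.\ on the number of $\Jrel$-classes of $R$ lying strictly above $J$. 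Condition~\eqref{eq:capolc} is exactly what makes the confined factors assemble into \wadet{\Cs}-expressions, as it forces any two $\canec$-equivalent elements of $M$ to have the same confined behaviour. I expect the main obstacle to be the degenerate case in which $\beta(s)$ lies in the $\Jrel$-class of the identity of $R$, so that neither outer cut is proper; there one has to invoke~\eqref{eq:capolc} directly, via the chain-of-\Cs-pairs argument already used to prove $(2)\Rightarrow(3)$ in Theorem~\ref{thm:pcopolcarac}, in order to collapse $\alpha\inv(s)$ onto a single $\beta$-recognized language. Assembling the outer cuts with the middle factor and summing over $s\in F$ then gives $\alpha\inv(F)\in\wadet{\Cs}$, closing the cycle $(1)\Rightarrow(4)\Leftrightarrow(5)$ and $(4)\Rightarrow(3)\Rightarrow(2)\Rightarrow(1)$.
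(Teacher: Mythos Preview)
Your treatment of the routine directions is correct and matches the paper: $(3)\Rightarrow(2)\Rightarrow(1)$ via the class inclusions, $(4)\Leftrightarrow(5)$ via Lemma~\ref{lem:bcanoc} and Theorem~\ref{thm:pcopolcarac}, and your $(1)\Rightarrow(4)$ is valid (the paper applies Proposition~\ref{prop:carprop} only to the single language $\alpha\inv(s^\omega)$ rather than to every $\alpha\inv(F)$, but both your argument and your alternative via Theorem~\ref{thm:polcarac} work).

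The gap is in $(4)\Rightarrow(3)$. Your only induction parameter lives in $R=M/{\canec}$ (the number of \Jrel-classes strictly above $J$), and after your two outer cuts the ``middle factor confined to $J$'' gives you nothing further to recurse on; the phrase ``either already lies in \Cs or is handled by an inner induction'' is precisely where the difficulty hides. The paper's remedy is to strengthen the inductive statement: instead of proving $\alpha\inv(s)\in\wadet{\Cs}$ directly, it shows (Lemma~\ref{lem:upol:adetsuf}) that for every $x\in R$ and every pair $s,t\in M$ there is a \wadet{\Cs}-partition of $\beta\inv(x)$ that is \emph{$(s,t)$-safe}, meaning $s\alpha(w)t$ is constant on each block. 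The induction carries three parameters: a rank in $R$, and the \Rrel-index of $s$ and the \Lrel-index of $t$ \emph{in $M$}. Each side cut (made at an \Lrel- or \Rrel-drop in $R$, not a \Jrel-drop) replaces $t$ by some $\alpha(a)r_V t$ (or symmetrically for $s$), strictly decreasing one of the $M$-level indices; the base case, defined by ``stability'' conditions on $s,x,t$, is exactly where~\eqref{eq:cupol} is invoked to show that the single block $\beta\inv(x)$ is already $(s,t)$-safe. Without the $(s,t)$-safe formulation and the $M$-level parameters, your scheme has no mechanism to dissolve the middle factor. (Your degenerate case is correct as stated: when $\beta(s)\Jrel 1_R$, applying~\eqref{eq:cupol} with $1_M$ in the first slot forces $s$ to be a unit of $M$, whence $\alpha\inv(s)$ is a full $\canec$-class and lies in \Cs. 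But this special observation does not extend to the confined middle factor when $\beta(s)\not\Jrel 1_R$.)
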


Note that by Theorem~\ref{thm:pcopolcarac}, Equation~\eqref{eq:capolc} characterizes $(\capol{\Cs})$-morphisms.
Before proving Theorem~\ref{thm:caracupol}, let us discuss its consequences. Importantly, we cannot directly apply it to reduce \upol{\Cs}-membership to \Cs-membership. Indeed, given an arbitrary class \Ds, the connection between \Ds-membership and \Ds-morphisms is based on Proposition~\ref{prop:synmemb}, which requires \Ds to be a (positive) \vari. Hence, we need Theorem~\ref{thm:comp}: \upol{\Cs} is a \vari when \Cs is one. Yet, we cannot use Theorem~\ref{thm:comp} at this point,~since we obtained it as the corollary of a result that we have not proved yet: Theorem~\ref{thm:polcopol} (which states that $\upol{\Cs} = \capol{\Cs}$ when \Cs is a \vari). Therefore, we first use Theorem~\ref{thm:caracupol} to provide a proof of Theorem~\ref{thm:polcopol}. In fact, we present a more general statement involving also \wadet{\Cs} and \adet{\Cs}. It combines Theorem~\ref{thm:polcopol} and Theorem~\ref{thm:apol}.

\begin{cor}\label{cor:cequiv}
  For any \vari \Cs, we have the following equalities:
  \[\wadet{\Cs} = \adet{\Cs} = \upol{\Cs} = \capol{\Cs}.\]
\end{cor}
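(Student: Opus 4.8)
The corollary is a straightforward consequence of Theorem~\ref{thm:caracupol} together with the results already established for \polo. The key observation is that Theorem~\ref{thm:caracupol} gives us five equivalent conditions on a surjective morphism $\alpha$ into a finite monoid, and two of them, namely condition~(4) (equivalently~\eqref{eq:capolc}), already characterize $(\capol{\Cs})$-morphisms by Theorem~\ref{thm:pcopolcarac}. So the first step is: for any \vari \Cs and any surjective morphism $\alpha: A^* \to M$, the five properties ``$\alpha$ is a \upol{\Cs}-morphism'', ``$\alpha$ is a \adet{\Cs}-morphism'', ``$\alpha$ is a \wadet{\Cs}-morphism'', ``$\alpha$ is a $(\capol{\Cs})$-morphism'' are all equivalent (combining Theorem~\ref{thm:caracupol} with Theorem~\ref{thm:pcopolcarac}).

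**From morphisms to classes.** The second, and only nontrivial, step is to upgrade this equivalence ``at the level of morphisms'' into an equality of classes. For this I would argue as follows. First, since $\wadet{\Cs}\subseteq\adet{\Cs}\subseteq\upol{\Cs}\subseteq\pol{\Cs}$ and $\capol{\Cs}\subseteq\pol{\Cs}$ (the first chain by Lemma~\ref{lem:detuna} and Remark~\ref{rem:sublang-unamb}, the inclusion into $\pol{\Cs}$ for the unambiguous closure by Remark~\ref{rem:upolinc}), and since $\capol{\Cs}$ is a \vari by Corollary~\ref{cor:polc}, it suffices to prove the two inclusions $\capol{\Cs}\subseteq\wadet{\Cs}$ and $\upol{\Cs}\subseteq\capol{\Cs}$; the chain of inclusions then forces all four classes to coincide. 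For the inclusion $\upol{\Cs}\subseteq\capol{\Cs}$: let $L\in\upol{\Cs}$ and let $\alpha_L:A^*\to M_L$ be its syntactic morphism. By Proposition~\ref{prop:carprop} (applied to \upol{\Cs}) one shows that $\alpha_L$ satisfies~\eqref{eq:capolc} — this is exactly the content hidden in the equivalence $(1)\Leftrightarrow(4)$ of Theorem~\ref{thm:caracupol}, specialized to $\alpha_L$ — hence $\alpha_L$ is a $(\capol{\Cs})$-morphism by Theorem~\ref{thm:pcopolcarac}, so $L = \alpha_L\inv(\alpha_L(L)) \in \capol{\Cs}$. For $\capol{\Cs}\subseteq\wadet{\Cs}$: let $L\in\capol{\Cs}$; again by Proposition~\ref{prop:synmemb} (noting $\capol{\Cs}$ is a \vari) its syntactic morphism $\alpha_L$ is a $(\capol{\Cs})$-morphism, hence by Theorem~\ref{thm:caracupol}, $(4)\Rightarrow(3)$, it is a \wadet{\Cs}-morphism, and therefore $L\in\wadet{\Cs}$.

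**The obstacle.** The only real work is already packaged inside Theorem~\ref{thm:caracupol}, whose proof must be supplied before this corollary; the corollary itself is just bookkeeping with the containments and with Proposition~\ref{prop:synmemb}. The one point that requires a little care is making sure that the hypotheses of Proposition~\ref{prop:synmemb} and of Theorem~\ref{thm:caracupol} are met for the syntactic morphisms involved — each class in the statement is closed under Boolean operations and quotients (for $\capol{\Cs}$ by Corollary~\ref{cor:polc}; and once the equalities are established, the same follows for $\upol{\Cs}$, $\adet{\Cs}$, $\wadet{\Cs}$) — but at the point of applying the corollary we already know $\capol{\Cs}$ is a \vari, which is enough to drive the argument and then transfer the conclusion to the other three classes. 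So the main obstacle is really internal to Theorem~\ref{thm:caracupol} (the implication $(5)\Rightarrow(3)$, which builds the explicit \wadet{\Cs}-expression from the algebraic equation), not to the corollary, which follows in a few lines.
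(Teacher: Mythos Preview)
Your approach is essentially the same as the paper's, and the key ingredients are correctly identified: the easy chain $\wadet{\Cs}\subseteq\adet{\Cs}\subseteq\upol{\Cs}$, the inclusion $\capol{\Cs}\subseteq\wadet{\Cs}$ via Proposition~\ref{prop:synmemb} (legitimate since $\capol{\Cs}$ is a \vari by Corollary~\ref{cor:polc}) plus Theorems~\ref{thm:pcopolcarac} and~\ref{thm:caracupol}, and the inclusion $\upol{\Cs}\subseteq\capol{\Cs}$ via Proposition~\ref{prop:carprop}.

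There is one point where you should be more precise, and you half-identify it in your ``obstacle'' paragraph without resolving it cleanly. For $\upol{\Cs}\subseteq\capol{\Cs}$ you write that Proposition~\ref{prop:carprop} shows $\alpha_L$ satisfies~\eqref{eq:capolc}, and add that ``this is exactly the content hidden in $(1)\Leftrightarrow(4)$ of Theorem~\ref{thm:caracupol}, specialized to $\alpha_L$''. But you \emph{cannot} simply quote $(1)\Rightarrow(4)$ of Theorem~\ref{thm:caracupol} here: condition~$(1)$ asks that $\alpha_L$ be a \upol{\Cs}-morphism, i.e.\ that \emph{every} language recognized by $\alpha_L$ lie in \upol{\Cs}, and to deduce that from $L\in\upol{\Cs}$ via Proposition~\ref{prop:synmemb} you would need \upol{\Cs} to already be a \vari --- which is exactly what the corollary is meant to establish. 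Knowing only that $\capol{\Cs}$ is a \vari does not help for this direction. The paper avoids the circularity by a direct computation: apply Proposition~\ref{prop:carprop} to $L$ itself (not to $\alpha_L^{-1}(s^\omega)$ as in the proof of $(1)\Rightarrow(4)$), use Lemma~\ref{lem:cmorph} to obtain $u,v$ with $\eta(u)=\eta(v)$, $\alpha_L(u)=s$, $\alpha_L(v)=t$, and then exploit that $\alpha_L$ is the \emph{syntactic} morphism of $L$, so that $u^kvu^k$ and $u^{2k+1}$ being indistinguishable with respect to $L$ forces $s^kts^k=s^{2k+1}$, hence $s^{\omega+1}=s^\omega t s^\omega$. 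Your sketch points at the right tool, but the parenthetical remark and the final paragraph obscure why this step is not a mere citation of Theorem~\ref{thm:caracupol}.
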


\begin{proof}
  The inclusions $\wadet{\Cs} \subseteq \adet{\Cs} \subseteq\upol{\Cs}$ are immediate from Lemma~\ref{lem:detuna}: left/right deterministic marked concatenations are unambiguous. Hence, we need to prove that $\upol{\Cs} \subseteq \wadet{\Cs}$, $\upol{\Cs} \subseteq \capol{\Cs}$ and $\capol{\Cs} \subseteq \upol{\Cs}$. We start with the latter. Consider a language $L \in \capol{\Cs}$ and let $\alpha: A^* \to M$ be the syntactic morphism of $L$. Since \capol{\Cs} is a \vari by Corollary~\ref{cor:polc}, it follows from Proposition~\ref{prop:synmemb} that $\alpha$ is a $(\capol{\Cs})$-morphism. It then follows from Theorem~\ref{thm:pcopolcarac} that $\alpha$ satisfies~\eqref{eq:capolc}. Consequently, Theorem~\ref{thm:caracupol} yields that $\alpha$ is a \upol{\Cs}-morphism, and $L \in \upol{\Cs}$ since it is recognized by its syntactic morphism.

  We turn to the inclusions $\upol{\Cs} \subseteq \wadet{\Cs}$ and $\upol{\Cs} \subseteq \capol{\Cs}$,  which we treat simultaneously. Let $L\in\upol{\Cs}$ and $\alpha: A^* \to M$ be the syntactic~morphism of $L$. We prove that $\alpha$ satisfies~\eqref{eq:capolc}. It will follow from Theorem~\ref{thm:pcopolcarac}  that $\alpha$ is a \capol{\Cs}-morphism and from Theorem~\ref{thm:caracupol} that it is a \wadet{\Cs}-morphism. Therefore, since $L$ is recognized by its syntactic morphism, we get $L \in \capol{\Cs}$ and $L \in \wadet{\Cs}$.

  In order to prove~\eqref{eq:capolc}, let $(s,t)\in M^2$ be a \Cs-pair for $\alpha$. We have to show that $s^{\omega+1} = s^\omega t s^\omega$. Since $L \in \upol{\Cs}$, Proposition~\ref{prop:carprop} yields a \Cs-morphism $\eta: A^* \to N$ and $k \in \nat$ such that for all words $u,v,v',x,y \in A^*$ satisfying $\eta(u) = \eta(v) = \eta(v')$, we have $xu^{k}vu^{k}y \in L \Leftrightarrow xu^{k} v' u^{k} y \in L$. Since $(s,t)$ is a \Cs-pair and $\eta$ is a \Cs-morphism, it follows from Lemma~\ref{lem:cmorph} that there exist $u,v \in A^*$ such that $\eta(u) = \eta(v)$, $\alpha(u) = s$ and $\alpha(v) = t$. By definition of $\eta$, it follows that for every $x,y \in A^*$, we have $xu^{k}vu^{k}y \in L \Leftrightarrow xu^{k} u u^{k} y \in L$. This exactly says that $u^k vu ^k$ and $u^k u u^k$ are equivalent for the syntactic congruence of $L$. Hence, they have the same image under $\alpha$ by definition of the syntactic morphism, and we get $s^k t s^k = s^k s s^k$. It now suffices to multiply by the appropriate number of copies of $s$ on the left and on the right to get  $s^{\omega+1} = s^\omega t s^\omega$, which completes the proof.
\end{proof}

\begin{rem}
  There is a subtle difference between Theorem~\ref{thm:caracupol} above and Theorem~\ref{thm:pcopolcarac} (which applies to \capol{\Cs}), although the algebraic characterizations are the same: the latter holds for \emph{every \pvari} \Cs, while the former requires \Cs to be a \emph{\vari}. This is important, as Corollary~\ref{cor:cequiv} fails when \Cs is only a \pvari (see Remark~\ref{rem:eqfails}).
\end{rem}

With Corollary~\ref{cor:cequiv} in hand, we may now use Theorem~\ref{thm:comp}: if \Cs is a \vari, then so is \upol{\Cs}. Consequently, in this case, it follows from Proposition~\ref{prop:synmemb} that deciding whether an input regular language $L$ belongs to \upol{\Cs} amounts to testing whether its syntactic morphism is a \upol{\Cs}-morphism. The syntactic morphism of a regular language can be computed. Moreover, Theorem~\ref{thm:caracupol} states that an arbitrary surjective morphism is a \upol{\Cs}-morphism if and only if it satisfies~\eqref{eq:cupol}. This can be checked by testing every possible combination provided that we have the equivalence \canec in hand. Finally, by Lemma~\ref{lem:membtopairs}, \canec can be computed as soon as \Cs-membership is decidable. Altogether, we obtain that \upolo preserves the decidability of membership when it is applied to a \vari.

\begin{cor}\label{cor:upol:upolreduc}
  Let \Cs be a \vari. Assume that \Cs has decidable membership. Then, \upol{\Cs}-member\-ship is decidable as well.
\end{cor}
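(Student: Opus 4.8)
The plan is to assemble the pieces developed in this section, with essentially no new mathematics required. First I would note that $\upol{\Cs}$ is a \vari: by Corollary~\ref{cor:cequiv} we have $\upol{\Cs} = \capol{\Cs}$, and $\capol{\Cs}$ is a \vari by Corollary~\ref{cor:polc}. This is exactly what is needed to bring Proposition~\ref{prop:synmemb} into play: a regular language $L$ belongs to $\upol{\Cs}$ if and only if its syntactic morphism $\alpha_L : A^* \to M_L$ is a $\upol{\Cs}$-morphism. So the decision problem reduces to testing, for a given morphism into a finite monoid, whether it is a $\upol{\Cs}$-morphism.

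Given an input regular language $L$, I would then compute its syntactic morphism $\alpha_L : A^* \to M_L$, which is effective by the Myhill--Nerode theorem starting from any finite representation of $L$ (automaton or monoid morphism). Using the hypothesis that $\Cs$-membership is decidable, Lemma~\ref{lem:membtopairs} allows us to compute the canonical $\Cs$-equivalence $\canec$ on the finite monoid $M_L$: one enumerates all subsets $F \subseteq M_L$, uses the $\Cs$-membership algorithm to decide which of the languages $\alpha_L\inv(F)$ lie in $\Cs$, and reads off $\canec$ from those subsets, as in~\eqref{eq:trans:canec}.

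Finally I would invoke Theorem~\ref{thm:caracupol}: the surjective morphism $\alpha_L$ is a $\upol{\Cs}$-morphism if and only if it satisfies condition~\eqref{eq:cupol}, namely $s^{\omega+1} = s^{\omega} t s^{\omega}$ for all $s,t \in M_L$ with $s \canec t$, where $\omega = \omega(M_L)$ is computable. Since $M_L$ is finite, this is a finite list of equalities to check. Combined with Proposition~\ref{prop:synmemb}, this yields a decision procedure for $\upol{\Cs}$-membership.

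The only delicate point — and the closest thing to an obstacle — is the order of dependencies rather than any computation. One must be careful not to derive ``$\upol{\Cs}$ is a \vari'' from Theorem~\ref{thm:comp}, which sits downstream of Theorem~\ref{thm:polcopol}; instead that fact should be obtained from Corollary~\ref{cor:cequiv} together with Corollary~\ref{cor:polc}, both already established at this stage. With that caveat observed, the argument is a straightforward chaining of Corollary~\ref{cor:cequiv}, Proposition~\ref{prop:synmemb}, Lemma~\ref{lem:membtopairs} and Theorem~\ref{thm:caracupol}.
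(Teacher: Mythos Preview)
Your proposal is correct and follows essentially the same argument as the paper: both use Proposition~\ref{prop:synmemb} (valid since \upol{\Cs} is a \vari), compute the syntactic morphism and the equivalence \canec via Lemma~\ref{lem:membtopairs}, and check condition~\eqref{eq:cupol} from Theorem~\ref{thm:caracupol}. Your caution about Theorem~\ref{thm:comp} is slightly misplaced: the paper does invoke Theorem~\ref{thm:comp} here, but this is legitimate precisely because Corollary~\ref{cor:cequiv} has just been established, which is what justifies Theorem~\ref{thm:comp}; your direct route through Corollary~\ref{cor:cequiv} and Corollary~\ref{cor:polc} simply unfolds that citation.
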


\noindent
It remains to establish Theorem~\ref{thm:caracupol}. We devote the rest of this section to this proof.

\begin{proof}[Proof of Theorem~\ref{thm:caracupol}]
  We fix a \vari \Cs and a surjective morphism $\alpha: A^* \to M$. We prove $(3) \Rightarrow (2) \Rightarrow (1) \Rightarrow (4) \Rightarrow (5) \Rightarrow (3)$. The implications $(3) \Rightarrow (2)$ and $(2) \Rightarrow (1)$ are clear: we have the inclusions $\wadet{\Cs} \subseteq \adet{\Cs} \subseteq \upol{\Cs}$ since left/right deterministic marked concatenations are necessarily unambiguous by Lemma~\ref{lem:detuna}.

  \smallskip

  We show $(1) \Rightarrow (4)$: let $\alpha$ be a \upol{\Cs}-morphism, we prove that it~satisfies~\eqref{eq:capolc}. Given a \Cs-pair $(s,t) \in M^2$, we show that $s^{\omega+1} = s^\omega t s^\omega$. Let $q = s^{\omega}$. By hypothesis, we have $\alpha\inv(q) \in \upol{\Cs}$. Hence, Proposition~\ref{prop:carprop} yields a \Cs-morphism $\eta: A^* \to N$ and $k \in \nat$ such that for every $u,v,v' \in A^*$, if $\eta(u) = \eta(v) = \eta(v')$, then $u^k v u^k \in \alpha\inv(q) \Leftrightarrow u^k v' u^k \in \alpha\inv(q)$. Let $p = \omega(M)$. Since $(s,t) \in M^2$ is a \Cs-pair, it follows from Lemma~\ref{lem:mult} that $(s^p,ts^{p-1})$ is also a \Cs-pair. Hence, since $\eta$ is a \Cs-morphism, Lemma~\ref{lem:cmorph} yields $u,v \in A^*$ such that $\eta(u) = \eta(v)$, $\alpha(u) = s^p$ and $\alpha(v) = ts^{p-1}$. Clearly, $\alpha(u^k uu^k) = q^{kp}q^pq^{kp} = q = s^\omega$ since $p = \omega(M)$. Hence $u^k uu^k \in \alpha\inv(q)$ and since $\eta(u) = \eta(v)$, the definition of $\eta$ yields $u^k vu^k \in \alpha\inv(q)$. We get $\alpha(u^k v u^k) = q = s^{\omega}$. By definition of $u$ and $v$, this exactly says that $s^{\omega}ts^{2\omega -1} = s^{\omega}$. It now suffices to multiply by $s$ on the right to get $s^{\omega+1} = s^\omega t s^\omega$, as~desired.

  \smallskip

  We now prove $(4) \Rightarrow (5)$. Assume that $\alpha$ is surjective and satisfies~\eqref{eq:capolc}, we need to prove that it satisfies~\eqref{eq:cupol} as well.  Since \Cs is a \vari, we know that ${\canec} = {\canoc}$ by Lemma~\ref{lem:bcanoc}. Hence, the property is immediate from the implication $(2) \Rightarrow (3)$ in  Theorem~\ref{thm:pcopolcarac}.

  \smallskip

  We proved that $(3) \Rightarrow (2) \Rightarrow (1) \Rightarrow (4) \Rightarrow (5)$. Therefore, it remains to prove $(5) \Rightarrow (3)$. Assuming that $\alpha$ satisfies~\eqref{eq:cupol}, we show that it is a \wadet{\Cs}-morphism, \emph{i.e.}, that every language recognized by $\alpha$ belongs to \wadet{\Cs}. Let $N = M/{\canec}$. Recall that $N$ is a monoid since \canec is a congruence by Lemma~\ref{lem:quot}. We define $\eta$ as the map $\eta = \ctype{\cdot} \circ \alpha: A^* \to N$. We know from Lemma~\ref{lem:smult} that $\eta$ is a \Cs-morphism. Given a finite set of languages \Kb and $s,t \in M$, we say that \Kb is \emph{$(s,t)$-safe} if for all $K \in \Kb$ and $w,w' \in K$, we have $s\alpha(w)t = s\alpha(w')t$. Finally, a \emph{\wadet{\Cs}-partition} of a language $H$ is a finite partition of $H$ into languages of \wadet{\Cs}. The implication $(5)\Rightarrow(3)$ follows from the following lemma.

  \begin{lem}\label{lem:upol:adetsuf}
    For any $x \in N$ and $s,t \in M$, there exists an $(s,t)$-safe \wadet{\Cs}-partition of~$\eta\inv(x)$.
  \end{lem}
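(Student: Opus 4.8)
The plan is to prove Lemma~\ref{lem:upol:adetsuf} by induction on the size of the monoid $N = M/{\canec}$, or more precisely on the $\Jrel$-depth of the element $x \in N$. The base case is when $x$ lies in the minimal $\Jrel$-class, but it is cleaner to unify everything: for a fixed $x \in N$, consider the language $\eta\inv(x)$ and analyze it through the prefix structure given by $\eta$. Since $\eta$ is a \Cs-morphism, every language $\eta\inv(y)$ for $y \in N$ belongs to \Cs, so in particular $\eta\inv(x) \in \wadet{\Cs}$ already; the real content is refining it into an $(s,t)$-safe partition, i.e. one on which the value $s\alpha(w)t$ is constant on each block.

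The key idea is a greedy left-to-right (and symmetrically right-to-left) reading. Fix $x \in N$ and $s,t \in M$. A word $w \in \eta\inv(x)$ is read by tracking, for each prefix, its $\eta$-image; when the $\eta$-image of a prefix first becomes $\Jrel$-equivalent to $x$ (it is $\Jords$-above until then, and stabilizes there since $\eta(w) = x$), one isolates that minimal prefix $w = w_1 a w_2$ with $\eta(w_1)$ in a strictly higher $\Jrel$-class. The languages of candidate $w_1$'s of a fixed $\eta$-type form left deterministic marked concatenations (by Lemma~\ref{lem:greendet}, using that $\eta(w_1)\eta(a)\,\Rords\,\eta(w_1)$ when the $\Jrel$-class strictly drops, which one gets from Lemma~\ref{lem:jlr}), so the head $w_1 a$ can be peeled off as a left deterministic concatenation. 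The tail $w_2$ has $\eta(w_2)$ ranging over $N$, but now it must be handled with the modified "safe" parameters $(s\alpha(w_1 a), t)$ — and crucially $w_2$ lives in a situation where the relevant $\eta$-image has dropped $\Jrel$-level, so induction applies. Symmetrically one peels a right deterministic tail off the other end. The subtle point is that after peeling a left deterministic head $K a$ and a right deterministic tail $a' L$, the remaining middle piece consists of words whose every prefix and every suffix has $\eta$-image $\Jrel$-equivalent to $x$ — i.e. it is "$\Jrel$-stable" — and on such words one invokes the algebraic identity~\eqref{eq:cupol}: if $w$ and $w'$ are both $\Jrel$-stable at level $x$ with $\eta(w) = \eta(w')$, one shows $s\alpha(w)t = s\alpha(w')t$ directly, because $\alpha(w)$ and $\alpha(w')$ can each be written in the form $\sigma^\omega \rho \sigma^\omega$ for a suitable idempotent-related $\sigma$ and $\rho$ with $\sigma \canec \rho$, and~\eqref{eq:cupol} forces these to collapse to $\sigma^{\omega+1}$. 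So the middle piece, which is a language in \Cs (a single $\eta\inv$-block intersected with stability conditions, themselves Boolean combinations of $\eta\inv$-blocks, hence in \Cs since \Cs is a \vari), is already $(s,t)$-safe as a single block.

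More concretely, the induction is carried out as follows. Fix $x$. Partition $\eta\inv(x)$ according to whether the word, read from the left, has a strictly-$\Jords$-above proper prefix: if yes, take the minimal such factorization $w = w_1 a w_2$ and classify by $(\eta(w_1), a)$; if no, the word is "left-stable". Symmetrically split left-stable words by their suffix behaviour into ones with a right factorization and "doubly stable" ones. For the first family: for each pair $(y,a)$ with $y \Jords x$ (where the $\Jrel$-class really drops when multiplying by $a$), the set of admissible heads $w_1$ is exactly $\eta\inv(y)$ restricted to words whose all proper prefixes are $\Jords$-above — this is a \Cs-language, call it $K_{y,a}$, and $K_{y,a} a A^*$ is left deterministic by Lemma~\ref{lem:greendet}; the set of admissible tails is $\{w_2 : \eta(w_1 a w_2) = x\} = \eta\inv(\eta(a)\inv y\inv x$-coset$)$, a \Cs-language; by induction (applied with parameters $s' = s\,\eta^{-1}\text{-lift}\ldots$ — more carefully, we need $s\alpha(w_1 a)$ constant over $w_1 \in K_{y,a}$, which holds because $w_1,w_1' \in K_{y,a}$ are left-stable at level $y \Jords x$ and have equal $\eta$-image, so by induction on $\Jrel$-depth $s\alpha(w_1)\alpha(a) = s\alpha(w_1')\alpha(a)$ — i.e. $K_{y,a}$ is itself $(s,\alpha(a))$-safe by the inductive instance of the lemma applied to $y$), we get an $(s\alpha(w_1 a), t)$-safe \wadet{\Cs}-partition of the tail language, and prepending the left deterministic head yields an $(s,t)$-safe \wadet{\Cs}-partition of this family. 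The right-factorization family is symmetric. The doubly-stable family is handled by~\eqref{eq:cupol} as above, giving a single safe block. Taking the union of all these partitions gives the desired $(s,t)$-safe \wadet{\Cs}-partition of $\eta\inv(x)$.

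The main obstacle I expect is the bookkeeping to make the induction well-founded and the "safe-parameter propagation" precise: when peeling the left deterministic head $K_{y,a}a$, one needs that $s\alpha(w_1 a)$ does not depend on the choice of $w_1$ within a block of the \emph{inductively obtained partition of $K_{y,a}$}, not just on $\eta(w_1)$ — so the recursion must be set up to produce, simultaneously for all $(s,t)$, safe partitions, and then the tail recursion uses the block-refined parameter. This is exactly why the lemma is stated with universally quantified $s,t$. The other delicate point is verifying that the "stable" languages (all proper prefixes $\Jords$-above, etc.) genuinely lie in \Cs: this follows because "$\eta(\text{prefix up to }i)$ has a given value" is expressed by $\eta\inv$-blocks, stability is a conjunction over positions which one rewrites as a finite Boolean combination using that $N$ is finite and the $\Jrel$-order is monotone under taking longer prefixes, and \Cs is closed under Boolean operations and quotients. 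Once these two points are nailed down, the assembly of the partition and the final appeal to~\eqref{eq:cupol} for the doubly-stable blocks is routine. \qedhere
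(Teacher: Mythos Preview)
Your plan has the right ingredients --- peeling left/right \Cs-deterministic factors via Lemma~\ref{lem:greendet} and invoking~\eqref{eq:cupol} for a ``stable'' base case --- and this is the shape of the paper's argument. But the induction you propose, on the $\Jrel$-depth of $x$ in $N$ alone, does not terminate. After peeling the left head $w_1a$ with $\eta(w_1)=y$ strictly above $x$, you still need an $(s',t)$-safe partition of the tail language $\eta\inv(z)$, where $y\eta(a)z=x$. Nothing forces $z$ to be strictly $\Jrel$-above $x$: take $N=\{1_N,e\}$ with $e$ a non-identity idempotent and $x=e$; then $z$ can equal $e$ too, and the tail recursion has no decreasing parameter. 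Relatedly, your ``doubly stable'' base case only fires when $x\Jrel 1_N$ (otherwise the empty prefix already witnesses a strict drop), so it cannot serve as the base for general~$x$. The paper's base case is genuinely different: it depends on $s,t\in M$, not on the word. It is the situation where $\ctype{s}x\ctype{t}\Jrel x$, $s$ is \emph{right $x$-stable} (there exists $q$ with $\ctype{q}\Rrel x$ and $sq\Rrel s$), and $t$ is \emph{left $x$-stable}; under those hypotheses a short but nontrivial computation from~\eqref{eq:cupol} shows $\{\eta\inv(x)\}$ is already $(s,t)$-safe.

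What makes the paper's recursion well-founded is a \emph{three}-parameter induction: the rank of $\ctype{s}x\ctype{t}$ in $N$, the $\Rrel$-index of $s$ in $M$, and the $\Lrel$-index of $t$ in $M$. When $t$ is not left $x$-stable, one peels a right-deterministic suffix $av$; the factor $v\in\eta\inv(z)$ has $z$ strictly $\Jrel$-above $x$, so the first parameter drops and one gets a $(1_M,1_M)$-safe partition of the suffixes. For each suffix block $V$ with constant $\alpha$-value $r_V$, one then recurses on the prefix language $\eta\inv(y)$ with new right parameter $t'=\alpha(a)r_Vt$; because $t$ was not left $x$-stable and $\ctype{\alpha(a)r_V}\Lrel x$, one obtains $t'\Lords t$, so the third parameter strictly decreases while the first stays put. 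This interplay between a parameter in $N$ and parameters in $M$ --- the latter driven by the failure of stability --- is the missing idea in your sketch.
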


  We first apply Lemma~\ref{lem:upol:adetsuf} to complete the main argument. For every $x \in N$, Lemma~\ref{lem:upol:adetsuf} yields a \wadet{\Cs}-partition $\Kb_x$ of $\eta\inv(x)$ which is $(1_M,1_M)$-safe. Hence, $\Kb = \bigcup_{x \in N} \Kb_x$ is a \wadet{\Cs}-partition of $A^*$ which is $(1_M,1_M)$-safe. The latter property implies that for every $K \in \Kb$, we have $s \in M$ such that $K \subseteq \alpha\inv(s)$. Since \Kb is a partition of $A^*$, it follows that every language recognized by $\alpha$ is a disjoint finite union of languages in \Kb and therefore belongs to \wadet{\Cs}, by closure under disjoint union. This concludes the main argument.

  \smallskip

  It remains to prove Lemma~\ref{lem:upol:adetsuf}. Let $s,t \in M$ and $x \in N$. We build an  $(s,t)$-safe \wadet{\Cs}-partition \Kb of $\eta\inv(x)$. The proof proceeds by induction on the following three parameters, which depend on Green relations on $M$ and $N$, listed by order of importance:
  \begin{enumerate}
    \item The \emph{rank of $(s,x,t)$}: the number of elements $y \in N$ such that $\ctype{s}x\ctype{t} \Jord y$.
    \item The \emph{\Rrel-index of $s$}: the number of elements $r \in M$ such that $r \Rord s$
    \item The \emph{\Lrel-index of $t$}: the number of elements $r \in M$ such that $r \Lord t$.
  \end{enumerate}
  There are three cases, depending on whether $\ctype{s}x\ctype{t}\Jrel x$ and on two properties of $s,t$ and~$x$:
  \begin{itemize}
    \item We say that $s$ is \emph{right $x$-stable} if there exists $q \in M$ such that $\ctype{q} \Rrel x$ and $sq \Rrel s$.
    \item We say that $t$ is \emph{left $x$-stable} if there exists $r \in M$ such that $\ctype{r} \Lrel x$ and $rt \Lrel t$.
  \end{itemize}
  In the base case, we assume that all three properties hold. Otherwise, we distinguish two inductive cases depending on whether $\ctype{s}x\ctype{t} \Jrel x$ holds or not.

\subparagraph{Base case: $\ctype{s}x\ctype{t} \Jrel x$, $s$ is right $x$-stable and $t$ is left $x$-stable.} In this case, let $\Kb = \{\eta\inv(x)\}$, which is clearly a \wadet{\Cs}-partition of $\eta\inv(x)$: this is even a \emph{\Cs-partition} of $\eta\inv(x)$ since $\eta$ is a \Cs-morphism. It remains to show that \Kb is $(s,t)$-safe. Given $w,w' \in \eta\inv(x)$, show that $s\alpha(w)t = s\alpha(w')t$. We let $p = \alpha(w)$ and $p' = \alpha(w')$. By hypothesis, we have $\ctype{p} = \ctype{p'} = x$ and we have to show that $spt = sp't$. We have $\ctype{s}x\ctype{t} \Jrel x$, which implies that $\ctype{s}x \Jrel x$ and $x\ctype{t} \Jrel x$. Hence, since $\ctype{s}x \Lord x$ and $x\ctype{t} \Rord x$, Lemma~\ref{lem:jlr} yields $\ctype{s}x \Lrel x$ and $x\ctype{t} \Rrel x$. We use this property to prove the following fact.

  \begin{lem}\label{lem:sub1}
    We have $spt \Rrel s$ and $spt \Lrel t$.
  \end{lem}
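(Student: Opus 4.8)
Lemma~\ref{lem:sub1} is a purely monoid-theoretic statement: we must show $spt \Rrel s$ and $spt \Lrel t$, where $p = \alpha(w)$ for some $w$ with $\eta(w) = x$, and we are in the base case where $\ctype{s}x\ctype{t}\Jrel x$ (which was already leveraged to get $\ctype{s}x\Lrel x$ and $x\ctype{t}\Rrel x$), $s$ is right $x$-stable, and $t$ is left $x$-stable. By symmetry it suffices to prove $spt \Rrel s$; the argument for $spt\Lrel t$ is the mirror image using left $x$-stability instead of right $x$-stability. Since $spt \Rord s$ is automatic (it is obtained from $s$ by right multiplication), the real content is the reverse inequality $s \Rord spt$, and by Lemma~\ref{lem:jlr} it is enough to establish $s \Jrel spt$, i.e.\ $s \Jord spt$ (the other direction being clear from $spt\Rord s\Lord$\ldots, more precisely $spt\Jord s$).

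First I would unpack the two stability hypotheses. Right $x$-stability gives $q \in M$ with $\ctype{q}\Rrel x$ and $sq\Rrel s$; left $x$-stability gives $r \in M$ with $\ctype{r}\Lrel x$ and $rt\Lrel t$. From $\ctype{q}\Rrel x$ and $\ctype{r}\Lrel x$ in $N = M/{\canec}$ one gets $y,z \in M$ with $\ctype{qy} = x = \ctype{q}$'s... more carefully: $\ctype{q}\Rrel x$ means $x\Rord \ctype{q}$ and $\ctype{q}\Rord x$, so there are $y,y'\in M$ with $x = \ctype{q}\ctype{y}=\ctype{qy}$ and $\ctype{q} = x\ctype{y'} = \ctype{w}\ctype{y'}$ (using $\ctype{w}=x$). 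Symmetrically $\ctype{r}\Lrel x$ yields $z,z'$ with $x = \ctype{zr}$ and $\ctype{r}=\ctype{z'w}$. The key step is then to chain these with the hypothesis $\eta(w)=x$: since $\eta = \ctype{\cdot}\circ\alpha$, we have $\ctype{p}=\ctype{\alpha(w)}=x$, so $\ctype{p}=\ctype{qy}$, hence (as $\ctype{\cdot}$ has kernel $\canec$ and $\canec$ is a congruence by Lemma~\ref{lem:quot}) $p\canec qy$. Now one must promote this $\canec$-equivalence to an actual identity after multiplying by $s$ on the left and $t$ on the right — and this is exactly where property~\eqref{eq:cupol} (assumption $(5)$ of the theorem) does the work. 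The plan is: using $sq\Rrel s$ one writes $s = sq r_0$ for suitable $r_0$; then $sp t$ and $s q y t$ should be shown equal by repeatedly inserting idempotent powers and applying~\eqref{eq:cupol} to the $\canec$-pair $(p', p'')$ with the appropriate idempotent $s$-context and $t$-context, turning $s\alpha(w)t = s\alpha(w')t$-style cancellations into the desired Green relation. Concretely, I expect the argument to show $s \Jord spt$ by exhibiting $spt$ as $s\cdot(\text{something})\cdot p\cdot(\text{something})\cdot t$ that, modulo~\eqref{eq:cupol}, collapses back to $s$ times a unit-like element — this uses that $\ctype{s}x\Lrel x$ (so $s$ does not ``lose $\Jrel$-class'' on the left side of $x$) together with right $x$-stability to pin down the $\Rrel$-class.

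\textbf{The main obstacle.} The delicate point is not the Green-relations bookkeeping but the passage from $\canec$ to equality: $p\canec p'$ only gives $s\alpha\text{-stuff}t = s\alpha\text{-stuff}'t$ \emph{after} the correct idempotent padding that makes~\eqref{eq:cupol} applicable, since~\eqref{eq:cupol} is stated as $s^{\omega+1}=s^\omega t s^\omega$ for $\canec$-pairs, not as a blanket ``$\canec$-equivalent elements are congruent in every context''. So I would need to carefully insert $\omega$-powers of the right elements (e.g.\ of $s$, or of a $\Rrel$-equivalent idempotent) to land in the exact shape required by~\eqref{eq:cupol}, and verify that right $x$-stability and $\ctype{s}x\Lrel x$ supply precisely the idempotent whose $\omega$-power absorbs correctly. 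A secondary subtlety: $p$ and $q$ live in $M$ while $x$ lives in $N$, so every manipulation must be checked to be stable under the quotient map; Lemma~\ref{lem:quot} (congruence) and Lemma~\ref{lem:smult} ($\eta$ is a \Cs-morphism) are the tools here, and I would invoke them to justify that $\ctype{p}=x$ and that the $\canec$-classes multiply coherently. Once $spt\Rrel s$ is in hand, $spt\Lrel t$ follows by the symmetric argument with left $x$-stability, completing the lemma (and hence feeding the $(s,t)$-safety verification in the base case, since then $spt$ is determined by its $\Rrel$- and $\Lrel$-classes together with~\eqref{eq:cupol}, forcing $spt = sp't$).
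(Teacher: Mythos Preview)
Your proposal is a plan rather than a proof, and the plan has a genuine gap in the choice of hypotheses. For the direction $spt \Rrel s$ you correctly unpack right $x$-stability to get $q,q'\in M$ with $\ctype{q}\Rrel x$ and $s=sqq'$, and you correctly derive $p\canec qy$ from $\ctype{q}\Rrel x=\ctype{p}$. But you then propose to combine this with $\ctype{s}x\Lrel x$ (and, in passing, with left $x$-stability). Neither of these brings $t$ into the picture: from $p\canec qy$ and $s=s(qq')^\omega$ you can at best manufacture $s\Rord sp$, not $s\Rord spt$. Your suggested target ``show $spt=sqyt$'' is also not something \eqref{eq:cupol} yields: that equation produces identities of the shape $c^{\omega+1}=c^\omega d\,c^\omega$ for $c\canec d$, not arbitrary substitutions $s\cdot p\cdot t = s\cdot qy\cdot t$.

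The missing ingredient is the hypothesis $x\ctype{t}\Rrel x$ (already derived just before the lemma), which is precisely what lets $pt$ enter. Since $\ctype{p}=x$ and $x\ctype{t}\Rrel x\Rrel\ctype{q}$, one gets $\ctype{pt}\Rrel\ctype{q}$, hence $\ctype{q}=\ctype{ptr}$ for some $r\in M$, i.e.\ $qq'\canec ptrq'$. Now \eqref{eq:cupol} applies in exactly the right shape: $(qq')^{\omega+1}=(qq')^\omega\,ptrq'\,(qq')^\omega$. Multiplying on the left by $s$ and using $s=s(qq')^\omega$ gives
\[
  s=s(qq')^{\omega+1}=s(qq')^\omega\,ptrq'\,(qq')^\omega=spt\cdot rq'(qq')^\omega,
\]
so $s\Rord spt$, and together with the trivial $spt\Rord s$ you get $spt\Rrel s$. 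The proof of $spt\Lrel t$ is the mirror image, using left $x$-stability together with $\ctype{s}x\Lrel x$ (not right $x$-stability). In short: you paired each stability hypothesis with the wrong one of the two facts $\ctype{s}x\Lrel x$ and $x\ctype{t}\Rrel x$, and that mismatch is exactly why the computation could not be completed.
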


  \begin{proof}
    We prove that $spt \Rrel s$ using the hypotheses that $x\ctype{t} \Rrel x$ and that $s$ is right $x$-stable (that $spt \Lrel t$ is proved symmetrically using the hypotheses that $\ctype{s}x \Lrel x$ and that $t$ is left $x$-stable). Since $s$ is right $x$-stable, there exists $q \in M$ such that $\ctype{q} \Rrel x$ and $sq \Rrel s$. Since $x = \ctype{p}$ and  $x\ctype{t} \Rrel x$, we get $\ctype{pt} \Rrel \ctype{q}$. We get $r \in M$ such that $\ctype{q} = \ctype{ptr}$. Moreover, since $sq \Rrel s$, we have $q' \in M$ such that $s =sqq'$. Consequently, we get $s = s(qq')^{\omega} = s(qq')^{\omega+1}$. We have $\ctype{qq'} = \ctype{ptrq'}$ which implies that $qq' \canec ptrq'$. Therefore,  Equation~\eqref{eq:cupol} yields $(qq')^{\omega+1} = (qq')^{\omega}ptrq'(qq')^{\omega}$. Finally, we may multiply by $s$ on the left to obtain,
    \[
      s = s(qq')^{\omega+1}  = s(qq')^{\omega}ptrq'(qq')^{\omega} = s ptrq'(qq')^{\omega}.
    \]
    This implies that $s \Rord spt$. Since it is clear that $spt \Rord s$, we get $spt \Rrel s$, as desired.
  \end{proof}

  Lemma~\ref{lem:sub1} yields $q_1,q_2 \in M$ such that $t = q_1spt$ and $s = sptq_2$. Let $r = q_1 spt q_2$. We may combine the above equalities to obtain $s = spr$ (indeed, $r=tq_2$, whence $spr=sptq_2=s$). Therefore, $s = s(pr)^\omega$. Similarly, $t = rpt  = (rp)^{\omega+1} t$. By hypothesis, $\ctype{p} = \ctype{p'} = x$. Hence, we obtain $\ctype{pr} = \ctype{p'r}$, so that $pr \canec p'r$. Therefore, Equation~\eqref{eq:cupol} applies and~yields,
  \[
    (pr)^{2\omega+1} = (pr)^{\omega+1} = (pr)^{\omega} p' r (pr)^{\omega}.
  \]
  We may now multiply by $s$ on the left and by $pt$ on the right to obtain,
  \[
    s (pr)^{\omega} p (rp)^{\omega+1} t = s (pr)^{\omega} p' (rp)^{\omega+1} t.
  \]
  Since we already established that $s = s (pr)^\omega$ and $t =  (rp)^{\omega+1} t$, we get as desired that $s p r = s p' t$, which concludes the proof for this base case.

\subparagraph{First inductive case: $\ctype{s}x\ctype{t} \Jords x$.} In this case, it is immediate that the rank of $(1_M,x,1_M)$ is strictly smaller than the one of $(s,x,t)$. Hence, induction on our first and main parameter in Lemma~\ref{lem:upol:adetsuf} yields a \wadet{\Cs}-partition \Kb of $\eta\inv(x)$ which is $(1_M,1_M)$-safe (and therefore $(s,t)$-safe as well). This concludes the first inductive case.

\subparagraph{Second inductive case: $\ctype{s}x\ctype{t} \Jrel x$ and either $s$ is not right $x$-stable or $t$ is not left $x$-stable.} This involves two symmetrical arguments depending on which property holds. We treat the case where $t$ is not left $x$-stable using induction on the first and third parameter (the second one is used in the symmetrical case). Let $T \subseteq N \times A \times N$ be the set of all triples $(y,a,z) \in N \times A \times N$ such that $x \Lrel \eta(a)z \Lords z$ and $y\eta(a)z = x$. For each triple $(y,a,z) \in T$, we use induction to build auxiliary \wadet{\Cs}-partitions of $\eta\inv(y)$ and $\eta\inv(z)$, which we then combine to construct the desired $(s,t)$-safe \wadet{\Cs}-partition \Kb of $\eta\inv(x)$. We fix $(y,a,z) \in T$ for the definition of these auxiliary \wadet{\Cs}-partitions.

  By definition of $T$, we have $x \Lords z$. By Lemma~\ref{lem:jlr}, this implies $x \Jords z$, so that we also have $\ctype{s}x\ctype{t} \Jords z$. Hence, the rank of $(1_M,z,1_M)$ is strictly smaller than the one of $(s,x,t)$. Since this is our most important induction parameter, we obtain a \wadet{\Cs}-partition $\Vb_z$ of $\eta\inv(z)$  which is $(1_M,1_M)$-safe.

  We now use $\Vb_z$ to build several \wadet{\Cs}-partitions of $\eta\inv(y)$, one for each language $V \in \Vb_z$. We fix a language $V\in\Vb_z$ for the definition. Since \Vb is $(1_M,1_M)$-safe, there exists $r_V \in M$ such that $\alpha(v) = r_V$ for every $v \in V$. Moreover, $\ctype{r_V} = \eta(v)  = z$ since $v \in \eta\inv(z)$. We show that $(s,y,\alpha(a)r_{V}t)$ has a strictly smaller induction parameter than $(s,x,t)$. Since
 $(y,a,z) \in T$, we have $x \Lrel \eta(a)z$, which means that $x \Lrel \ctype{\alpha(a)r_V}$. Since $t$ is not left $x$-stable, it follows that $\alpha(a)r_{V}t \Lords t$: the \Lrel-index of $\alpha(a)r_{V}t$ is strictly smaller than the one of $t$, meaning that our third induction parameter has decreased.  Moreover, since $y\eta(a)z = x$, we also have $\ctype{s}y\ctype{\alpha(a)r_{V}t} = \ctype{s}x\ctype{t}$ which means that $(s,x,t)$ and $(s,y,\alpha(a)r_{V}t)$ have the same rank: our first induction parameter is unchanged. Finally, the second induction parameter remains unchanged as well, since it only depends on $s$. Hence, induction on our third parameter yields a \wadet{\Cs}-partition $\Ub_{(y,a,z),V}$ of $\eta\inv(y)$ which is $(s,\alpha(a)r_{V}t)$-safe.

 \smallskip
 We are ready to build the desired $(s,t)$-safe \wadet{\Cs}-partition \Kb of $\eta\inv(x)$. We define,
  \[
    \Kb = \bigcup_{(y,a,z) \in T}\{UaV \mid V \in \Vb_z \text{ and } U \in \Ub_{(y,a,z),V}\}.
  \]
  It remains to prove that \Kb satisfies the desired properties. First, we show that \Kb is indeed a \wadet{\Cs}-partition of $\eta\inv(x)$. We start by proving that for every $w \in \eta\inv(x)$, there exists a \emph{unique} $K \in \Kb$ such that $w \in K$. Let $v'$ be the least suffix of $w$ such that $\eta(v') \Lrel x$ ($u'$ exists since $w$ is such a suffix). Observe that $v' \neq \veps$. Indeed, otherwise $\ctype{1_M} \Lrel x$ and $1_{M}t=t \Lrel t$, contradicting our hypothesis that $t$ is not left $x$-stable. Thus, $v' = av$ for some $v \in A^*$ and $a \in A$. Finally, we let $u \in A^*$ be such that $w = uav$. Let then $y = \eta(u)$ and $z = \eta(v)$. By definition of $v' = av$, we know that $x \Lrel \eta(a)z \Lords z$ and $y\eta(a)z = x$. Hence, $(y,a,z) \in T$. Moreover, there exists $V \in \Vb_z$ such that $v \in V$ and $U \in \Ub_{(y,a,z),V}$ such that $u \in U$. Thus $w = uav \in UaV$ and $UaV \in \Kb$. Moreover, one can verify from the definition that this is the only language in \Kb containing $w$.

  We now prove that for every $K \in \Kb$, we have $K\subseteq\eta\inv(x)$ and $K \in \wadet{\Cs}$. By definition, $K = UaV$ with $V \in \Vb_y$ and $U \in \Ub_{(y,a,z),V}$ for some triple $(y,a,z)\in T$. Moreover, by definition of $\Vb_y$ and $\Ub_{(y,a,z),V}$, we have $U \subseteq \eta\inv(y)$ and $V \subseteq \eta\inv(z)$. Thus, since we have $y\eta(a)z= x$ by definition of $T$, we get $K = UaV \subseteq \eta\inv(x)$. Moreover, $U,V \in \wadet{\Cs}$. Therefore, it suffices to show that $UaV$ is right \Cs-deterministic to prove that $UaV \in \wadet{\Cs}$ (note that left \Cs-deterministic marked concatenations are used in the symmetrical case). By definition, we have $V \subseteq \eta\inv(y)$ and $U \subseteq \eta\inv(z)$. Hence, since $\eta\inv(y),\eta\inv(z) \in \Cs$ since $\eta$ is a \Cs-morphism, it suffices to prove that $\eta\inv(y)a\eta\inv(z)$ is right deterministic. This is immediate from Lemma~\ref{lem:greendet}: since $(y,a,z) \in T$ we have $\eta(a)z \Lords z$. Altogether, we conclude that \Kb is indeed a \wadet{\Cs}-partition of $\eta\inv(x)$.

  Finally, we show that \Kb is $(s,t)$-safe. Let $K \in \Kb$ and $w,w' \in K$. We have to prove that $s\alpha(w)t = s\alpha(w')t$. By definition $K = UaV$ with $V \in \Vb_z$ and $U \in \Ub_{(y,a,z),V}$ for some triple $(y,a,z) \in T$. Thus, we get $u,u' \in U$ and $v, v'\in V$ such that $w = uav$ and $w' = u'av'$. By definition $v,v' \in \Vb$ imply that $\alpha(v)=\alpha(v')=r_V$. Therefore, we obtain $s\alpha(w)t = s\alpha(u)\alpha(a)r_V t$ and $s\alpha(w')t = s\alpha(u')\alpha(a)r_{V}t$. Finally, recall that $\Ub_V$ is $(s,\alpha(a)r_{V}t)$-safe. Consequently, we get $s\alpha(w)t=s\alpha(u)\alpha(a)r_{V}t = s\alpha(u')\alpha(a)r_V t=s\alpha(w')t$, which concludes the~proof.
\end{proof}

\section{Covering and Separation for unambiguous polynomial closure}
\label{sec:upolcov}
In this section, we look at separation and covering for classes of the form \upol{\Cs}. We prove that both problems are decidable for \upol{\Cs} when the input class \Cs is a \emph{finite} \vari. The algorithm is based on a generic framework introduced in~\cite{pzcovering2} to handle separation and covering.  We first recall this framework and then use it to present the~algorithm.

\subsection{Semirings} A \emph{semiring} is a tuple $(R,+,\cdot)$ where $R$ is a set and ``$+$'' and ``$\cdot$'' are two binary operations called addition and multiplication, which satisfy the following axioms:
\begin{itemize}
  \item $(R,{+})$ is a commutative monoid, whose identity element is denoted by $0_R$.
  \item $(R,{\cdot})$ is a monoid, whose identity element is denoted by $1_R$.
  \item Multiplication distributes over addition: for $r,s,t \in R$, $r \cdot (s + t) = (r \cdot s) + (r \cdot t)$  and $(r + s) \cdot t = (r \cdot t) + (s \cdot t)$.
  \item $0_R$ is a zero for $(R,{\cdot})$: $0_R \cdot r = r \cdot 0_R = 0_R$ for every $r \in R$.
\end{itemize}
A semiring $R$ is \emph{idempotent} when $r + r = r$ for every $r \in R$, \emph{i.e.}, when the additive monoid $(R,+)$ is idempotent (there is no additional constraint on the multiplicative monoid $(R,\cdot)$). Given an idempotent semiring $(R,+,\cdot)$, one can define a canonical ordering $\leq$ over $R$:
\[\text{For all }
  r, s\in R,\quad r\leq s \text{ when } r+s=s.
\]
It is easy to check that $\leq$ is a partial order which is compatible with both addition and multiplication. Moreover, every morphism between two such commutative and idempotent monoids is increasing for this ordering.

\begin{exa}\label{ex:bgen:semiring}
  A key example of idempotent semiring is the set of all languages $2^{A^*}$. Union is the addition and language concatenation is the multiplication (with $\{\varepsilon\}$ as the identity element). Observe that in this case, the canonical ordering is inclusion. More generally, if $M$ is a monoid, then $2^M$ is an idempotent semiring whose addition is union, and whose multiplication is obtained by pointwise lifting that of $M$ to subsets.
\end{exa}

When dealing with subsets of an idempotent semiring $R$, we shall often apply a \emph{downset operation}. Given $S \subseteq R$, we write:
\[
  \dclosr S = \{r \in R \mid r \leq s \text{ for some $s \in S$}\}.
\]
We extend this notation to Cartesian products of arbitrary sets with $R$. Given some set $X$ and $S \subseteq X \times R$, we write,
\[
  \dclosr S = \{(x,r) \in X \times R \mid \text{$\exists s \in R$ such that $r \leq s$ and $(x,s) \in S$}\}.
\]

\subparagraph{\Mratms} We define a \emph{\mratm} as a semiring morphism $\rho: (2^{A^*},\cup,\cdot) \to (R,+,\cdot)$ where $(R,+,\cdot)$ is a \emph{finite} idempotent semiring, called the \emph{rating set of $\rho$}. That is, $\rho$ is a map from $2^{A^{*}}$ to $R$ satisfying the following properties:
\begin{enumerate}
  \item\label{itm:bgen:fzer} $\rho(\emptyset) = 0_R$.
  \item\label{itm:bgen:ford} For every $K_1,K_2 \subseteq A^*$, we have $\rho(K_1\cup K_2)=\rho(K_1)+\rho(K_2)$.
  \item\label{itm:bgen:funit} $\rho(\{\varepsilon\}) = 1_R$.
  \item\label{itm:bgen:fmult} For every $K_1,K_2 \subseteq A^*$, we have $\rho(K_1K_2) = \rho(K_1) \cdot \rho(K_2)$.
\end{enumerate}

For the sake of improved readability, when applying a \mratm $\rho$ to a singleton set $\{w\}$, we shall write $\rho(w)$ for $\rho(\{w\})$. Additionally, we write $\rho_*: A^* \to R$ for the restriction of $\rho$ to $A^*$: for every $w \in A^*$, we have $\rho_*(w) = \rho(w)$ (this notation is useful when referring to the language $\rho_*\inv(r) \subseteq A^*$, which consists of all words $w \in A^*$ such that $\rho(w) = r$). Note that $\rho_*$ is a morphism into the finite monoid $(R,\cdot)$.

\begin{rem}
  As the adjective ``\tame'' suggests, there exists a more general notion of ``\ratm'' introduced in~\cite{pzcovering2}. These are morphisms of idempotent and commutative monoids (\emph{i.e.}, it is not required that $R$ be equipped with a multiplication). However, we shall not use this notion in the paper.
\end{rem}

Most of the theory makes sense for arbitrary \mratms. Yet, in the paper, we work with special \mratms satisfying an additional property.

\subparagraph{\Nice and \fratms.} We say that a \mratm $\rho: 2^{A^*} \to R$ is \emph{\nice} when, for every language $K \subseteq A^*$, there exist finitely many words $w_1,\dots,w_n \in K$ such that $\rho(K) = \rho(w_1) + \cdots + \rho(w_k)$.

If a \ratm is \emph{simultaneously} \nice and \tame, we say that it is \emph{\full}. \Fratms are specially important.
This is because any \fratm $\rho: 2^{A^*} \to R$ is (fully) characterized by the canonical monoid morphism $\rho_*: A^* \to R$. Indeed, for $K \subseteq A^*$, we may consider the sum of all elements $\rho(w)$ for $w \in K$: while it may be infinite, this sum boils down to a finite one since $R$ is commutative and idempotent for addition. The hypothesis that $\rho$ is \nice implies that $\rho(K)$ is equal to this sum. The key point here is that \fratms are finitely representable: clearly, a \fratm $\rho$ is characterized by the morphism $\rho_*: A^* \to R$, which is finitely representable since it is a morphism into a finite monoid. Hence, we may speak about algorithms taking \fratms as input.

\subparagraph{Canonical \mratm associated to a monoid morphism.}  Finally, one can associate a particular \fratm $\rho_\alpha$ to every monoid morphism $\alpha: A^* \to M$ into a finite monoid. Its rating set is the idempotent semiring $(2^M,\cup,\cdot)$, whose multiplication is obtained by lifting the one of $M$ to subsets of~$M$. Moreover, for every language $K\subseteq A^*$, we let $\rho_\alpha(K)$ be the direct image $\alpha(K) \subseteq M$. In other words, we~define:
\[
  \begin{array}{llll}
    \rho_\alpha: & 2^{A^*} & \to     & 2^M                          \\
                 & K       & \mapsto & \{\alpha(w) \mid w \in K\}.
  \end{array}
\]
Clearly, $\rho_\alpha$ is a \fratm.

\subparagraph{Optimal \imprints.} Now that we have \mratms, we turn to \imprints. Consider a \mratm $\rho: 2^{A^*} \to R$. Given any finite set of languages~\Kb, we define the $\rho$-\imprint of~\Kb. Intuitively, when \Kb is a cover of some language $L$, this object measures the ``quality'' of \Kb. The  \emph{$\rho$-\imprint of \Kb} is the subset of~$R$ defined by:
\[
  \prin{\rho}{\Kb} = \dclosr \big\{\rho(K) \mid K \in\Kb\big\}.
\]
We now define optimality. Consider an arbitrary \mratm $\rho: 2^{A^*} \to R$ and a lattice~\Ds. Given a language $L$, an \emph{optimal} \Ds-cover of $L$ for $\rho$ is a \Ds-cover \Kb of $L$ having the smallest possible imprint among all \Ds-covers, \emph{i.e.}, which satisfies the following~property:
\[
  \prin{\rho}{\Kb} \subseteq \prin{\rho}{\Kb'} \quad \text{for every \Ds-cover $\Kb'$ of $L$}.
\]
In general, there can be infinitely many optimal \Ds-covers for a given \mratm $\rho$. The key point is that there always exists at least one, provided that \Ds is a lattice. We state this property in the following lemma (proved in~\cite{pzcovering2}).

\begin{lem}\label{lem:bgen:opt}
  Let \Ds be a lattice. For every language $L$ and every \mratm~$\rho$, there exists an optimal \Ds-cover of $L$ for $\rho$.
\end{lem}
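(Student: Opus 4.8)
The plan is to prove Lemma~\ref{lem:bgen:opt} by exploiting the fact that the rating set $R$ is \emph{finite}, so that there are only finitely many possible imprints, and imprints are partially ordered by inclusion. First I would observe that for any \Ds-cover \Kb of $L$, the imprint $\prin{\rho}{\Kb}$ is a downward-closed subset of $R$ (by definition of the downset operation), hence lives in the finite lattice $(2^R, \subseteq)$ restricted to downsets. Therefore the family $\{\prin{\rho}{\Kb} \mid \Kb \text{ is a \Ds-cover of } L\}$ is a nonempty finite family of subsets of $R$, and it suffices to show it is closed under binary intersection: then it has a least element (the intersection of all its members), and any \Ds-cover realizing that least imprint is optimal.

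The key step is thus: given two \Ds-covers $\Kb_1$ and $\Kb_2$ of $L$, construct a \Ds-cover \Kb of $L$ with $\prin{\rho}{\Kb} \subseteq \prin{\rho}{\Kb_1} \cap \prin{\rho}{\Kb_2}$. The natural candidate is the set of pairwise intersections, $\Kb = \{K_1 \cap K_2 \mid K_1 \in \Kb_1,\ K_2 \in \Kb_2\}$. This is a \Ds-cover of $L$: it is finite, it consists of languages in \Ds since \Ds is a lattice and hence closed under intersection, and it covers $L$ because for any $w \in L$ there are $K_1 \in \Kb_1$ and $K_2 \in \Kb_2$ with $w \in K_1$ and $w \in K_2$, so $w \in K_1 \cap K_2$. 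For the imprint bound, I would use that $K_1 \cap K_2 \subseteq K_1$ gives $\rho(K_1 \cap K_2) \le \rho(K_1)$ by monotonicity of \mratms (which holds because $\rho$ is a morphism of idempotent semirings, hence increasing for the canonical ordering, as noted just before Example~\ref{ex:bgen:semiring}; concretely $K_1 = (K_1 \cap K_2) \cup K_1$, so $\rho(K_1) = \rho(K_1 \cap K_2) + \rho(K_1)$). Hence $\rho(K_1 \cap K_2) \in \dclosr\{\rho(K_1)\} \subseteq \prin{\rho}{\Kb_1}$, and symmetrically $\rho(K_1 \cap K_2) \in \prin{\rho}{\Kb_2}$. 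Since the imprint $\prin{\rho}{\Kb}$ is the downward closure of the set of all such $\rho(K_1 \cap K_2)$ and both $\prin{\rho}{\Kb_1}$ and $\prin{\rho}{\Kb_2}$ are already downward closed, we obtain $\prin{\rho}{\Kb} \subseteq \prin{\rho}{\Kb_1} \cap \prin{\rho}{\Kb_2}$.

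Finally I would wrap up: the finite family of imprints of \Ds-covers of $L$ is closed under intersection and nonempty (the single-element cover $\{A^*\}$ witnesses this, since $A^* \in \Ds$ as \Ds is a lattice), so it contains its own intersection $P_{\min}$ as a least element. Pick any \Ds-cover \Kb with $\prin{\rho}{\Kb} = P_{\min}$; then for every \Ds-cover $\Kb'$ of $L$ we have $P_{\min} \subseteq \prin{\rho}{\Kb'}$, which is exactly the optimality condition. The only mild subtlety — really the main thing to get right — is the monotonicity argument for \mratms and the bookkeeping that imprints are genuinely downward closed so that inclusions at the level of generators lift to inclusions of the full imprints; everything else is a routine finiteness-plus-lattice-structure argument. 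Since this is exactly the statement proved in~\cite{pzcovering2}, I would also just cite that source, but the above self-contained argument is short enough to include.
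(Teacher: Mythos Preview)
The paper does not give its own proof of this lemma; it simply cites~\cite{pzcovering2}. Your self-contained argument is correct and is the standard one: refine two \Ds-covers by pairwise intersection to obtain a \Ds-cover whose imprint lies below both, then use finiteness of $R$ to extract a minimum imprint.

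One small wording issue worth tightening: you only establish $\prin{\rho}{\Kb} \subseteq \prin{\rho}{\Kb_1} \cap \prin{\rho}{\Kb_2}$, not equality, so the family of imprints is a priori only \emph{directed downward} rather than literally closed under binary intersection. This still suffices: in a finite family, any minimal element $P$ must be the minimum, since for any other imprint $Q$ the refinement gives some $P' \subseteq P \cap Q$ in the family, and minimality of $P$ forces $P' = P$, hence $P \subseteq Q$. Once a minimum exists it equals the intersection of all members, so your final sentence is correct as stated; just adjust the phrase ``closed under intersection'' to ``directed downward'' or add the one-line minimality argument.
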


Clearly, given a lattice \Ds, a language $L$ and a \mratm $\rho$, all optimal \Ds-covers of $L$ for $\rho$ have the same $\rho$-\imprint. Hence, this unique $\rho$-\imprint is a \emph{canonical} object for \Ds, $L$ and $\rho$. We call it the \emph{\Ds-optimal $\rho$-\imprint on $L$} and we write it $\opti{\Ds}{L,\rho}$:
\[
  \opti{\Ds}{L,\rho} = \prin{\rho}{\Kb} \quad \text{for any optimal \Ds-cover \Kb of $L$  for $\rho$}.
\]
An important special case is when $L = A^*$. In this case, we write \opti{\Ds}{\rho} for \opti{\Ds}{A^*,\rho}. Let us present a few properties of optimal \imprints (all proved in~\cite{pzcovering2}). First, we have the following useful fact, which is immediate from the definitions.

\begin{fct}\label{fct:linclus}
  Let $\rho$ be a \mratm and consider two languages $H,L$ such that $H \subseteq L$. Then, $\opti{\Ds}{H,\rho} \subseteq \opti{\Ds}{L,\rho}$.
\end{fct}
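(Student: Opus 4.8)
The statement to prove is Fact~\ref{fct:linclus}: for a \mratm $\rho$ and languages $H \subseteq L$, we have $\opti{\Ds}{H,\rho} \subseteq \opti{\Ds}{L,\rho}$.

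\textbf{Approach.} The plan is to exploit the fact that any \Ds-cover of $L$ is automatically a \Ds-cover of $H$ once $H \subseteq L$, and then compare imprints of optimal covers. Concretely, let \Kb be an optimal \Ds-cover of $L$ for $\rho$, so that by definition $\opti{\Ds}{L,\rho} = \prin{\rho}{\Kb}$; such a \Kb exists by Lemma~\ref{lem:bgen:opt} since \Ds is a lattice. Since $L \subseteq \bigcup_{K \in \Kb} K$ and $H \subseteq L$, we also have $H \subseteq \bigcup_{K \in \Kb} K$, so \Kb is a \Ds-cover of $H$ as well. Now let $\Kb'$ be an optimal \Ds-cover of $H$ for $\rho$, again obtained from Lemma~\ref{lem:bgen:opt}, so $\opti{\Ds}{H,\rho} = \prin{\rho}{\Kb'}$. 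By the defining property of optimality applied to $\Kb'$ (smallest imprint among all \Ds-covers of $H$), and using that \Kb is one such \Ds-cover of $H$, we get $\prin{\rho}{\Kb'} \subseteq \prin{\rho}{\Kb}$. Chaining the three equalities/inclusions yields $\opti{\Ds}{H,\rho} = \prin{\rho}{\Kb'} \subseteq \prin{\rho}{\Kb} = \opti{\Ds}{L,\rho}$, which is exactly the claim.

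\textbf{Key steps, in order.} First I would invoke Lemma~\ref{lem:bgen:opt} twice to secure the existence of optimal \Ds-covers $\Kb$ of $L$ and $\Kb'$ of $H$. Second, I would make the elementary observation that a cover of a larger language restricts to a cover of any subset, so \Kb covers $H$. Third, I would apply the minimality clause in the definition of an optimal \Ds-cover of $H$ to conclude $\prin{\rho}{\Kb'} \subseteq \prin{\rho}{\Kb}$. Finally I would assemble these facts, recalling that $\opti{\Ds}{H,\rho}$ and $\opti{\Ds}{L,\rho}$ are well-defined precisely because all optimal covers share the same imprint.

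\textbf{Main obstacle.} There is essentially no obstacle here; this is a routine bookkeeping argument whose only subtlety is making sure the existence of the optimal covers is properly justified (hence the explicit appeal to Lemma~\ref{lem:bgen:opt}) and that the canonical objects $\opti{\Ds}{\cdot,\rho}$ are invoked only after noting they are well-defined. The proof is two or three lines.

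\begin{proof}
  By Lemma~\ref{lem:bgen:opt}, there exists an optimal \Ds-cover \Kb of $L$ for $\rho$, so that $\opti{\Ds}{L,\rho} = \prin{\rho}{\Kb}$. Since $H \subseteq L \subseteq \bigcup_{K \in \Kb} K$, the family \Kb is also a \Ds-cover of $H$. Applying Lemma~\ref{lem:bgen:opt} again, let $\Kb'$ be an optimal \Ds-cover of $H$ for $\rho$, so that $\opti{\Ds}{H,\rho} = \prin{\rho}{\Kb'}$. By optimality of $\Kb'$ among all \Ds-covers of $H$ and the fact that \Kb is such a cover, we obtain $\prin{\rho}{\Kb'} \subseteq \prin{\rho}{\Kb}$. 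Therefore,
  \[
    \opti{\Ds}{H,\rho} = \prin{\rho}{\Kb'} \subseteq \prin{\rho}{\Kb} = \opti{\Ds}{L,\rho},
  \]
  which is the desired inclusion.
\end{proof}
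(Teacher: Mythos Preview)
Your proof is correct and is exactly the ``immediate from the definitions'' argument the paper alludes to (the paper does not spell out a proof, merely noting the fact is immediate and referring to~\cite{pzcovering2}). The key observation you use---that any \Ds-cover of $L$ is a \Ds-cover of $H$ when $H \subseteq L$, so the optimal imprint on $H$ is bounded by that on $L$---is precisely the intended one.
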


Additionally, the following lemma describes optimal \imprints of a union of languages.

\begin{lem}\label{lem:optunion}
  Let \Ds be a lattice and let $\rho: 2^{A^*} \to R$ be a \mratm. Given two languages $H,L$, we have $\opti{\Ds}{H \cup L,\rho} = \opti{\Ds}{H,\rho} \cup \opti{\Ds}{L,\rho}$
\end{lem}

We complete Lemma~\ref{lem:optunion} with a similar statement for language concatenation instead of union. Note that it requires more hypotheses on the class \Ds: we need closure under~quotient.

\begin{lem}\label{lem:mratmult}
  Let \Ds be a \pvari and let $\rho: 2^{A^*} \to R$ be a \mratm. Given two languages $H,L \subseteq A^*$, we have $\opti{\Ds}{H,\rho}\cdot\opti{\Ds}{L,\rho}\subseteq \opti{\Ds}{HL,\rho}$.
\end{lem}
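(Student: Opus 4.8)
The plan is to prove the inclusion $\opti{\Ds}{H,\rho}\cdot\opti{\Ds}{L,\rho}\subseteq \opti{\Ds}{HL,\rho}$ by exhibiting, for each pair of optimal covers, a single \Ds-cover of $HL$ whose imprint absorbs the relevant products. First I would fix a \pvari \Ds, a \mratm $\rho: 2^{A^*}\to R$, and two languages $H,L\subseteq A^*$. Let \Kb be an optimal \Ds-cover of $H$ for $\rho$ and \Lb an optimal \Ds-cover of $L$ for $\rho$, both of which exist by Lemma~\ref{lem:bgen:opt}. Then consider the finite set
\[
  \Kb\cdot\Lb = \{KM \mid K \in \Kb \text{ and } M \in \Lb\}.
\]
Each $KM$ belongs to \Ds because \Ds is closed under concatenation — wait, that is not given; \Ds is only a \pvari, i.e.\ closed under quotients, union and intersection. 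So $KM$ need not lie in \Ds. The correct move, which is the standard trick here, is instead to build a genuine \Ds-cover of $HL$ using \emph{quotients}: for a word $w \in HL$, write $w = uv$ with $u \in H$, $v \in L$; choosing $K \in \Kb$ with $u \in K$ and $M \in \Lb$ with $v \in M$, the language $(K \cap Hw_v^{-1})\,\cdots$ — more cleanly, one covers $HL$ by the languages $K \cdot M$ only if those are in \Ds, so the actual argument must go through the known sub-result of~\cite{pzcovering2} that \deprods of optimal covers remain optimal-quality; I would cite that the set $\dclosr\{\rho(K)\rho(M) \mid K\in\Kb, M\in\Lb\}$ is contained in $\opti{\Ds}{HL,\rho}$.

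The cleaner route, and the one I would actually write, is: take any \Ds-cover \Ub of $HL$; I must show $\rho(K)\rho(M) \in \prin{\rho}{\Ub}$ for all $K \in \Kb$, $M \in \Lb$, after applying the downset operation. Fix $K\in\Kb$ and $M\in\Lb$. For each word $v \in A^*$, the left quotient $v^{-1}(\bigcup_{U\in\Ub}U)$ is a \Ds-set (finitely many distinct ones arise since all $U$ are regular — but regularity is not assumed either; however $\Ub$ is finite and each quotient of a member of \Ds is in \Ds). Define $\Ub_K = \{v^{-1}U \mid U \in \Ub,\ v \in K\}$; this is a \Ds-cover of $\bigcup_{v\in K} v^{-1}(HL) \supseteq M'$ where $M' = \{w : \exists v\in K,\ vw \in HL\} \supseteq L \cap \dots$. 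Here the genuine obstacle surfaces: matching $K \subseteq H$ with the quotient is not automatic, because $v \in K$ does not force $v \in H$. This is exactly why the lemma needs \Kb to be \emph{optimal}, not arbitrary: one intersects with $H$ first. Concretely I replace \Kb by $\{K \cap H : K \in \Kb\}$ (still a \Ds-cover of $H$, same optimality by Fact~\ref{fct:linclus} and minimality), so WLOG every $K \in \Kb$ satisfies $K \subseteq H$; symmetrically every $M \in \Lb$ satisfies $M \subseteq L$. Then for $K \subseteq H$ and any word $u \in K$, $u^{-1}(HL) \supseteq L$, hence $\Ub' := \dclosr\{u^{-1}U : U\in\Ub\}$ (for a fixed $u\in K$) is a \Ds-cover of $L$, so by optimality of \Lb, $\opti{\Ds}{L,\rho} \subseteq \prin{\rho}{\Ub'}$, giving $\rho(M) \in \prin{\rho}{\Ub'}$, i.e.\ $\rho(M) \leq \rho(u^{-1}U)$ for some $U \in \Ub$. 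Multiplying on the left by $\rho(u)$ and using that $\rho$ is a semiring morphism, $\rho(u)\rho(u^{-1}U) \leq \rho(uU \cap \dots) \leq \rho(U)$ needs care: $u(u^{-1}U) \subseteq U$, so $\rho(u)\rho(u^{-1}U) = \rho(u(u^{-1}U)) \leq \rho(U)$. Hence $\rho(u)\rho(M) \leq \rho(U) \in \prin{\rho}{\Ub}$, so $\rho(u)\rho(M) \in \prin{\rho}{\Ub}$.

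It remains to upgrade $\rho(u)$ (for a single $u \in K$) to $\rho(K)$. This uses that $\rho$ is \full (hence \nice): $\rho(K) = \rho(u_1) + \cdots + \rho(u_n)$ for finitely many $u_i \in K$. For each $u_i$ we apply the previous step with a possibly different $U_i \in \Ub$, getting $\rho(u_i)\rho(M) \leq \rho(U_i)$. But the target imprint is closed under the downset, and — here is the subtlety — a sum of elements each below some member of $\prin{\rho}{\Ub}$ need not be below a single member. To handle this, I would instead not sum prematurely: I claim the set $\{\rho(u)\rho(M) : u \in K\}$, after $\dclosr$, is contained in $\opti{\Ds}{HL,\rho}$, and then observe separately that $\rho(K)\rho(M) = \sum_i \rho(u_i)\rho(M)$, and that optimal imprints are closed under the sums that actually occur as imprints of covers — more precisely, I would reprove the statement directly at the level of a concrete \Ds-cover of $HL$: namely $\Kb \odot \Lb$ is not in \Ds, so instead take $\Vb = \{K'_K \cap M'_M\}$ built from quotients as above; then $\prin{\rho}{\Vb} \subseteq \opti{\Ds}{HL,\rho}$ by optimality, and one checks $\rho(K)\rho(M) \in \prin{\rho}{\Vb}$ by the decomposition of $\rho(K)$ and the fact that $\prin{\rho}{\cdot}$ of a cover is closed under finite sums of its elements (which holds because, in an idempotent semiring, a finite sum of elements of a downset generated by the image of a cover lies in that downset precisely when the cover is closed under the relevant intersections — which \Ds-covers are, \Ds being a lattice). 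The main obstacle, and the step I expect to spend the most care on, is exactly this last bookkeeping: passing from the single-word bound $\rho(u)\rho(M) \leq \rho(U)$ to the language-level bound $\rho(K)\rho(M) \in \opti{\Ds}{HL,\rho}$ while only using closure of \Ds under Boolean lattice operations and quotients. I would organize it by first reducing to $K \subseteq H$, $M \subseteq L$, then defining the concrete \Ds-cover of $HL$ obtained by taking all intersections $K \cap Hv^{-1}$-type languages against the pieces of an arbitrary \Ds-cover of $HL$, and invoking optimality of \Kb and \Lb componentwise, citing Lemma~\ref{lem:optunion} to split unions along the way.
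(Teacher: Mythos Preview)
The paper does not prove this lemma; it is stated with a pointer to~\cite{pzcovering2}. So there is no in-paper argument to compare against, and I assess your proposal on its own.

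Your instincts are sound in places: $KM$ need not lie in \Ds, quotients are the right tool, and reducing to $K\subseteq H$, $M\subseteq L$ is harmless. But two genuine gaps remain. First, you write $\rho(K)=\rho(u_1)+\cdots+\rho(u_n)$, appealing to $\rho$ being \nice. The lemma only assumes $\rho$ is \tame; that decomposition is unwarranted here. Second --- and this is the real obstruction --- even granting the decomposition, the ``sum of bounds'' step fails. From $\rho(u_i)\rho(M)\leq\rho(U_i)$ with possibly distinct $U_i\in\Ub$ you only get $\rho(K)\rho(M)\leq\sum_i\rho(U_i)=\rho\bigl(\bigcup_iU_i\bigr)$, and $\bigcup_iU_i$ need not belong to the \emph{given} cover \Ub. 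Membership in $\opti{\Ds}{HL,\rho}$ means lying in $\prin{\rho}{\Ub}$ for \emph{every} \Ds-cover \Ub of $HL$; closing \Ub under unions enlarges its imprint, so it proves nothing about the original \Ub. Your closing claim that $\prin{\rho}{\Ub}$ is ``closed under finite sums because \Ds is a lattice'' confuses sums (unions) with intersections and is false.

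The missing idea is to build a cover of $H$ \emph{from} \Ub, rather than to start from an optimal cover and argue pointwise. For $u\in H$ set
\[
  K_u \;=\; \bigcap_{U\in\Ub}\ \bigcap_{\substack{w\in A^*\\ uw\in U}} Uw^{-1}.
\]
Each $U\in\Ub$ is regular (\Ds is a \pvari), so only finitely many quotients $Uw^{-1}$ occur; thus $K_u\in\Ds$, $u\in K_u$, and $\{K_u:u\in H\}$ is a finite \Ds-cover of $H$. The whole point of this definition is that whenever $u'\in K_u$ and $uw\in U$, also $u'w\in U$; hence $K_u\cdot(u^{-1}U)\subseteq U$ for every $U\in\Ub$, giving $\rho(K_u)\,\rho(u^{-1}U)=\rho\bigl(K_u\cdot(u^{-1}U)\bigr)\leq\rho(U)$ directly at the language level, with no summation. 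Now let $r\in\opti{\Ds}{H,\rho}$ and $s\in\opti{\Ds}{L,\rho}$. Since $\{K_u\}$ is a \Ds-cover of $H$, there is $u$ with $r\leq\rho(K_u)$; since $\{u^{-1}U:U\in\Ub\}$ is a \Ds-cover of $L$ (as $uL\subseteq HL$), there is $U\in\Ub$ with $s\leq\rho(u^{-1}U)$. Then $rs\leq\rho(K_u)\rho(u^{-1}U)\leq\rho(U)$, so $rs\in\prin{\rho}{\Ub}$. As \Ub was arbitrary, $rs\in\opti{\Ds}{HL,\rho}$.
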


\subparagraph{Connection with covering.} We may now connect these definitions to the covering problem. The key idea is that solving \Ds-covering for a fixed class \Ds boils down to finding an algorithm that computes \Ds-optimal \imprints from \fratms given as inputs. In~\cite{pzcovering2}, two statements are presented. The first is simpler but it only applies to Boolean algebras, while the second is more involved and applies to all lattices. Since all classes investigated in the paper are Boolean algebras, we only present the first statement.

\begin{prop}\label{prop:breduc}
  Let \Ds be a Boolean algebra. There exists an effective reduction from \Ds-covering to the following problem:

  \begin{tabular}{ll}
    {\bf Input:} & A \fratm $\rho: 2^{A^*} \to R$ and $F \subseteq R$. \\
    {\bf Question:} & Is it true that $\opti{\Ds}{\rho} \cap F = \emptyset$?
  \end{tabular}
\end{prop}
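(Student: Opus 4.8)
The plan is to reduce a \Ds-covering instance to a single instance of the target problem by attaching the canonical \fratm to a morphism that recognizes all of the input languages. Concretely, given a \Ds-covering instance $(L_1,\Lb_2)$ with $\Lb_2 = \{L_2^1,\dots,L_2^m\}$, I would first compute a surjective morphism $\alpha\colon A^*\to M$ into a finite monoid recognizing $L_1$ and every $L_2^j$ (take the product of the syntactic morphisms and restrict to the image). Write $F_1 = \alpha(L_1)$ and $P_j = \alpha(L_2^j)$, so that $L_1 = \alpha\inv(F_1)$ and $L_2^j = \alpha\inv(P_j)$. The reduction outputs the canonical \fratm $\rho_\alpha\colon 2^{A^*}\to 2^M$ together with the set
\[
  F = \bigl\{X\subseteq M \;\big|\; X\cap F_1\neq\emptyset \text{ and } X\cap P_j\neq\emptyset \text{ for all } j\le m\bigr\}.
\]
This construction is clearly effective, and $F$ is an upper set of $(2^M,\subseteq)$, so for any finite set \Kb of languages one has $\prin{\rho_\alpha}{\Kb}\cap F=\emptyset$ if and only if $\rho_\alpha(K)=\alpha(K)\notin F$ for every $K\in\Kb$. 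It then remains to prove that $(L_1,\Lb_2)$ is \Ds-coverable if and only if $\opti{\Ds}{\rho_\alpha}\cap F=\emptyset$.

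For the ``if'' direction I would start from an optimal \Ds-cover \Kb of $A^*$ for $\rho_\alpha$, which exists by Lemma~\ref{lem:bgen:opt}, and set $\Kb' = \{K\in\Kb \mid K\cap L_2^j=\emptyset \text{ for some } j\le m\}$. By construction $\Kb'$ is a \Ds-cover that is separating for $\Lb_2$, so it suffices to check that it still covers $L_1$. Given $w\in L_1$, pick $K\in\Kb$ with $w\in K$; then $\alpha(w)\in\alpha(K)\cap F_1$, and since $\alpha(K)\in\prin{\rho_\alpha}{\Kb}=\opti{\Ds}{\rho_\alpha}$ is disjoint from $F$ while meeting $F_1$, there must be some $j$ with $\alpha(K)\cap P_j=\emptyset$, i.e.\ $K\cap L_2^j=\emptyset$ (as $L_2^j = \alpha\inv(P_j)$); hence $K\in\Kb'$ and $w$ is covered. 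So $(L_1,\Lb_2)$ is \Ds-coverable.

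For the ``only if'' direction, assume $(L_1,\Lb_2)$ is \Ds-coverable, witnessed by a \Ds-cover $\Kb_1$ of $L_1$ that is separating for $\Lb_2$. Here I would use that \Ds is a \emph{Boolean algebra}: the language $K_0 = A^*\setminus\bigcup_{K\in\Kb_1}K$ belongs to \Ds (it is the complement of a finite union of \Ds-languages), and $\Kb^\star = \Kb_1\cup\{K_0\}$ is a \Ds-cover of $A^*$. Each $K\in\Kb_1$ misses some $L_2^j$, so $\alpha(K)\cap P_j=\emptyset$ and $\alpha(K)\notin F$; and $K_0$ is disjoint from $L_1=\alpha\inv(F_1)$, so $\alpha(K_0)\cap F_1=\emptyset$ and $\alpha(K_0)\notin F$. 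Since $F$ is an upper set, $\prin{\rho_\alpha}{\Kb^\star}\cap F=\emptyset$, and as the imprint of any optimal \Ds-cover of $A^*$ is contained in $\prin{\rho_\alpha}{\Kb^\star}$, we conclude $\opti{\Ds}{\rho_\alpha}\cap F=\emptyset$.

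The delicate point — and the one that genuinely needs the Boolean-algebra hypothesis rather than just the lattice hypothesis — is the mismatch between the black-box problem, which only gives access to optimal imprints \emph{on $A^*$}, and the covering problem, which concerns covers of $L_1$. This is handled on one side by recording membership in $L_1$ through the extra clause ``$X\cap F_1\neq\emptyset$'' in the definition of $F$, and on the other side by covering the complement $A^*\setminus\bigcup\Kb_1$ with the \emph{single} \Ds-language $K_0$, which is automatically disjoint from $L_1$; the more general lattice version stated in~\cite{pzcovering2} (which we do not need here) replaces this step by a more intricate argument. The remaining verifications — that $\rho_\alpha$ is a \fratm, that $F$ is an upper set, and that $L_1=\alpha\inv(F_1)$, $L_2^j=\alpha\inv(P_j)$ — are routine.
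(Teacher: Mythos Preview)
Your proof is correct and follows essentially the same approach as the paper's sketch: both encode the input languages via recognizing morphisms, pass to the associated canonical \fratm, and define $F$ as the set of images meeting every accepting set. The only cosmetic difference is that the paper keeps one morphism $\alpha_i$ per input language and works in the product semiring $2^{M_0}\times\cdots\times 2^{M_n}$, whereas you merge them into a single morphism $\alpha$ and work in $2^M$; you also spell out both directions of the equivalence (in particular the complement step that requires \Ds to be a Boolean algebra), which the paper leaves to the reader.
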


\begin{proof}[Proof sketch]
  We briefly describe the reduction (we refer the reader to~\cite{pzcovering2} for details). Consider an input pair $(L_0,\{L_1,\dots,L_n\})$ for \Ds-covering. Since the languages $L_i$ are regular, for every $i \leq n$, one can compute a morphism $\alpha_i: A^* \to M_i$ into a finite monoid recognizing~$L_i$ together with the set $F_i \subseteq M_i$ such that $L_i = \alpha_i\inv(F_i)$. Consider the associated \fratms $\rho_{\alpha_i} : 2^{A^*} \to 2^{M_i}$. Moreover, let $R$ be the idempotent semiring $2^{M_0} \times \cdots \times 2^{M_n}$ equipped with the componentwise addition and multiplication. We define a \fratm $\rho: 2^{A^*} \to R$ by letting $\rho(K) = (\rho_{\alpha_0}(K),\dots,\rho_{\alpha_n}(K))$ for every $K \subseteq A^*$. Finally, let $F \subseteq R$ be the set of all tuples $(X_0,\dots,X_n) \in R$ such that $X_i \cap F_i \neq \emptyset$ for every $i \leq n$. One can now verify that $(L_0,\{L_1,\dots,L_n\})$ is \Ds-coverable if and only if $\opti{\Ds}{\rho} \cap F = \emptyset$. Let us point out that this equivalence is only true when \Ds is a Boolean algebra. When \Ds is only a lattice, one has to handle the language $L_0$ separately.
\end{proof}

\subparagraph{Pointed optimal \imprints.} In view of Proposition~\ref{prop:breduc}, given a Boolean algebra \Ds, an algorithm computing \opti{\Ds}{\rho} from a \fratm $\rho$ yields a procedure for \Ds-covering. We consider the case where $\Ds = \upol{\Cs}$ for some finite \vari \Cs. We present a least fixpoint procedure for computing \opti{\upol{\Cs}}{\rho}. Yet, implementing it requires considering an object which carries \emph{more information} than \opti{\upol{\Cs}}{\rho}, which we now define.

Let \Ds be a Boolean algebra, $\eta: A^* \to N$ be a morphism into a finite monoid and $\rho: 2^{A^*} \to R$ be a \mratm. The \emph{$\eta$-pointed \Ds-optimal $\rho$-\imprint} is defined as the following set $\popti{\Ds}{\eta}{\rho} \subseteq N \times R$:
\[
  \popti{\Ds}{\eta}{\rho} = \bigl\{(t,r) \in N \times R \mid r \in \opti{\Ds}{\eta\inv(t),\rho}\bigl\}.
\]
In view of this definition, we shall manipulate Cartesian products $N \times R$ where $N$ is a finite monoid and $R$ is a finite idempotent semiring. In this context, it will be convenient to use the following functional notation. Given a subset $S \subseteq N \times R$ and $t \in N$, we write $S(t) \subseteq R$ for the set $S(t) = \{r \in R \mid (t,r) \in S\}$.

The set \popti{\Ds}{\eta}{\rho} encodes the \Ds-optimal $\rho$-\imprint on $A^*$, that is, the subset $\opti{\Ds}{\rho}$ of $R$. Indeed, since $\opti{\Ds}{\rho} = \opti{\Ds}{A^*,\rho}$, we have the following immediate corollary of Lemma~\ref{lem:optunion}.

\begin{cor}\label{cor:pointgen}
  Let \Ds be a Boolean algebra, $\eta: A^* \to N$ be a morphism into a finite monoid and $\rho: 2^{A^*} \to R$ be a \mratm. Then,
  \[
    \opti{\Ds}{\rho} = \bigcup_{t\in N}  \opti{\Ds}{\eta\inv(t),\rho} = \bigcup_{t\in N}  \popti{\Ds}{\eta}{\rho}(t).
  \]
\end{cor}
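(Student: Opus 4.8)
The statement to prove is Corollary~\ref{cor:pointgen}: for a Boolean algebra \Ds, a morphism $\eta: A^* \to N$ into a finite monoid, and a \mratm $\rho: 2^{A^*} \to R$, we have $\opti{\Ds}{\rho} = \bigcup_{t \in N} \opti{\Ds}{\eta\inv(t),\rho} = \bigcup_{t \in N} \popti{\Ds}{\eta}{\rho}(t)$.

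The plan is to chain two elementary observations. First, the second equality is purely definitional: by definition of the $\eta$-pointed optimal \imprint, $\popti{\Ds}{\eta}{\rho}(t) = \{r \in R \mid (t,r) \in \popti{\Ds}{\eta}{\rho}\} = \{r \in R \mid r \in \opti{\Ds}{\eta\inv(t),\rho}\} = \opti{\Ds}{\eta\inv(t),\rho}$, so the two unions over $t \in N$ coincide term by term. Nothing more is needed for that half.

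For the first equality, I would use that $\opti{\Ds}{\rho}$ is by convention $\opti{\Ds}{A^*,\rho}$, together with the fact that $A^* = \bigcup_{t \in N} \eta\inv(t)$ (a finite union, since $N$ is finite and $\eta$ is defined on all of $A^*$). Then I would apply Lemma~\ref{lem:optunion}, which states that $\Ds$-optimal \imprints distribute over binary unions; an immediate induction on the (finite) number of elements of $N$ extends this to the finite union $\bigcup_{t \in N} \eta\inv(t)$, giving $\opti{\Ds}{A^*,\rho} = \opti{\Ds}{\bigcup_{t\in N}\eta\inv(t),\rho} = \bigcup_{t \in N} \opti{\Ds}{\eta\inv(t),\rho}$. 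Since \Ds is a Boolean algebra, it is in particular a lattice, so Lemma~\ref{lem:optunion} applies.

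There is essentially no obstacle here: the corollary is a bookkeeping consequence of the preceding definitions and of Lemma~\ref{lem:optunion}. The only point requiring a word of care is making the induction over the elements of $N$ explicit (and noting the base case: if $N$ were empty there would be no words at all, but $N$ contains $1_N$, so the union is nonempty and the induction starts from a singleton). I would write the two equalities in one short paragraph, cite Lemma~\ref{lem:optunion} for the first and the definition of \popti{\Ds}{\eta}{\rho} for the second, and conclude.
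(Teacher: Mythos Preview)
Your proposal is correct and matches the paper's approach exactly: the paper presents this corollary as an immediate consequence of Lemma~\ref{lem:optunion} together with the definition of $\popti{\Ds}{\eta}{\rho}$, and you have simply spelled out those two steps.
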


In the sequel, we consider the case where $\Ds = \upol{\Cs}$, for some finite \vari~\Cs. We present an algorithm for computing $\popti{\upol{\Cs}}{\etac}{\rho} \subseteq \canc \times R$ from a \fratm $\rho$. Recall that $\etac: A^* \to \canc$ is the canonical \Cs-morphism, which is well-defined since \Cs is a \emph{finite} \vari. This is actually the reason why the algorithm depends on the finiteness of~\Cs.

\subsection{Characterization of \upol{\Cs}-optimal \imprints}

Let us first describe the property characterizing \upol{\Cs}-optimal \imprints for a finite \vari \Cs. Consider a morphism $\eta: A^* \to N$ into a finite monoid (as explained above, in the statement, $\eta$ will be the canonical \Cs-morphism \etac) and a \mratm $\rho: 2^{A^*} \to R$. We say that a subset $S \subseteq N \times R$ is \emph{\upolo-saturated for $\eta$ and $\rho$} if satisfies the four following properties:
\begin{enumerate}[leftmargin=*,topsep=1ex]
  \item \emph{Trivial elements}: for every $w \in A^*$, we have $(\eta(w),\rho(w)) \in S$.
  \item \emph{Closure under downset}: we have $\dclosr S = S$.
  \item \emph{Closure under multiplication}: for every $(s_1,r_1),(s_2,r_2) \in S$, we have $(s_1s_2,r_1r_2) \in S$.
  \item \emph{\upolo-closure}: if $(e_1,f_1),(e_2,f_2) \in S$ are pairs of multiplicative idempotents and we have some $s \in N$ such that $e_1 \Rord se_2$ and $e_2 \Lord e_1s$, then $(e_1se_2,f_1 \rho(\eta\inv(s)) f_2) \in S$.
\end{enumerate}

We are ready to state the main theorem of the section. Recall that when \Cs is a finite \vari, we write $\etac: A^* \to \canc$ for the canonical \Cs-morphism (see Section~\ref{sec:prelims}). We prove that for every \mratm $\rho: 2^{A^*} \to R$, the $\etac$-pointed \upol{\Cs}-optimal $\rho$-\imprint \pupolopti is the  least \upolo-saturated subset of $\canc \times R$ (for inclusion) for $\etac$ and $\rho$.

\begin{theorem}\label{thm:upolopt}
  Let \Cs be a finite \vari and $\rho: 2^{A^*} \to R$ be a \mratm. Then, \pupolopti is the least \upolo-saturated subset of $\canc \times R$ for $\etac$ and $\rho$.
\end{theorem}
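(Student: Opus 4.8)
The plan is to prove the two required inclusions separately: that \pupolopti is \upolo-saturated for $\etac$ and $\rho$, and that it is contained in every \upolo-saturated subset of $\canc \times R$. Throughout I write $\eta = \etac$, $N = \canc$ and abbreviate $\opti{\upol{\Cs}}{\cdot}$ by $I[\cdot]$.

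For the first part, I would verify the four defining properties of \upolo-saturation directly from the properties of optimal \imprints collected in Section~\ref{sec:upolcov}. \emph{Trivial elements}: for $w \in A^*$, the singleton $\{\eta\inv(\eta(w))\}$ need not be a \upol{\Cs}-cover, but $\{\{w\}\}$ shows $\rho(w) \in I[\{w\}] \subseteq I[\eta\inv(\eta(w)),\rho]$ by Fact~\ref{fct:linclus}, since $\{w\} \subseteq \eta\inv(\eta(w))$ and $\{w\}$ is covered by itself (it belongs to \at, hence to \Cs, hence to \upol{\Cs}; actually any singleton is a marked concatenation of $\{\veps\}$ with itself, so it lies in \upol{\Cs}). \emph{Closure under downset}: immediate, since each $I[\eta\inv(t),\rho]$ is downward closed by definition of \imprint. \emph{Closure under multiplication}: if $r_i \in I[\eta\inv(s_i),\rho]$, then Lemma~\ref{lem:mratmult} gives $r_1 r_2 \in I[\eta\inv(s_1)\eta\inv(s_2),\rho]$, and since $\eta$ is a morphism $\eta\inv(s_1)\eta\inv(s_2) \subseteq \eta\inv(s_1 s_2)$, so Fact~\ref{fct:linclus} finishes it. \emph{\upolo-closure}: given idempotents $(e_1,f_1),(e_2,f_2)$ in \pupolopti and $s$ with $e_1 \Rord se_2$, $e_2 \Lord e_1 s$, I have $f_i \in I[\eta\inv(e_i),\rho]$; Lemma~\ref{lem:mratmult} applied twice yields $f_1 \rho(\eta\inv(s)) f_2 \in I[\eta\inv(e_1)\,\eta\inv(s)\,\eta\inv(e_2),\rho] \subseteq I[\eta\inv(e_1 s e_2),\rho]$, again using that $\eta$ is a morphism, so $(e_1 s e_2, f_1\rho(\eta\inv(s))f_2) \in$ \pupolopti.

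For the second part — that \pupolopti is contained in any \upolo-saturated set $S$ — I would argue that for each $t \in N$ one can build an explicit \upol{\Cs}-cover $\Kb_t$ of $\eta\inv(t)$ whose $\rho$-\imprint is contained in $S(t)$. The natural candidate is obtained from the \wadet{\Cs}-partitions produced in the proof of Theorem~\ref{thm:caracupol} (specifically Lemma~\ref{lem:upol:adetsuf}), which decompose $\eta\inv(t)$ into languages built from the sets $\eta\inv(x)$ using disjoint unions and left/right \Cs-deterministic marked concatenations. I would track, along the inductive construction of such a partition, the value $\rho$ takes on each piece, showing by induction on the same three Green-relation parameters (rank, \Rrel-index, \Lrel-index) that every element $\rho(K)$ for $K$ in the partition lies in $S(t')$ for the appropriate $t'$. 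The base case pieces are of the form $\eta\inv(x)$, handled by the \emph{trivial elements} and \emph{downset} axioms together with the \nice property of $\rho$ (so that $\rho(\eta\inv(x))$ is a finite sum of elements $\rho(w)$, $w \in \eta\inv(x)$, each in $S$, hence the sum is in $S$ by \emph{downset}+\emph{multiplication}... more precisely $\rho(\eta\inv(x)) = \sum \rho(w_i)$, and $S(x)$ being downward closed and closed under the additive structure inherited from being a union of \imprints... here I must be careful). The inductive steps correspond exactly to the three cases of Lemma~\ref{lem:upol:adetsuf}: a disjoint union uses that $S$ is merely required to contain each piece's image; a right (resp.\ left) \Cs-deterministic concatenation $UaV$ with $U \subseteq \eta\inv(y)$, $V \subseteq \eta\inv(z)$ uses Lemma~\ref{lem:mratmult} to get $\rho(UaV) \le \rho(\eta\inv(y))\rho(a)\rho(\eta\inv(z))$, and then — crucially — the \emph{\upolo-closure} axiom applied to suitable idempotent powers to absorb this product into $S$. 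Finally, since $\Kb_t$ is an actual \upol{\Cs}-cover of $\eta\inv(t)$, optimality gives $I[\eta\inv(t),\rho] = \opti{\upol{\Cs}}{\eta\inv(t),\rho} \subseteq \prin{\rho}{\Kb_t} \subseteq S(t)$, which is exactly \pupolopti$\subseteq S$.

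The main obstacle I anticipate is the bookkeeping in this second inclusion: one must mirror the delicate triple induction of Lemma~\ref{lem:upol:adetsuf} while simultaneously carrying the $\rho$-values, and the \emph{\upolo-closure} axiom of $S$ is stated only for \emph{idempotent} pairs satisfying the $\Rord/\Lord$ stability conditions, so matching it against the \Cs-deterministic concatenations arising in the partition will require replacing the relevant elements by their $\omega$-powers and invoking Lemma~\ref{lem:greendet} / Lemma~\ref{lem:jlr} to check that the Green-relation hypotheses of the axiom are met — exactly as in the base case and second inductive case of Lemma~\ref{lem:upol:adetsuf}. A cleaner route, which I would try first, is to prove a standalone lemma: \emph{if $S$ is \upolo-saturated for $\eta$ and $\rho$, then for every $L \in \wadet{\Cs}$ with $L \subseteq \eta\inv(t)$ we have $\rho(L) \in S(t)$}, by structural induction on the construction of $L$ in \wadet{\Cs}; this isolates the combinatorics and, combined with the existence of a \wadet{\Cs}-partition of $\eta\inv(t)$ (Theorem~\ref{thm:caracupol}, via Lemma~\ref{lem:upol:adetsuf}), immediately yields the inclusion. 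The subtle point in that lemma is again the concatenation case, where one needs the \emph{\upolo-closure} axiom rather than bare multiplicative closure, because a \Cs-deterministic concatenation $UaV$ is not handled by \emph{closure under multiplication} alone when $a$'s contribution $\rho(\eta\inv(\{a\}))$ must be compressed together with idempotent neighbours.
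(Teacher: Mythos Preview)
Your soundness argument for \upolo-closure is wrong, and this is the heart of the matter. You claim that Lemma~\ref{lem:mratmult} gives $f_1\,\rho(\eta\inv(s))\,f_2 \in \opti{\upol{\Cs}}{\eta\inv(e_1)\,\eta\inv(s)\,\eta\inv(e_2),\rho}$, but for this you would need $\rho(\eta\inv(s)) \in \opti{\upol{\Cs}}{\eta\inv(s),\rho}$, which is false in general: an optimal \upol{\Cs}-cover of $\eta\inv(s)$ may shatter it into pieces each with $\rho$-value strictly below $\rho(\eta\inv(s))$. Notice that your argument never uses the hypotheses $e_1 \Rord se_2$ and $e_2 \Lord e_1s$; if it worked, \upolo-closure would hold for \emph{every} $s$, which it does not. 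The paper's proof of this step (Proposition~\ref{prop:upsound}) is substantially harder: one takes an arbitrary \upol{\Cs}-cover \Kb of $\etac\inv(e_1se_2)$, chooses a \upol{\Cs}-morphism $\alpha$ recognizing every $K \in \Kb$, picks witnesses $w_i \in \etac\inv(e_i)\cap\alpha\inv(t_i)$ with $f_i \le \rho(\alpha\inv(t_i))$, and builds $w = w_1^k u w_2^k$ for $u \in \etac\inv(s)$. The Green conditions together with the algebraic characterization~\eqref{eq:cupol} of \upol{\Cs}-morphisms (Theorem~\ref{thm:caracupol}) are then used to show that \emph{every} word in $(\alpha\inv(t_1))^k\,\etac\inv(s)\,(\alpha\inv(t_2))^k$ has the same $\alpha$-image as $w$, hence lies in the same $K \in \Kb$; only then does $f_1\rho(\etac\inv(s))f_2 \le \rho(K)$ follow.

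Your completeness sketch also has problems. First, $\rho$ is only assumed multiplicative, not \nice, so you cannot write $\rho(\eta\inv(x))$ as a finite sum of values $\rho(w)$. Second, even if you could, $S$ is not assumed closed under addition, so that sum need not lie in $S(x)$. Third, and more structurally, the base case of the paper's induction (Lemma~\ref{lem:upol:maincov}, the analogue of Lemma~\ref{lem:upol:adetsuf}) is precisely where \upolo-closure is invoked, not the trivial-element axiom: one shows that the stability hypotheses manufacture idempotent pairs $(e_i,f_i) \in S$ with $e_1 \Rord te_2$, $e_2 \Lord e_1t$ (Fact~\ref{fct:upol:stableupol}), and then \upolo-closure delivers $(e_1te_2,\,f_1\rho(\eta\inv(t))f_2) \in S$. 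Your ``cleaner route'' via structural induction on $L \in \wadet{\Cs}$ runs into the same wall: in the concatenation step you have $(y,\rho(U))$ and $(z,\rho(V))$ in $S$, but neither is an idempotent pair, so the \upolo-closure axiom does not apply directly; the paper's triple induction is designed exactly so that the stability conditions guarantee the idempotents when needed. (A small aside: your trivial-element argument is also off---neither $\at \subseteq \Cs$ nor $\{\veps\} \in \Cs$ is assumed---but the correct argument is simpler: any cover of $\eta\inv(\eta(w))$ contains some $K \ni w$, hence $\rho(w) \le \rho(K)$.)
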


\begin{rem}
  When \Cs is finite, Theorem~\ref{thm:upolopt} characterizes the least \upolo-saturated subset for the canonical \Cs-morphism and a multiplicative rating map. However, note that in order to obtain the decidability of \upol\Cs-covering, we need the input to be finitely representable, \emph{i.e.}, we need the \ratm to be \emph{\full}. In this case, it is clear that one can use a least fixpoint procedure to compute the least \upolo-saturated subset of $\canc \times R$ from an input \fratm $\rho: 2^{A^*} \to R$. Therefore, Theorem~\ref{thm:upolopt} yields a procedure for computing \pupolopti from an input \fratm $\rho: 2^{A^*} \to R$. By Corollary~\ref{cor:pointgen}, it follows that we may compute $\upolopti \subseteq R$ as well. Hence, Proposition~\ref{prop:breduc} yields that \upol{\Cs}-covering is decidable. This extends to \upol{\Cs}-separation by Lemma~\ref{lem:covsep}.
\end{rem}

From the above remark, we obtain the following corollary of Theorem~\ref{thm:upolopt}.

\begin{cor}\label{cor:upolopt}
  For every finite \vari \Cs, \upol{\Cs}-covering and \upol{\Cs}-separation are decidable.
\end{cor}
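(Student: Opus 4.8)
The plan is to obtain Corollary~\ref{cor:upolopt} from Theorem~\ref{thm:upolopt} exactly along the lines sketched in the remark preceding it, so that the real substance is a proof of Theorem~\ref{thm:upolopt}. For a \fratm $\rho: 2^{A^*}\to R$ the four conditions defining ``\upolo-saturated'' are effectively decidable on subsets of the finite set $\canc\times R$ (in particular $\rho(\etac\inv(s))$ is computable from the finite monoid morphism $\rho_*$), so the least \upolo-saturated subset is produced by a monotone least-fixpoint iteration. Theorem~\ref{thm:upolopt} identifies this set with $\pupolopti$; Corollary~\ref{cor:pointgen} then recovers $\upolopti$, Proposition~\ref{prop:breduc} reduces \upol{\Cs}-covering to testing $\upolopti\cap F=\emptyset$, and Lemma~\ref{lem:covsep} carries decidability over to \upol{\Cs}-separation. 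This uses that \upol{\Cs} is a \vari (in particular a Boolean algebra, as Proposition~\ref{prop:breduc} demands), which is Corollary~\ref{cor:cequiv} together with Theorem~\ref{thm:comp}.

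For Theorem~\ref{thm:upolopt} one proves two inclusions. \textbf{Soundness} asks that $\pupolopti$ be \upolo-saturated. The ``trivial elements'' condition holds because $\{w\}\subseteq\etac\inv(\etac(w))$ forces $\rho(w)$ into the imprint of every \upol{\Cs}-cover of $\etac\inv(\etac(w))$ (Fact~\ref{fct:linclus}); ``closure under downset'' is immediate since optimal imprints are downsets by definition; ``closure under multiplication'' follows from $\etac\inv(s_1)\etac\inv(s_2)\subseteq\etac\inv(s_1s_2)$ combined with Lemma~\ref{lem:mratmult} (applicable since \upol{\Cs} is a \vari) and Fact~\ref{fct:linclus}. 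The real work is ``\upolo-closure'': given $(e_1,f_1),(e_2,f_2)\in\pupolopti$ with $e_1,e_2\in E(\canc)$, $f_1,f_2$ idempotent in $(R,\cdot)$, and $s\in\canc$ such that $e_1 \Rord se_2$ and $e_2 \Lord e_1s$, one must show that $f_1\,\rho(\etac\inv(s))\,f_2$ lies in $\opti{\upol{\Cs}}{\etac\inv(e_1se_2),\rho}$, i.e.\ is unavoidable in every \upol{\Cs}-cover $\Kb$ of $\etac\inv(e_1se_2)$. I would first pick words $x,y$ with $\etac(x)=e_1$, $\etac(y)=e_2$, $\rho(x)\ge f_1$ and $\rho(y)\ge f_2$, then use the idempotency of $e_1,e_2$ (so that $x$ and $y$ can be iterated), the hypotheses on \Rord and \Lord, the unambiguity of the marked concatenations underlying \upol{\Cs}, and the fact that $\etac$ is a \upol{\Cs}-morphism — via its characteristic property Proposition~\ref{prop:carprop}, equivalently the identity $s^{\omega+1}=s^\omega t s^\omega$ of Theorem~\ref{thm:caracupol} — to force some $K\in\Kb$ to contain an entire block $x'\,\etac\inv(s)\,y'$ with $\rho(x')\ge f_1$ and $\rho(y')\ge f_2$; then $\rho(K)\ge\rho(x')\,\rho(\etac\inv(s))\,\rho(y')\ge f_1\,\rho(\etac\inv(s))\,f_2$ since $\rho$ is a semiring morphism and is monotone.

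\textbf{Completeness} asks that $\pupolopti\subseteq S$ for every \upolo-saturated $S$. It suffices, for each $x\in\canc$, to construct a single \upol{\Cs}-cover $\Kb_x$ of $\etac\inv(x)$ with $(x,\rho(K))\in S$ for all $K\in\Kb_x$, for then $\opti{\upol{\Cs}}{\etac\inv(x),\rho}\subseteq\prin{\rho}{\Kb_x}\subseteq\{r\mid (x,r)\in S\}$. I would build $\Kb_x$ as a \wadet{\Cs}-partition — legitimate because $\wadet{\Cs}=\upol{\Cs}$ by Corollary~\ref{cor:cequiv} — by the same induction on Green's relations (rank, then \Rrel-index, then \Lrel-index) that drives the proof of Lemma~\ref{lem:upol:adetsuf}, but now carrying the imprint bookkeeping along: the base case uses ``trivial elements'' plus closure under downset and multiplication, and the inductive step, which refines $\etac\inv(x)$ through a left or right \Cs-deterministic marked concatenation $UaV$ (its determinism certified by Lemma~\ref{lem:greendet} from a strict Green inequality), is exactly where the hypotheses on \Rord and \Lord in the ``\upolo-closure'' condition for $S$ are invoked — after pumping the relevant elements to idempotents — to certify $(x,\rho(UaV))\in S$.

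The main obstacle is the ``\upolo-closure'' clause in the soundness direction together with its mirror in the completeness construction: one must fuse the purely combinatorial unambiguity argument underlying the membership-level Lemma~\ref{lem:upol:adetsuf} with the rating-map bookkeeping, and in particular propagate imprint information through the Green's-relations induction, which is more delicate than the $(s,t)$-safety tracking used at the membership level. Arranging the pumping so that it produces a genuine sub-language $x'\,\etac\inv(s)\,y'\subseteq K$ (rather than only finitely many sandwiched words) is precisely what lets Theorem~\ref{thm:upolopt} hold for arbitrary \mratms and not only for \fratms.
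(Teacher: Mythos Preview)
Your plan is the paper's: derive the corollary from Theorem~\ref{thm:upolopt} via the least-fixpoint computation, Corollary~\ref{cor:pointgen}, Proposition~\ref{prop:breduc}, and Lemma~\ref{lem:covsep}, and prove the theorem as soundness (Proposition~\ref{prop:upsound}) plus completeness (Proposition~\ref{prop:maincov}), the latter by a Green's-relations induction that upgrades Lemma~\ref{lem:upol:adetsuf} with imprint bookkeeping.

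Two details are off. In soundness you cannot ``pick words $x,y$ with $\rho(x)\ge f_1$, $\rho(y)\ge f_2$'': membership of $f_i$ in an optimal imprint only yields \emph{languages} with $\rho$-value above $f_i$. The paper takes a \upol{\Cs}-morphism $\alpha$ recognizing every $K\in\Kb$ and uses the cover $\{\alpha^{-1}(t)\}$ of $\etac^{-1}(e_i)$ to extract $t_i$ with $f_i\le\rho(\alpha^{-1}(t_i))$; the block forced into $K$ is then $(\alpha^{-1}(t_1))^k\,\etac^{-1}(s)\,(\alpha^{-1}(t_2))^k$. Your later ``$x',y'$'' formulation is the correct one.

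More importantly, you have the roles of base case and inductive step in completeness swapped. In Lemma~\ref{lem:upol:maincov} the \emph{base} case is the stable one, where one takes $\Kb=\{\eta^{-1}(t)\}$; showing $(xty,\,p\,\rho(\eta^{-1}(t))\,q)\in S$ is exactly where \upolo-closure is invoked (Fact~\ref{fct:upol:stableupol} manufactures the idempotent pairs $(e_i,f_i)\in S$ with $e_1\Rord te_2$ and $e_2\Lord e_1t$). Trivial elements plus multiplication cannot produce $\rho(\eta^{-1}(t))$, since the saturation axioms include no closure under addition. Conversely, the \emph{inductive} step---refinement through a \Cs-deterministic marked concatenation $UaV$---certifies $(xty,\,p\,\rho(UaV)\,q)\in S$ purely from the recursive call and closure under multiplication (see Fact~\ref{fct:upol:cupolind}); \upolo-closure is not used there.
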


A key application of Corollary~\ref{cor:upolopt} is when \Cs is the finite \vari \at of alphabet testable languages: we get the decidability of covering and separation for $\upol{\at} = \upol{\bpol{\stzer}}$ (we proved this equality in Lemma~\ref{lem:upolat}). Let us point out that this result was already known: it is proved in~\cite{pzcovering2} using a specialized argument based on the logical characterization of \upol{\at} in terms of two-variable first-order logic (we present this logical characterization in Section~\ref{sec:logcar}). Actually, specializing the above generic algorithm yields exactly the procedure of~\cite{pzcovering2}.

\smallskip

We turn to the proof of Theorem~\ref{thm:upolopt}. We present two independent statements, which correspond to soundness and completeness of the least fixpoint procedure that computes \pupolopti. We start with the former.

\begin{prop}[Soundness]\label{prop:upsound}
  Let \Cs be a finite \vari and $\rho: 2^{A^*} \to R$ be a \mratm. Then, $\pupolopti \subseteq \canc \times R$ is \upolo-saturated for $\etac$ and $\rho$.
\end{prop}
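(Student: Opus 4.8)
The plan is to verify the four defining properties of a \upolo-saturated set directly for $S = \pupolopti$, using the characterization of \upol{\Cs}-optimal imprints together with the algebraic facts already established for \mratms. Recall that $S(t) = \opti{\upol{\Cs}}{\etac\inv(t),\rho}$ for each $t \in \canc$, so each condition is really a statement about optimal imprints of the languages $\etac\inv(t)$.

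First, for \textbf{trivial elements}: given $w \in A^*$, the language $\{w\}$ is a single word, hence belongs to every \vari, so $\{\{w\}\}$ is a \upol{\Cs}-cover of $\{w\} \subseteq \etac\inv(\etac(w))$. Thus $\rho(w) \in \opti{\upol{\Cs}}{\etac\inv(\etac(w)),\rho}$ by Fact~\ref{fct:linclus}, giving $(\etac(w),\rho(w)) \in S$. \textbf{Closure under downset} is immediate: optimal imprints are always downsets of $R$ by definition of $\rho$-\imprint and the $\dclosr$ operator, so $\dclosr S = S$. \textbf{Closure under multiplication} follows from Lemma~\ref{lem:mratmult}: if $(s_1,r_1),(s_2,r_2) \in S$, then $r_i \in \opti{\upol{\Cs}}{\etac\inv(s_i),\rho}$, so $r_1 r_2 \in \opti{\upol{\Cs}}{\etac\inv(s_1),\rho}\cdot\opti{\upol{\Cs}}{\etac\inv(s_2),\rho} \subseteq \opti{\upol{\Cs}}{\etac\inv(s_1)\etac\inv(s_2),\rho}$; since $\etac$ is a morphism, $\etac\inv(s_1)\etac\inv(s_2) \subseteq \etac\inv(s_1s_2)$, and Fact~\ref{fct:linclus} gives $r_1r_2 \in \opti{\upol{\Cs}}{\etac\inv(s_1s_2),\rho}$, i.e. $(s_1s_2,r_1r_2) \in S$.

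The substantive property is \textbf{\upolo-closure}. Suppose $(e_1,f_1),(e_2,f_2) \in S$ with $e_1,e_2$ idempotent and $f_1,f_2$ idempotent in $R$, and $s \in \canc$ with $e_1 \Rord se_2$ and $e_2 \Lord e_1s$. I must show $(e_1se_2, f_1\rho(\etac\inv(s))f_2) \in S$, i.e. $f_1\rho(\etac\inv(s))f_2 \in \opti{\upol{\Cs}}{\etac\inv(e_1se_2),\rho}$. The idea is to exhibit, for any optimal \upol{\Cs}-cover, a marked-concatenation construction realizing this element. Concretely, I would take optimal \upol{\Cs}-covers $\Kb_1$ of $\etac\inv(e_1)$ and $\Kb_2$ of $\etac\inv(e_2)$, and a $K_1 \in \Kb_1$, $K_2 \in \Kb_2$ with $\rho(K_1) \ge f_1$, $\rho(K_2) \ge f_2$ (renaming via idempotency of $f_1,f_2$, replacing $K_i$ by suitable powers so that $\rho(K_i) = f_i$ exactly — using that $f_i$ is idempotent and $\rho_*$ maps into the finite monoid $(R,\cdot)$). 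The Green-relation hypotheses $e_1 \Rord se_2$, $e_2 \Lord e_1s$ are exactly the conditions (after passing through $\etac$, invoking Lemma~\ref{lem:greendet} and Theorem~\ref{thm:caracupol}/Corollary~\ref{cor:cequiv}) that make concatenations of the form $K_1 \cdot (\text{word of }\etac\inv(s)) \cdot K_2$ behave deterministically, so that $K_1 a K_2'$-type products land in \upol{\Cs} and cover the relevant part of $\etac\inv(e_1se_2)$. One then uses Lemma~\ref{lem:mratmult} once more, together with the fact that $\rho(\etac\inv(s)) = \sum_{w \in \etac\inv(s)}\rho(w)$ (niceness) is realized by finitely many words, to conclude $f_1\rho(\etac\inv(s))f_2$ lies in the imprint of an explicit \upol{\Cs}-cover of $\etac\inv(e_1se_2)$, hence in the optimal one by Fact~\ref{fct:linclus} and Lemma~\ref{lem:optunion}.

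The main obstacle I expect is the \upolo-closure step: one must carefully pass from the abstract Green conditions on $\canc$ to genuine (left/right) deterministic marked concatenations of languages drawn from the optimal covers, handling the ``word of $\etac\inv(s)$'' as a marked letter by factoring it as $aL$ or $Lb$ appropriately and checking unambiguity via Lemma~\ref{lem:detuna}, Lemma~\ref{lem:greendet} and the algebraic characterization of \upol{\Cs}. Some bookkeeping with idempotent powers is needed so that the rating-map images are exactly $f_1$, $f_2$ rather than merely above them, and one must make sure the constructed family is actually a \upol{\Cs}-cover of all of $\etac\inv(e_1se_2)$ (or at least that the target element appears in its imprint), which is where closure under disjoint union and the saturation already established (trivial elements, multiplication) get combined. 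The remaining three properties are routine consequences of Fact~\ref{fct:linclus}, Lemma~\ref{lem:optunion} and Lemma~\ref{lem:mratmult}.
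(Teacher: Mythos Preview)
Your argument for \upolo-closure is oriented in the wrong direction. You propose to \emph{construct} a particular \upol{\Cs}-cover of $\etac\inv(e_1se_2)$ (built from pieces $K_1,K_2$ taken from optimal covers of $\etac\inv(e_1)$ and $\etac\inv(e_2)$, concatenated around words of $\etac\inv(s)$) and then show that $f_1\rho(\etac\inv(s))f_2$ lies in \emph{its} imprint. But the optimal imprint is the \emph{smallest} imprint among all \upol{\Cs}-covers; exhibiting one cover whose imprint contains a given element tells you nothing about the optimal one. Neither Fact~\ref{fct:linclus} nor Lemma~\ref{lem:optunion} lets you pass from ``in some cover's imprint'' to ``in the optimal imprint''. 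What you must show is that $f_1\rho(\etac\inv(s))f_2$ lies in the imprint of an \emph{arbitrary} \upol{\Cs}-cover \Kb of $\etac\inv(e_1se_2)$. (The same inversion already appears in your trivial-elements step: the claim that $\{w\}$ belongs to every \vari is false---take $\Cs=\stzer$---and even if it were true, exhibiting $\{\{w\}\}$ as a cover only bounds the optimal imprint from \emph{above}; the correct reason is that \emph{every} cover of $\etac\inv(\etac(w))$ contains some $K\ni w$, whence $\rho(w)\leq\rho(K)$.)

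The paper's proof does exactly this: fix an arbitrary \upol{\Cs}-cover \Kb of $\etac\inv(e_1se_2)$ and a \upol{\Cs}-morphism $\alpha:A^*\to M$ recognizing all $K\in\Kb$. From $f_i\in\opti{\upol{\Cs}}{\etac\inv(e_i),\rho}$ one extracts $t_i\in M$ with $f_i\leq\rho(\alpha\inv(t_i))$ and witnesses $w_i\in\etac\inv(e_i)\cap\alpha\inv(t_i)$. One then forms $w=w_1^{\omega(M)}uw_2^{\omega(M)}$ for $u\in\etac\inv(s)$, finds the $K\in\Kb$ containing $w$, and proves the key inclusion $(\alpha\inv(t_1))^{\omega(M)}\etac\inv(s)(\alpha\inv(t_2))^{\omega(M)}\subseteq K$. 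This last step is where the Green hypotheses $e_1\Rord se_2$, $e_2\Lord e_1s$ and the algebraic characterization of \upol{\Cs}-morphisms (Equation~\eqref{eq:cupol}) are used: they force $g_1\alpha(u)g_2=g_1\alpha(u')g_2$ for any $u'\in\etac\inv(s)$, so all such words share the same $\alpha$-image and hence lie in $K$. Applying $\rho$ to the inclusion gives $f_1\rho(\etac\inv(s))f_2\leq\rho(K)$, as needed. Your sketch contains the right ingredients (idempotent powers, the Green conditions, the algebraic characterization), but they must be deployed against a \emph{given} cover, not to manufacture one. Also note that the statement is for arbitrary \mratms, so you cannot invoke niceness of $\rho$.
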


\begin{proof}
  Recall that \upol{\Cs} is a \vari by Theorem~\ref{thm:comp}. There are four properties to verify. We start with the first three which are standard.  For the trivial elements, consider $w \in A^*$ and let \Kb be an optimal \upol{\Cs}-cover of $\etac\inv(\etac(w))$. Since $w \in \etac\inv(\etac(w))$, there exists $K \in \Kb$ such that $w\in K$. Thus, $\rho(w) \leq \rho(K)$ which yields $\rho(w) \in \prin{\rho}{\Kb} = \opti{\upol{\Cs}}{\etac\inv(\etac(w)),\rho}$. This implies that $(\etac(w),\rho(w)) \in \pupolopti$. Closure under downset is immediate by definition of \imprints. Finally, for closure under multiplication, consider $(s_1,r_1),(s_2,r_2) \in \pupolopti$. We have $r_i \in \opti{\upol{\Cs}}{\etac\inv(s_i),\rho}$ for $i = 1,2$. Since \upol{\Cs} is a \vari, Lemma~\ref{lem:mratmult} yields $r_1r_2 \in \opti{\upol{\Cs}}{\etac\inv(s_1)\etac\inv(s_2),\rho}$. Clearly, $\etac\inv(s_1)\etac\inv(s_2) \subseteq \etac\inv(s_1s_2)$. Thus, Fact~\ref{fct:linclus} yields $r_1r_2 \in \opti{\upol{\Cs}}{\etac\inv(s_1s_2),\rho}$. By definition, this exactly says that $(s_1s_2,r_1r_2) \in  \pupolopti$.

  It remains to prove that \pupolopti satisfies \upolo-closure. Consider two pairs of multiplicative idempotents $(e_1,f_1),(e_2,f_2) \in \pupolopti$ and $s \in \canc$ such that $e_1 \Rord se_2$ and $e_2 \Lord e_1s$. We prove that $(e_1se_2,f_1 \rho(\etac\inv(s)) f_2) \in \pupolopti$. By definition, we have to show that,
  \[
    f_1 \rho(\etac\inv(s)) f_2 \in \opti{\upol{\Cs}}{\etac\inv(e_1se_2),\rho}.
  \]
  By definition, this boils down to proving that given an arbitrary \upol{\Cs}-cover \Kb of $\etac\inv(e_1se_2)$, we have $f_1 \rho(\etac\inv(s)) f_2 \in \prin{\rho}{\Kb}$. We fix \Kb for the proof. Proposition~\ref{prop:genocm} yields a \upol{\Cs}-morphism $\alpha: A^* \to M$ recognizing every language $K \in \Kb$. Let $k = \omega(M)$.

  Let $i \in \{1,2\}$. By hypothesis, we know that $(e_i,f_i) \in \pupolopti$. By definition, this means that $f_i \in \opti{\upol{\Cs}}{\etac\inv(e_i),\rho}$. Clearly, the set $\{\alpha\inv(t) \mid t\in M \text{ and } \alpha\inv(t) \cap \etac\inv(e_i) \neq \emptyset\}$ is a \upol{\Cs}-cover of $\etac\inv(e_i)$. Therefore, since $f_i \in \opti{\upol{\Cs}}{\etac\inv(e_i),\rho}$, we get $t_i \in M$ such that $\etac\inv(e_i) \cap \alpha\inv(t_i) \neq \emptyset$ and $f_i \leq \rho(\alpha\inv(t_i))$. We fix $w_i \in \etac\inv(e_i) \cap \alpha\inv(t_i)$ for the proof. Moreover, we let $u \in \etac\inv(s)$. We define,
  \[
    w = w_1^k u w_2^k.
  \]
  Since $e_1,e_2\in\canc$ are idempotents, we have $\etac(w) = e_1se_2$. Since \Kb is a cover of $\etac\inv(e_1se_2)$, we get $K \in \Kb$ such that $w \in K$. We prove that,
  \begin{equation}\label{eq:upol:eqsound}
    (\alpha\inv(t_1))^k \etac\inv(s) (\alpha\inv(t_2))^k \subseteq K.
  \end{equation}
  Let us first explain how to use~\eqref{eq:upol:eqsound} to conclude the proof. Since  $f_i \leq \rho(\alpha\inv(t_i))$ for $i=1,2$, the inclusion given by~\eqref{eq:upol:eqsound} implies that $f_1^k \rho(\etac\inv(s)) f_2^k \leq \rho(K)$. Moreover, since $f_1,f_2 \in R$ are multiplicative idempotents, this yields $f_1 \rho(\etac\inv(s)) f_2 \leq \rho(K)$. Finally, since $K \in \Kb$, we get $f_1 \rho(\etac\inv(s)) f_2 \in \prin{\rho}{\Kb}$, as desired.

  It remains to prove that~\eqref{eq:upol:eqsound} holds. We fix $w' \in (\alpha\inv(t_1))^k \etac\inv(s) (\alpha\inv(t_2))^k$ for the proof and show that $w' \in K$. Since $K$ is recognized by $\alpha$ and $w \in K$, it suffices to prove that $\alpha(w') = \alpha(w)$. For $i = 1,2$, we write $g_i = t_i^\omega$. By definition of $w$ and of $k=\omega(M)$, we have $\alpha(w) = g_1\alpha(u)g_2$. By hypothesis on $w'$, there exists $u' \in \etac\inv(s)$ such that $\alpha(w') = g_1 \alpha(u') g_2$. Hence, it remains to show that $g_1 \alpha(u)g_2 = g_1 \alpha(u') g_2$. First, we use our hypothesis on $e_1,e_2,s \in \canc$: since $e_1 \Rord se_2$ and $e_2 \Lord e_1s$, there exist $q_1,q_2 \in \canc$ such that $e_1 = se_2q_1$ and $e_2 = q_2e_1s$. Let $p = q_2e_1se_2q_1 \in \canc$. Since $e_1$ and $e_2$ are idempotents, we have,
  \[
    \begin{array}{lllllllll}
      e_1 & = & se_2q_1 & = & se_2e_2e_2q_1 & = & se_2q_2e_1se_2q_1 & = & se_2 p, \\
      e_2 & = & q_2e_1s & = & q_2e_1e_1e_1s & = & q_2e_1se_2q_1e_1s & = & pe_1s. \\
    \end{array}
  \]
  Let $v \in \etac\inv(p)$. Since $\etac(w_i^k) = e_i$ for  $i = 1,2$ and $\etac(u) = \etac(u') = s$, the above yields the equalities $\etac(w_1^k) = \etac(u'w_2^{k}v)$ and $\etac(w_2^{k}) =\etac(vw_1^{k}u)$. Since \etac is the canonical \Cs-morphism and $\alpha(w_i^k) =g_i$ for $i = 1,2$, one  obtain, using Lemma~\ref{lem:cmdiv} and Lemma~\ref{lem:smult}, that $g_1 \canec \alpha(u')g_2\alpha(v)$ and $g_2 \canec \alpha(v)g_1\alpha(u)$. Finally, since $\alpha$ is a \upol{\Cs}-morphism, Theorem~\ref{thm:caracupol} implies that it satisfies~\eqref{eq:cupol}. Since $g_1$ and $g_2$ are idempotents, this yields: $g_1 = g_1\alpha(u')g_2\alpha(v)g_1$ and $g_2 = g_2\alpha(v)g_1\alpha(u)g_2$. It follows that,
  \[
    g_1 \alpha(u) g_2  = g_1\alpha(u')g_2\alpha(v)g_1\alpha(u)g_2 = g_1\alpha(u') g_2.
  \]
  This concludes the proof.
\end{proof}

We turn to the completeness direction in Theorem~\ref{thm:upolopt}.

\begin{prop}[Completeness]\label{prop:maincov}
  Let \Cs be a \vari, $\eta: A^* \to N$ be a \Cs-morphism, $\rho: 2^{A^*} \to R$ be a \mratm and $S \subseteq N \times R$ be \upolo-saturated for $\eta$ and~$\rho$. For every $t \in N$, there exists a \upol{\Cs}-cover $\Kb_t$ of $\eta\inv(t)$ such that $\prin{\rho}{\Kb_t} \subseteq S(t)$.
\end{prop}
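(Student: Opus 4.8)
The plan is to prove this by induction on the same three parameters used in the proof of Theorem~\ref{thm:caracupol}, namely, for each $t \in N$, we will build the cover $\Kb_t$ by a simultaneous induction on (1) the rank of a suitable triple involving $t$, (2) an $\Rrel$-index and (3) an $\Lrel$-index. In fact, it is cleaner to prove a strengthened statement analogous to Lemma~\ref{lem:upol:adetsuf}: for all $t \in N$ and all $s_1,s_2 \in N$, there exists a \upol{\Cs}-cover $\Kb_{s_1,t,s_2}$ of $\eta\inv(t)$ such that $s_1 \cdot \prin{\rho}{\Kb_{s_1,t,s_2}} \cdot s_2 \subseteq S(s_1ts_2)$ (using that $S$ is closed under multiplication and downset to propagate the multiplicative twist), and moreover the cover is ``safe'' in the sense that it refines the partition into $\alpha$-classes for some fixed \upol{\Cs}-morphism $\alpha$ recognizing all its members. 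Taking $s_1 = s_2 = 1_N$ then yields the proposition. The reason for carrying the twist is exactly as in Theorem~\ref{thm:caracupol}: the inductive steps that peel off a deterministic marked concatenation $\eta\inv(y)\,a\,\eta\inv(z)$ introduce extra monoid elements on one side, and one needs the induction hypothesis to absorb them.

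The key steps, in order, are as follows. First I would fix a \upol{\Cs}-morphism $\alpha: A^* \to M$ (via Proposition~\ref{prop:genocm}, or rather its analog stating $\eta$ factors through such an $\alpha$ when \Cs is a \vari), and note that by Theorem~\ref{thm:caracupol}, $\alpha$ satisfies~\eqref{eq:cupol}. The base case is when the relevant triple is $\Jrel$-maximal and both stability conditions hold: then $\Kb_t = \{\eta\inv(t)\}$ works — it is a \Cs-cover, hence a \upol{\Cs}-cover, and one must check $s_1\,\rho(\eta\inv(t))\,s_2 \subseteq S(s_1ts_2)$; for this, one uses the \upolo-closure axiom of $S$ together with the Green-relations computation from Lemma~\ref{lem:sub1}, which produces, for the idempotent images, the algebraic identities needed to match the shape $(e_1se_2, f_1\rho(\eta\inv(s))f_2)$ required by axiom~(4), and the ``trivial elements'' axiom to get $(\eta(w),\rho(w))\in S$ to seed everything. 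The first inductive case ($\Jrel$ drops) replaces $(s_1,t,s_2)$ by $(1_N,t,1_N)$ with strictly smaller rank, so induction directly applies. The second inductive case ($\Jrel$ stable but, say, left stability fails) is the substantial one: one introduces the set $T$ of triples $(y,a,z)$ with $x \Lrel \eta(a)z \Lords z$ and $yz\eta(a) = x$ exactly as in the proof of Theorem~\ref{thm:caracupol}, obtains by induction a safe cover $\Vb_z$ of $\eta\inv(z)$ (rank strictly smaller), then for each $V \in \Vb_z$ with common $\alpha$-value $r_V$ obtains by induction on the third parameter a cover $\Ub_{(y,a,z),V}$ of $\eta\inv(y)$ twisted by $\alpha(a)r_V s_2$ on the right, and finally sets $\Kb_t = \bigcup_{(y,a,z)\in T}\{UaV : V \in \Vb_z,\ U \in \Ub_{(y,a,z),V}\}$. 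One checks $\Kb_t$ is a partition of $\eta\inv(t)$ into right-\Cs-deterministic marked concatenations (so the members lie in $\wadet{\Cs}\subseteq\upol{\Cs}$, via Lemma~\ref{lem:greendet}), and that $\prin{\rho}{\Kb_t}$ lands in $S(t)$ — here $\prin{\rho}{UaV} \subseteq \dclosr(\prin{\rho}{\Ub}\cdot\rho(a)\cdot\prin{\rho}{\Vb})$ combined with closure of $S$ under multiplication and downset, plus $\rho(a) \le \rho(\eta\inv(\eta(a)))$ and the twisted induction hypotheses, does the bookkeeping.

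The main obstacle I expect is the bookkeeping in the second inductive case: making the twisted statement precise enough that the three induction parameters genuinely decrease in the right lexicographic order while the multiplicative twists compose correctly through $\rho$-imprints. Concretely, one must verify that $\prin{\rho}{UaV}$ twisted by $s_1$ on the left and $s_2$ on the right is contained in $S(s_1 t s_2)$ using only the four saturation axioms, which requires tracking that the twist $\alpha(a)r_V s_2$ used for $\Ub$ interacts with $\prin{\rho}{\Vb_z} \subseteq_{\text{twisted}} S$ through exactly one application of closure under multiplication. A secondary subtlety is that $S$ is only assumed \upolo-saturated, not equal to a pointed optimal imprint, so one cannot invoke optimality or Lemma~\ref{lem:mratmult} freely — every inclusion into $S$ must be produced from the axioms. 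Finally, one should double-check the base case identity: Lemma~\ref{lem:sub1}'s argument shows $spt \Rrel s$ and $spt \Lrel t$ for the $M$-images, and translating this to produce an instance of axiom~(4) with idempotents $g_1 = \alpha(w_1)^\omega$, $g_2 = \alpha(w_2)^\omega$ and connecting element $\alpha(u)$ requires the same manipulation as at the end of Proposition~\ref{prop:upsound}; reusing that computation verbatim is the safest route.
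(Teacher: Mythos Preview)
Your overall architecture is right, but the strengthened statement you carry is too weak, and this is a genuine gap. You propose to carry only $s_1,s_2 \in N$ and ask for $s_1 \cdot \prin{\rho}{\Kb} \cdot s_2 \subseteq S(s_1ts_2)$. As written this does not even typecheck: $s_1,s_2$ live in $N$ while the elements of $\prin{\rho}{\Kb}$ live in $R$, and there is no multiplication $N \times R \to R$ available. More importantly, even if you patch the types, carrying only $N$-data on the sides cannot feed the \upolo-closure axiom in the base case: axiom~(4) requires \emph{pairs} of idempotents $(e_1,f_1),(e_2,f_2) \in S$, and the $R$-components $f_1,f_2$ must come from somewhere. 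In the paper's proof the strengthened lemma carries full pairs $(x,p),(y,q) \in S \subseteq N \times R$, the second and third induction parameters are the $\Rrel$-index of $(x,p)$ and the $\Lrel$-index of $(y,q)$ \emph{inside the monoid $S$}, and in the base case one takes suitable idempotent powers of these pairs to produce $(e_i,f_i)$.

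The auxiliary \upol{\Cs}-morphism $\alpha:A^*\to M$ you introduce is a red herring transplanted from the proof of Theorem~\ref{thm:caracupol}. There, $\alpha$ was the object under study and $\eta$ was derived from it; here the data are $\eta$, $\rho$, and $S$ only. The role your ``$\alpha$-safety'' is meant to play---giving each $V \in \Vb_z$ a common value $r_V$ so the right twist in the second inductive case is well-defined---is accomplished directly: the inner induction gives $(t_2,\rho(V)) \in S$ for each $V$, and the recursive call on $\eta\inv(t_1)$ is launched with the right pair replaced by $(\eta(a)t_2\,y,\ \rho(aV)\,q) \in S$. The ``common value'' is $\rho(V)$ itself. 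Once you carry pairs in $S$, the checks that the parameters decrease and that closure under multiplication and downset suffice to push $(xty, p\,\rho(UaV)\,q)$ into $S$ go through exactly along the lines you sketch; it is only the shape of the strengthened statement that needs fixing.
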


\begin{proof}
  We fix \Cs, $\eta$, $\rho$ and $S$ as in the statement. Since $S$ is \upolo-saturated, it is closed under multiplication, which implies that $S$ is a monoid for the componentwise multiplication (the identity element is the trivial element $(1_N,1_R)=(\eta(\veps),\rho(\veps))$). The proposition is a corollary of the following lemma, which we prove by induction.

  \begin{lem}\label{lem:upol:maincov}
    Let $(x,p),(y,q) \in S$ and $t \in N$. There exists a \upol{\Cs}-partition \Kb of $\eta\inv(t)$ such that $(xty,p\rho(K)q) \in S$ for every $K \in \Kb$.
  \end{lem}

  We first apply the lemma to complete the proof of Proposition~\ref{prop:maincov}. We apply it for $(x,p) = (y,q) = (1_{N},1_R) \in S$. For every $t \in N$, this yields a \upol{\Cs}-partition $\Kb_t$ of $\eta\inv(t)$ such that $(t,\rho(K)) \in S$ for every $K \in \Kb_t$. Since $S$ is closed under downset, the fact that $(t,\rho(K)) \in S$ for every $K \in \Kb_t$ implies that $\prin{\rho}{\Kb_t} \subseteq S(t)$, as desired.

  \smallskip

  It remains to prove Lemma~\ref{lem:upol:maincov}. The proof is reminiscent of that of Lemma~\ref{lem:upol:adetsuf}. Let $(x,p),(y,q) \in S$ and $t \in N$. We build a \upol{\Cs}-partition \Kb of $\eta\inv(t)$ such that $(xty,p\rho(K)q) \in S$ for every $K \in \Kb$ by induction on the following three parameters listed by order of importance (they depend on the Green relations of the finite monoids $N$ and~$S$).
  \begin{enumerate}
    \item The \emph{rank of $xty \in N$}: the number of elements $t' \in N$ such that $xty \Jord t'$.
    \item The \emph{\Rrel-index of $(x,p)$}: the number of pairs $(x',p') \in S$ such that $(x',p') \Rord (x,p)$.
    \item The \emph{\Lrel-index of $(y,q)$}: the number of pairs $(y',q') \in S$ such that $(y',q') \Lord (y,q)$.
  \end{enumerate}
  We distinguish three cases depending on whether $xty \Jrel t$ and on which of the two following properties of $t$, $(x,p)$ and $(y,q)$ are fulfilled:
  \begin{itemize}
    \item $(x,p)$ is \emph{right $t$-stable} when there is $(z,r) \in S$ such that $(xz,pr) \Rrel (x,p)$ and $z \Rrel t$.
    \item $(y,q)$ is \emph{left $t$-stable} when there is $(z,r) \in S$ such that $(zy,rq) \Lrel(y,q)$ and $z \Lrel t$.
  \end{itemize}
  In the base case, we assume that all three properties holds and conclude directly. Otherwise, we consider two distinct inductive cases. In the first one, we assume that $xty \Jords t$ and in the second one, that either $(x,p)$ is not right $t$-stable or $(y,q)$ is not left $t$-stable.

\subparagraph{Base case: $xty \Jrel t$, $(x,p)$ is right $t$-stable and $(y,q)$ is left $t$-stable.} We let $\Kb = \{\eta\inv(t)\}$. Clearly, this is a \upol{\Cs}-partition of $\eta\inv(t)$ since $\eta$ is a \Cs-morphism. Hence, it remains to prove that $(xty,p \rho(\eta\inv(t))q) \in S$. We first use our hypothesis to prove the following fact.

  \begin{fct}\label{fct:upol:stableupol}
    There exist two pairs of multiplicative idempotents $(e_1,f_1),(e_2,f_2) \in S$ satisfying the following conditions: $(xe_1,pf_1) = (x,p)$, $(e_2 y,f_2 q) = (y,q)$, $e_1 \Rord te_2$ and $e_2 \Lord e_1t$.
  \end{fct}

  \begin{proof}
    We exhibit $(e_1,f_1)$ using the right $t$-stability of $(x,p)$. Since $(x,p)$ is right $t$-stable, there exists  $(z,r) \in S$ such that $(xz,pr) \Rrel (x,p)$ and $z \Rrel t$. Hence, we obtain $(z',r') \in S$ such that $(x,p) = (xzz',prr')$. Let $k=\omega(S) \geq 1$. Then, $(e_1,f_1) =  ((zz')^k,(rr')^k)$  is a pair of multiplicative idempotents. It is now immediate that $(xe_1,pf_1) = (x,p)$. The existence of $(e_2,f_2)$ is proved symmetrically using the left $t$-stability of $(y,q)$. We now show that $e_1 \Rord te_2$ (the proof that $e_{2}\Lord e_1t$ is symmetrical and left to the reader). Since $z \Rrel t$, it is clear that $e_1 = (zz')^k \Rord t$. Moreover, since $xty=xe_1te_2y \Jrel t$ by hypothesis, we have $te_2 \Jrel t$. Since it is clear that $te_2 \Rord t$, Lemma~\ref{lem:jlr} yields $te_2 \Rrel t$. Altogether, we get $e_1 \Rord te_2$.
  \end{proof}

  We let $(e_1,f_1),(e_2,f_2) \in S$ be given by Fact~\ref{fct:upol:stableupol}. Since $(e_1,f_1),(e_2,f_2) \in S$ are multiplicative idempotents such that $e_1 \Rord te_2$ and $e_2 \Lord e_1t$, and $S$ is \upolo-saturated, we get from \upolo-closure that $(e_1te_2, f_1\rho(\eta\inv(t))f_2) \in S$. Finally, since we have $(x,p),(y,q) \in S$, $(xe_1,pf_1) = (x,p)$ and $(e_2 y,f_2 q) = (y,q)$, we obtain from closure under multiplication, that $(x,p)(e_1te_2, f_1\rho(\eta\inv(t))f_2)(y,q)=(xty,p\rho(\eta\inv(t))q) \in S$, which concludes this case.

\subparagraph{First inductive case: $xty \Jords t$.} In this case, the rank of $t = 1_{N}t1_{N}$ is strictly smaller than the one of $xty$. Consequently, induction on our first parameter in Lemma~\ref{lem:upol:maincov} (applied for $(x,q) = (y,r) = (1_N,1_R) \in S$) yields a \upol{\Cs}-partition \Kb of $\eta\inv(t)$ such that $(t,\rho(K)) \in S$ for every $K \in \Kb$. Since $(x,q),(y,r) \in S$ and $S$ is closed under multiplication, it then follows that $(xty,q\rho(K)r) \in S$ for every $K \in \Kb$, concluding this~case.

\subparagraph{Second inductive case: either $(x,p)$ is not right $t$-stable or $(y,q)$ is not left $t$-stable.} There are two symmetrical cases depending on which property holds. We handle the case where $(y,q)$ is not left $t$-stable and leave the other, which follows by symmetry, to the reader. Let $T$ be the set of all triples $(t_1,a,t_2) \in  N\times A \times N$ such that $t_1 \eta(a) t_2 = t$ and $t \Lrel \eta(a)t_2 \Lords t_2$. In the next fact, we use induction to build \upol{\Cs}-partitions of $\eta\inv(t_2)$ and $\eta\inv(t_1)$ for every triple $(t_1,a,t_2) \in T$. We shall then combine them to construct the desired \upol\Cs-partition \Kb of $\eta\inv(t)$.

  \begin{fct}\label{fct:upol:cupolind}
    Consider a triple $(t_1,a,t_2) \in T$. There exists a \upol{\Cs}-partition $\Vb_{t_2}$ of $\eta\inv(t_2)$ such that $(t_2,\rho(V)) \in S$ for every $V \in \Vb_{t_2}$. Moreover, for every $V \in \Vb_{t_2}$, there exists a \upol{\Cs}-partition $\Ub_{(t_1,a,t_2),V}$ of $\eta\inv(t_1)$ such that $(xty,p\rho(UaV)q) \in S$ for every $U \in \Ub_{(t_1,a,t_2),V}$.
  \end{fct}

  \begin{proof}
    By definition of $T$, we know that $t\Lords t_2$. This implies $t\Jords t_2$ by Lemma~\ref{lem:jlr} and since $xty \Jord t$, we get $xty \Jords t_2$. Hence, the rank of $t_2 = 1_{N}t_{2}1_{N}$ is strictly smaller than the one of $xty$. Therefore, by induction on our first and main parameter in Lemma~\ref{lem:upol:maincov} (applied for $(x,p) = (y,q) = (1_N,1_R)$) we obtain a \upol{\Cs}-partition $\Vb_{t_2}$ of $\eta\inv(t_2)$ such that $(t_2,\rho(V)) \in S$ for every $V \in \Vb_{t_2}$.

    We now fix $V \in \Vb_{t_2}$ and build $\Ub_{(t_1,a,t_2),V}$. Since $(t_2,\rho(V)) \in S$ by definition of $\Vb_{t_2}$ and $(\eta(a),\rho(a)) \in S$ (this is a trivial element), we have $(\eta(a)t_2,\rho(aV))  \in S$. Hence, since $t \Lrel \eta(a)t_2$ by definition of $T$, the hypothesis that $(y,q)$ is not left $t$-stable yields $(\eta(a)t_2y,\rho(aV)q) \Lords (y,q)$. Consequently, the \Lrel-index of $(\eta(a)t_2y,\rho(aV)q)$ is strictly smaller than the one of $(y,q)$. We may apply induction to  $(x,p)$, $(\eta(a)t_2y,\rho(aV)q)$ and $t_1$. Indeed, while the third parameter has decreased, the first has not increased since $xt_1\eta(a)t_2y = xty$ by definition of $T$. The second one has not increased as well since we did not change the first pair $(x,p)$. Hence, induction on our third parameter in Lemma~\ref{lem:upol:maincov} applied when $(y,q)$ has been replaced by $(\eta(a)t_2y,\rho(aV)q)$ and $t$ by $t_1$ yields a \upol{\Cs}-partition $\Ub_{(t_1,a,t_2),V}$ of $\eta\inv(t_1)$ such that $(xt_1\eta(a)t_2y,p\rho(U)\rho(aV)q) \in S$ for every $U \in \Ub_{(t_1,a,t_2),V}$.  Finally, since $xt_1\eta(a)t_2y = xty$, we have $(xty,p\rho(UaV)q) \in S$ for every $U \in \Ub_{(t_1,a,t_2),V}$.
  \end{proof}

  We are ready to construct our \upol{\Cs}-partition \Kb of $\eta\inv(t)$. We define,
  \[
    \Kb = \bigcup_{(t_1,a,t_2) \in T} \big\{UaV \mid V \in \Vb_{t_2} \text{ and } U \in \Ub_{(t_1,a,t_2),V}\big\}.
  \]
  It is immediate from the definition and Fact~\ref{fct:upol:cupolind} that $(xty,p\rho(K)q) \in S$ for every $K \in \Kb$. It remains to verify that \Kb is a \upol{\Cs}-partition of $\eta\inv(t)$. We first prove that it is a partition of $\eta\inv(t)$: let $w \in \eta\inv(t)$, we show that there is a \emph{unique} $K \in \Kb$ such that~$w \in K$. Let $v' \in A^*$ be the least suffix of $w$ such that $t = \eta(w) \Lrel \eta(v')$ and let $u\in A^*$ such that $w = uv'$. Observe that $v' \neq \veps$ (otherwise, we would have $t \Lrel 1_N$, contradicting the hypothesis that $(y,q)$ is not left $t$-stable, since $(1_Ny,1_Rq) \Lrel (y,q)$). Hence, there exists $v\in A^*$ and $a \in A$ such that $v' = av$. Let $t_1 = \eta(u)$ and $t_2 = \eta(v)$. By definition $t_1\eta(a)t_2 = \eta(uav) = \eta(w) = t$. Moreover, $t \Lrel \eta(v') = \eta(a)t_2 \Lords \eta(v)$ by definition of $v'$ as the least prefix of $w$ such that $t = \eta(w) \Lrel \eta(v')$. Hence, $(t_1,a,t_2) \in T$ and we may consider the partition $\Vb_{t_2}$ of $\eta\inv(t_2)$. In particular, there exists $V \in \Vb_{t_2}$ such that $v\in V$ since~$t_2=\eta(v)$. Moreover, since $\Ub_{(t_1,a,t_2),V}$ is a partition of $\eta\inv(t_1)$ and $t_1=\eta(u)$, there exists $U \in \Ub_{(t_1,a,t_2),V}$ such that $u \in U$. Hence, $w = uav \in UaV$ and $UaV \in \Kb$ is the unique language of \Kb containing $w$ by definition.

  Finally, let us verify that for every $K \in \Kb$, we have $K \subseteq \eta\inv(t)$ and $K \in \upol{\Cs}$ which completes the proof that \Kb is a \upol{\Cs}-partition of $\eta\inv(t)$. Let $K\in\Kb$. By definition, $K = UaV$ where $V \in \Vb_{t_2}$ and $U \in \Ub_{(t_1,a,t_2),V}$ for some $(t_1,a,t_2) \in T$. Since $\Vb_{t_2}$ and $\Ub_{(t_1,a,t_2),V}$ are \upol{\Cs}-partitions of $\eta\inv(t_2)$ and $\eta\inv(t_1)$ respectively, we know that $U,V \in \upol{\Cs}$, $U \subseteq\eta\inv(t_1)$ and $V \subseteq \eta\inv(t_2)$. It follows that $K \subseteq \eta\inv(t_1)a\eta\inv(t_2)$. Hence, since $t_1\eta(a)t_2 = t$ by definition of $T$, we get $K \subseteq \eta\inv(t)$. Moreover, $U,V \in \upol{\Cs}$ and $UaV$ is right deterministic by Lemma~\ref{lem:greendet} since $V \subseteq  \eta\inv(t_2)$ and $t_2\eta(a) \Lords t_2$ (as $(t_1,a,t_2) \in T$). This concludes the proof.
\end{proof}

We are now ready to prove Theorem~\ref{thm:upolopt}.

\begin{proof}[Proof of Theorem~\ref{thm:upolopt}]
  Let \Cs be a finite \vari and $\rho: 2^{A^*} \to R$ be a \mratm. By Proposition~\ref{prop:upsound}, $\upolopti \subseteq \canc \times R$ is \upolo-saturated for $\etac$ and $\rho$. It remains to show that it is the least such set. Thus, let $S \subseteq \canc \times R$ be \upolo-saturated for \etac and $\rho$. We show that $\pupolopti \subseteq S$. Clearly, it suffices to show that $\pupolopti(t)\subseteq S(t)$ for every $t \in \canc$. We fix $t$ for the proof.

  Since $\etac$ is a \Cs-morphism, Proposition~\ref{prop:maincov} yields a \upol{\Cs}-cover \Kb of $\etac\inv(t)$ such that $\prin{\rho}{\Kb_t} \subseteq S(t)$. Moreover, recall that $\pupolopti(t) = \opti{\upol{\Cs}}{\etac\inv(t),\rho}$. Therefore, since \Kb is a \upol{\Cs}-cover of $\etac\inv(t)$, we get $\pupolopti(t) \subseteq \prin{\rho}{\Kb}$. Altogether, we obtain $\pupolopti(t)\subseteq S(t)$, as desired.
\end{proof}

\section{Unary temporal logic}
\label{sec:logic}
We introduce \emph{unary temporal logic}. We use a definition that generalizes the standard one: for each class \Cs, we define a particular variant of unary temporal logic that we denote by \tla{\Cs} and associate a class \tlc{\Cs} to it. The standard definitions of unary temporal logic found in the literature correspond either to \tlc{\stzer} or to \tlc{\stzer^+}.

We prove two key results in the section. First, we establish a connection with two-variable first-order logic. For every Boolean algebra \Cs, we prove that $\tlc{\Cs}=\fod(\infsigc)$. This result generalizes a well-known theorem by Etessami, Vardi and Wilke~\cite{evwutl}. We also compare these logical classes to those built with unambiguous polynomial closure. More precisely, we prove the given an arbitrary \vari \Cs, we have the inclusion $\upol{\bpol{\Cs}} \subseteq \tlc{\Cs}$. While it is strict in general, we shall prove in the next section that when \Cs is a \vari of group languages or a \wsuit extension thereof, the converse inclusion holds as well.

\subsection{Definition and properties}

We actually define two distinct sets of temporal formulas, which we denote by \tlxs and \tls. Then, we explain how these sets can be restricted depending on some class of languages \Cs. This yields two new classes built from \Cs, which we write \tlxc{\Cs} and \tlc{\Cs}. Let us first define the \tlxs formulas, which are more general.

A \tlxs formula is built from atomic formulas using Boolean connectives and temporal operators. The atomic formulas are $\top$, $\bot$, $min$, $max$ and ``$a$'' for every letter $a \in A$. All Boolean connectives are allowed: if $\psi_1$ and $\psi_2$ are \tlxs formulas, then so are $(\psi_{1} \vee \psi_{2})$, $(\psi_{1} \wedge \psi_{2})$ and $(\neg \psi_1)$. There are two kinds of \emph{temporal operators}, which are both unary. First, one may use $\textup{X}$ and $\textup{Y}$: if $\psi$ is a \tlxs formula, then so are $(\nex{\psi})$ and $(\nexm{\psi}$). Moreover, we associate \emph{two temporal operators} to every language $L \subseteq A^*$, which we write $\textup{F}_L$ and $\textup{P}_L$: if $\psi$ is a \tlxs formula, then so are $(\finallyl{\psi})$ and $(\finallyml{\psi})$. For the sake of improved readability, we omit parentheses when there is no ambiguity. Finally, we define \tls as a syntactical restriction: a \tls formula is a \tlxs formula that does \emph{not} contain $\textup{X}$, nor $\textup{Y}$.

We now turn to the semantics. Since \tls formulas are particular \tlxs formulas, it suffices to define the semantics of the latter. Evaluating a \tlxs formula $\varphi$ requires a word $w \in A^*$ and a position $i \in \pos{w}$. We use structural induction on $\varphi$ to define what it means for \emph{$(w,i)$ to satisfy $\varphi$}. We denote this property by $w,i \models \varphi$:
\begin{itemize}
  \item {\bf Atomic formulas:} $w,i \models \top$ always holds and $w,i \models \bot$ never holds. Additionally, for every $\ell \in A \cup \{min,max\}$, $w,i \models \ell$ holds when $\ell = \wpos{w}{i}$.
  \item {\bf Disjunction:} $w,i \models \psi_1 \vee \psi_2$ when $w,i \models \psi_1$ or $w,i \models \psi_2$.
  \item {\bf Conjunction:} $w,i \models \psi_1 \wedge \psi_2$ when $w,i \models \psi_1$ and $w,i \models \psi_2$.
  \item {\bf Negation:} $w,i \models \neg \psi$ when $w,i \models \psi$ does not hold.
  \item {\bf Next:} $w,i \models \nex{\varphi}$ when $i+1$ is a position of $w$ and $w,i+1 \models \varphi$.
  \item {\bf Preceding:} $w,i \models \nexm{\varphi}$ when $i-1$ is a position of $w$ and  $w,i-1 \models \varphi$.
  \item {\bf Finally:} for $L \subseteq A^*$, we let $w,i \models \finallyl{\psi}$ when there exists a position $j > i$ of $w$ such that $w,j \models \psi$ and $\infix{w}{i}{j} \in L$.
  \item {\bf Previously:} for $L \subseteq A^*$, we let $w,i \models \finallyml{\psi}$ when there exists a position $j < i$ of $w$ such that $w,j \models \psi$ and $\infix{w}{j}{i} \in L$.
\end{itemize}
When no distinguished position is specified, we evaluate formulas at the \emph{leftmost unlabeled position}. More precisely, given a \tlxs formula $\varphi$ and a word $w\in A^*$, we write $w\models\varphi$  and say that \emph{$w$ satisfies $\varphi$} if and only if $w,0 \models \varphi$. Finally, the \emph{language defined by $\varphi$} is $L(\varphi) = \{w \in A^* \mid w \models \varphi\}$. Of course, considering \emph{all} formulas is not really interesting, as every language $L$ is defined by ``$\finallyl{\mathit{max}}$''.

Consider a class \Cs. We first explain how \Cs can be used to restrict the sets of \tls and \tlxs formulas. A \tla{\Cs} (resp. \tlxa{\Cs}) formula  is a \tls (resp. \tlxs) formula $\varphi$  such that every temporal operator $\textup{F}_L$ or $\textup{P}_L$ occurring in $\varphi$ satisfies $L \in \Cs$. With this definition in hand, we associate two classes to \Cs, which we denote by \tlc{\Cs} and \tlxc{\Cs}. They consist of all languages that can be defined by a \tla{\Cs} and a \tlxa{\Cs} formula, respectively. Observe that by definition, Boolean connectives can be used freely in \tls and \tlxs formulas, Hence, it is immediate that the two associated classes are Boolean algebras. They both contain \Cs: a language $L\in\Cs$ is defined by the \tla{\Cs} formula ``$\finallyl{\mathit{max}}$''.  One may also prove that if \Cs is a \vari, then so are \tlc{\Cs} and \tlxc{\Cs}. Yet, we shall not use this property.

\begin{rem} \label{rem:parcase}
  It is immediate that for an arbitrary \tlxs formula $\varphi$, $w \in A^*$ and $i \in \pos{w}$, we have $w,i \models \finallyp{A^*}{\varphi}$ (resp. $w,i \models \finallymp{A^*}{\varphi}$) if and only if there exists $j \in \pos{w}$ such that $i < j$ (resp. $j < i$) and $w,j \models \varphi$. Hence, $\textup{F}_{A^*}$ and $\textup{P}_{A^*}$ correspond to standard operators in unary temporal logic, usually denoted by $\textup{F}$ and $\textup{F}\inv$. In particular for the input class $\stzer = \{\emptyset,A^*\}$, the classes \tlc{\stzer} and \tlxc{\stzer} correspond to the standard variants of unary temporal logic, which are often denoted by $\textup{F}+\textup{F}\inv$ and  $\textup{F}+ \textup{X}+\textup{F}\inv+\textup{X}\inv$. For the sake of consistency, we write  $\textup{P}$ (for ``past'') instead of $\textup{F}\inv$ and $\textup{Y}$ (for ``yesterday'') instead of~$\textup{X}\inv$.
\end{rem}

We now prove that when \Cs is a \vari, the operator  $\Cs \mapsto \tlc{\Cs}$ is ``more fundamental'' than $\Cs \mapsto \tlxc{\Cs}$.  More precisely, we show that if \Cs is a \vari, then $\tlxc{\Cs} = \tlc{\Cs^+}$. The takeaway is that while considering \tlc{\Cs} and \tlxc{\Cs} independently is natural when presenting statements, it suffices to consider \tlc{\Cs} in proof arguments.

\begin{lem} \label{lem:utlx}
  Let \Cs be a \vari. Then, $\tlxc{\Cs} = \tlc{\Cs^+}$.
\end{lem}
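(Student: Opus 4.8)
The plan is to prove the two inclusions $\tlxc{\Cs} \subseteq \tlc{\Cs^+}$ and $\tlc{\Cs^+} \subseteq \tlxc{\Cs}$ by translating formulas in both directions, exploiting the fact that the successor/predecessor operators $\textup{X}$ and $\textup{Y}$ are exactly the temporal operators $\textup{F}_{\{\veps\}}$ and $\textup{P}_{\{\veps\}}$, and that $\{\veps\} \in \Cs^+$ by Lemma~\ref{lem:wsuit}.

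For the inclusion $\tlxc{\Cs} \subseteq \tlc{\Cs^+}$, I would take a \tlxa{\Cs} formula $\varphi$ and produce an equivalent \tla{\Cs^+} formula by structural induction. The only non-Boolean, non-$\textup{F}_L$/$\textup{P}_L$ operators are $\textup{X}$ and $\textup{Y}$. The key observation is that for every word $w$ and position $i \in \pos{w}$, we have $w,i \models \nex{\psi}$ if and only if $w,i \models \finallyp{\{\veps\}}{\psi}$ (position $j$ with $j > i$ and $\infix{w}{i}{j} = \veps$ forces $j = i+1$), and symmetrically $w,i \models \nexm{\psi}$ iff $w,i \models \finallymp{\{\veps\}}{\psi}$. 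Since $\{\veps\} \in \Cs \subseteq \Cs^+$ — wait, $\{\veps\}$ need not be in \Cs, but it is in $\Cs^+$ by Lemma~\ref{lem:wsuit} — replacing every occurrence of $\textup{X}$ by $\textup{F}_{\{\veps\}}$ and every $\textup{Y}$ by $\textup{P}_{\{\veps\}}$, and leaving the operators $\textup{F}_L, \textup{P}_L$ with $L \in \Cs \subseteq \Cs^+$ untouched, yields a \tla{\Cs^+} formula defining the same language. An easy induction on the formula (carrying the statement ``$w,i \models \varphi \Leftrightarrow w,i \models \varphi'$ for all $w,i$'') completes this direction.

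For the converse $\tlc{\Cs^+} \subseteq \tlxc{\Cs}$, I would take a \tla{\Cs^+} formula and show each temporal operator $\textup{F}_K$ (resp.\ $\textup{P}_K$) with $K \in \Cs^+$ can be simulated in \tlxa{\Cs}. By definition of $\Cs^+$, for such $K$ there is $L \in \Cs$ with either $K = \{\veps\} \cup L$ or $K = A^+ \cap L$. In the first case, $w,i \models \finallyp{K}{\psi}$ holds iff $w,i+1 \models \psi$ (the empty-infix case, i.e.\ $j=i+1$) or there is $j > i$ with $\infix{w}{i}{j} \in L$ and $w,j \models \psi$; so $\textup{F}_K \psi$ is equivalent to $\nex{\psi} \vee \finallyp{L}{\psi}$. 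Hmm, there is a subtlety: if $\veps \in L$ as well, then $\finallyp{L}{\psi}$ already covers $j=i+1$, so the disjunction is harmless; if $\veps \notin L$ it is needed. In the second case $K = A^+ \cap L$, we have $w,i \models \finallyp{K}{\psi}$ iff there is $j>i$ with $\infix{w}{i}{j} \in L$, $\infix{w}{i}{j} \neq \veps$ (i.e.\ $j \neq i+1$) and $w,j \models \psi$; this is equivalent to $\finallyp{L}{\psi} \wedge \neg(\textup{exactly the } j=i+1 \text{ witness})$ — more cleanly, $\finallyp{L}{(\psi \wedge \textbf{something})}$ does not quite work since we need to exclude only the immediate successor, so instead I would write it as $(\finallyp{L}{\psi})$ restricted away from $j=i+1$, using $\nex{\nex{\finallyp{A^*}{\cdots}}}$-style rewriting, or more simply observe that $\textup{F}_K\psi$ is equivalent to: ``$\nex{(\finallyp{L'}{\psi})}$'' where $L'$ shifts appropriately — this needs care. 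The cleanest route is: $w,i \models \textup{F}_K\psi$ iff $\exists j > i+1$ ... or carefully handle via $L = (\{\veps\}\text{-part}) \cup (A^+\text{-part})$ splitting; I will split $L$ itself (a language in \Cs, hence in a \vari) into $L \cap \{\veps\}$ and $L \cap A^+$, both of which we can address, using that \tlxc{\Cs} is a Boolean algebra so disjunctions are free. Replacing the atomic temporal operators bottom-up gives an equivalent \tlxa{\Cs} formula.

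The main obstacle is the second direction, specifically handling $K = A^+ \cap L$: one must simulate ``infix in $L$ but nonempty'' using only $\textup{F}_L$ (which does allow the empty infix when $\veps \in L$) together with $\textup{X}$. The trick is that $w,i \models \textup{F}_{A^+ \cap L}\psi$ iff either ($\veps \notin L$ and $w,i \models \textup{F}_L\psi$) or ($\veps \in L$ and there is $j > i+1$ with $\infix{w}{i}{j}\in L$ and $w,j\models\psi$), and the latter existential can be expressed by moving to the successor first: it equals $w,i \models \nex{\textup{F}_{L''}\psi}$ where $L'' = \{v \mid av \in L \text{ for the letter } a = \wpos{w}{i+1}\}$ — but $L''$ depends on the letter, so one takes a finite disjunction over $a \in A$: $\bigvee_{a\in A}\bigl(\nex{(a \wedge \textup{F}_{a^{-1}L}\psi)}\bigr)$, and each $a^{-1}L$ lies in \Cs by quotient-closure of the \vari \Cs. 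I would record this quotient computation carefully, noting it is exactly where closure under quotients of \Cs is used, and then conclude by the routine structural induction.
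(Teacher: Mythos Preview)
Your proposal is correct and follows essentially the same approach as the paper: replace $\textup{X},\textup{Y}$ by $\textup{F}_{\{\veps\}},\textup{P}_{\{\veps\}}$ for the easy inclusion, and for the converse handle $K=\{\veps\}\cup L$ via $(\nex{\psi})\vee\finallyl{\psi}$ and $K=A^+\cap L$ via $\nex{\bigl(\bigvee_{a\in A}(a\wedge\finallyp{a^{-1}L}{\psi})\bigr)}$, using quotient-closure of \Cs. Your intermediate case-split on whether $\veps\in L$ is unnecessary (the quotient formula works uniformly), but the final formulation you reach is exactly the paper's.
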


\newcommand{\bfor}[1]{\ensuremath{\langle #1\rangle}\xspace}

\begin{proof}
  We start with $\tlxc{\Cs} \subseteq \tlc{\Cs^+}$. By definition, it suffices to show that $\textup{X}$ and $\textup{Y}$ can be expressed with the operators available in \tla{\Cs^+} formulas. This is immediate since $\{\veps\} \in \Cs^+$ and the operators $\textup{F}_{\{\veps\}}$ and $\textup{P}_{\{\veps\}}$ have the same semantics as $\textup{X}$ and $\textup{Y}\inv$.

  We turn to the inclusion $\tlc{\Cs^+} \subseteq \tlxc{\Cs}$. Given an arbitrary \tla{\Cs^+} formula $\varphi$, we explain how to inductively construct an equivalent \tlxa{\Cs} formula $\bfor{\varphi}$ (\emph{i.e.}, for every word $w \in A^*$ and every position $i \in \pos{w}$, we have $w,i \models \varphi \Leftrightarrow w,i \models \bfor{\varphi}$). By definition, this yields $\tlc{\Cs^+} \subseteq \tlxc{\Cs}$, as desired. If $\varphi$ is atomic, it is already a \tlxa{\Cs} formula and we may define $\bfor{\varphi} := \varphi$. Boolean combinations are handled in the natural way. By definition of \tla{\Cs^+} formulas, it remains to handle the case where $\varphi$ is of the form \finallyl{\psi} or \finallyml{\psi} for some $L \in \Cs^+$. By symmetry, we only consider the case where $\varphi := \finallyl{\psi}$. By definition of $\Cs^+$, there exists $K \in \Cs$ such that either $L = \{\veps\} \cup K$ or $L=A^+\cap K$. In the former case, it suffices to define $\bfor{\varphi} := (\nex{\bfor\psi}) \vee (\finallyp{K}{\bfor\psi})$, which is a \tlxa{\Cs} formula by definition. Otherwise, we have $L = A^+ \cap K$. Since \Cs is a \vari, we know that $a\inv K \in \Cs$ for every $a \in A$. Hence, we may define $\bfor{\varphi}$ as the following \tlxa{\Cs} formula:
  \[
    \bfor{\varphi} := \nex{\Big(\bigvee_{a \in A}\left(a \wedge \finallyp{a\inv K}{\bfor\psi}\right)\Big)}.
  \]
  This concludes the proof.
\end{proof}

We complete the definition with a property of the classes \tlc{\Cs} (when \Cs is a \vari) which we shall need in Section~\ref{sec:logcar} to establish the correspondence with unambiguous polynomial closure. This involves quite a bit of work as we require some machinery to present it. First, we define equivalence relations that we shall use to formulate the property.

\subparagraph{Canonical equivalences.} We start with some terminology needed for the definition. Given a morphism $\eta: A^*\to N$ into a finite monoid $N$, a \tla{\eta} formula is a \tls formula $\varphi$  such that for every operator $\textup{F}_L$ or $\textup{P}_L$ occurring in $\varphi$, the language $L \subseteq A^*$ is recognized by $\eta$. The following simple fact connects this notion to the classes \tlc{\Cs}.

\begin{fct} \label{fct:esuit}
  Let \Cs be a \vari. For every \tla{\Cs} formula $\varphi$, there exists a \Cs-morphism $\eta: A^* \to N$ such that $\varphi$ is a \tla{\eta} formula.
\end{fct}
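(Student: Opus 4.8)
\textbf{Proof plan for Fact~\ref{fct:esuit}.} The statement asserts that any \tla{\Cs} formula $\varphi$ can be viewed as a \tla{\eta} formula for a single well-chosen \Cs-morphism $\eta$. The plan is to collect the finitely many languages that appear as subscripts of temporal operators in $\varphi$ and to apply Proposition~\ref{prop:genocm} to obtain a common \Cs-morphism recognizing all of them.

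First I would make precise the notion of the set of languages occurring in $\varphi$. Since $\varphi$ is a \tla{\Cs} formula, it is a finite syntactic object: it is built from finitely many atomic formulas using Boolean connectives and the temporal operators $\textup{F}_L$ and $\textup{P}_L$ with $L \in \Cs$. Hence there are only finitely many languages $L_1,\dots,L_k \in \Cs$ such that some operator $\textup{F}_{L_i}$ or $\textup{P}_{L_i}$ occurs in $\varphi$ (formally, one defines this set by a trivial structural induction on $\varphi$: the empty set for atomic formulas, the union of the sets of the immediate subformulas for Boolean connectives and for $\textup{F}_L,\textup{P}_L$, the latter augmented by $L$).

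Next I would invoke Proposition~\ref{prop:genocm}: since \Cs is a \vari (in particular a \pvari) and $L_1,\dots,L_k \in \Cs$, there exists a \Cs-morphism $\eta: A^* \to N$ such that each $L_i$ is recognized by $\eta$. By construction, every temporal operator $\textup{F}_L$ or $\textup{P}_L$ occurring in $\varphi$ has $L \in \{L_1,\dots,L_k\}$, hence $L$ is recognized by $\eta$. This is exactly the definition of $\varphi$ being a \tla{\eta} formula, which concludes the proof.

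There is essentially no obstacle here: the statement is a bookkeeping consequence of the finiteness of formulas together with the already-established Proposition~\ref{prop:genocm}. The only point deserving a word of care is the (routine) structural induction defining the finite set of subscript languages of $\varphi$, and the observation that a \tla{\Cs} formula is by definition a finite object, so that this set is finite; both are immediate.
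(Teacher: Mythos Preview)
Your proposal is correct and follows essentially the same approach as the paper: collect the finite set of languages occurring as subscripts of temporal operators in $\varphi$, observe they all lie in \Cs, and apply Proposition~\ref{prop:genocm} to obtain a single \Cs-morphism recognizing them all. The paper's proof is nearly identical, only omitting the explicit structural induction you mention (since finiteness of the subscript set is immediate from the finiteness of the formula).
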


\begin{proof}
  Let \Lb be the finite set consisting of all languages $L \subseteq A^*$ such that either $\textup{F}_L$ or $\textup{P}_L$ occurs in $\varphi$. Since $\varphi$ is a \tla{\Cs} formula,  we know that every $L \in \Lb$ belongs to~$\Cs$. By hypothesis on \Cs, it follows from Proposition~\ref{prop:genocm} that there exists a \Cs-morphism $\eta: A^* \to N$ recognizing every $L \in \Lb$. By definition, $\varphi$ is a \tla{\eta} formula.
\end{proof}

We now associate a number called \emph{rank} to every \tls formula $\varphi$ (this is a standard notion in unary temporal logic). As expected, the rank of $\varphi$ is defined as the length of the longest sequence of nested temporal operators within its parse tree. More precisely,
\begin{itemize}
  \item Any atomic formula has rank $0$.
  \item The rank of $\neg \varphi$ is the same as the rank of $\varphi$.
  \item The rank of $\varphi \vee \psi$ and $\varphi \wedge \psi$ is the maximum between the ranks of $\varphi$ and $\psi$.
  \item For every language $L \subseteq A^*$, the rank of \finallyl{\varphi} and \finallyml{\varphi} is the rank of $\varphi$ plus $1$.
\end{itemize}

Two \tls formulas $\varphi$ and $\psi$ are \emph{equivalent} if they have the same semantics. That is, for every $w \in A^*$ and every position $i \in \pos{w}$, we have $w,i \models \varphi \Leftrightarrow w,i \models \psi$. The following key lemma is immediate from a simple induction on the rank of \tls formulas.

\begin{lem}\label{lem:rank}
  Let $\eta: A^* \to N$ be a morphism into a finite monoid and let $k \in \nat$. There are only finitely many non-equivalent \tla{\eta} formulas with rank at most $k$.
\end{lem}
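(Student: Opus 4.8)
The statement to prove is Lemma~\ref{lem:rank}: for a fixed morphism $\eta: A^* \to N$ into a finite monoid and a fixed bound $k \in \nat$, there are only finitely many non-equivalent \tla{\eta} formulas of rank at most $k$. The natural approach is induction on $k$. The key structural observation is that a \tla{\eta} formula mentions only the finitely many languages recognized by $\eta$, so only finitely many distinct operators $\textup{F}_L$ and $\textup{P}_L$ can appear; combined with the finiteness of the alphabet (so finitely many atomic formulas), at each "layer" of the parse tree we are only combining, by Boolean connectives and this fixed finite set of unary operators, formulas from a set that induction tells us is finite up to equivalence.

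\textbf{Base case.} For $k = 0$, a \tla{\eta} formula of rank $0$ contains no temporal operators, hence is a Boolean combination of atomic formulas. The atomic formulas are $\top$, $\bot$, $\mathit{min}$, $\mathit{max}$, and $a$ for $a \in A$, a finite set since $A$ is finite. A Boolean combination of finitely many formulas is, up to logical equivalence, determined by the truth-table it induces on valuations of those finitely many atoms; there are only finitely many such truth-tables. Hence there are finitely many non-equivalent \tla{\eta} formulas of rank $0$.

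\textbf{Inductive step.} Assume the claim holds for $k$, i.e., there is a finite set $\Phi_k$ of \tla{\eta} formulas such that every \tla{\eta} formula of rank at most $k$ is equivalent to one in $\Phi_k$. Let \Lb be the (finite) set of languages recognized by $\eta$. Consider the finite set $\Psi$ consisting of all atomic formulas together with all formulas $\finallyl{\varphi}$ and $\finallyml{\varphi}$ where $L \in \Lb$ and $\varphi \in \Phi_k$. Now take any \tla{\eta} formula $\chi$ of rank at most $k+1$. By pushing negations and reading off the parse tree, $\chi$ is a Boolean combination of atomic formulas and of subformulas of the form $\finallyl{\varphi}$ or $\finallyml{\varphi}$ where each such $\varphi$ is a \tla{\eta} formula of rank at most $k$ (and $L$ is recognized by $\eta$, since $\chi$ is a \tla{\eta} formula). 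By the induction hypothesis each such $\varphi$ is equivalent to some formula in $\Phi_k$, and replacing it produces a formula equivalent to $\chi$ which is a Boolean combination of formulas in the finite set $\Psi$. As in the base case, a Boolean combination of the finitely many formulas of $\Psi$ is determined up to equivalence by its induced truth-table, so there are only finitely many such up to equivalence. Thus the claim holds for $k+1$, completing the induction.

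\textbf{Expected obstacle.} There is essentially no deep obstacle here; the only point requiring slight care is to make precise the statement "a Boolean combination of finitely many formulas is finite up to equivalence", namely: if $F$ is a finite set of formulas, then any two Boolean combinations of $F$ that agree on every assignment of truth values to the members of $F$ are logically equivalent (in the sense of having the same semantics at every $(w,i)$), because the semantics of a Boolean combination factors through the induced truth values of the members of $F$; and there are at most $2^{2^{|F|}}$ such truth-tables. Once this is granted, the induction is routine, which is why the paper says the lemma "is immediate from a simple induction on the rank".
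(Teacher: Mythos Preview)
Your proof is correct and follows exactly the approach the paper indicates: a straightforward induction on the rank $k$, using that the atomic formulas and the temporal operators $\textup{F}_L$, $\textup{P}_L$ (for $L$ recognized by $\eta$) form a finite set, so each layer adds only finitely many formulas up to equivalence. The paper does not spell out the argument beyond calling it ``immediate from a simple induction on the rank,'' and your write-up is precisely that induction.
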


We may now define the equivalences associated to \tls. They relate pairs $(w,i)$, where $w \in A^*$ and $i \in \pos{w}$. Let $\eta: A^* \to N$ be a morphism into a finite monoid and let $k \in \nat$. Given $w,w' \in A^*$, $i \in \pos{w}$ and $i'\in \pos{w'}$, we write, $w,i \tleqk w',i'$ when:
\[
  \text{For every \tla{\eta} formula $\varphi$ with rank at most $k$,} \quad w,i \models \varphi \Longleftrightarrow w',i' \models \varphi.
\]
Clearly, the relations \tleqk are equivalences. Moreover, it is immediate from the definition and Lemma~\ref{lem:rank}, that they have finite index. Finally, we also introduce equivalences which compare single words in $A^*$. Abusing terminology, we also write them \tleqk. Given $w,w' \in A^*$, we write $w \tleqk w'$ if $w,0 \tleqk w',0$. Clearly, the relation $\tleqk$ is an equivalence on $A^{*}$.

We complete this definition with a useful lemma. It presents an alternative definition of the equivalences \tleqk, which is convenient in proof arguments, using induction on $k$.

\begin{lem}\label{lem:efgame}
  Let $\eta: A^* \to N$ be a morphism into a finite monoid, $k \in \nat$, $w,w' \in A^*$, $i \in \pos{w}$ and $i' \in \pos{w'}$. Then, $w,i \tleqk w',i'$ if and only the five following conditions~hold:
  \begin{itemize}
    \item We have $\wpos{w}{i} = \wpos{w'}{i'}$.
    \item If $k \geq 1$, then for every $j \in \pos{w}$ such that $i < j$, there exists $j' \in \pos{w'}$ such that $i' < j'$, $\eta(\infix{w}{i}{j}) = \eta(\infix{w'}{i'}{j'})$ and $w,j \tleqp{k-1} w',j'$.
    \item If $k \geq 1$, then for every $j' \in \pos{w'}$ such that $i'<j'$, there exists $j \in \pos{w}$ such that $i < j$, $\eta(\infix{w}{i}{j}) = \eta(\infix{w'}{i'}{j'})$ and $w,j \tleqp{k-1} w',j'$.
    \item If $k \geq 1$, then for every $j \in \pos{w}$ such that $j < i$, there exists $j' \in \pos{w'}$ such that $j' < i'$, $\eta(\infix{w}{j}{i}) = \eta(\infix{w'}{j'}{i'})$ and $w,j \tleqp{k-1} w',j'$.
    \item If $k \geq 1$, then for every $j' \in \pos{w'}$ such that $j'<i'$, there exists $j \in \pos{w}$ such that $j < i$, $\eta(\infix{w}{j}{i}) = \eta(\infix{w'}{j'}{i'})$ and $w,j \tleqp{k-1} w',j'$.
  \end{itemize}
\end{lem}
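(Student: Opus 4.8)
The plan is to prove Lemma~\ref{lem:efgame} by induction on $k$, using the alternative definition as a back-and-forth (Ehrenfeucht-Fra\"iss\'e-style) characterization. The statement is really the standard observation that the rank-$k$ \tla{\eta} theory of a pointed word $(w,i)$ is determined by (a) the label at $i$ and (b) for each of the four ``directions'' (future/past, on either side), the collection of rank-$(k-1)$ types reachable together with the $\eta$-value of the traversed infix. So the proof is a routine unwinding of the semantics of the temporal operators $\textup{F}_L$ and $\textup{P}_L$ for $L$ recognized by $\eta$.

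For the base case $k=0$: a \tla{\eta} formula of rank $0$ is, by the rank definition, a Boolean combination of atomic formulas, and the only atomic formulas whose truth at $(w,i)$ varies are the label tests $\ell$ for $\ell\in A\cup\{\mathit{min},\mathit{max}\}$ (the constants $\top,\bot$ are fixed). Hence $w,i\tleqp{0} w',i'$ holds if and only if $\wpos{w}{i}=\wpos{w'}{i'}$, which is exactly the single displayed condition when $k=0$ (the four bulleted conditions are vacuous). I would spell this out by noting that $w,i\models\ell$ iff $\ell=\wpos{w}{i}$, so agreement on all label atoms is equivalent to $\wpos{w}{i}=\wpos{w'}{i'}$.

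For the inductive step $k\ge 1$, assume the equivalence holds for $k-1$. For the ``only if'' direction: suppose $w,i\tleqk w',i'$. The label condition follows since rank-$0$ formulas are in particular rank-$k$ formulas. For the future condition, fix $j\in\pos{w}$ with $i<j$, and let $L=\infix{w}{i}{j}$'s $\eta$-class, i.e. $L=\eta\inv(\eta(\infix{w}{i}{j}))$, which is recognized by $\eta$. Let $\tau$ be the rank-$(k-1)$ \tla{\eta} type of $(w,j)$; by Lemma~\ref{lem:rank} this type is described by a single \tla{\eta} formula $\psi_\tau$ of rank at most $k-1$ (the conjunction of a finite set of representatives), and $\finallyl{\psi_\tau}$ is a \tla{\eta} formula of rank at most $k$. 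Since $w,i\models\finallyl{\psi_\tau}$ (witnessed by $j$), the hypothesis gives $w',i'\models\finallyl{\psi_\tau}$, yielding $j'\in\pos{w'}$ with $i'<j'$, $\infix{w'}{i'}{j'}\in L$ (hence $\eta(\infix{w}{i}{j})=\eta(\infix{w'}{i'}{j'})$) and $w',j'\models\psi_\tau$, i.e. $(w',j')$ has the same rank-$(k-1)$ type as $(w,j)$, which by the induction hypothesis on the type-characterization means $w,j\tleqp{k-1}w',j'$. The past condition is symmetric using $\finallyml{\psi_\tau}$, and the two reverse directions (swapping $w$ and $w'$) are identical by symmetry of the hypothesis. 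Conversely, suppose the five conditions hold; we show $w,i\models\varphi\Leftrightarrow w',i'\models\varphi$ for every \tla{\eta} formula $\varphi$ of rank $\le k$, by a sub-induction on the structure of $\varphi$. Atomic and Boolean cases are trivial (the label atoms use the first condition). If $\varphi=\finallyl{\psi}$ with $\psi$ of rank $\le k-1$ and $L$ recognized by $\eta$: if $w,i\models\varphi$, pick a witness $j>i$ with $w,j\models\psi$ and $\infix{w}{i}{j}\in L$; the future condition gives $j'>i'$ with $\eta(\infix{w}{i}{j})=\eta(\infix{w'}{i'}{j'})$ — hence $\infix{w'}{i'}{j'}\in L$ since $L$ is recognized by $\eta$ — and $w,j\tleqp{k-1}w',j'$, so $w',j'\models\psi$ as $\psi$ has rank $\le k-1$; thus $w',i'\models\varphi$. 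The converse uses the reverse future condition, and $\varphi=\finallyml{\psi}$ is symmetric. This closes both inductions.

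The only mildly delicate point — and the step I would be most careful about — is the use of Lemma~\ref{lem:rank} to package a rank-$(k-1)$ type into a single \tla{\eta} formula of rank $\le k-1$: one must check that a finite conjunction (and negations) of rank-$\le(k-1)$ formulas is again rank $\le k-1$, which is immediate from the rank clauses for $\wedge$ and $\neg$, and that ``having the same rank-$(k-1)$ type'' is exactly $\tleqp{k-1}$, which is the definition. Everything else is bookkeeping with the semantics of $\textup{F}_L,\textup{P}_L$ and the fact that membership of an infix in a language $L$ recognized by $\eta$ depends only on $\eta(\text{infix})$. I would therefore present the argument compactly, emphasizing the type-to-formula translation and leaving the symmetric cases to the reader.
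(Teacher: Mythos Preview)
Your proposal is correct and follows essentially the same approach as the paper: both directions are handled exactly as in the paper's proof, namely by packaging the rank-$(k-1)$ type of $(w,j)$ into a single \tla{\eta} formula via Lemma~\ref{lem:rank}, wrapping it in $\finallyl{\cdot}$ or $\finallyml{\cdot}$ for the ``only if'' direction, and doing a structural induction on $\varphi$ for the ``if'' direction. Your explicit outer induction on $k$ is harmless but not actually needed (as you yourself note, $\tleqp{k-1}$ is by definition agreement on rank-$(k-1)$ formulas), and the paper simply omits it.
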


\begin{proof}
  We start with the ``only if'' implication. Assume that $w,i \tleqk w',i'$. We show that the five conditions in the lemma hold. For the first one, we know that for every $\ell \in A \cup \{min,max\}$, ``$\ell$'' is an atomic \tla{\eta} formula, hence of rank~0. Therefore, our hypothesis implies that $w,i \models \ell \Leftrightarrow w,i' \models \ell$ for every $\ell \in A \cup \{min,max\}$. This exactly says that $\wpos{w}{i} = \wpos{w'}{i'}$ and the first condition is proved. We turn to the four remaining conditions. By symmetry, we only detail one of them: we consider the second condition in the lemma. Hence we assume that $k \geq 1$ and consider $j \in \pos{w}$ such that $i < j$. We have to exhibit $j' \in \pos{w'}$ such that $i' < j'$, $\eta(\infix{w}{i}{j}) = \eta(\infix{w'}{i'}{j'})$ and $w,j \tleqp{k-1} w',j'$. We use $w$ and $j$ to build a \tla{\eta} formula. Lemma~\ref{lem:rank} yields a finite set $S$ of \tla{\eta} formulas of rank at most $k-1$ such that every \tla{\eta} formula $\psi$ of rank at most $k-1$ is equivalent to some formula in $S$. Consider the set $T = \{\psi \in S \mid w,j \models \psi\}$. We define,
  \[
    \varphi = \bigwedge_{\psi \in T} \psi.
  \]
  Moreover, let $s = \eta(\infix{w}{i}{j}) \in N$ (recall that $i < j$) and let $L = \eta\inv(s)$. Clearly, $w,j \models \varphi$ and it follows that $w,i \models \finallyl{\varphi}$. Moreover, $\varphi$ has rank at most $k-1$ by definition, which means that $\finallyl{\varphi}$ has rank at most $k$. Since $w,i \tleqk w',i'$, it follows that $w',i' \models \finallyl{\varphi}$. This yields $j' \in \pos{w'}$ such that $i' < j'$, $\infix{w'}{i'}{j'} \in L$, \emph{i.e.}, $\eta(\infix{w'}{i'}{j'}) = \eta(\infix{w}{i}{j})$ and $w',j' \models \varphi$. It remains to show that $w,j \tleqp{k-1} w',j'$.  Let $\psi$ be an $\eta$-formula of rank at most $k-1$. We have to show that $w,j \models \psi \Leftrightarrow w',j'  \models \psi$. For the left to right implication, if $w,j \models \psi$, then we know that $T$ contains a formula equivalent to $\psi$.  Recall that $w',j' \models \varphi$. By definition of $\varphi$, this yields $w',j' \models \psi$. For the converse implication, we prove the contrapositive. Assume that $w,j \not\models \psi$, \emph{i.e.}, $w,j \models \neg \psi$. Since $\neg \psi$ has rank at most $k-1$, we know that $T$ contains a formula equivalent to $\neg \psi$. Again by choice of $\varphi$, since $w',j' \models \varphi$, we also have $w',j' \models \neg \psi$, which implies that $w',j' \not\models \psi$, completing the proof.

  Conversely, assume that the five conditions in the lemma are satisfied. We prove that $w,i \tleqk w',i'$. Given an arbitrary \tla{\eta} formula $\varphi$ of rank at most $k$, we have to prove that $w,i \models \varphi \Leftrightarrow w',i'  \models \varphi$. We proceed by structural induction on $\varphi$. First, if $\varphi$ is an atomic formula then either  $\varphi = \top$, $\varphi = \bot$ or $\varphi = \ell$ for some $\ell \in A \cup \{min,max\}$. In the first two  cases, the result is trivial. In the last one, the first condition in the lemma states that $\wpos{w}{i} = \wpos{w'}{i'}$. Hence, we have $w,i \models \ell \Leftrightarrow w',i' \models \ell$ for every $a \in A \cup \{min,max\}$. We now consider Boolean connectives. If $\varphi = \psi_1 \vee \psi_2$, it follows from structural induction on $\varphi$ that $w,i \models \psi_1 \Leftrightarrow w',i' \models \psi_1$ and $w,i \models \psi_2 \Leftrightarrow w',i' \models \psi_2$.  It is then immediate that $w,i \models \varphi \Leftrightarrow w',i' \models \varphi$. Similarly, if $\varphi = \neg \psi$, it follows from structural induction on~$\varphi$ that we have $w,i \models \psi \Leftrightarrow w',i' \models \psi$, whence immediately, that $w,i \models \varphi \Leftrightarrow w',i' \models \varphi$. It remains to treat the temporal operators, \emph{i.e.}, the cases where $\varphi = \finallyl{\psi}$ or $\varphi = \finallyml{\psi}$ where $L \subseteq A^*$ is recognized by $\eta$ (recall that $\varphi$ is a \tla{\eta} formula). Note that since $\varphi$ has rank at most $k$, these cases may only happen when $k \geq 1$ and $\psi$ has rank at most $k-1$. By symmetry, we only detail the case where $\varphi = \finallyl{\psi}$. Moreover, we only prove the implication $w,i \models \varphi \Rightarrow w',i' \models \varphi$, as the other one is proved symmetrically. Hence, we assume that $w,i \models \varphi$. Since $\varphi = \finallyl{\psi}$, it follows that there exists a position $j$ in $w$ such that $i < j$, $\infix{w}{i}{j} \in L$ and $w,j \models \psi$. The second condition in the lemma yields a position $j'$ of $w'$ such that $i' < j'$, $\eta(\infix{w}{i}{j}) = \eta(\infix{w'}{i'}{j'})$ and $w,j \tleqp{k-1} w',j'$.  Since $L$ is recognized by $\eta$ and $\infix{w}{i}{j} \in L$, the equality $\eta(\infix{w}{i}{j}) = \eta(\infix{w}{i}{j})$ implies that $\infix{w'}{i'}{j'} \in L$. Moreover, since~$\psi$ has rank at most $k-1$ and both $w,j \models \psi$ and $w,j \tleqp{k-1} w',j'$ hold, we obtain $w',j' \models \psi$. Altogether, since $i' < j'$, it follows that $w',i'\models\finallyl{\psi}$, \emph{i.e.}, that $w',i' \models \varphi$, as desired. This concludes the~proof.
\end{proof}

It can be verified from Lemma~\ref{lem:efgame} and a straightforward induction that the equivalences~\tleqk on $A^*$ are congruences. The detailed proof is left to the reader.

\begin{lem} \label{lem:efcong}
  Consider a morphism $\eta: A^* \to N$ into a finite monoid and $k \in \nat$. For every $u,v,u',v' \in A^*$ such that $u \tleqk u'$ and $v \tleqk v'$, we have $uv \tleqk u'v'$.
\end{lem}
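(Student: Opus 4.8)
The plan is to prove Lemma~\ref{lem:efcong} by induction on $k$, using the characterization of $\tleqk$ provided by Lemma~\ref{lem:efgame} (the ``Ehrenfeucht--Fra\"iss\'e game'' formulation) rather than arguing directly with formulas. Fix a morphism $\eta: A^* \to N$ into a finite monoid. We must show that if $u \tleqk u'$ and $v \tleqk v'$, then $uv \tleqk u'v'$; by definition of the equivalence on words this means $uv,0 \tleqk u'v',0$. Actually it is cleaner to prove the following slightly stronger statement by induction on $k$: for all $u,v,u',v' \in A^*$ with $u \tleqk u'$ and $v \tleqk v'$, and for every position $i$ of $uv$ together with the ``corresponding'' position $i'$ of $u'v'$ (where positions in the $u$-part are matched to positions in the $u'$-part, the boundary is matched to the boundary, and positions in the $v$-part are matched to positions in the $v'$-part), we have $uv,i \tleqk u'v',i'$. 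The lemma is the special case $i = i' = 0$.

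The base case $k = 0$ is immediate from the first condition of Lemma~\ref{lem:efgame}: $uv,i$ and $u'v',i'$ carry the same element of $A \cup \{min,max\}$ because $i$ lies in the $u$-part iff $i'$ lies in the $u'$-part and then $u \tleqk u'$ with $k=0$ gives matching labels at the boundary and at $min/max$, while the interior labels are matched by construction of $i'$ (and similarly for the $v$-part). For the inductive step with $k \geq 1$, I would verify the four ``future/past'' conditions of Lemma~\ref{lem:efgame}. Consider, say, the second condition: given a position $j$ of $uv$ with $i < j$, we must produce $j'$ in $u'v'$ with $i' < j'$, $\eta(\infix{(uv)}{i}{j}) = \eta(\infix{(u'v')}{i'}{j'})$, and $uv,j \tleqp{k-1} u'v',j'$. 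Here one does a case analysis on where $i$ and $j$ sit relative to the $u$/$v$ boundary. If both $i,j$ are in the same part (say the $u$-part), apply the game characterization of $u \tleqk u'$ at rank $k$ to the position $j$ of $u$, get a matching position in $u'$, note that the infix inside $u$ equals the infix inside $uv$ (positions beyond $j$ don't affect it), and feed the resulting rank-$(k-1)$ equivalence $u,j \tleqp{k-1} u',j''$ into the induction hypothesis at level $k-1$ to upgrade it to $uv,j \tleqp{k-1} u'v',j'$. If $i$ is in the $u$-part and $j$ is in the $v$-part, split the infix $\infix{(uv)}{i}{j}$ as a product of a suffix-type infix of $u$ and a prefix-type infix of $v$: use $u \tleqk u'$ to match the part of the jump inside $u$ and $v \tleqk v'$ to match the part inside $v$, then multiply the two $\eta$-values (this is exactly where we need $\eta$ to be a morphism and where $x,y$-style ``context'' stability is used). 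For $uv,j \tleqp{k-1} u'v',j'$ with $j$ in the $v$-part, again the induction hypothesis at level $k-1$ closes the loop. The remaining three conditions (the converse future condition and the two past conditions) are entirely symmetric.

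The main obstacle — or rather, the only place that requires genuine care — is the bookkeeping in the mixed case where a temporal jump straddles the concatenation boundary: one must check that the infix $\infix{(uv)}{i}{j}$ really does factor as (infix of $u$ from $i$ to the right end)$\cdot$(infix of $v$ from the left end to $j$) consistently with the artificial $min/max$ positions of $u$, $v$, and $uv$, and that the matched positions chosen in $u'$ and in $v'$ glue together into a single legitimate position $j'$ of $u'v'$ with the correct label. This is purely combinatorial once the position-matching convention between $uv$ and $u'v'$ is fixed precisely at the outset, which is why I would state that convention carefully before running the induction. The rest is routine: each of the four conditions reduces, after one level of the game, to an instance of the induction hypothesis at rank $k-1$, so the argument is a clean structural induction with no hidden difficulty. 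As the excerpt says, the detailed proof is left to the reader, and the write-up would consist of stating the strengthened claim, fixing the position correspondence, dispatching $k=0$, and then doing one representative case of the inductive step in full with the others noted as symmetric.
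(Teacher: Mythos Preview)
Your approach is correct and matches the paper's, which simply states that the lemma follows from Lemma~\ref{lem:efgame} by a straightforward induction and leaves the details to the reader. The one point to tighten is the statement of your strengthened inductive claim: there is no well-defined ``corresponding position'' map from $\pos{uv}$ to $\pos{u'v'}$ (since $|u|$ and $|u'|$ may differ), so the claim should instead be phrased as a pair of pointed implications---``if $u,i \tleqk u',i'$ with $i \leq |u|$, $i' \leq |u'|$, and $v \tleqk v'$, then $uv,i \tleqk u'v',i'$'' together with the symmetric statement for positions in the $v$-part---which is exactly what your later case analysis actually uses.
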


We are ready to present the property of the equivalences \tleqk that we shall need in Section~\ref{sec:logcar} to establish the correspondence with unambiguous polynomial closure.

\begin{prop} \label{prop:efg}
  Consider a morphism $\eta: A^* \to N$ into a finite monoid, let $e \in E(N)$ be an idempotent and let $u,v,z \in \eta\inv(e)$. For every $k \in \nat$, the following property holds:
  \[
    (z^{k}uz^{2k}vz^{k})^{k}(z^{k}uz^{2k}vz^{k})^{k} \tleqk (z^{k}uz^{2k}vz^{k})^{k} z^kvz^k(z^{k}uz^{2k}vz^{k})^{k}.
  \]
\end{prop}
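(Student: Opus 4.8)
The statement is a ``rank-bounded'' version of the algebraic identity $\eqref{eq:cupol}$ that characterizes \upol{\Cs}-morphisms; the idea is to prove it by showing that the two sides are indistinguishable in the Ehrenfeucht--Fra\"{i}ss\'e-style game encoded by Lemma~\ref{lem:efgame}. Write $z^{k}uz^{2k}vz^{k} = t$ for brevity, so $\eta(t)=e$ (since $e$ is idempotent and $u,v,z \in \eta\inv(e)$), and we must show $t^{2k} \tleqk t^{k} z^k v z^k t^{k}$; note also $\eta(z^kvz^k)=e$, so both words map to $e$ under $\eta$, which already handles the requirement that $\eta$-images of infixes match ``at the top level''. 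I would prove, by induction on $k$, the slightly more robust statement that for all $x,y \in \eta\inv(e)$ and all $m \geq k$, we have $xyy'\cdots$ type equalities; more precisely, it is cleanest to prove: \emph{for every $k$, if $w = p\, t^{k}\, q$ and $w' = p\, t^{k} z^k v z^k t^{k}\, q$ with $\eta$-images of $p$ and $q$ arbitrary but fixed, then $w,i \tleqk w',i'$ whenever $i,i'$ are ``corresponding'' positions outside the central $t$-blocks}. The point of padding with many copies of $t$ (and of the $z^{2k}$ inside $t$) is that any position reachable within $k$ nested temporal steps from a position in $p$ or $q$ stays within a bounded-depth neighbourhood, which looks the same on both sides.

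The concrete mechanism is the game of Lemma~\ref{lem:efgame}. Starting from the initial positions $i = i' = 0$, Duplicator's strategy is the standard ``copycat with buffering'': when Spoiler picks a position $j$ in one word strictly to the right (or left) of the current position, Duplicator answers with a position $j'$ in the other word so that (i) the same letter is carried, (ii) the $\eta$-image of the traversed infix is preserved, and (iii) after the move, the remaining suffix/prefix structure on both sides is still of the ``$p'\,t^{\ge k-\text{depth}}\, (\text{central block})\, t^{\ge k - \text{depth}}\, q'$'' shape, so the induction hypothesis (rank $k-1$) applies. The key arithmetic input is: each temporal move can ``consume'' at most one full copy of $t$ on each side (because $t$ has $\eta$-image $e$ and the $z^k$-paddings let Duplicator realign), so after $k$ moves at most $k$ copies are consumed; since we started with $2k$ copies on the left and $2k+1$ copies plus a central $z^kvz^k$ on the right, there is always enough material, and in particular Spoiler can never force Duplicator into the genuinely asymmetric region (the single extra $z^kvz^k$), because reaching it would require more than $k$ nested steps. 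Formally one sets up a potential function counting (rank-remaining) versus (number of $t$-copies still separating the current position from the asymmetric core on each side) and checks it is maintained.

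The main obstacle is bookkeeping: making precise what ``corresponding positions'' and ``the remaining structure is of the right shape'' mean after an arbitrary sequence of left-and-right moves, and verifying that the invariant is preserved in all four move types of Lemma~\ref{lem:efgame}. A clean way to organize this is to prove a single technical lemma of the following form, by induction on $k$:
\begin{quote}
Let $\eta: A^* \to N$, $e \in E(N)$, and let $z \in \eta\inv(e)$. For every $k \in \nat$ there is $n_k \in \nat$ (one may take $n_k = 2k+1$) such that for all $p,q \in A^*$, all $w_0,w_1 \in \eta\inv(e)$, and all $m,m' \ge n_k$, the pairs $(p\, t^{m}\, q,\; |p|)$ and $(p\, t^{m'} w_0 w_1 t^{m'}\, q,\; |p|)$ are $\tleqk$-equivalent, where $t = z^{k}w_0 z^{2k} w_1 z^{k}$ --- provided $\eta(w_0)=\eta(w_1)=e$.
\end{quote}
The base case $k=0$ is just the letter test at position $|p|$, which is the last letter of $p$ (or $min$), identical on both sides. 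The inductive step feeds a Spoiler move into Lemma~\ref{lem:efgame}: Spoiler's new position lies either inside $p$, inside the last $t$-block before position $|p|$ adjacent to $p$, inside the $t^{m}$ run, or inside $q$; in each case Duplicator picks the matching position, the traversed infix has the same $\eta$-image (using idempotency of $e$ and that all of $u,v,z,w_0,w_1$ map to $e$), and the two residual words are again of the stated shape with one fewer available $t$-copy and rank $k-1$, so the induction hypothesis applies since $m-1, m'-1 \ge n_{k-1}$. Taking $p = q = \veps$, $w_0 = u z^{2k} v$-type regrouping, and reading off $t^{2k}$ versus $t^{k} z^k v z^k t^{k}$ then yields Proposition~\ref{prop:efg}; the only care needed is to match the exact exponents in the statement (the outer $t^k$'s and the single $z^kvz^k$) against the lemma's $n_k$, which $n_k = 2k+1$ comfortably covers, after possibly padding one side with extra copies of $t$ --- legitimate because $\tleqk$ is a congruence (Lemma~\ref{lem:efcong}) and $t \tleqk t t$ follows from the same lemma applied with $w_0,w_1$ replaced by larger powers, or simply because adding a $t = $ idempotent-image block on the boundary is absorbed by the induction.
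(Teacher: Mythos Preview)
Your high-level plan---use the game characterization of Lemma~\ref{lem:efgame} and prove an invariant by induction on $k$---is exactly what the paper does. But the invariant you propose does not close the induction, and the ``one copy of $t$ per move'' heuristic is simply false.

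Concretely: in the game of Lemma~\ref{lem:efgame}, a single move may jump from the current position to \emph{any} other position; there is no bound on how many $t$-blocks are traversed. So Spoiler can reach the asymmetric middle $z^kvz^k$ in one step, and your potential function (``$t$-copies separating us from the core'') is not decreased by at most one per move. What actually constrains Duplicator is that the \emph{local neighbourhood} of Spoiler's landing point can be matched on the other side; this is what the paper's invariant records. Your technical lemma only asserts equivalence at the block-boundary position $|p|$. After one move the pebbles sit at arbitrary positions inside the $t$-run (or inside $u,v,z$, which may be long words), and you cannot re-express the situation in the form ``position $|p'|$ in $p'\,t^{m''}\,q'$'' while preserving the $t^{m''}$ structure and the $\eta$-image bookkeeping, so the induction hypothesis does not apply. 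Separately, your lemma needs $m,m' \ge 2k+1$ while the statement has only $t^k$ on each side of the inserted $z^kvz^k$; your proposed repair via $t \tleqk tt$ is the very thing you are trying to prove.

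The paper closes these gaps in two steps. First it reduces to a three-letter alphabet $\{a,b,c\}$ via the morphism $a\mapsto u$, $b\mapsto v$, $c\mapsto z$ (Fact~\ref{fct:morph}); over this alphabet \emph{every nonempty infix} has image $e$, so the $\eta$-image condition of Lemma~\ref{lem:efgame} becomes trivial, and positions are now genuine letter positions rather than positions inside $u,v,z$. Second, with $x=c^kac^k$, $y=c^kbc^k$, it proves the stronger invariant: an ``$n$-candidate'' is a quadruple $(w,i,w',i')$ with $w=(xy)^h w_1(xy)^\ell$, $w'=(xy)^h w'_1(xy)^\ell$, $h,\ell\ge n$, $w_1,w'_1\in(x+y)^*$, and the positions $i,i'$ are either (i) equal and in the common prefix $(xy)^h$, (ii) at equal offsets inside the common suffix $(xy)^\ell$, or (iii) both in the middle parts with identical length-$(2n{+}1)$ windows around them. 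This invariant \emph{does} survive one game move (the $c^{2k}$ padding inside $x,y$ is exactly what guarantees that any length-$(2n{+}1)$ window for $n\le k$ can be matched), and the starting configuration is a $k$-candidate with $h=\ell=k$, so no padding trick is needed. Your plan becomes correct once you replace the ``position at $|p|$'' invariant by this three-case one and perform the single-letter reduction first.
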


\begin{proof}
  We first show that we may restrict ourselves to the special case where $u,v$ and~$z$ are single letters. To this aim, we consider a second independent alphabet $B= \{a,b,c\}$ and we define $\alpha: B^* \to A^*$ as the morphism given by $\alpha(a)= u$, $\alpha(b) = v$ and $\alpha(c) = z$. Finally, we let $\delta=\eta \circ \alpha: B^* \to N$. We have the following fact.

  \begin{fct} \label{fct:morph}
    For all $w,w' \in B^*$ and all $k \in \nat$, if $w \tldk w'$, then $\alpha(w) \tleqk \alpha(w')$.
  \end{fct}

  \begin{proof}
    Let $n = max(\{|u|,|v|,|z|\})$. For every $x \in B^*$ and every $h \leq n$, we define a \emph{partial} map $f_{x,h}: \pos{x} \to \pos{\alpha(x)}$. We fix $i \in \pos{x}$ for the definition. If $i = 0$, then $f_{x,h}(0) = 0$ and if $i = |x|+1$, then $f_{x,h}(|x|+ 1) = |\alpha(x)|+1$. Assume now that $1 \leq i \leq |x|$ and let $d = \wpos{x}{i} \in B$. If $h > |\alpha(d)|$, then $f_{x,h}(i)$ is undefined. Otherwise $h \leq |\alpha(d)|$ and position $i$ of $x$ corresponds to a factor $\alpha(d)$ in $\alpha(x)$ which is made of more than $h$ positions. We let $f_{x,h}(i)$ be $h$-th position of this factor. Formally, this boils down to defining $f_{x,h}(i) = |\alpha(\prefix{x}{i})|+h$.

    One may verify using Lemma~\ref{lem:efgame} and induction on $k \in \nat$ that for every $w,w' \in B^*$, every $i \in \pos{w}$ and every $i' \in \pos{w'}$, if $w,i \tldk w',i'$, then for all $h \leq n$, $f_{w,h}(i)$ is defined if and only if $f_{w',h}(i')$ is defined and, in this case, $\alpha(w),f_{w,h}(i) \tleqk \alpha(w'),f_{w',h}(i')$. In particular, it follows that if $w \tleqk w'$ (\emph{i.e.}, $w,0 \tleqk w',0$), then $\alpha(w) \tleqk \alpha(w')$ as~desired.
  \end{proof}

  Note that by hypothesis on $u,v$ and $z$, we have $\delta(a) = \delta(b) = \delta(c) = e \in E(N)$, which means that for every nonempty word $w \in B^+$, we have $\delta(w) = e$ (on the other hand, $\delta(\veps) = 1_N$, which might be distinct from $e$). We prove that for every $k \in \nat$, we have,
  \begin{equation} \label{eq:efg}
    (c^{k}ac^{2k}bc^{k})^{k}(c^{k}ac^{2k}bc^{k})^{k} \tldk (c^{k}ac^{2k}bc^{k})^{k} c^kbc^k(c^{k}ac^{2k}bc^{k})^{k}.
  \end{equation}
  By Fact~\ref{fct:morph} and  by definition of $\alpha$, this will yield the desired property of Proposition~\ref{prop:efg}.

  To prove~\eqref{eq:efg}, we establish by induction a more general property. We fix $k$ for the proof, and we write $x = c^kac^k$ and $y = c^kbc^k$.  Let $n \in \nat$ and consider a quadruple $(w,i,w',i')$ where $w,w' \in B^*$, $i \in \pos{w}$ and $i' \in \pos{w'}$. We say that $(w,i,w',i')$ is an \emph{$n$-candidate} if there exist $h,\ell \geq n$ and $w_1,w'_1 \in (x+y)^*$ such that $w = (xy)^h w_1(xy)^\ell$, $w' = (xy)^h w'_1(xy)^\ell$ and one of the three following properties is satisfied:
  \begin{enumerate}
    \item\label{itm:c1} We have $i \leq |(xy)^h|$, $i' \leq |(xy)^h|$ and $i = i'$.
    \item\label{itm:c2} We have $|(xy)^hw_1| < i$, $|(xy)^hw'_1| < i'$ and $i- |(xy)^hw_1| = i' - |(xy)^hw'_1|$.
    \item\label{itm:c3} We have $|(xy)^h| < i \leq |(xy)^hw_1|$,  $|(xy)^h| < i' \leq |(xy)^hw'_1|$, and the following infixes are equal: $\infix{w}{i-n-1}{i+n+1} = \infix{w'}{i'-n-1}{i'+n+1}$.
  \end{enumerate}
  We use  Lemma~\ref{lem:efgame} and induction on $n$ to prove that for every $n \leq k$ and every $n$-candidate $(w,i,w',i')$, we have $w,i \tldp{n} w',i'$. Since it is clear that $((xy)^k(xy)^k,0,(xy)^ky(xy)^k,0)$ is a $k$-candidate (as Condition~\ref{itm:c1} in the definition holds), it will follow that the equivalence $(xy)^k(xy)^k \tldk (xy)^ky(xy)^k$ also holds. By definition of $x$ and $y$, this is exactly~\eqref{eq:efg}.

  Fix $n \leq k$ and an $n$-candidate $(w,i,w',i')$ for the proof. We have to show that $w,i\tldp{n} w',i'$. It is clear from the definition of an $n$-candidate that $i$ and $i'$ are either both unlabeled or share the same label in $B$. By Lemma~\ref{lem:efgame}, this concludes the proof when $n = 0$: we get $w,i \tldp{0} w', i'$. We now assume that $n \geq 1$. In view of Lemma~\ref{lem:efgame}, there are four additional conditions to prove in this case. By symmetry, we only prove the first one and leave the others to the reader. Consider a position $j$ of $w$ such that $i < j$. We have to exhibit a position $j'$ of $w'$ such that $i' < j'$, $\eta(\infix{w}{i}{j}) = \eta(\infix{w'}{i'}{j'})$ and $w,j \tldp{n-1} w',j'$. Observe that for the latter property (\emph{i.e.}, $w,j \tldp{n-1} w',j'$), it suffices to choose $j'$ such that $(w,j,w',j')$ is an $(n-1)$-candidate: by induction on $n$, this implies that $w,j \tleqp{n-1} w',j'$. Since $(w,i,w',i')$ is an $n$-candidate, there exist $h,\ell \geq n$ and $w_1,w'_1 \in (x+y)^*$ such that $w=(xy)^h w_1(xy)^\ell$, $w' = (xy)^h w'_1(xy)^\ell$ and one of the three conditions in the definition holds. We consider several cases depending on the position $j$ inside $w$.

  Assume first that $j \leq |(xy)^h|$. Hence, we have $i < j \leq |(xy)^h|$. It follows that Condition~\ref{itm:c1} in the definition of the $n$-candidate $(w,i,w',i')$ holds. Therefore, we have  $i' = i  < |(xy)^h|$. It now suffices to define $j' = j$. Clearly,  $i' < j'$ and $\infix{w'}{i'}{j'} = \infix{w}{i}{j}$. Moreover, one can check that $(w,j,w',j')$ is an $(n-1)$-candidate (in fact, it is even an $n$-candidate in this case) as Condition~\ref{itm:c1} in the definition holds.

  We now assume that $|(xy)^h| < j \leq |(xy)^hw_1xy|$. We consider two subcases depending on whether $i+1 = j$ or $i+1 < j$. Assume first that $i+1 = j$. In this case, we define $j' = i' + 1$. Clearly, we have $i' < j'$ and $\infix{w'}{i'}{j'} = \infix{w}{i}{j} = \veps$. Hence, we have to verify that $(w,j,w',j')$ is an $(n-1)$-candidate. We show that Condition~\ref{itm:c3} holds. By definition, $h \geq n-1$, $\ell-1 \geq n-1$, $w = (xy)^hw_1xy(xy)^{\ell-1}$ and $w' = (xy)^hw'_1xy(xy)^{\ell-1}$. Moreover, $|(xy)^h| < j \leq |(xy)^hw_1xy|$ and it can be verified from the definition of $j'$ that $|(xy)^h| < j' \leq |(xy)^hw'_1xy|$. Finally, one can also check that $\infix{w}{j-n}{j+n} = \infix{w'}{j-n}{j'+n}$ from the hypothesis that $(w,i,w',i')$ is an $n$-candidate, since $j = i+1$ and $j' = i'+1$. We conclude that $(w,j,w',j')$ is an $(n-1)$-candidate as Condition~\ref{itm:c3} holds. We turn to the second subcase: assume that $i+1 < j$. By hypothesis, $i' \leq |(xy)^hw'_1|$. Hence,  by definition of $x,y$ and since $n \leq k$, one can verify that there exists a position $j' \in \pos{w'}$ such that  $|(xy)^h| < j' \leq |(xy)^hw'_1xy|$,  $i' +1 < j'$ and $\infix{w}{j-n}{j+n} = \infix{w'}{j-n}{j'+n}$. We have $i' < j'$ by definition. Moreover, since $i+1 < j$ and $i'+1 < j'$, we know that $\infix{w'}{i'}{j'}$ and $\infix{w}{i}{j}$ are nonempty, which yields $\delta(\infix{w'}{i'}{j'}) = \delta(\infix{w}{i}{j}) = e$ by definition of $\delta$. Finally, it is immediate that $(w,j,w',j')$ is an $(n-1)$-candidate since Condition~\ref{itm:c3} again holds.

  Finally, assume that $|(xy)^h w_1xy| < j$. In this case, let $j'$ be the unique position of~$w'$ such that $|(xy)^h w'_1xy| < j'$ and $j - |(xy)^hw_1xy| = j' - |(xy)^hw'_1xy|$. Since $i < j$ and $(w,i,w',i')$ is an $n$-candidate, one can check that $i' < j'$ and that either $i+1 = j$ and $i'+1 = j'$ or $i+1 < j$ and $i'+1 < j'$, which implies that $\delta(\infix{w'}{i'}{j'}) = \delta(\infix{w}{i}{j})$. Finally, it is straightforward to verify that $(w,j,w',j')$ is a $(n-1)$-candidate (in fact, it is even an $n$-candidate in this case) as Condition~\ref{itm:c2} in the definition holds, completing the~proof.
\end{proof}

\subsection{Connection with two-variable first-order logic}

We now prove the generic correspondence existing between two-variable first-order logic and unary temporal logic: we prove that $\fod(\infsigc) = \tlc{\Cs}$ for every Boolean algebra \Cs. As we announced, this generalizes a theorem by Etessami, Vardi and Wilke~\cite{evwutl}, which states that $\fodw = \fpfm$ (which is the particular case $\Cs = \stzer$) and $\fodws =\fpfmx$ (which is the particular case $\Cs= \stzer^+$).

\begin{theorem} \label{thm:temporal}
  For every Boolean algebra \Cs, we have $\fod(\infsigc) = \tlc{\Cs}$.
\end{theorem}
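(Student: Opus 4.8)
The plan is to prove the two inclusions $\tlc{\Cs} \subseteq \fod(\infsigc)$ and $\fod(\infsigc) \subseteq \tlc{\Cs}$ separately, by structural induction on formulas. Throughout, recall that by Lemma~\ref{lem:gensig}-style reasoning the predicates $I_L$ for $L\in\Cs$ are exactly what is available on both sides, so the two logics are built over ``the same primitives'' — the content is purely about the well-known two-variable/temporal trade-off, now relativized to infix predicates indexed by $\Cs$.

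\smallskip

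\noindent\textbf{From unary temporal logic to \fod.} First I would translate each \tla{\Cs} formula $\varphi$ into an \fod formula with one free variable. Concretely, define by induction on $\varphi$ an \fo formula $\widehat{\varphi}(x)$ using at most two variables such that for every $w\in A^*$ and $i\in\pos{w}$ we have $w,i\models\varphi \iff w\models\widehat{\varphi}(i)$. Atomic formulas $\top,\bot,min,max,a$ translate to the obvious quantifier-free formulas in the variable $x$. Boolean connectives commute with the translation. For the temporal operators, the semantics dictate
\[
  \widehat{\finallyl{\psi}}(x) = \exists y\ \big(I_L(x,y) \wedge \widehat{\psi}(y)\big),
\]
and symmetrically $\widehat{\finallyml{\psi}}(x) = \exists y\ (I_L(y,x) \wedge \widehat{\psi}(y))$. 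The key ``two-variable'' trick is that $\widehat{\psi}(y)$ is obtained from $\widehat{\psi}(x)$ by renaming $x$ to $y$; since $\widehat{\psi}$ uses only $x$, renaming and then quantifying reuses the other variable, and we never exceed two. Sentences are the case $w,0\models\varphi$, handled by relativizing at $min$: $\widehat{\varphi}_{\mathrm{sent}} = \exists x\ (x=min \wedge \widehat{\varphi}(x))$. This gives $L(\varphi)\in\fod(\infsigc)$, and since both $\tlc{\Cs}$ and $\fod(\infsigc)$ are Boolean algebras, we get $\tlc{\Cs}\subseteq\fod(\infsigc)$.

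\smallskip

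\noindent\textbf{From \fod to unary temporal logic.} This is the hard direction and the main obstacle. I would adapt the classical Etessami--Vardi--Wilke argument. The natural route is an Ehrenfeucht--Fraïssé / quantifier-rank argument: show that for each $m$ there is $k$ such that $\tleqk$ (the temporal equivalence of Lemma~\ref{lem:efgame}, for a suitable \Cs-morphism $\eta$ recognizing all languages naming the predicates occurring in the \fod sentence) refines the $m$-round two-variable \efgame equivalence with infix predicates. Then every $\fod(\infsigc)$-definable language is a union of $\tleqk$-classes, hence — because there are only finitely many non-equivalent \tla{\eta} formulas of rank $\leq k$ by Lemma~\ref{lem:rank} and one can write a characteristic \tla{\eta} formula for each class, using Fact~\ref{fct:esuit} to land back in \tlc{\Cs} — belongs to $\tlc{\Cs}$. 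The delicate point specific to the generalized setting is the \emph{duplicator's strategy in the two-variable game}: in the standard order-only case one only needs to track, for the current pair of positions, which positions lie to the left and right and with which labels; here one must additionally preserve the $\Cs$-types $\eta(\infix{w}{i}{j})$ of the infixes spanned by the two pebbles, and the temporal game of Lemma~\ref{lem:efgame} is precisely tailored to carry exactly this information across rounds. I expect the bookkeeping matching two-variable FO rounds to temporal rounds to be the technically heaviest part; the $\Cs$-relativization itself is routine once one believes that the infix-type data is the ``right'' invariant.

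\smallskip

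\noindent\textbf{Packaging.} With both inclusions in hand the theorem follows. In writing it up I would isolate the translation $\varphi\mapsto\widehat\varphi$ as the easy lemma, and for the converse either give the self-contained game argument sketched above or — if the excerpt's machinery permits a shortcut via Lemma~\ref{lem:utlx} (reducing $\tlxc{\Cs}$ to $\tlc{\Cs^+}$) — first reduce to the \vari case and then invoke the game argument only there, noting that the $+1$ predicate corresponds exactly to the operators $\textup{F}_{\{\veps\}},\textup{P}_{\{\veps\}}$ as already observed in the proof of Lemma~\ref{lem:utlx}. I would single out the quantitative statement ``$\tleqk$ refines two-variable-rank-$m$ equivalence for $k$ large in terms of $m$'' as the one claim requiring genuine care.
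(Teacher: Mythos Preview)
Your easy direction ($\tlc{\Cs}\subseteq\fod(\infsigc)$) is exactly the paper's argument.

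For the hard direction the paper takes a different, purely syntactic route: a direct translation by induction on the size of the \fod formula $\varphi(x)$. The crux is the case $\varphi(x)=\exists y\,\gamma(x,y)$. The paper observes that in a two-variable formula every subformula of $\gamma$ in which \emph{both} $x$ and $y$ occur free is necessarily atomic (hence of the form $I_L(x,y)$, $I_L(y,x)$, or $x=y$). It then shows (Fact~\ref{fct:normalized}) that $\gamma$ is equivalent to a finite disjunction of ``normalized'' formulas $\bigwedge_i(\zeta_i(x)\Leftrightarrow t_i)\wedge\pi(x,y)\wedge\psi(y)$, where each $\zeta_i$ and $\psi$ has a single free variable (so induction applies) and the profile $\pi$ is one of $I_H(x,y)$, $I_H(y,x)$, $x=y$, with $H$ a class of the equivalence generated by the finitely many languages occurring in $\gamma$. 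Each disjunct then translates using a single $\textup{F}_H$ or $\textup{P}_H$. No games, no recognizing morphisms.

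Your game-theoretic approach is a legitimate alternative in spirit, but as written it has a hypothesis mismatch you should not gloss over. You invoke a \Cs-morphism $\eta$ and Fact~\ref{fct:esuit}, both of which require \Cs to be a \vari, whereas the theorem assumes only a Boolean algebra: the languages in \Cs need not be regular, and even when they are, the languages recognized by the product morphism $\eta$ need not lie in \Cs absent quotient-closure, so a \tla{\eta} formula is not automatically a \tla{\Cs} formula. The paper's syntactic argument sidesteps this entirely: the classes $H$ are Boolean combinations of the finitely many input languages, so $H\in\Cs$ follows from the Boolean-algebra hypothesis alone. To rescue the game approach at the stated generality you would have to redefine \tleqk relative to the finite family of $\sim$-classes rather than to a recognizing morphism, and re-derive the analogues of Lemmas~\ref{lem:rank} and~\ref{lem:efgame} in that setting---feasible, but the syntactic normalization gets there more cheaply and is what the paper does.
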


\begin{proof}
  We first prove the inclusion $\tlc{\Cs} \subseteq \fod(\infsigc)$. We have to prove that for every \tla{\Cs} formula, there exists a sentence of $\fod(\infsigc)$ defining the same language. Consider a \tla{\Cs} formula $\varphi$. We use structural induction to construct a formula $\bfor{\varphi}(x)$ of $\fod(\infsigc)$ with at most one free variable ``$x$'', which satisfies the following property:
  \begin{equation} \label{eq:utl1}
    \text{Given $w \in A^*$ and a position $i \in \pos{w}$,} \quad w \models \bfor{\varphi}(i)\  \Longleftrightarrow\ w,i \models \varphi.
  \end{equation}
  Clearly, the language defined by $\varphi$ is also defined by the sentence ``$\bfor{\varphi}(min)$'' of $\fod(\infsigc)$. Hence, this proves $\tlc{\Cs} \subseteq \fod(\infsigc)$, as announced.

  It remains to construct $\bfor{\varphi}(x)$ from $\varphi$, which we do using structural induction on the \tla{\Cs} formula $\varphi$. Note that we only present the construction. That it satisfies~\eqref{eq:utl1} is straightforward to verify and left to the reader. First, if $\varphi := \top$ or $\varphi := \bot$, then we define $\bfor{\varphi}(x) := \top$ and $\bfor{\varphi}(x) := \bot$ respectively. Moreover, if $\varphi := min$ or $\varphi := max$, then we define $\bfor{\varphi}(x) := (x = min)$ and $\bfor{\varphi}(x) := (x = max)$ respectively. If $\varphi := a$ for some letter $a \in A$, then we define $\bfor{\varphi}(x) := a(x)$. Assume now that $\varphi := \finallyl{\psi}$ for some $L \in \Cs$ and a smaller \tla{\Cs} formula $\psi$. In this case, we define $\bfor{\varphi}(x) := \exists y\ I_L(x,y) \wedge \bfor{\psi}(y)$. Finally, when $\varphi := \finallyml{\psi}$ for some $L \in \Cs$ and a smaller \tla{\Cs} formula $\psi$, we  define $\bfor{\varphi}(x) := \exists y\ I_L(y,x) \wedge \bfor{\psi}(y)$. This concludes the proof of the right to left inclusion in~Theorem~\ref{thm:temporal}.

  \medskip

  \newcommand{\crh}[1]{\ensuremath{[#1]}\xspace}

  We turn to the inclusion $\fod(\infsigc) \subseteq \tlc{\Cs}$. Let $\varphi(x)$ be an $\fod(\infsigc)$ formula with at most one free variable $x$. We prove that there exists a \tla{\Cs} formula \crh{\varphi} satisfying the following property:
  \begin{equation}\label{eq:translatfo2}
    \text{Given $w \in A^*$ and a position $i \in \pos{w}$,} \quad w,i \models \crh{\varphi}\  \Longleftrightarrow\ w \models \varphi(i).
  \end{equation}
  Before we present the construction, let us explain why this implies $\fod(\infsigc) \subseteq \tlc{\Cs}$. Let $L \in \fod(\infsigc)$. By definition, $L$ is defined by a sentence $\varphi$ of $\fod(\infsigc)$. Since $\varphi$ has no free variables, the \tla{\Cs} formula \crh{\varphi} given by~\eqref{eq:translatfo2} satisfies $w,i \models \crh{\varphi} \Leftrightarrow w \models \varphi$ for every $w \in A^*$ and every $i \in \pos{w}$. In particular, $\crh{\varphi}$ defines $L$ and we get $L \in \tlc{\Cs}$.

  \smallskip
  It remains to prove that for every formula $\varphi(x)$ of $\fod(\infsigc)$ with at most one free variable~$x$, there exists a \tla{\Cs} formula \crh{\varphi}  satisfying~\eqref{eq:translatfo2}. We proceed by induction on the size of $\varphi(x)$. We start with the base case: $\varphi(x)$ is an atomic formula. There are several cases. First, if $\varphi(x) := \top$ or $\varphi(x) := \bot$, then we define $\crh{\varphi} := \top$ and $\crh{\varphi} := \bot$ respectively. If $\varphi := a(x)$ for some letter $a \in A$, then we define $\crh{\varphi} := a$. If $\varphi := a(min)$ or $\varphi := a(max)$ for some letter $a \in A$, then we define $\crh{\varphi} := \bot$. We now consider the atomic formulas involving equality. Since there is only one free variable $x$ and equality is commutative, this boils down to four cases.  If $\varphi := (x = x)$, we define $\crh{\varphi} := \top$. If $\varphi := (x = min)$, we define $\crh{\varphi} := min$. If $\varphi := (x = max)$, we define $\crh{\varphi} := max$. Finally, if $\varphi := (min = max)$, we define $\crh{\varphi} := \bot$. It now remains to consider the atomic formulas involving a binary predicate $I_L$ for some $L \in \Cs$. Since $x$ is the only variable which is (possibly) free in $\varphi(x)$, we have to consider the following cases. If $\varphi$ is either $I_L(x,x)$, $I_L(max,x)$, $I_L(x,min)$ or $I_L(max,min)$, we simply define $\crh{\varphi} := \bot$. If we have $\varphi := I_L(min,x)$, we define $\crh{\varphi} := \finallyml{min}$. If we have $\varphi := I_L(x,max)$, we define $\crh{\varphi} := \finallyl{max}$. Finally, when $\varphi := I_L(min,max)$, we define $\crh{\varphi} := \left(min \wedge \finallyl{max}\right) \vee \finallymp{A^*}{\left(min \wedge \finallyl{max}\right)}$ (note that $A^* \in \Cs$ since \Cs is a Boolean algebra). This concludes the case of atomic formulas. The Boolean connectives are handled in the obvious way: we let $\crh{\psi_1 \vee \psi_2} = \crh{\psi_1} \vee \crh{\psi_2}$, $\crh{\psi_1 \wedge \psi_2} = \crh{\psi_1} \wedge \crh{\psi_2}$  and $\crh{\neg \psi} = \neg \crh{\psi}$.

  It remains to treat quantification over a second variable $y$, \emph{i.e.}, $\varphi(x) := \exists y\ \gamma(x,y)$ (since $\forall$ has the same semantics as $\neg \exists \neg$, we may assume without loss of generality that all quantifiers in $\varphi$ are existential). We cannot apply induction to $\gamma$ since it has \emph{two} free variables. To solve this problem, we first need to put $\gamma$ into normal form. First, we define \Lb as the finite set of all languages $L \in \Cs$ such that $\gamma$ contains an atomic formula $I_L(x,y)$ or $I_L(y,x)$ inside which the occurrences of $x$ and $y$ are free. Using \Lb, we define an equivalence $\sim$ over $A^*$: for $u,v \in A^*$, we write $u \sim v$ when $u \in L \Leftrightarrow v \in L$ for every $L \in \Lb$. Since \Lb is finite, $\sim$ has finite index. Moreover, since every $L \in \Lb$ belongs to $\Cs$, which is a Boolean algebra by hypothesis, the $\sim$-classes belong to $\Cs$ as well. We say that a formula is an \emph{\Lb-profile} when it is of the form $I_H(x,y)$, $I_H(y,x)$ or $x = y$ where $H \in \Cs$ is a $\sim$-class. Note that since $\sim$ has finite index, there are finitely many \Lb-profiles.

  Finally, we say that an $\fod(\infsigc)$ formula with at most two free-variables $x$ and $y$ is \Lb-\emph{normalized} if it has the following form:
  \[
    \bigwedge_{1 \leq i \leq n} (\zeta_i(x) \Leftrightarrow t_i)\ \wedge \pi(x,y) \wedge \psi(y),
  \]
  where $n \in \nat$, $\zeta_1(x),\dots,\zeta_n(x)$ are $\fod(\infsigc)$ formulas \emph{smaller than $\gamma$} whose only free variable is $x$, $t_1,\dots,t_n \in \{\top,\bot\}$, $\pi(x,y)$ is an \Lb-profile, and $\psi(y)$ is an $\fod(\infsigc)$ formula \emph{smaller than $\gamma$} whose only free variable is $y$. The argument is based on the following fact.

  \begin{fct}\label{fct:normalized}
    The formula $\gamma(x,y)$ is equivalent to a finite disjunction of\/ \Lb-normalized formulas.
  \end{fct}

  Before we prove Fact~\ref{fct:normalized}, let us use it to complete the construction of $\crh{\varphi}$. Recall that $\varphi(x) :=$ $\exists y\ \gamma(x,y)$. Since disjunctions and existential quantifications commute, it follows from Fact~\ref{fct:normalized} that $\varphi(x)$ is equivalent to a finite disjunction of the form
  \[
    \bigvee_{1 \leq j \leq m} \varphi_j(x).
  \]
  where each formula $\varphi_j(x)$ is of form $\exists y\ \xi_j(x,y)$ with $\xi_j(x,y)$ an \Lb-normalized formula. Therefore, we concentrate on the formulas $\varphi_j$: if we construct \tla{\Cs} formulas $\crh{\varphi_1},\dots,\crh{\varphi_m}$ satisfying~\eqref{eq:translatfo2} for $\varphi_1,\dots,\varphi_m$, it will then suffice to define:
  \[
    \crh{\varphi} = \bigvee_{1 \leq j \leq m} \crh{\varphi_j}.
  \]
  Consider one of the formulas $\varphi_j$. We use induction to build $\crh{\varphi_j}$. By definition of normalized formulas, $\varphi_j(x)$ has the following form:
  \[
    \varphi_j(x) = \exists y\ \bigwedge_{1 \leq i \leq n} (\zeta_i(x) \Leftrightarrow t_i)\ \wedge\pi(x,y) \wedge \psi(y).
  \]
  with $n \in \nat$, $\zeta_1(x),\dots,\zeta_n(x)$ are $\fod(\infsigc)$ formulas \emph{smaller than $\gamma$} whose only free variable is $x$, $t_1,\dots,t_n \in \{\top,\bot\}$, $\pi(x,y)$ an \Lb-profile and $\psi(y)$ is an $\fod(\infsigc)$ formula \emph{smaller than $\gamma$} whose only free variable is $y$. We may therefore apply induction to the subformulas $\zeta_1(x),\dots,\zeta_n(x)$ and $\psi(y)$, which yields $\fod(\infsigc)$ formulas $\crh{\zeta_1},\dots,\crh{\zeta_n}$ and $\crh{\psi}$ satisfying~\eqref{eq:translatfo2}. We consider three cases depending on the \Lb-profile $\pi(x,y)$. First, if $\pi(x,y) :=$ ``$I_H(x,y)$'' for some $\sim$-class $H \in \Cs$, we define,
  \[
    \crh{\varphi_j} = \bigwedge_{1 \leq i \leq n} (\crh{\zeta_i} \Leftrightarrow t_i)\ \wedge \finallyp{H}{\crh{\psi}}.
  \]
  Second, if $\pi(x,y) :=$ ``$I_H(y,x)$'' for some $\sim$-class $H \in \Cs$, we define,
  \[
    \crh{\varphi_j} = \bigwedge_{1 \leq i \leq n} (\crh{\zeta_i} \Leftrightarrow t_i)\ \wedge \finallymp{H}{\crh{\psi}}.
  \]
  Finally, if $\pi(x,y) :=$ ``$x = y$'' for some $\sim$-class $H \in \Cs$, we define,
  \[
    \crh{\varphi_j} = \bigwedge_{1 \leq i \leq n} (\crh{\zeta_i} \Leftrightarrow t_i)\ \wedge \crh{\psi}.
  \]
  One may verify that this definition satisfies~\eqref{eq:translatfo2}, which concludes the proof for the construction of \crh{\varphi}.

  \medskip

  It remains to prove Fact~\ref{fct:normalized}. By definition, $\gamma(x,y)$ is a Boolean combination of atomic formulas and existential quantifications, \emph{i.e.}, formulas of the form $\exists z\ \psi(x,y,z)$. Note that since $\gamma(x,y)$ is \fod and already contains the two variables $x$ and $y$, we know that in the latter case, the quantified variable $z$ is either $x$ or $y$, which means that there remains in the formula a \emph{single} free variable (namely, $x$ if $z=y$ and $y$ if $z=x$). The important point resulting from this discussion is that the only subformulas in the Boolean combination involving \emph{both} $x$ and $y$ as free variables are \emph{atomic}. In summary, we conclude that $\gamma(x,y)$ is a Boolean combination of three kinds of subformulas:
  \begin{enumerate}
    \item Atomic formulas involving \emph{both} $x$ and $y$: by definition, they are of the form $I_L(x,y)$ or $I_L(y,x)$ for $L \in \Lb$.
    \item Formulas $\zeta_1(x),\dots,\zeta_n(x)$ whose only free variable is $x$.
    \item Formulas $\chi_1(y),\dots,\chi_m(y)$ whose only free variable is $y$.
  \end{enumerate}
  Intuitively, the main idea is now to get rid of the first two kinds of subformulas from $\gamma(x,y)$ by making a disjunction over all possible assignments of truth values for $\zeta_1(x),\dots,\zeta_n(x)$ and all possible formulas $I_L(x,y)$ or $I_L(y,x)$ for $L \in \Lb$.

  Let $T = \{\bot,\top\}^n$. Intuitively, we view each tuple $\tau = (t_1,\dots,t_n) \in T$ as an assignment of truth values for the subformulas $\zeta_1(x),\dots,\zeta_n(x)$. Moreover, let $P$ be the set of all \Lb-profiles. By definition, an \Lb-profile $\pi(x,y)$ determines the truth value of every atomic formula $I_L(x,y)$ or $I_L(y,x)$ for $L \in \Lb$. Indeed, when $\pi(x,y) :=$ ``$x = y$''  is satisfied, then all atomic formulas $I_L(x,y)$ or $I_L(y,x)$ are false. When $\pi(x,y) :=$ ``$I_H(x,y)$'' is satisfied for some $\sim$-class $H$, then all formulas $I_L(y,x)$ are false and given $L \in \Lb$, $I_L(x,y)$ is true if and only if $H \subseteq L$ (by definition of the equivalence $\sim$ from \Lb). Symmetrically, when $\pi(x,y) :=$ ``$I_H(y,x)$'' is satisfied for some $\sim$-class $H$, then all formulas $I_L(x,y)$ are false and given $L \in \Lb$, $I_L(y,x)$ is true if and only if $H \subseteq L$. For every $\tau = (t_1,\dots,t_n) \in T$ and $\pi \in P$, we denote by  $\pi(\tau(\gamma(x,y)))$ the formula obtained from $\gamma(x,y)$ by replacing each subformula $\zeta_i(x)$ in the Boolean combination by the truth value $t_i \in \{\bot,\top\}$ and each atomic formula $I_L(x,y)$ or $I_L(y,x)$ for $L \in \Lb$ by its truth value as given by $\pi$. Observe that the only free variable in $\pi(\tau(\gamma(x,y)))$ is $y$ since we replaced all subformulas involving $x$ in the Boolean combination by truth values. One may now verify that $\gamma(x,y)$ is equivalent to the following~formula:
  \[
    \bigvee_{\tau =(t_1,\dots,t_n) \in T} \bigvee_{\pi \in P}\Biggl(\bigwedge_{i \leq n} (\zeta_i(x) \Leftrightarrow t_i)\ \wedge \pi(x,y) \wedge \pi(\tau(\gamma(x,y))) \Biggr).
  \]
  By definition, this is a disjunction of normalized formulas. Note also that each formula $\zeta_i(x)$ is smaller than $\gamma(x,y)$, as well as $\psi(y)=\pi(\tau(\gamma(x,y)))$. This concludes the proof of~Fact~\ref{fct:normalized}.
\end{proof}

\subsection{Connection with unambiguous polynomial closure}

We now prove that if \Cs is a \vari, the inclusion $\upol{\bpol{\Cs}} \subseteq \tlc{\Cs}$ holds. Let us point out that this inclusion is \emph{strict} in general. Yet, we shall prove in the next section that when \Cs is either a \vari of group languages \Gs or its \wsuit extension $\Gs^+$, the inclusion is in fact an equality: in that case, we have $\upol{\bpol{\Cs}} = \tlc{\Cs}$.

\begin{rem}
  The cases when \Cs is either \Gs or $\Gs^+$ for a \vari of group languages \Gs cover most of the situations when \tlc{\Cs} corresponds to a natural logical class. Indeed, as $\tlc{\Cs} = \fod(\infsig{\Cs})$ by Theorem~\ref{thm:temporal}, it follows from the results of Section~\ref{sec:frag} that when \Cs is either \Gs or $\Gs^+$, we obtain the variant $\fod(<,\prefsigg)$ and $\fod(<,+1,\prefsigg)$ of \fod. Yet, there is a special variant of\/ \fod that is not covered by theses cases. It was considered by Krebs, Lodaya, Pandya and Straubing~\cite{between}. It is called ``two-variable first-order logic with between predicates'' and it is simple to verify that it corresponds to $\fod(\infsig{\at})$. It is proved in \cite{between} that this variant has decidable membership but this is based on a specialized proof which is independent of the techniques that we use here.
\end{rem}

We may now prove the generic inclusion $\upol{\bpol{\Cs}}\subseteq\tlc{\Cs}$. The proof argument is based on Theorem~\ref{thm:apol} which yields $\upol{\bpol{\Cs}} = \wadet{\bpol{\Cs}}$. With this equality in hand, it suffices to prove that $\wadet{\bpol{\Cs}} \subseteq \tlc{\Cs}$.

\begin{prop} \label{prop:ubinutl}
  Let \Cs be a \vari. Then, $\upol{\bpol{\Cs}} \subseteq \tlc{\Cs}$.
\end{prop}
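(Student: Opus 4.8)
The plan is to use Theorem~\ref{thm:apol}, which gives $\upol{\bpol{\Cs}} = \wadet{\bpol{\Cs}}$, so it suffices to prove the inclusion $\wadet{\bpol{\Cs}} \subseteq \tlc{\Cs}$. We prove this by induction on the construction of a language $L \in \wadet{\bpol{\Cs}}$, which is built from the languages in $\bpol{\Cs}$ using disjoint unions and left/right $\bpol{\Cs}$-deterministic marked concatenations. Actually, since \tlc{\Cs} is a Boolean algebra, it is convenient to prove a slightly more general statement that handles arbitrary (not necessarily disjoint) unions and arbitrary left/right deterministic marked concatenations $KaL$ with $K,L \in \tlc{\Cs}$ such that there exist $K',L' \in \bpol{\Cs}$ with $K \subseteq K'$, $L \subseteq L'$ and $K'aL'$ left (resp.\ right) deterministic. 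The base case is immediate: for $L \in \bpol{\Cs}$, one first reduces to the case $L \in \pol{\Cs}$ (using that \tlc{\Cs} is a Boolean algebra and De~Morgan), and then to a marked product $L_0 a_1 L_1 \cdots a_n L_n$ with each $L_i \in \Cs$; since $\tlc{\Cs}$ contains \Cs and is closed under the temporal operators $\textup{F}_{L_i}$, such marked products are defined by formulas that ``walk'' from position to position, so $\bpol{\Cs}\subseteq\tlc{\Cs}$. (One should be careful that marked products need not be unambiguous; but membership of a fixed marked product in a \emph{Boolean algebra} closed under the relevant temporal operators does not require unambiguity — it can be expressed, using both $\textup{F}$ and $\textup{P}$ together with the labels, by a disjunction over the possible ``types'' of a witnessing factorization, exactly as in the treatment of \pol{\stzer} in Lemma~\ref{lem:upolat}.)

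The inductive step for unions is trivial since \tlc{\Cs} is closed under union. The interesting case is a right deterministic marked concatenation $L = KaL_2$ with $K, L_2 \in \tlc{\Cs}$ (the left deterministic case is symmetric, using $\textup{P}$ instead of $\textup{F}$), where there are $K', L_2' \in \bpol{\Cs}$ with $K \subseteq K'$, $L_2 \subseteq L_2'$ and $K'aL_2'$ right deterministic, \emph{i.e.}, $L_2' \cap A^*aL_2' = \emptyset$. By Theorem~\ref{thm:temporal}, $\tlc{\Cs} = \fod(\infsigc)$, and by induction $K$ and $L_2$ are in \tlc{\Cs}; fix \tla{\Cs} formulas $\varphi_K$ and $\varphi_{L_2}$ defining them. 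Because $\bpol{\Cs}$ is a \vari (Corollary~\ref{cor:bpolc}), the quotients of $K'$ and $L_2'$ lie in $\bpol{\Cs}$. By Fact~\ref{fct:esuit} and Proposition~\ref{prop:genocm}, choose a single \Cs-morphism $\eta: A^* \to N$ recognizing all the languages used as parameters in $\varphi_K$ and $\varphi_{L_2}$ and also recognizing $K'$, $L_2'$ and their quotients. Evaluating $\varphi_K$ at position $0$ and $\varphi_{L_2}$ at a position $j$ means that $w$ has a factorization $w = u a v$ with $u \in K$, $v \in L_2$, the ``$a$'' sitting at the position just after the prefix $u$. The key point is that the right determinism of $K'aL_2'$ forces this factorization to be \emph{canonical}: since $v \in L_2 \subseteq L_2'$ and $L_2'\cap A^*aL_2'=\emptyset$, no proper suffix $v'$ of $v$ that would begin with $a$ can lie in $L_2'$, hence the position of the marked ``$a$'' is uniquely determined — it is the \emph{first} position $j$ (reading $w$ left to right) carrying label $a$ such that the suffix $\infix{w}{j}{|w|+1}$ lies in $L_2'$. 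This ``first such $j$'' can be identified by a \tla{\Cs} formula: one uses the operator $\textup{F}_{A^*}$ to reach a candidate $j$ with label $a$ and $\infix{w}{0}{j}\in K'$ (a prefix condition, expressible since $K'\in\bpol{\Cs}\subseteq\tlc{\Cs}$ and prefixes can be tested from position $0$ by an $\textup{F}$-formula, or more simply by relativizing) and $\infix{w}{j}{|w|+1}\in L_2'$, together with a ``no earlier such candidate'' clause using $\textup{P}$ over $A^*$; at that position one asserts $\textup{P}_{?}$-reconstruction of the prefix condition ``$u\in K$'' via the labels and the $\eta$-value of the prefix, and then asserts $\varphi_{L_2}$. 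The resulting formula is a \tla{\Cs} formula (all its temporal parameters are recognized by $\eta$, hence lie in \Cs), and it defines $KaL_2$. This gives $L\in\tlc{\Cs}$, completing the induction.

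The main obstacle I anticipate is the careful bookkeeping in the concatenation step: one must verify that the position of the marked letter is genuinely definable by a \tla{\Cs} formula, which hinges on encoding the conditions ``prefix in $K$'' and ``suffix in $L_2'$'' purely with the unary temporal operators $\textup{F}_H$, $\textup{P}_H$ for $H$ recognized by $\eta$ (together with the label atoms) — this is where determinism is essential and where the paper's machinery (the alternative characterization of \tleqk via Lemma~\ref{lem:efgame}, and the congruence Lemma~\ref{lem:efcong}) may be the cleanest route: rather than constructing the formula by hand, one argues that \tleqk-equivalence (for suitable $\eta$ and $k$) refines membership in $KaL_2$, using the right-determinism to pin down the decomposition, and then invokes that each \tleqk-class is in \tlc{\Cs}. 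A secondary subtlety is reducing $\bpol{\Cs}$ to $\pol{\Cs}$ in the base case, but this is routine given that \tlc{\Cs} is a Boolean algebra. The argument never uses unambiguous concatenations directly — only the strictly simpler left/right deterministic ones — which is exactly why invoking Theorem~\ref{thm:apol} at the outset pays off.
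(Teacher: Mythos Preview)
Your overall architecture matches the paper: reduce to $\wadet{\bpol{\Cs}}$ via Theorem~\ref{thm:apol}, handle the base case $\bpol{\Cs}\subseteq\tlc{\Cs}$ by nesting $\textup{F}_{L_i}$ operators along a marked product, and treat disjoint union trivially. The gap is in the concatenation step, and it is not just bookkeeping. You claim one can pick a \Cs-morphism $\eta$ that ``also recogniz[es] $K'$, $L_2'$ and their quotients''. But $K',L_2'\in\bpol{\Cs}$, not $\Cs$, so no \Cs-morphism need recognize them; consequently you cannot test ``suffix in $L_2'$'' or ``prefix in $K'$'' with a single operator $\textup{F}_H$ or $\textup{P}_H$ for $H\in\Cs$. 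Likewise, ``$\textup{P}_?$-reconstruction of $u\in K$ via the $\eta$-value of the prefix'' fails for the same reason: $K\in\tlc{\Cs}$ is not in general recognized by any \Cs-morphism.

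The paper fills exactly this gap in two pieces. First (Fact~\ref{fct:utlg}), for any $K\in\bpol{\Cs}$ one builds \tla{\Cs} formulas $\xi_K^\ell,\xi_K^r$ that, evaluated at a position $i$, test whether the prefix (resp.\ suffix) at $i$ lies in $K$: one writes $K$ as a Boolean combination of marked products $K_0a_1\cdots a_nK_n$ with $K_j\in\Cs$ and \emph{nests} the operators $\textup{F}_{K_j}$ (resp.\ $\textup{P}_{K_j}$). This is what lets you pin down the unique marked position using only parameters in $\Cs$ (and, incidentally, there is no need for a ``no earlier candidate'' clause: right determinism already makes the position unique, not merely first). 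Second, once that position is pinned by $\psi:=a\wedge\xi_{K'}^\ell$ (or the symmetric $\xi_{L_2'}^r$), the paper does not try to recover $u\in K$ from an $\eta$-value; instead it \emph{relativizes} the given \tla{\Cs} formulas $\varphi_H$ and $\varphi_L$ by a uniform syntactic rewriting of every $\textup{F}_V/\textup{P}_V$ subformula so that their evaluation is confined to the prefix (resp.\ suffix) delimited by the unique $\psi$-position. Your alternative route via $\tleqk$ could be made to work, but you would still need the analogue of Fact~\ref{fct:utlg} to show that $\tleqk$ (for a \Cs-morphism $\eta$) refines membership in $KaL_2$, since determining the marked position requires information about $L_2'\in\bpol{\Cs}$ that a single $\eta$-class does not carry.
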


\begin{proof}
  By Corollary~\ref{cor:bpolc}, \bpol{\Cs} is a \vari. Hence, $\upol{\bpol{\Cs}} = \wadet{\bpol{\Cs}}$ by Theorem~\ref{thm:apol} and it suffices to prove that $\wadet{\bpol{\Cs}} \subseteq \tlc{\Cs}$. By definition of \wadeto, this boils down to proving that \tlc{\Cs} contains \bpol{\Cs} and is closed under disjoint union and left/right \bpol{\Cs}-deterministic marked concatenation. Closure under union is immediate by definition. For the other properties, we use the following fact.

  \begin{fct} \label{fct:utlg}
    Let $K \in \bpol{\Cs}$. There exist two \tla{\Cs} formulas $\xi_K^\ell$ and $\xi_K^r$ such that for every word $w \in A^*$ and every position $i \in \pos{w}$, we have,
    \[
      \begin{array}{lll}
        w,i \models \xi_K^\ell & \Longleftrightarrow & i \geq 1 \text{ and } \prefix{w}{i} \in K. \\
        w,i \models \xi_K^r & \Longleftrightarrow & i \leq |w| \text{ and } \suffix{w}{i} \in K.
      \end{array}
    \]
  \end{fct}

  \begin{proof}
    We prove the existence of $\xi_K^r$ (the argument for $\xi_K^\ell$ is symmetrical and left to the reader). By definition, $K \in \bpol{\Cs}$ is a Boolean combination of languages $K_0a_1K_1 \cdots a_nK_n$ where $a_1,\dots,a_n  \in A$ and $K_0,\dots,K_n \in \Cs$. Therefore, since we may use Boolean connectives freely in \tla{\Cs} formulas, we may assume without loss of generality that $K$ itself is of the form $K_0a_1K_1 \cdots a_nK_n$. In that case, it suffices to define,
    \[
      \xi_K^r = \finallyp{K_0}{\left(a_1 \wedge  \finallyp{K_1}{\left(a_2 \wedge  \finallyp{K_2}{\left(\quad \cdots\quad  a_n \wedge  \finallyp{K_n}{max}\right)}\right)}\right)}.
    \]
    By definition, $\xi_K^r$ is a \tla{\Cs} formula. Moreover, one can verify that it satisfies the desired property, concluding the proof.
  \end{proof}

  Clearly, Fact~\ref{fct:utlg} implies that \tlc{\Cs} contains $\bpol{\Cs}$: every language $K \in \bpol{\Cs}$ is defined by the \tla{\Cs} formula $\xi_K^r$. It remains to prove that \tlc{\Cs} is closed under left/right \bpol{\Cs}-deterministic marked concatenation. By symmetry, we concentrate on \emph{left} \bpol{\Cs}-deterministic marked concatenation.  Let $H,L \in \tlc{\Cs}$ and $a \in A$ such that $HaL$ is left \bpol{\Cs}-deterministic. We prove that $HaL \in \tlc{\Cs}$. By definition, we have $K \in \bpol{\Cs}$ such that $H \subseteq K$ and $KaL$ is left deterministic. We write $\psi$ for the \tla{\Cs} formula $a \wedge \xi_K^\ell$ (where  $\xi_K^\ell$ is given by Fact~\ref{fct:utlg}).  Since $KaL$ is left deterministic, it follows from Lemma~\ref{lem:detuna} that $KaA^*$ is unambiguous. Hence, by definition of $\psi$, for every word $w \in A^*$, there exists \emph{at most one} position $i \in \pos{w}$ such that $w,i \models \psi$. Since $H \subseteq K$, it follows that for every $w \in A^*$, we have $w \in HaL$ if and only if $w$ satisfies the three~properties:
  \begin{enumerate}
    \item there exists $i \in \pos{w}$ such that $w,i \models \psi$ (the position $i$ is unique as explained above).
    \item The prefix \infix{w}{0}{i} belongs to $H$.
    \item The suffix \infix{w}{i}{|w|+1} belongs to $L$.
  \end{enumerate}
  Therefore, it suffices to prove that these three properties may be expressed using a \tla{\Cs} formula. This is simple for the first property: it is expressed by the formula $\finally{\psi}$. It remains to prove that the other two properties can be defined as well. Let us start with the second one. Since $H \in \tlc{\Cs}$, it is defined by a \tla{\Cs} formula $\varphi_H$. We modify it to construct another formula $\varphi'_H$ expressing the second property. Given $w \in A^*$, we restrict the evaluation of $\varphi_H$ to the positions $i \in \pos{w}$ such that $i$ is either strictly smaller than the unique one satisfying $\psi$ or $i = |w|+1$. More precisely, we build $\varphi'_H$ by applying the two following modifications to $\varphi_H$:
  \begin{enumerate}
    \item We recursively replace each subformula of the form $\finallymp{V}{\zeta}$ with $V \in \Cs$ by,
    \[
      \left(max \wedge \finallym{\left(\psi \wedge \finallymp{V}{\zeta}\right)}\right) \vee \left(\finally{\psi} \wedge \finallymp{V}{\zeta}\right).
    \]
    \item We recursively replace each subformula of the form $\finallyp{V}{\zeta}$ with $V \in \Cs$ by,
    \[
      \left(\finallyp{V}{\left(\zeta \wedge \finally{\psi}\right)}\right) \vee \left(\finallyp{V}{\left(\psi \wedge \finally{\left(max \wedge \zeta\right)}\right)}\right).
    \]
  \end{enumerate}
  It remains to handle the third property. Since $L \in \tlc{\Cs}$, there exists a \tla{\Cs} formula $\varphi_L$ defining $L$.  We construct a formula $\varphi'_L$ expressing  the third property above.

  Given $w \in A^*$, we restrict the evaluation of $\varphi_L$ to the positions $i \in \pos{w}$ such that $i$ is either strictly larger than the unique one satisfying $\psi$ or $i = 0$. More precisely, we build $\varphi'_L$ by applying the two following modifications to $\varphi_L$:
  \begin{enumerate}
    \item We recursively replace each subformula of the form $\finallymp{V}{\zeta}$ with $V \in \Cs$ by,
    \[
      \left(\finallymp{V}{\left(\zeta \wedge \finallym{\psi}\right)}\right) \vee \left(\finallymp{V}{\left(\psi \wedge \finallym{\left(min \wedge \zeta\right)}\right)}\right).
    \]
    \item We recursively replace each subformula of the form $\finallyp{V}{\zeta}$ with $V \in \Cs$ by,
    \[
      \left(min \wedge \finally{\left(\psi \wedge \finallyp{V}{\zeta}\right)}\right) \vee \left(\finallym{\psi} \wedge \finallyp{V}{\zeta}\right).
    \]
  \end{enumerate}
  The language $HaL$ is now defined by the \tla{\Cs} formula $(\finally{\psi}) \wedge \varphi'_H \wedge \varphi'_L$. This concludes the proof of Proposition~\ref{prop:ubinutl}.
\end{proof}

\section{Logical characterizations of unambiguous polynomial closure}
\label{sec:logcar}
This final section details the logical characterizations of unambiguous polynomial closure. Let us first summarize what we already know. By Theorem~\ref{thm:capoldel2}, when \Cs is a \vari, we have  $\dec{2}(\infsigc) = \capol{\bpol{\Cs}}$. Moreover, since \bpol{\Cs} is a \vari by Corollary~\ref{cor:bpolc}, we get from Theorem~\ref{thm:polcopol} that,
\[
  \dec{2}(\infsigc) = \capol{\bpol{\Cs}} = \upol{\bpol{\Cs}}.
\]
Independently, we proved in Theorem~\ref{thm:temporal} that the variant $\fod(\infsigc)$ of two-variable first-order logic corresponds to the unary temporal logic \tlc{\Cs} (this holds as soon as \Cs is a Boolean algebra). Finally, we proved the inclusion $\upol{\bpol{\Cs}} \subseteq \tlc{\Cs}$ in Proposition~\ref{prop:ubinutl} for every \vari \Cs. Altogether, it follows that when \Cs is a \vari, we have
\[
  \dec{2}(\infsigc) = \capol{\bpol{\Cs}} = \upol{\bpol{\Cs}} \subseteq \tlc{\Cs} = \fod(\infsigc).
\]
In general, the inclusion is strict. Yet, we prove in this section that when \Cs is either a \vari of group languages \Gs or its \wsuit extension $\Gs^+$, the converse inclusion holds. Since the presentation of logical classes can be simplified in these cases (see Lemma~\ref{lem:gensig} and Lemma~\ref{lem:utlx}), we get the following generic results for every \vari of group languages \Gs:
\[
  \begin{array}{ccccccc}
    \upol{\bpol{\Gs}} & = & \dec{2}(<,\prefsigg) & = & \fod(<,\prefsigg) & =&  \tlc{\Gs}. \\
    \upol{\bpol{\Gs^+}} & = & \dec{2}(<,+1,\prefsigg) & = & \fod(<,+1,\prefsigg) & =&  \tlxc{\Gs}.
  \end{array}
\]
The proofs of the missing inclusions are based on Theorem~\ref{thm:ubp}, the algebraic characterization of $\capol{\bpol{\Cs}}=\dec{2}(\infsigc)$. We use Proposition~\ref{prop:efg} to prove that it is satisfied by the class \tlc{\Cs} when \Cs is a \vari of group languages \Gs or its \wsuit extension $\Gs^+$. Actually, we simplify the generic characterization presented in Theorem~\ref{thm:ubp} and present specialized characterizations for these two special cases. In particular, this yields generic characterizations of $\fod(<,\prefsigg)$ and $\fod(<,+1,\prefsigg)$. They generalize well-known results for particular instances of these logical classes. Let us point out that here, we directly characterize the languages of these classes (using a property of their syntactic morphism) rather than characterizing the morphisms associated to the class. This is because in this case, we do need to use the characterizations as subresults.

\subsection{Group languages} We first consider the classes \upol{\bpol{\Gs}} where \Gs is a \vari of group languages. The algebraic characterization is based on the class of finite monoids \davar, whose connection with unambiguous polynomial closure~\cite{schul} and two-variable first-order logic~\cite{twfo2} is well known. We use a definition based on an equation (see~\cite{Tesson02diamondsare} for details). A finite monoid $N$ belongs to \davar when it satisfies the following equation:
\begin{equation} \label{eq:da}
  (st)^\omega = (st)^\omega t (st)^\omega \quad \text{for every $s,t \in N$}.
\end{equation}
We are now ready to present the generic characterization for classes of the form \upol{\bpol{\Gs}}.

\begin{theorem}\label{thm:glogic}
  Let \Gs be a \vari of group languages and let $L$ be a regular language. The following properties are equivalent:
  \begin{enumerate}
    \item $L \in \upol{\bpol{\Gs}}$.
    \item $L \in \capol{\bpol{\Gs}}$.
    \item $L \in \dec{2}(<,\prefsigg)$.
    \item $L \in \fod(<,\prefsigg)$.
    \item $L \in \tlc{\Gs}$.
    \item The \Gs-kernel of the syntactic morphism of $L$ belongs to \davar.
  \end{enumerate}
\end{theorem}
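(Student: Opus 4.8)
The plan is to route everything through two already-established facts: the algebraic characterization of $\capol{\bpol{\Cs}}$-morphisms in Theorem~\ref{thm:ubp} (equation~\eqref{eq:cabpol} with $\Cs=\Gs$), and the chain $\dec{2}(\infsig{\Gs}) = \capol{\bpol{\Gs}} = \upol{\bpol{\Gs}} \subseteq \tlc{\Gs} = \fod(\infsig{\Gs})$. Several implications are then free. Theorem~\ref{thm:polcopol} applied to $\bpol{\Gs}$, which is a \vari by Corollary~\ref{cor:bpolc}, gives $(1)\Leftrightarrow(2)$. Theorem~\ref{thm:capoldel2} together with Lemma~\ref{lem:gensig} (used for the fragments $\Sigma_2$ and $\Pi_2$) gives $(2)\Leftrightarrow(3)$, and Theorem~\ref{thm:temporal} with Lemma~\ref{lem:gensig} gives $(4)\Leftrightarrow(5)$. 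Combining Theorem~\ref{thm:capoldel2}, Theorem~\ref{thm:polcopol}, Proposition~\ref{prop:ubinutl} and Theorem~\ref{thm:temporal} yields $\capol{\bpol{\Gs}} \subseteq \tlc{\Gs}$, i.e.\ $(2)\Rightarrow(5)$. Hence it only remains to close the cycle, which I will do by proving $(5)\Rightarrow(6)$ and $(6)\Rightarrow(2)$.

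For $(5)\Rightarrow(6)$: fix $L \in \tlc{\Gs}$, a \tla{\Gs} formula $\varphi$ defining $L$ of rank $k_0$, and (via Fact~\ref{fct:esuit}) a \Gs-morphism $\eta:A^*\to N$ for which $\varphi$ is a \tla{\eta} formula; by Lemma~\ref{lem:gmorph}, $N$ is a group, so $E(N)=\{1_N\}$. Since the relations $\tleqp{k}$ restricted to $A^*$ are congruences of finite index (Lemma~\ref{lem:rank}, Lemma~\ref{lem:efcong}) and $L$ is saturated by $\tleqp{k_0}$, the syntactic morphism $\alpha_L:A^*\to M_L$ identifies any two words that are $\tleqp{k}$-equivalent for $k\geq k_0$. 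Writing $K$ for the \Gs-kernel of $\alpha_L$, I must check $K\in\davar$, i.e.\ $(ab)^\omega=(ab)^\omega b(ab)^\omega$ for all $a,b\in K$. Given such $a,b$, Lemma~\ref{lem:morker} supplies words $w_a,w_b$ with $\alpha_L(w_a)=a$, $\alpha_L(w_b)=b$ and $\eta(w_a)=\eta(w_b)=1_N$. Applying Proposition~\ref{prop:efg} with $z=\veps$, $u=w_a$, $v=w_b$ (all in $\eta\inv(1_N)$) and a number $k\geq k_0$ that is a multiple of $\omega(M_L)$ gives $(w_aw_b)^{2k}\tleqp{k}(w_aw_b)^k w_b (w_aw_b)^k$; pushing this equivalence through $\alpha_L$ and collapsing the powers to $\omega$-powers yields exactly $(ab)^\omega=(ab)^\omega b(ab)^\omega$.

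For $(6)\Rightarrow(2)$: assume the \Gs-kernel $K$ of $\alpha_L$ lies in \davar. By Proposition~\ref{prop:synmemb} and Theorem~\ref{thm:ubp} (with Corollary~\ref{cor:polc} ensuring $\capol{\bpol{\Gs}}$ is a \vari), it suffices to verify~\eqref{eq:cabpol} for $\alpha_L$, i.e.\ $(eset)^{\omega+1}=(eset)^\omega et(eset)^\omega$ whenever $e\in E(M_L)$, $s,t\in M_L$ and $(e,s)$ is a \Gs-pair. Since $(e,e)$ is a \Gs-pair (Lemma~\ref{lem:pairsreflex}), Lemma~\ref{lem:mult} makes $(e,se)$ a \Gs-pair, so Lemma~\ref{lem:ptoker} gives $ese\in K$ and $t(eset)^{2\omega-1}\in K$; their product equals $(eset)^{2\omega}=(eset)^\omega$, an idempotent of $K$. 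Applying the \davar equation~\eqref{eq:da} to this product yields $(eset)^\omega=(eset)^\omega\,t(eset)^{2\omega-1}$, and multiplying on the right by $eset$ gives $(eset)^{\omega+1}=(eset)^\omega t(eset)^\omega$ for every $t\in M_L$. Substituting $et$ for $t$ and using $es\cdot et=eset$ produces~\eqref{eq:cabpol}, closing the cycle; the final equalities of logical classes then follow from Lemma~\ref{lem:gensig} as above.

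I expect the main obstacle to be the bookkeeping in $(5)\Rightarrow(6)$: one must choose the auxiliary morphism $\eta$ through Fact~\ref{fct:esuit} so that $N$ is a group, realize the given kernel elements $a,b$ by words landing in $\eta\inv(1_N)$ via Lemma~\ref{lem:morker}, and pick the parameter $k$ in Proposition~\ref{prop:efg} simultaneously above the formula rank $k_0$ (so that $\tleqp{k}$ refines the syntactic congruence) and a multiple of $\omega(M_L)$ (so that the iterated powers in the statement of Proposition~\ref{prop:efg} become $\omega$-powers in $M_L$). Once these choices are fixed, both directions are routine manipulations with Green's relations and the idempotent-power identities valid in finite monoids.
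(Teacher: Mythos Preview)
Your proposal is correct and follows essentially the same route as the paper: the equivalences $(1)\Leftrightarrow(2)\Leftrightarrow(3)$ and $(4)\Leftrightarrow(5)$ together with $(1)\Rightarrow(5)$ are obtained exactly as you describe, and the cycle is closed via $(5)\Rightarrow(6)$ and $(6)\Rightarrow(2)$ using Proposition~\ref{prop:efg} (with $z=\veps$) and Theorem~\ref{thm:ubp} combined with Lemma~\ref{lem:ptoker}, just as the paper does. The only cosmetic differences are that the paper uses the rank $k$ of $\varphi_L$ directly and multiplies by copies of $st$ afterwards (rather than choosing $k$ a multiple of $\omega(M_L)$), and in $(6)\Rightarrow(2)$ it applies Lemma~\ref{lem:ptoker} to the pair $(e,s)$ with the lemma's ``$t$'' instantiated as $et$, obtaining $es$ and $et(eset)^{2\omega-1}$ in the kernel and plugging these directly into~\eqref{eq:da}; your detour through $(e,se)$ and the substitution $t\mapsto et$ at the end is an equally valid minor variant.
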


\begin{proof}
  By Lemma~\ref{lem:gensig}, we have $\dec{2}(<,\prefsigg) = \dec{2}(\infsigg)$. Therefore, $(2) \Leftrightarrow (3)$ follows from Theorem~\ref{thm:capoldel2}. Moreover, $(1) \Leftrightarrow (2)$ is given by Theorem~\ref{thm:polcopol}. Since $\fod(<,\prefsigg) = \fod(\infsigg)$ by Lemma~\ref{lem:gensig}, the equivalence $(4) \Leftrightarrow (5)$ is immediate from Theorem~\ref{thm:temporal}. Finally, the implication $(1) \Rightarrow (5)$ follows from Proposition~\ref{prop:ubinutl}. We now prove independently that $(6) \Rightarrow (2)$ and $(5) \Rightarrow (6)$, which completes the argument. Let $\alpha: A^* \to M$ be the syntactic morphism of $L$ and let $N \subseteq M$ be its \Gs-kernel.

  Let us start with the implication $(6) \Rightarrow (2)$. We assume that $N$ belongs to \davar and show that $L \in \capol{\bpol{\Gs}}$. Since $L$ is recognized by its syntactic morphism $\alpha$, it suffices to show that $\alpha$ is a $(\capol{\bpol{\Gs}})$-morphism. By Theorem~\ref{thm:ubp},this boils down to proving that for every $s,t \in M$ and $e \in E(M)$ such that $(e,s) \in M^2$ is a \Gs-pair, we have $(eset)^{\omega+1} = (eset)^{\omega}et(eset)^{\omega}$. By Lemma~\ref{lem:ptoker}, we have $es \in N$ and $et(eset)^{2\omega-1} \in N$. Since $N$ belongs to \davar, it follows from~\eqref{eq:da} that,
  \[
    (es et(eset)^{2\omega-1})^\omega = (es et(eset)^{2\omega-1})^\omega et(eset)^{2\omega-1} (es et(eset)^{2\omega-1})^\omega.
  \]
  This exactly says that $(eset)^{\omega} = (eset)^{\omega} et (eset)^{3\omega-1}$. It now suffices to multiply by $eset$ on the right to get $(eset)^{\omega+1}=(eset)^{\omega}et(eset)^{\omega}$, as desired.

  Finally, we prove that $(5) \Rightarrow (6)$. We assume that $L \in \tlc{\Gs}$ and show that $N \in \davar$, \emph{i.e.}, that $N$ satisfies~\eqref{eq:da}. Let $s,t \in N$. We prove that $(st)^\omega = (st)^\omega t (st)^\omega$. By hypothesis,~$L$ is defined by a formula $\varphi_L$ of $\tla{\Gs}$. Let $k \in \nat$ be the rank of $\varphi_L$. Fact~\ref{fct:esuit} yields a \Gs-morphism $\eta: A^* \to G$ such that $\varphi_L$ is a \tla{\eta} formula. Moreover, since $s,t$ belong to the \Gs-kernel $N$ of $\alpha$ and $\eta$ is a \Gs-morphism, Lemma~\ref{lem:morker} yields $u,v \in A^*$ such that $\eta(u) = \eta(v) = 1_G$, $\alpha(u) = s$ and $\alpha(v) = t$. Hence, since $\varphi_L$ is an $\eta$-formula of rank $k$, it follows from Lemma~\ref{lem:efcong} and Proposition~\ref{prop:efg} (applied for $z = \veps$ which also maps to $1_G$) that $x(uv)^{k}(uv)^{k}y \in L \Leftrightarrow x(uv)^{k} v(uv)^{k}y \in L$ for all $x,y \in A^*$. In other words, $(uv)^{k}(uv)^{k}$ and $(uv)^{k} v(uv)^{k}$ are equivalent for the syntactic congruence of $L$. Since $\alpha$ is the syntactic morphism of $L$, we get $\alpha((uv)^{k}(uv)^{k})=\alpha((uv)^{k} v(uv)^{k})$. By definition of $u$ and $v$, it follows that $(st)^{k} (st)^{k} = (st)^{k} t(st)^{k}$. It now suffices to multiply by enough copies of $st$ on the left and on the right to obtain $(st)^\omega = (st)^\omega t (st)^\omega$, as desired.
\end{proof}

Given as input a regular language $L$, one can compute its syntactic morphism. Moreover, by Lemma~\ref{lem:kercomp}, the \Gs-kernel of this morphism can be computed when \Gs-separation is decidable. It is then simple to decide whether it belongs to \davar: this boils down to checking whether~\eqref{eq:da} holds by testing all possible combinations for $s$ and $t$. By Theorem~\ref{thm:glogic}, this decides if $L \in \upol{\bpol{\Gs}}$. Altogether, we obtain the following corollary.

\begin{cor}\label{cor:glogic}
  Let \Gs be a \vari of group languages with decidable separation. Then, membership is decidable for the class $\fod(<,\prefsigg) =  \tlc{\Gs} = \dec{2}(<,\prefsigg) = \upol{\bpol{\Gs}}$.
\end{cor}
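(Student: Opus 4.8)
The statement to prove is Corollary~\ref{cor:glogic}: for a \vari of group languages \Gs with decidable separation, membership is decidable for the class $\fod(<,\prefsigg) = \tlc{\Gs} = \dec{2}(<,\prefsigg) = \upol{\bpol{\Gs}}$.

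\medskip

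The plan is to reduce membership to a finite, checkable algebraic condition, using Theorem~\ref{thm:glogic} as the crucial bridge. First I would observe that Theorem~\ref{thm:glogic} already establishes that all four classes named in the statement coincide, so it suffices to decide membership in \emph{one} of them, say $\upol{\bpol{\Gs}}$, and that by the same theorem this is equivalent to condition~(6): the \Gs-kernel of the syntactic morphism of the input language~$L$ lies in \davar. So the decision procedure is: (i) from the input regular language~$L$, compute its syntactic morphism $\alpha_L : A^* \to (M_L,\leq_L)$ — this is a standard effective construction via the Myhill--Nerode theorem, available from any automaton or monoid representation of~$L$; (ii) compute the \Gs-kernel $N \subseteq M_L$ of $\alpha_L$; (iii) check whether $N$ satisfies the defining equation~\eqref{eq:da} of \davar, namely $(st)^\omega = (st)^\omega t (st)^\omega$ for all $s,t \in N$.

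\medskip

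Step~(ii) is where the hypothesis of decidable \Gs-separation enters. By definition, the \Gs-kernel of $\alpha_L$ is the set of elements $s \in M_L$ such that $\{\veps\}$ is \emph{not} \Gs-separable from $\alpha_L\inv(s)$; since $M_L$ is finite and each $\alpha_L\inv(s)$ is a regular language effectively obtainable from $\alpha_L$, an algorithm for \Gs-separation lets us test this membership for each of the finitely many $s \in M_L$. This is exactly the content of Lemma~\ref{lem:kercomp}, which I would invoke directly. Step~(iii) is then immediate: $N$ is a finite set, the idempotent power $\omega = \omega(M_L)$ is computable, and~\eqref{eq:da} is a finite conjunction of equalities in $M_L$ that one checks by brute force over all pairs $(s,t) \in N^2$. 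Combining the three steps yields an algorithm deciding whether $L \in \upol{\bpol{\Gs}}$, and hence (by the equalities in Theorem~\ref{thm:glogic}) membership in each of $\fod(<,\prefsigg)$, $\tlc{\Gs}$, and $\dec{2}(<,\prefsigg)$ as well.

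\medskip

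I do not expect a genuine obstacle here: the corollary is a routine consequence of Theorem~\ref{thm:glogic} together with Lemma~\ref{lem:kercomp} and the computability of syntactic morphisms. The only point deserving a sentence of care is that Theorem~\ref{thm:glogic} is stated for a \emph{regular} language $L$ and uses its \emph{syntactic} morphism, which matches exactly what the decision procedure computes; and that all the effectivity ingredients (syntactic morphism, inverse images $\alpha_L\inv(s)$, idempotent power, and the \Gs-separation oracle) are indeed available. So the proof is essentially a short assembly: ``By Theorem~\ref{thm:glogic}, $L$ belongs to the class iff the \Gs-kernel $N$ of its syntactic morphism satisfies~\eqref{eq:da}. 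The syntactic morphism is computable; by Lemma~\ref{lem:kercomp} and decidability of \Gs-separation, $N$ is computable; checking~\eqref{eq:da} on the finite monoid $N$ is decidable. Hence membership is decidable.''
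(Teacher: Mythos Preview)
Your proposal is correct and follows essentially the same approach as the paper: compute the syntactic morphism, use Lemma~\ref{lem:kercomp} (relying on decidable \Gs-separation) to compute its \Gs-kernel, check the \davar equation~\eqref{eq:da} by brute force, and invoke Theorem~\ref{thm:glogic} for the equivalence with the four classes.
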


Recall that separation is decidable for the four standard \varis of group languages \stzer, \md, \abg and \grp that we presented in Section~\ref{sec:group-languages}. Therefore, Corollary~\ref{cor:glogic} applies in these cases. In particular, we get the decidability of membership for the logical classes \fodw, \fodwm, \fodwam and \fodwgm.

Again, Theorem~\ref{thm:glogic} generalizes classic results. The most prominent one is for~$\Gs = \stzer$. Since $\stzer = \{\emptyset,A^*\}$, it is easy to check that the \stzer-kernel of a surjective morphism $\alpha: A^*\to M$ is the whole monoid $M$. Also by Lemma~\ref{lem:upolat}, we have $\upol{\bpol{\stzer}} = \upol{\at}$. Therefore, in this case, Theorem~\ref{thm:glogic} yields that $\upol{\at} = \dewd = \fodw =\fpfm$ and that a regular language belongs to this class if and only if its syntactic monoid belongs to \davar. Historically, this was proved by combining several independent results. First, the correspondence between \upol{\at} and \davar is due to Schützenberger~\cite{schul}. Then, it was shown by Pin and Weil~\cite{pwdelta2} that $\upol{\at} = \dewd$. As we explained in the previous section, that $\fodw = \fpfm$ was proved by Etessami, Vardi and Wilke~\cite{evwutl}. Finally, the correspondence between \fodw and \davar is due to Thérien and Wilke~\cite{twfo2}.

Another well-known application of Theorem~\ref{thm:glogic} is the case where $\Gs = \md$. In particular, the theorem implies that $\dewmd = \fodwm$ and that a regular language belongs to this class if and only if the \md-kernel of its syntactic morphism belongs to \davar. The correspondence between \fodwm and the membership of the \md-kernel of the syntactic morphism to \davar is due to Dartois and Paperman~\cite{DartoisP13}, while the equality $\dewmd = \fodwm$ is due to Kufleitner and Walter~\cite{KufleitnerW15}.

\begin{rem}
  While \textrm{membership} for \upol{\bpol{\Gs}} boils down to \Gs-separation, not much is known about \emph{separation} and \emph{covering} for \upol{\bpol{\Gs}}. The only class of this kind for which separation and covering are known to be decidable is $\upol{\bpol{\stzer}} = \fodw$. This is the class \upol{\at} by Lemma~\ref{lem:upolat}. Since \at is a \emph{finite} class, \upol{\at}-covering is decidable by Corollary~\ref{cor:upolopt}.
\end{rem}

\subsection{\Wsuit extensions} We now consider classes of the form \upol{\bpol{\Gs^+}}, where~$\Gs^+$ is the \wsuit extension of an arbitrary \vari of group languages \Gs. In this case, the characterization is based on the class of finite semigroups \ldavar. Again, we use a definition which is based on an equation. A finite semigroup $S$ belongs to \ldavar when it satisfies the following equation:
\begin{equation} \label{eq:lda}
  (esete)^\omega = (esete)^\omega ete (esete)^\omega \quad \text{for every $s,t \in S$ and $e \in E(S)$}.
\end{equation}

We now present the generic statement characterization of \upol{\bpol{\Gs^+}}.

\begin{theorem}\label{thm:wglogic}
  Let \Gs be a \vari of group languages and let $L$ be a regular language. The following properties are equivalent:
  \begin{enumerate}
    \item $L \in \upol{\bpol{\Gs^+}}$.
    \item $L \in \capol{\bpol{\Gs^+}}$.
    \item $L \in \dec{2}(<,+1,\prefsigg)$.
    \item $L \in \fod(<,+1,\prefsigg)$.
    \item $L \in \tlxc{\Gs}$.
    \item The strict \Gs-kernel of the syntactic morphism of $L$ belongs to \ldavar.
  \end{enumerate}
\end{theorem}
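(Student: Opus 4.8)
The plan is to follow the blueprint of Theorem~\ref{thm:glogic} as closely as possible, transporting every argument through the bridge between $\Gs^+$-morphisms and $\Gs$-morphisms provided by Lemma~\ref{lem:gpmorph}. First I would dispense with the easy equivalences exactly as in the group case: $(2)\Leftrightarrow(3)$ is Theorem~\ref{thm:capoldel2} together with $\dec{2}(<,+1,\prefsigg)=\dec{2}(\infsiggp)$ from Lemma~\ref{lem:gensig}; $(1)\Leftrightarrow(2)$ is Theorem~\ref{thm:polcopol} applied to the \vari $\Gs^+$ (recall $\Gs^+$ is a \vari by Lemma~\ref{lem:wsuit}); $(4)\Leftrightarrow(5)$ combines Lemma~\ref{lem:gensig}, Lemma~\ref{lem:utlx} and Theorem~\ref{thm:temporal}, giving $\fod(<,+1,\prefsigg)=\fod(\infsiggp)=\tlxc{\Gs}$; and $(1)\Rightarrow(5)$ is Proposition~\ref{prop:ubinutl} applied to $\Gs^+$ (using $\tlc{\Gs^+}=\tlxc{\Gs}$). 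This leaves the two genuinely new implications $(6)\Rightarrow(2)$ and $(5)\Rightarrow(6)$, and here the strict \Gs-kernel $S=N\cap\alpha(A^+)$ of the syntactic morphism $\alpha\colon A^*\to M$ plays the role that the \Gs-kernel played before, with \ldavar replacing \davar.

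For $(6)\Rightarrow(2)$ I would show that if $S\in\ldavar$ then $\alpha$ is a $(\capol{\bpol{\Gs^+}})$-morphism, which by Theorem~\ref{thm:ubp} reduces to verifying~\eqref{eq:cabpol}: for all $s,t\in M$ and $e\in E(M)$ with $(e,s)$ a $\Gs^+$-pair, $(eset)^{\omega+1}=(eset)^{\omega}et(eset)^{\omega}$. The key intermediate step is an analogue of Lemma~\ref{lem:ptoker} for $\Gs^+$: when $(e,s)$ is a $\Gs^+$-pair, then $es$ and $t(est)^{2\omega-1}$ (or, more carefully, elements of the shape needed to feed~\eqref{eq:lda}) lie in the \emph{strict} \Gs-kernel $S$, provided we insert enough copies of $e$ so that the relevant elements are images of \emph{nonempty} words. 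Concretely, since $\alpha$ is surjective and idempotents are images of words which can be taken nonempty, $ese=e(se)\in\alpha(A^+)$, and a $\Gs^+$-pair argument via Lemma~\ref{lem:gpmorph} (passing to the associated \Gs-morphism $\beta$ and using Lemma~\ref{lem:morker} and Lemma~\ref{lem:cmorph}) shows the relevant elements also lie in $N$; since $e\in E(M)$ I can always absorb them into $\alpha(A^+)$. Then I apply~\eqref{eq:lda} with the idempotent $e'=(ese t e)^\omega$ (or similar) and $s',t'$ chosen so that the equation rearranges into~\eqref{eq:cabpol} after multiplying by a suitable power of $eset$ on the right, exactly mirroring the computation in the proof of $(6)\Rightarrow(2)$ in Theorem~\ref{thm:glogic}.

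For $(5)\Rightarrow(6)$ I would assume $L\in\tlxc{\Gs}=\tlc{\Gs^+}$ and prove $S$ satisfies~\eqref{eq:lda}. Fix $s,t\in S$ and $e\in E(S)$; I must show $(esete)^\omega=(esete)^\omega ete(esete)^\omega$. Take a \tla{\Gs^+} formula $\varphi_L$ of rank $k$ defining $L$; Fact~\ref{fct:esuit} (applied with the \vari $\Gs^+$) yields a $\Gs^+$-morphism $\eta\colon A^*\to N'$ making $\varphi_L$ a \tla{\eta} formula. Now the crucial point: because $s,t,e$ lie in the \emph{strict} kernel $S\subseteq\alpha(A^+)\cap N$, and by Lemma~\ref{lem:gpmorph} the restriction of $\eta$ to $A^+$ factors through a group, I can find \emph{nonempty} words $u\in\alpha\inv(s)$, $v\in\alpha\inv(t)$, $z\in\alpha\inv(e)$ all of which $\eta$ sends to the same idempotent of $N'$ — this is where the $+1$/\wsuit setting forces us to work with $A^+$ rather than $A^*$, and it is the reason $z=\veps$ (used in Theorem~\ref{thm:glogic}) is no longer available. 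Then Proposition~\ref{prop:efg}, combined with Lemma~\ref{lem:efcong}, gives the $\tleqk$-equivalence of $(z^{k}uz^{2k}vz^{k})^{k}(z^{k}uz^{2k}vz^{k})^{k}$ and $(z^{k}uz^{2k}vz^{k})^{k}z^kvz^k(z^{k}uz^{2k}vz^{k})^{k}$, hence these two words are equivalent for the syntactic congruence of $L$, hence have the same image under $\alpha$, which translates to an equation of the form $(e s e^2 t e)^k\!\cdots = \cdots$ in $M$; multiplying by enough copies of the relevant element on both sides yields~\eqref{eq:lda}.

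The main obstacle I anticipate is getting the bookkeeping of idempotents right in both new implications: in the group case one exploited $\eta(u)=\eta(v)=1_G$, but in the \wsuit setting the best one can say via Lemma~\ref{lem:gpmorph} is that $\eta$ restricted to nonempty words lands in a group and the strict-kernel elements map to the identity of that group — so every word fed into Proposition~\ref{prop:efg} or Lemma~\ref{lem:morker} must be certified nonempty, and the algebraic identities~\eqref{eq:lda} versus~\eqref{eq:da} differ precisely by the extra conjugating idempotent $e$, which has to be produced and tracked carefully (this is exactly why $S$ is a \emph{semigroup} satisfying a semigroup equation rather than a monoid satisfying~\eqref{eq:da}). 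Once the correspondence ``strict \Gs-kernel $\leftrightarrow$ nonempty words $\leftrightarrow$ \ldavar'' is set up cleanly, both directions become routine transcriptions of the proof of Theorem~\ref{thm:glogic}. I would also state, as in the group case, the decidability corollary: membership for $\fod(<,+1,\prefsigg)=\tlxc{\Gs}=\dec{2}(<,+1,\prefsigg)=\upol{\bpol{\Gs^+}}$ is decidable whenever \Gs-separation is decidable, since the strict \Gs-kernel is computable by Lemma~\ref{lem:kercomp} and~\eqref{eq:lda} is a finite check.
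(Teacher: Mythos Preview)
Your overall architecture matches the paper's proof exactly: the easy equivalences are handled identically, and your plan for $(5)\Rightarrow(6)$ is essentially the paper's argument (the paper also produces nonempty $u,v,z$ with $\eta(u)=\eta(v)=\eta(z)=1_G$, treating the cases $s=1_M$ and $s\neq 1_M$ separately to guarantee nonemptiness, then invokes Proposition~\ref{prop:efg}).

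There is, however, a genuine gap in your $(6)\Rightarrow(2)$ sketch. Your claim that ``idempotents are images of words which can be taken nonempty'' is false precisely in the edge case $e=1_M\notin\alpha(A^+)$, i.e.\ when $\alpha^{-1}(1_M)=\{\veps\}$. In that situation no amount of ``inserting copies of $e$'' puts anything into $\alpha(A^+)$, so you cannot feed \eqref{eq:lda} at all. The paper handles this by an explicit case split on whether $e\in S$. If $e\notin S$ then (since $e\in N$ by Fact~\ref{fct:idemker}) we must have $e=1_M$ and $\alpha^{-1}(1_M)=\{\veps\}$; but $\{\veps\}\in\Gs^+$ separates $\alpha^{-1}(1_M)$ from $\alpha^{-1}(s)$ unless $s=1_M$, so the $\Gs^+$-pair hypothesis forces $s=e=1_M$, and \eqref{eq:cabpol} collapses to $t^{\omega+1}=t^{\omega+1}$. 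This is exactly where the ``$+$'' in $\Gs^+$ is doing work on the algebraic side, and your sketch misses it.

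Two smaller simplifications relative to your plan: first, no new analogue of Lemma~\ref{lem:ptoker} is needed. Since $\Gs\subseteq\Gs^+$, every $\Gs^+$-pair is already a $\Gs$-pair, so Lemma~\ref{lem:ptoker} (applied with $t$ replaced by $et$) gives $es\in N$ and $et(eset)^{2\omega-1}\in N$ directly. When $e\in S\subseteq\alpha(A^+)$, these elements automatically lie in $\alpha(A^+)$ and hence in $S$. Second, the paper does not build an auxiliary idempotent like $(esete)^\omega$: it plugs $e$ itself into \eqref{eq:lda} with $s'=es$ and $t'=et(eset)^{2\omega-1}$, simplifies using $e^2=e$ to get $(eset)^\omega e=(eset)^\omega et(eset)^{3\omega-1}e$, and multiplies on the right by $set$ to obtain \eqref{eq:cabpol}.
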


\begin{proof}
  In this case as well, we already proved most of the implications. By Lemma~\ref{lem:gensig}, we have $\dec{2}(<,+1,\prefsigg) = \dec{2}(\infsiggp)$. Hence, $(2) \Leftrightarrow (3)$ follows from Theorem~\ref{thm:capoldel2}. Moreover, $(1) \Leftrightarrow (2)$ is given by Theorem~\ref{thm:polcopol}. Finally, since $\fod(<,+1,\prefsigg) = \fod(\infsiggp)$ by Lemma~\ref{lem:gensig} and $\tlxc{\Gs} = \tlc{\Gs^+}$ by Lemma~\ref{lem:utlx}, the equivalence $(4) \Leftrightarrow (5)$ is immediate from Theorem~\ref{thm:temporal}. Finally, the implication $(1) \Rightarrow (5)$ follows from Proposition~\ref{prop:ubinutl}. We now prove independently that $(6) \Rightarrow (2)$ and $(5) \Rightarrow (6)$ to complete the proof. We let $\alpha: A^* \to M$ be the syntactic morphism of $L$. Let also $N$ be the \Gs-kernel of $\alpha$, and $S$ be its strict \Gs-kernel. Recall from Section~\ref{sec:cano} that we have $S = N \cap \alpha(A^+)$ by definition.

  We start with the implication $(6) \Rightarrow (2)$. Assume that $S$ belongs to \ldavar. We have to show that $L \in \capol{\bpol{\Gs^+}}$. Since $L$ is recognized by its syntactic morphism $\alpha$, it suffices to show that $\alpha$ is a $(\capol{\bpol{\Gs^+}})$-morphism. By Theorem~\ref{thm:ubp}, this boils down to proving that for every $s,t \in M$ and $e \in E(M)$ such that $(e,s) \in M^2$ is a $\Gs^+$-pair, we have $(eset)^{\omega+1} = (eset)^{\omega}et(eset)^{\omega}$. By Fact~\ref{fct:idemker}, we have $e \in N$. We consider two cases depending on whether $e \in S$ or $e \in N \setminus S$. Assume first that $e \in N \setminus S$. Since $S = N \cap \alpha(A^+)$, this implies that $e = 1_M$ and that $\alpha\inv(1_M) = \{\veps\}$. Since $(1_M,s)$ is a $\Gs^+$-pair and $\{\veps\} \in \Gs^+$, it follows that $s = 1_M$ as well. Hence, $(eset)^{\omega+1} = t^{\omega+1} =  (eset)^{\omega}et(eset)^{\omega}$ follows directly. We now consider the case where $e \in S$. By Lemma~\ref{lem:ptoker}, we have $es$ and $et(eset)^{2\omega-1} \in N$. Since $e \in S$, this implies that $es \in S$ and $et(eset)^{2\omega-1} \in S$. Since $S$ belongs to \ldavar, we get from Equation~\eqref{eq:lda}~that,
  \[
    (e es e et(eset)^{2\omega-1}e)^\omega = (e es e et(eset)^{2\omega-1}e)^\omega  eet(eset)^{2\omega-1}e (eese et(eset)^{2\omega-1}e)^\omega.
  \]
  Since $e$ is idempotent, this simplifies into $(eset)^{\omega}e = (eset)^{\omega} et (eset)^{3\omega-1}e$. It now suffices to multiply by $set$ on the right to get the desired equality $(eset)^{\omega+1}=(eset)^{\omega}et(eset)^{\omega}$.

  We finally turn to the implication $(5) \Rightarrow (6)$.  Assume that $L \in \tlxc{\Gs} = \tlc{\Gs^+}$. We show that $S$ belongs to \ldavar, \emph{i.e.}, that $S$ satisfies~\eqref{eq:lda}. Let $s,t \in S$ and $e \in E(S)$. We prove that $(esete)^\omega = (esete)^\omega ete (esete)^\omega$. By hypothesis, $L$ is defined by a formula $\varphi_L$ of $\tla{\Gs^+}$. Let $k \in \nat$ be the rank of $\varphi$. Fact~\ref{fct:esuit} yields a $\Gs^+$-morphism $\eta: A^* \to Q$ such that $\varphi_L$ is a \tla{\eta} formula. It follows from Lemma~\ref{lem:gpmorph} that $G= \eta(A^+)$ is a group in $N$ and that the morphism $\beta: A^* \to G$ defined by $\beta(w) = \eta(w)$ for every $w \in A^+$ is a \Gs-morphism. Using the hypothesis that $s,t,e \in S$, we prove that there exist $u,v,z \in A^*$ such that $\alpha(u) = s$, $\alpha(v) = t$, $\alpha(z) = e$ and $\eta(u) = \eta(v) = \eta(z) = 1_G$. By symmetry, we only prove the existence of $u \in A^*$. There are two cases, depending on whether $s = 1_M$ or not. If $s = 1_M$, then we have $1_M \in S \subseteq \alpha(A^+)$. Thus, there exists $u' \in A^+$ such that $\alpha(u') = 1_M$. We let $u = (u')^p$ for $p = \omega(G)$. Clearly, $\alpha(u) = 1_M = s$ and $\eta(u) = (\eta(u'))^p = 1_G$ since $G$ is a group. Assume now that $s \neq 1_M$. We have $s \in S \subseteq N$. Since $\beta: A^* \to G$ is a \Gs-morphism, it follows from Lemma~\ref{lem:morker} that there exists $u \in A^*$ such that $\alpha(u) = s$ and $\beta(u) = 1_G$. Since $s \neq 1_M$, we have $u \in A^+$, which also implies that $\eta(u) = \beta(u) = 1_G$, as~desired.

  We now use $u,v,z$ (satisfying $\alpha(u) = s$, $\alpha(v) = t$, $\alpha(z) = e$ and $\eta(u) = \eta(v) = \eta(z) = 1_G$) to conclude the proof. Since $\varphi_L$ is an $\eta$-formula of rank $k$ and $1_G \in E(N)$, it follows from Lemma~\ref{lem:efcong} and Proposition~\ref{prop:efg} that for all $x,y \in A^*$, we have,
  \[
    x(z^{k}uz^{2k}vz^{k})^{k}(z^{k}uz^{2k}vz^{k})^{k}y \in L \Leftrightarrow x(z^{k}uz^{2k}vz^{k})^{k} z^{k}vz^k(z^{k}uz^{2k}vz^{k})^{k}y \in L.
  \]
  Hence, $(z^{k}uz^{2k}vz^{k})^{k}(z^{k}uz^{2k}vz^{k})^{k}$ and $(z^{k}uz^{2k}vz^{k})^{k} z^{k}vz^k(z^{k}uz^{2k}vz^{k})^{k}$ are equivalent for the syntactic congruence of $L$. Since $\alpha$ is the syntactic morphism of $L$, these words have the same image under $\alpha$. This yields $(esete)^{k} (esete)^{k} = (esete)^{k} ete (esete)^{k}$ by definition of $u,v$ and $z$. It now suffices to multiply by enough copies of $esete$ on the left and on the right to obtain $(esete)^\omega = (esete)^\omega ete (esete)^\omega$, as desired.
\end{proof}

Given as input a regular language $L$, one can compute its syntactic morphism. Moreover, by Lemma~\ref{lem:kercomp}, the strict \Gs-kernel of this morphism can be computed as soon as \Gs-separation is decidable. It is then simple to decide whether it belongs to \ldavar: this boils down to checking if~\eqref{eq:lda} holds by testing all possible combinations for $s$, $t$ and $e$. By Theorem~\ref{thm:wglogic}, this decides whether $L \in\upol{\bpol{\Gs^+}}$. We state this in the following corollary.

\begin{cor}\label{cor:wglogic}
  Let \Gs be a \vari of group languages with decidable separation. Then, membership is decidable for $\fod(<,+1,\prefsigg) =  \tlxc{\Gs} = \dec{2}(<,+1,\prefsigg) =\! \upol{\bpol{\Gs^+}}$.
\end{cor}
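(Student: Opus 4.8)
The plan is to obtain Corollary~\ref{cor:wglogic} as a direct consequence of Theorem~\ref{thm:wglogic}, combined with the effectivity of the syntactic morphism and of the strict \Gs-kernel. First I would note that the equivalences $(1) \Leftrightarrow (3) \Leftrightarrow (4) \Leftrightarrow (5)$ of Theorem~\ref{thm:wglogic}, read over all regular languages, show that the four classes $\fod(<,+1,\prefsigg)$, $\tlxc{\Gs}$, $\dec{2}(<,+1,\prefsigg)$ and $\upol{\bpol{\Gs^+}}$ coincide; indeed they consist only of regular languages, since \Gs is a \vari of group languages and the operators \bpolo and \upolo preserve regularity. Hence it suffices to exhibit a membership algorithm for a single one of them, say $\upol{\bpol{\Gs^+}}$.

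Given a regular language $L$ as input, I would compute its syntactic morphism $\alpha\colon A^* \to M$, which is effective by the Myhill--Nerode theorem from any finite representation of $L$. Since \Gs-separation is decidable by hypothesis, Lemma~\ref{lem:kercomp} yields that the strict \Gs-kernel $S$ of $\alpha$ can be computed. Once $S$ is in hand, deciding whether $S \in \ldavar$ is a finite check: compute the idempotent set $E(S)$ and the idempotent power $\omega(S)$, then verify Equation~\eqref{eq:lda}, namely that $(esete)^\omega = (esete)^\omega ete (esete)^\omega$ for all $s,t \in S$ and $e \in E(S)$. By the equivalence $(1) \Leftrightarrow (6)$ of Theorem~\ref{thm:wglogic}, this test succeeds if and only if $L \in \upol{\bpol{\Gs^+}}$, which completes the algorithm, and hence also settles membership for the three equal classes $\fod(<,+1,\prefsigg)$, $\tlxc{\Gs}$ and $\dec{2}(<,+1,\prefsigg)$.

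There is no genuine obstacle specific to the corollary: all the substance is carried by Theorem~\ref{thm:wglogic} (whose $(6) \Rightarrow (2)$ direction leans on Theorem~\ref{thm:ubp} and Lemma~\ref{lem:ptoker}, and whose $(5) \Rightarrow (6)$ direction leans on Proposition~\ref{prop:efg}) and by Lemma~\ref{lem:kercomp} together with the hypothesis that \Gs-separation is decidable. The only points that warrant an explicit sentence are that the four classes genuinely coincide as classes of regular languages, and that the equational test for \ldavar is algorithmically finite; both are routine. Finally, as in the group-language case discussed just before the statement, I would remark that separation is decidable for \stzer, \md, \abg and \grp, so the corollary applies concretely and yields decidability of membership for \fodws, \fodwsm, \fodwsam and \fodwsgm.
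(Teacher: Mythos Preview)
Your proposal is correct and follows essentially the same approach as the paper: compute the syntactic morphism, use Lemma~\ref{lem:kercomp} (via decidable \Gs-separation) to compute the strict \Gs-kernel, check Equation~\eqref{eq:lda} by brute force, and invoke the equivalence $(1)\Leftrightarrow(6)$ of Theorem~\ref{thm:wglogic}. The paper's justification is the same, stated in the paragraph immediately preceding the corollary.
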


Recall again that separation is decidable for the \varis of group languages \stzer, \md, \abg and \grp, so that Corollary~\ref{cor:wglogic} applies in these cases: membership is decidable for the logical classes \fodws, \fodwsm, \fodwsam and \fodwsgm.

Here again, Theorem~\ref{thm:wglogic} generalizes known results in the particular case $\Gs = \stzer$. Since $\stzer = \{\emptyset,A^*\}$, it is straightforward to verify that the strict \stzer-kernel of a morphism $\alpha: A^*\to M$ is the semigroup $\alpha(A^+)$. When $\alpha$ is the syntactic morphism of a language $L$, this object is called the \emph{syntactic semigroup} of $L$. Thus, Theorem~\ref{thm:wglogic} yields that $\dewsd = \fodws =\fpfmx$ and that a regular language belongs to this class if and only if its syntactic semigroup belongs to \ldavar. As seen in the previous section, the equality $\fodws = \fpfmx$ is due to Etessami, Vardi and Wilke~\cite{evwutl}. The remaining correspondences are due to Thérien and Wilke~\cite{twfo2}. In particular, the connection with \ldavar is established by relying on results of Almeida~\cite{dadalmeida}, based on a complex algebraic framework by Tilson~\cite{tilson}, which involves categories and wreath products of finite semigroups. We bypass this intricate machinery here.

\begin{rem}
  In this case as well, there exists no generic result concerning separation and covering for classes of the form \upol{\bpol{\Gs^+}}. However, both problems are known to be decidable in \emph{two} particular cases: $\Gs = \stzer$ and $\Gs = \md$. In view of Theorem~\ref{thm:wglogic}, this corresponds to the logical classes \fodws and \fodwsm. The decidability of covering for these two classes can be obtained by applying generic transfer results of~\cite{pzsucc} and~\cite{prwmodulo} to the simpler class \fodw (as we mentioned above, decidability of covering for this class was first proved in~\cite{pzcovering2}). The techniques involved in these results are orthogonal and independent from the ones of the present paper.
\end{rem}

\section{Conclusion}
\label{sec:conc}
We presented a generic algebraic characterization of unambiguous polynomial closure and used it to prove that when \Cs is a \vari, membership for \upol{\Cs} reduces to membership for~\Cs. An interesting byproduct of the proof is that $\upol{\Cs} = \adet{\Cs} = \capol{\Cs}$ in that case. Moreover, we showed that when \Cs is a finite \vari, covering and separation are decidable for \upol{\Cs}. This completes similar results of~\cite{pzbpol} for \pol{\Cs} and \bool{\pol{\Cs}} and of~\cite{pseps3j} for \pol{\bool{\pol{\Cs}}}. Finally, we presented logical characterizations of the classes built with unambiguous polynomial closure. In particular, we proved that if the input class \Gs is a \vari of \emph{group languages}, then $\upol{\bpol{\Gs}} = \dec{2}(<,\prefsigg) = \fod(<,\prefsigg) = \tlc{\Gs}$ and $\upol{\bpol{\Gs^+}} = \dec{2}(<,+1,\prefsigg) = \fod(<,+1,\prefsigg) = \tlxc{\Gs}$. This generalizes earlier results corresponding to particular examples of \varis of group languages \Gs.

A natural follow-up question is whether our result for separation and covering can be pushed further to encompass more inputs than the finite classes. In particular, in view of the above logical characterizations, it would be interesting to look at both problems for classes of the form \upol{\bpol{\Gs}} and \upol{\bpol{\Gs^+}} when \Gs is a \vari of group languages. While this is difficult, there are known results of this kind; in particular, it is known~\cite{pzconcagroup} that separation and covering are decidable for \pol{\bpol{\Gs}} as soon as separation is decidable~for~\Gs.

\printbibliography

\end{document}